\documentclass[a4paper,runningheads,envcountsame,orivec]{llncs}
\usepackage[headings]{fullpage}
\usepackage[T1]{fontenc}
\usepackage[english]{babel}
\usepackage{amssymb,mathtools}
\usepackage{tikz-cd}
\usepackage{microtype}
\usepackage{needspace}
\usepackage{cite}
\usepackage{xurl}
\usepackage[unicode,hidelinks]{hyperref}
\title{Higher Tripos Theory}
\titlerunning{Higher Tripos Theory}
\author{Yoshihiro Maruyama}
\authorrunning{Yoshihiro Maruyama}
\institute{Kyoto University\\
\email{maruyama@i.h.kyoto-u.ac.jp}}
\newcommand{\C}{\mathcal C}
\newcommand{\B}{\mathcal B}
\newcommand{\D}{\mathcal D}
\newcommand{\E}{\mathcal E}
\newcommand{\HH}{\mathcal H}
\newcommand{\M}{\mathcal M}
\newcommand{\PP}{\mathbb P}
\newcommand{\QQ}{\mathbb Q}
\newcommand{\Spc}{\mathcal S}
\newcommand{\Cat}{\operatorname{Cat}_{\infty}}
\newcommand{\Map}{\operatorname{Map}}
\newcommand{\Fun}{\operatorname{Fun}}
\newcommand{\Sub}{\operatorname{Sub}}
\newcommand{\Eq}{\operatorname{Eq}}
\newcommand{\Iso}{\operatorname{Iso}}
\newcommand{\Aut}{\operatorname{Aut}}
\newcommand{\im}{\operatorname{im}}
\newcommand{\id}{\operatorname{id}}
\newcommand{\Loc}{\operatorname{Loc}}
\newcommand{\Ex}{\operatorname{Ex}_{\infty}}
\newcommand{\Exone}{\operatorname{Ex}_{1,\mathrm{reg}}}
\newcommand{\Sh}{\operatorname{Sh}_{\mathrm{reg}}}
\newcommand{\Eff}{\operatorname{Eff}}
\newcommand{\Asm}{\operatorname{Asm}}
\newcommand{\Ab}{\operatorname{Ab}}
\newcommand{\Cod}{\operatorname{Cod}}
\newcommand{\Tot}{\operatorname{TOT}}
\newcommand{\Ftwo}{\mathbb F_2}
\newcommand{\NN}{\mathbb N}
\newcommand{\RegD}{\mathsf{Doc}_{\mathrm{reg}}}
\newcommand{\Trip}{\mathsf{Trip}_{\infty}}
\newcommand{\Top}{\mathsf{ETop}_{\infty}}
\newcommand{\TT}{\mathsf T_{\infty}}
\newcommand{\core}{^{\simeq}}
\newcommand{\onto}{\twoheadrightarrow}
\newcommand{\into}{\hookrightarrow}
\newcommand{\codes}{\mathcal X}
\newcommand{\Pj}{\mathcal P}
\newcommand{\Qj}{\mathcal Q}
\newcommand{\PAsm}{\operatorname{PAsm}}
\newcommand{\sk}{\operatorname{sk}}
\newcommand{\cosk}{\operatorname{cosk}}
\newcommand{\Uni}{\operatorname{Uni}}

\begin{document}
\maketitle
\begin{abstract}
Tripos theory constructs categorical models of higher-order logic from indexed logical data. We develop higher tripos theory and higher tripos-to-topos construction recovering the classical construction on zero-types through a covering criterion. 
The higher tripos-to-topos construction yields elementary $\infty$-toposes with closed bounded hierarchies of univalent universes, thus establishing a logical adjunction between $\infty$-triposes and elementary $\infty$-toposes with univalent universes. 
We also construct a logically conservative higher completion of regular $1$-categories with weak section families and a finite-limit-closed projective cover. This completion preserves original maps, predicates, and existing dependent products, and gives criteria for extending interpretations while preserving logical structure. 
With strongly inaccessible size bounds, Kleene assemblies instantiate the construction, producing a cumulative hierarchy of univalent universes that classifies every family in the completion and recovering the effective topos as the category of zero-types. 
We then establish an arithmetical completeness result: for a uniformly computably presented family, the numerical index set for identity with a fixed classifier code is $\Pi^0_2$-complete. 
We also construct two inequivalent univalent classifiers with the same global homotopy type: every internal map from one to the natural-number object is constant, whereas the other admits a nonconstant such map. These reconstruction, preservation, and computability results contribute to the higher effective-topos program of Awodey-Emmenegger and Anel-Awodey-Barton.
\keywords{$\infty$-tripos 
\and $\infty$-topos-tripos adjunction 
\and conservative logical completion
\and higher realizability 
\and arithmetical completeness
\and Rice theorem
}
\end{abstract}

\section{Introduction}\label{sec:intro}
Categorical semantics provides a mathematical framework for interpreting type theories and relating their logical structure to computation. A basic construction problem is to recover function types from representations of functions equipped with an evaluation operation. These representations, called function codes, can describe the same function in different ways. Reconstruction identifies such codes and, in dependent type theory, carries these identifications through substitution in every context. Higher semantics also retains identities between functions and their iterated identities~\cite{HoTT13}. We reconstruct dependent function types with their full higher identity structure through a higher tripos-to-topos construction. Applied to Kleene assemblies, the construction preserves their maps, predicates, and existing dependent products, and recovers the effective topos as the category of zero-types. These results contribute to the higher effective-topos program initiated by Awodey-Emmenegger~\cite{AwodeyEmmenegger25} and Anel-Awodey-Barton~\cite{Awodey25}.

\paragraph*{Background.}
Classical tripos theory by Hyland-Johnstone-Pitts~\cite{HJP80} constructs toposes from indexed logical structure and PER (partial equivalence relation) quotients (see also Pitts~\cite{Pitts02}). Frey develops its higher-dimensional universal property~\cite{Frey11}. Pasquali decomposes its logical reflection into comprehension, extensionality, quotients, and functional completeness~\cite{Pasquali16}. In Hyland's effective topos, every endomorphism of the natural-number object is tracked by a total recursive function~\cite{Hyland82}. Categorical realizability retains this uniform tracking through the passage from represented data to quotient types (see van Oosten~\cite{vanOosten08}). 
We extend this connection to type-valued doctrines, reconstructing entire section spaces while preserving the maps and predicates of $1$-categorical sources. The reconstruction of these section spaces builds upon weak function objects and exact quotients. 

Carboni and Rosolini studied local cartesian closure of exact completions~\cite{CarboniRosolini00}; Emmenegger gives criteria using extensional dependent products, together with closure criteria relating weak dependent products in a projective cover to local cartesian closure~\cite{EmmeneggerLocal20}. His fullness formulation also applies to exact completions of homotopy categories~\cite{EmmeneggerFull20}. For path categories, van den Berg and Moerdijk derive local cartesian closure of the homotopy exact completion from weak homotopy $\Pi$-types~\cite{BergMoerdijkPath18}. Den Besten studies strong homotopy exponentials and $\Pi$-types using groupoid enrichment~\cite{denBesten20}. Related criteria for local cartesian closure hold for elementary quotient completions~\cite{MaiettiPasqualiRosolini26}. In Frey's homotopical account of PER categories, mapping spaces are discrete~\cite{Frey23}. We develop reconstruction through every finite identity level: coherent fullness gives dependent products from higher section codes, and projective transfer derives those codes from weak section families in the source category. Higher exact completion~\cite{Stefanich23} provides the required effective groupoid quotients.
Our main results are as follows. 

\paragraph*{Dependent-Product Reconstruction.}
We reconstruct dependent function types from codes that locally represent every section. For any regular $\infty$-category and finite $n\ge-1$, coherent fullness at height $n$ is equivalent to the existence of all dependent products of $n$-truncated families in its higher exact completion  (Theorem~\ref{thm:relative-fullness}). The central idea is that identities between represented sections can themselves be described as sections of an identity family. Repeating this observation lowers truncation one level at a time, allowing us to recover the full higher identity structure in finitely many stages. For an $n$-truncated family, a propositional image followed by at most $n+1$ nested effective groupoid quotients suffices, starting from the given code tower and its input objects.

\paragraph*{Univalent Universe Reconstruction and Logical Adjunction.}
The reconstruction extends to universes and logical interpretations. Using coherent fullness, we reconstruct univalent universes from an exhaustive directed hierarchy of weak universe presentations (Theorem~\ref{thm:universes}). Five conditions characterize closure of each universe level under the logical operations (Theorem~\ref{thm:generic-tests}). Each level has its own finite truncation bound. Groupoid quotients complete the presented families into classifiers retaining all fibrewise equivalences. Among regular interpretations preserving level classes, preservation of weak section families is equivalent to dependent-product preservation; equality of chosen local classes detects universe preservation. These results establish a logical adjunction on full mapping spaces between $\infty$-triposes and elementary $\infty$-toposes with closed bounded hierarchies, with a fully faithful right adjoint. For an $\infty$-tripos over $\operatorname{Set}$ whose propositional shadow is a tripos, the zero-types of its completion recover the classical tripos-to-topos construction whenever every zero-type admits a cover from a subobject of a represented context (Theorem~\ref{thm:shadow}).

\paragraph*{Logically Conservative Higher Completion.}
We next construct higher logical structure from regular $1$-categories. Such a category with weak section families and a finite-limit-closed covering class of regular projectives admits higher section codes and a locally cartesian closed higher exact completion. Its embedding is fully faithful, identifies original subobject lattices, and preserves existing dependent products (Theorem~\ref{thm:pt-completion}). For regular interpretations carrying the source projectives into a finite-limit-closed covering class of zero-truncated target projectives, preservation of weak section families in the source characterizes extensions preserving dependent products, with an equivalence of full mapping spaces (Theorem~\ref{thm:pt-morphisms}). For arbitrary exact locally cartesian closed targets, the criterion uses preservation of chosen higher section codes (Theorem~\ref{thm:lc-morphisms}).

\paragraph*{Realizability Hierarchy.}
Kleene assemblies provide a concrete instance of this completion, yielding an exhaustive bounded realizability hierarchy (Theorem~\ref{thm:real-hierarchy}). 
The resulting cumulative univalent hierarchy classifies every map at some level, including nonmodest families. Zero-types recover the effective topos, and original numerical maps retain their trackers.

\paragraph*{Arithmetical Completeness and Rice-type Classifier Separation.}
The realizability construction also yields computational applications. For each $n\ge2$, a uniformly computably presented realizability family $X_{n,e}$, indexed by programs $\varphi_e$, has fibres
\begin{equation}\label{eq:intro-fibres}
 X_{n,e}\simeq
 \begin{cases}1,&\varphi_e\text{ is total},\\ K(\Ftwo,n),&\varphi_e\text{ is not total}.
 \end{cases}
\end{equation}
Its univalent classifier $U_n$ satisfies, with $\Gamma=\Map(1,-)$,
\begin{equation}\label{eq:intro-rice}
 \Gamma U_n\simeq1\amalg K(\Ftwo,n+1),
 \qquad \Map(U_n,2)\simeq2.
\end{equation}
Identity with one fixed code has a $\Pi^0_2$-complete numerical index set, and every map $U_n\to N$ to the natural-number object is constant (Theorem~\ref{thm:rice}). The two fixed representatives yield an inequivalent classifier with the same global homotopy type and a nonconstant numerical observation. The uniform family remains distinct from every pullback of the fixed two-fibre family after each cover of the parameter object. This connects Rice phenomena for universes and injective types~\cite{EscardoStreicher16,deJongEscardo26} to a uniform family of simply connected types.

\paragraph*{Function Types and Exact Quotient Depth.}
A second application determines how many quotient stages are needed to construct certain higher function types. For every $n\ge1$, we give a uniformly computably presented family requiring exactly $n$ nested quotient stages. Here construction starts from zero-types and allows finite limits, images, and effective groupoid quotients. An explicit product formula describes the resulting function types (Theorem~\ref{thm:cochain-products} in Appendix~\ref{app:cochains}). The fibre indexed by a program is contractible exactly when that program halts on every input. Thus the required quotient depth grows with $n$, while the problem of identifying contractible fibres remains $\Pi^0_2$-complete (Subsection~\ref{sec:cochains}).

\paragraph*{Related Work.}
Several constructions have been developed for univalent and higher realizability: van den Berg's univalent polymorphism~\cite{Berg20}, Uemura's cubical assemblies~\cite{Uemura19}, Swan and Uemura's analysis of Church's thesis~\cite{SwanUemura21}, and Speight's groupoidal realizability~\cite{Speight24}. Agwu constructs models in groupoids internal to assemblies, with univalent classifiers for assemblies and modest assemblies~\cite{Agwu25}; Francese outlines a homotopy-tripos program based on higher Turing categories and coherently tracked higher assemblies~\cite{Francese19}. Awodey and Emmenegger recover the effective topos as the category of zero-types of an effective $2$-topos~\cite{AwodeyEmmenegger25}. 
Awodey's presentation of joint work with Anel and Barton announces a higher effective topos with dependent products and univalent universes~\cite{Awodey25}. 

The impredicative univalent universes $D_n$ described
in~\cite{Awodey25} classify discrete $n$-types in the realizability
sense, whose homotopy objects, including $\pi_0$, lie in $D_0$,
the subquotients of $N$.
We contribute to the higher effective-topos program by constructing an
explicit cumulative hierarchy of univalent universes in which every family
in the completion, including nonmodest families, is classified at some level.
Each level bounds the truncation degree of the families it classifies and
is closed under dependent products of composable maps belonging to that level.
To construct this hierarchy, we verify the required source hypotheses using
finite diagrams of assemblies with uniformly tracked structure maps.
The transfer and preservation theorems then connect the resulting hierarchy
to assembly data and characterize how interpretations transport its logical
structure.

\paragraph*{Organization.}
Section~\ref{sec:reconstruction} develops regular type doctrines and their completions. Section~\ref{sec:triposes} introduces $\infty$-triposes and establishes the logical adjunction. Section~\ref{sec:completion} gives source conditions for conservative logical completion and univalent universe reconstruction, and Section~\ref{sec:realization} constructs the realizability instance. Section~\ref{sec:applications} establishes Rice-type classifier separation and reconstructs higher function types from tracked cochains, with an exact quotient-depth calculation. Full proofs are provided in the appendix.

\section{Regular Type Doctrines and Their Completions}\label{sec:reconstruction}
We introduce regular type doctrines and construct their higher exact completions. Local section codes then reconstruct dependent products at each finite relative truncation height.

$\Spc$ denotes the $\infty$-category of spaces, and $(-)\core$ denotes the maximal subgroupoid of an $\infty$-category. An ordinary category is a $1$-category regarded as an $\infty$-category through its nerve. A category is bounded if every object has a finite truncation bound, which may depend on the object. We work externally in classical ZF with a fixed hierarchy of nested Grothendieck universes and choice in each working universe. All sites, diagram categories, functor categories, and ambient classifiers are taken in specified members of this hierarchy; enlargements stay within it. 
These assumptions concern the external metatheory; neither excluded middle nor choice is assumed in the internal logic of the constructed categories.

We use the universe-based definition of elementary $\infty$-toposes: finite limits, local cartesian closure, and enough closed univalent universal families~\cite{ApolPetrowitsch26}.

\subsection{Regular Doctrines and Exact Completion}\label{sec:doctrine}\label{sec:presentations}

A regular $\infty$-category has finite limits and pullback-stable realizations of \v{C}ech nerves. It is exact if all internal Segal groupoids have effective quotients. A cover is an effective epimorphism; a regular functor preserves finite limits and covers. Higher exactness includes effective quotients of internal group nerves. For a small regular $\C$, put
\begin{equation}\label{eq:ambient}
 \HH_\C=\Sh(\C,\Spc),\qquad
 h:\C\into\HH_\C,\qquad
 j:\C\into\E:=\Ex(\C)\subseteq\HH_\C.
\end{equation}
Here $\Sh(\C,\Spc)$ denotes the $\infty$-category of space-valued sheaves for the regular topology generated by singleton covers, $h$ is the Yoneda embedding $A\mapsto\Map_{\C}(-,A)$, and $j$ is the same functor with codomain restricted to $\E$.
Higher exact completion is the full closure of the representables under groupoid realizations. It is finite-limit closed and exact, its unit is regular and fully faithful, and every object has an original atlas $jA\onto X$. Restriction along $j$ is an equivalence of regular functor categories into any exact target~\cite{Stefanich23}. Section codes and weak universe presentations extend this regular construction to dependent products and univalent classifiers.

\begin{definition}[Regular type doctrine]\label{def:doctrine}
A regular type doctrine is a functor $\PP:\B^{op}\to\Cat$, with $\B$ finitely complete, such that every fibre and reindexing is regular; every reindexing has a left adjoint $\Sigma_f\dashv f^*$ satisfying Beck--Chevalley; and each $\Sigma_f$ preserves pullbacks and has cartesian unit and counit. All structures and comparisons are coherent. Write $\C_\PP=\PP(1)$ and $\delta_\PP X=\Sigma_{!_X}1_X$. The doctrine is exact when $\C_\PP$ is exact.
\end{definition}

\begin{proposition}[Comprehension]\label{prop:moens}
The functor $\delta_\PP$ preserves finite limits, and there are natural equivalences
\begin{equation}\label{eq:moens}
 \PP(X)\simeq(\C_\PP)_{/\delta_\PP X}.
\end{equation}
They send $A$ to $(\Sigma_{!_X}A\to\delta_\PP X)$, identify $\Sigma_f$ with composition by $\delta_\PP f$, and identify $f^*$ with pullback.
\end{proposition}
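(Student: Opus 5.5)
The plan is to build the comparison functor fibrewise, check that it is fully faithful and essentially surjective, and then read off its compatibility with $\Sigma_f$, $f^*$ and finite limits. Fix $X\in\B$ and let $!_X:X\to1$. Since $\Sigma_{!_X}$ is a left adjoint, it lands in $\C_\PP$. Because $1_X$ is terminal in $\PP(X)$, we have the standard equivalence $\PP(X)\simeq\PP(X)_{/1_X}$. Applying $\Sigma_{!_X}$ then gives a functor
\[
 \Phi_X:\PP(X)\simeq\PP(X)_{/1_X}\longrightarrow(\C_\PP)_{/\Sigma_{!_X}1_X}=(\C_\PP)_{/\delta_\PP X},
\]
which sends $A$ to $\Sigma_{!_X}(A\to1_X)$.

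\emph{Full faithfulness.} Mapping spaces in the slice are fibres. Concretely, $\Map_{/\delta X}(\Sigma A,\Sigma B)$ is the fibre of $\Map(\Sigma A,\Sigma B)\to\Map(\Sigma A,\Sigma 1_X)$ over the structure map. By the adjunction $\Sigma_{!_X}\dashv !_X^*$, this fibre becomes the fibre of $\Map(A,!_X^*\Sigma B)\to\Map(A,!_X^*\Sigma 1_X)$ over the unit. Cartesianness of the unit says exactly that the square with sides $B\to 1_X$, $\eta_B$, $\eta_{1_X}$ and $!_X^*\Sigma B\to!_X^*\Sigma1_X$ is a pullback. It follows that this fibre is $\Map(A,B)$.

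\emph{Essential surjectivity.} Given $p:C\to\delta X$ in $\C_\PP$, form the pullback $A:=1_X\times_{!_X^*\Sigma1_X}!_X^*C$ along the unit $\eta_{1_X}$. I then need $\Sigma_{!_X}A\simeq C$ over $\delta X$. This uses two facts together: $\Sigma_{!_X}$ preserves pullbacks, and the counit is cartesian, so $\Sigma!^*C\simeq C\times_{\delta X}\Sigma!^*\delta X$. The triangle identity $\epsilon\circ\Sigma\eta=\id$ then collapses the pullback to $C$. This is the step I expect to need the most care, since all three hypotheses have to be used coherently at once.

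\emph{Remaining compatibilities.} For $f:Y\to X$ we have $!_Y=!_X f$, so $\Sigma_{!_Y}\simeq\Sigma_{!_X}\Sigma_f$. Hence $\Phi_X\Sigma_f=\Sigma_{!_X}\Sigma_f(-)$, and the structure map becomes $\Sigma_{!_X}$ applied to $\Sigma_f A\to\Sigma_f1_Y\to1_X$. This is exactly composition with $\delta f:=\Sigma_{!_X}(\Sigma_f1_Y\to1_X)$. Reindexing $f^*$ is right adjoint to $\Sigma_f$, and composition by $\delta f$ is right adjoint to pullback along $\delta f$. By uniqueness of adjoints, $f^*$ therefore corresponds to pullback along $\delta f$. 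Beck--Chevalley makes these identifications coherent in $f$, which gives naturality of the equivalences.

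\emph{Finite limits.} We have $\delta1=\Sigma_{\id}1=1$. For a pullback $Y\times_XZ$ in $\B$, Beck--Chevalley gives $\Sigma_{p_Y}p_Z^*1_Z\simeq g^*\Sigma_f1_Y$, where $f:Y\to X$ and $g:Z\to X$. Under the identification just established, $g^*$ corresponds to pullback along $\delta g$. Applying $\Sigma_{!_X}$ then yields $\delta(Y\times_XZ)\simeq\delta Y\times_{\delta X}\delta Z$.
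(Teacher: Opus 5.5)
Your proposal is correct and is essentially the paper's proof. Your full-faithfulness computation via the cartesian unit square at $B\to1_X$, and your essential-surjectivity object $1_X\times_{!_X^*\delta_\PP X}!_X^*C$ (using pullback preservation, the cartesian counit, and the triangle identity), are exactly the paper's checks that its explicit inverse $L_X$ satisfies $L_XK_X\simeq\id$ and $K_XL_X\simeq\id$. Your handling of $\Sigma_f$, of $f^*$ via mates, and of finite limits via Beck--Chevalley and $\delta_\PP1\simeq1$ also matches the paper.

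Three slips should be corrected. First, pullback along $\delta_\PP f$ is \emph{right} adjoint to composition with $\delta_\PP f$, not the other way round. Second, the Beck--Chevalley instance should read $\Sigma_{p_Z}p_Y^*1_Y\simeq g^*\Sigma_f1_Y$ in $\PP(Z)$, and you then apply $\Sigma_{!_Z}$ to it, not $\Sigma_{!_X}$. Third, coherence in $f$ comes from the coherent composites $\Sigma_{!_Y}\simeq\Sigma_{!_X}\Sigma_f$, not from Beck--Chevalley.
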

This is higher Moens reconstruction~\cite{Weinberger24}; Appendix~\ref{app:moens} gives the explicit proof. Its propositional shadow is
\begin{equation}\label{eq:shadow}
 \operatorname{Prop}_\PP(X)=\Sub_{\PP(X)}(1_X)
       \simeq\Sub_{\C_\PP}(\delta_\PP X).
\end{equation}
Existential quantification is the image of a dependent sum.

A regular doctrine map consists of a finite-limit-preserving base functor and a coherent indexed transformation preserving finite limits, covers, and dependent-sum adjunctions. Write $\RegD$ for the resulting $\infty$-category. Under comprehension, such a map $\PP\to\QQ$ is a triple $(v,F,\theta)$ with $F:\C_\PP\to\C_\QQ$ regular and $\theta:F\delta_\PP\simeq\delta_\QQ v$; see Lemma~\ref{lem:doctrine-maps}.

\begin{theorem}[Free exact completion of a regular doctrine]\label{thm:regular-completion}
For every regular type doctrine $\PP$,
\begin{equation}\label{eq:regular-completion}
 (\mathsf Q\PP)(X)=\Ex(\PP(1))_{/j\delta_\PP X}
\end{equation}
is an exact regular type doctrine. For every exact regular type doctrine $\mathbb D$, restriction along the unit $\PP\to\mathsf Q\PP$ induces
\begin{equation}\label{eq:q-universal}
 \Map_{\RegD}(\mathsf Q\PP,\mathbb D)
       \simeq\Map_{\RegD}(\PP,\mathbb D).
\end{equation}
The unit is the identity on the base and is fully faithful in every fibre.
\end{theorem}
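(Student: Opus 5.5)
The plan is to reduce everything to the comprehension picture of Proposition~\ref{prop:moens} and the universal property of $\Ex$ recalled after~\eqref{eq:ambient}. Write $\C=\C_\PP=\PP(1)$, which is regular because fibres of $\PP$ are regular, and $j:\C\into\E=\Ex(\C)$. First check that $\mathsf Q\PP$ is a regular type doctrine with base $\B$. It is a functor $\B^{op}\to\Cat$ via $f\mapsto (j\delta_\PP f)^*$, using that $\delta_\PP$ preserves finite limits (Proposition~\ref{prop:moens}) and $j$ is regular, so $j\delta_\PP$ preserves finite limits. Slices of the exact category $\E$ are exact, hence regular, and pullback functors between slices of $\E$ are regular. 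Their left adjoints are composition $\Sigma_f=(j\delta_\PP f)\circ(-)$. Beck--Chevalley follows from pasting of pullbacks together with the fact that $j\delta_\PP$ sends pullbacks in $\B$ to pullbacks. Composition preserves pullbacks, and the unit and counit are cartesian by the standard slice argument. Coherence is automatic because everything is defined by slicing a single $\infty$-category along a finite-limit-preserving functor, for instance as a pullback of the codomain fibration $\E^{\Delta^1}\to\E$ along $j\delta_\PP$. Exactness holds since $(\mathsf Q\PP)(1)=\E_{/j\delta_\PP 1}\simeq\E$, because $\delta_\PP 1$ is terminal.

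Second, construct the unit. Via~\eqref{eq:moens}, $\PP(X)\simeq\C_{/\delta_\PP X}$, and $j$ induces $\C_{/\delta_\PP X}\to\E_{/j\delta_\PP X}$. This map is fully faithful because $j$ is, and on mapping spaces of slices it is a fibre of a map of fibres. It is compatible with reindexing and with $\Sigma$ because $j$ preserves pullbacks and composition. So the unit is a doctrine map with identity base, and in the triple description of Lemma~\ref{lem:doctrine-maps} it is $(\id,j,\id)$.

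Third, prove the equivalence~\eqref{eq:q-universal} using Lemma~\ref{lem:doctrine-maps}. A map $\mathsf Q\PP\to\mathbb D$ is a triple $(v,G,\theta)$ with $v:\B\to\B_{\mathbb D}$ finite-limit preserving, $G:\E\to\C_{\mathbb D}$ regular, and $\theta:G\,j\delta_\PP\simeq\delta_{\mathbb D}v$. Restriction sends this to $(v,Gj,\theta)$. Since $\C_{\mathbb D}$ is exact, restriction along $j$ is an equivalence $\Fun^{\mathrm{reg}}(\E,\C_{\mathbb D})\simeq\Fun^{\mathrm{reg}}(\C,\C_{\mathbb D})$~\cite{Stefanich23}. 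The space of triples is the iterated fibre product
\[\Map^{\mathrm{lex}}(\B,\B_{\mathbb D})\times_{\Fun(\B,\C_{\mathbb D})}\Fun(\B,\C_{\mathbb D})^{\Delta^1,\simeq}\times_{\Fun(\B,\C_{\mathbb D})}\Fun^{\mathrm{reg}}(-,\C_{\mathbb D})^{\simeq},\]
with $G\mapsto Gj\delta_\PP$ on the last factor. Restriction is therefore a map of fibre products that is an equivalence on each factor, and hence an equivalence.

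The main obstacle is Lemma~\ref{lem:doctrine-maps} itself, together with its naturality. One must show that a doctrine map is determined, as a space and not just up to isomorphism, by its action on the fibre over $1$ plus $\theta$. The fibre components over $X$ are recovered as the induced slice functors $\C_{/\delta X}\to\C_{\mathbb D,/\delta v X}$, and preservation of $\Sigma$ forces this. If that lemma is only available pointwise, I would instead compare both sides as spaces of cartesian functors between the associated cartesian fibrations over the bases. I would then use that these fibrations are pulled back from the codomain fibrations of $\C_\PP$ and $\C_{\mathbb D}$, reducing the claim to the $\Ex$ universal property applied to regular functors commuting with the codomain fibration.
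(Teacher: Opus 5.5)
Your proposal is correct and follows essentially the same route as the paper: $\mathsf Q\PP$ is obtained by pulling back the codomain doctrine of $\Ex(\C_\PP)$ along the finite-limit-preserving functor $j\delta_\PP$, and the unit comes from comprehension together with $j$. The universal property then reduces, via Lemma~\ref{lem:doctrine-maps}, to a homotopy pullback of spaces of base functors, closed-type functors, and comparisons, on which the regular universal property of $\Ex$ acts as an equivalence; your fallback via cartesian fibrations is unnecessary, since the paper proves Lemma~\ref{lem:doctrine-maps} on full spaces of coherent data rather than pointwise.
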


\begin{theorem}[Slicing and fibrewise exact completion]\label{thm:slice-completion}
For every regular $\infty$-category $\C$ and $T\in\C$, the regular slice embedding induces an equivalence
\begin{equation}\label{eq:slice-completion}
 \Ex(\C_{/T})\simeq(\Ex\C)_{/jT},
\end{equation}
natural under pullback in $T$. Consequently the doctrine of Theorem~\ref{thm:regular-completion} is the fibrewise higher exact completion:
\begin{equation}\label{eq:fibrewise-completion}
 (\mathsf Q\PP)(X)\simeq\Ex(\PP(X)).
\end{equation}
\end{theorem}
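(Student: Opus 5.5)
We construct a regular functor $\Phi:\Ex(\C_{/T})\to(\Ex\C)_{/jT}$ from the universal property of higher exact completion, and then show it is an equivalence by checking full faithfulness and essential surjectivity. The slice functor $\C_{/T}\to(\Ex\C)_{/jT}$, $(A\to T)\mapsto(jA\to jT)$, is regular. Indeed, $j$ is regular and fully faithful, and finite limits in a slice are computed as limits in the base together with the diagonal structure (fibre products over $T$). Covers in slices are the maps whose underlying map is a cover. The target $(\Ex\C)_{/jT}$ is exact, because slices of an exact $\infty$-category are exact: groupoid objects in the slice are groupoid objects in the base with an augmentation, and the realization of their \v{C}ech nerves is created by the forgetful functor. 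Restriction along $j_T:\C_{/T}\into\Ex(\C_{/T})$ is an equivalence of regular functor categories into exact targets~\cite{Stefanich23}, so the slice functor extends uniquely to a regular functor $\Phi$ with $\Phi j_T\simeq j_{/T}$.

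Essential surjectivity comes from atlases. Let $X\to jT$ be given. Choose an original atlas $jA\onto X$; the composite $jA\to jT$ comes from a map $A\to T$ in $\C$, since $j$ is fully faithful. The \v{C}ech nerve of $jA\onto X$ is computed in the slice over $jT$. Its stages $jA\times_X\cdots\times_X jA$ are not a priori representable, so we iterate. Every object of $\Ex\C$ is a groupoid realization of representables, and by induction each stage is again a realization of objects $jB$ with $B\to T$. Because $\Phi$ is regular, it preserves groupoid realizations. Hence the essential image of $\Phi$ contains the $jA\to jT$ and is closed under groupoid realizations, and therefore contains everything. An easier route is to note that the closure of the image of $j_{/T}$ under realizations in $(\Ex\C)_{/jT}$ is all of it. This holds because realizations in the slice are computed in $\Ex\C$, and $\Ex\C$ is the closure of the representables.

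Full faithfulness is where I expect the main obstacle, since mapping spaces in both completions are controlled only through realizations. The plan is to embed both sides in sheaf categories. There is an equivalence $\Sh(\C_{/T},\Spc)\simeq(\HH_\C)_{/hT}$ of $\infty$-toposes, the $\infty$-categorical form of $\operatorname{Sh}(C/T)\simeq\operatorname{Sh}(C)/yT$. This requires checking that the regular topology on $\C_{/T}$, generated by singleton covers, is the one induced from $\C$. That holds because a map over $T$ is a cover in $\C_{/T}$ exactly when it is a cover in $\C$. Under this equivalence, Yoneda on $\C_{/T}$ corresponds to $h$ sliced over $hT$. Since $\Ex(\C_{/T})$ and $(\Ex\C)_{/jT}$ are the full subcategories generated by representables under groupoid realizations, and the topos equivalence preserves colimits, the equivalence restricts to an equivalence between them. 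This equivalence agrees with $\Phi$ on representables, and both functors are regular. By uniqueness in the universal property, it is $\Phi$. This gives \eqref{eq:slice-completion} and makes the atlas argument above a consistency check.

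Naturality in $T$ comes from the same sheaf picture. For $f:T'\to T$, pullback $\C_{/T}\to\C_{/T'}$ is regular. Its extension to the completions corresponds, under the topos equivalences, to pullback along $hf$, which preserves colimits in an $\infty$-topos and agrees with base change on representables. For \eqref{eq:fibrewise-completion}, apply Proposition~\ref{prop:moens} to get $\PP(X)\simeq(\C_\PP)_{/\delta_\PP X}$, with reindexing identified with pullback. Then \eqref{eq:slice-completion} with $\C=\C_\PP$ and $T=\delta_\PP X$ gives $\Ex(\PP(X))\simeq\Ex(\C_\PP)_{/j\delta_\PP X}=(\mathsf Q\PP)(X)$, and naturality transports reindexing.
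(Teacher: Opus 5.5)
Your proposal is correct and follows the paper's route: identify $\Sh(\C_{/T},\Spc)\simeq\HH_\C/hT$ (slice covers being underlying covers), see both $\Ex(\C_{/T})$ and $(\Ex\C)_{/jT}$ as the groupoid-realization closure $\mathcal K$ of the objects $hA\to hT$ there, and get naturality from ambient pullback and \eqref{eq:fibrewise-completion} from comprehension. You only sketch the inclusion $(\Ex\C)_{/jT}\subseteq\mathcal K$. The paper proves it by induction on the class of $X\in\Ex\C$ for which \emph{every} map $X\to hT$ lies in $\mathcal K$; this class contains the originals and is closed under groupoid realizations, which is what your remark about realizations in the slice needs. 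Also drop your atlas-based route, since closure of the essential image of $\Phi$ under realizations already presupposes full faithfulness.
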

The proofs in Appendix~\ref{app:regular-completion} use the regular universal property and closure under groupoid realizations.

A type presentation over $\C$ is an object of $\Ex(\C)$, with witnesses of its generation from original objects by groupoid realizations. Maps between type presentations are maps between their realizations in $\HH_\C$, with all higher homotopies.

A relative atlas presentation of finite height describes a truncated family
by a tower of original atlases and their relative kernels.
Here finiteness refers to the height of the atlas recursion.
Every finitely truncated
presented family over an original base has such a description. If the
source is bounded, its completion is bounded and finite-atlas objects
exhaust it.
Appendix~\ref{app:lc-atlases} gives the recursive definition and
recognition lemma. The reconstruction below ranges over all presented types,
with its height parameter given by the relative truncation of the family whose dependent product is being reconstructed.

\subsection{Dependent Products and Univalent Completion}\label{sec:fullness}\label{sec:univalent-completion}
\begin{definition}[Weak section family]\label{def:weak}
For $Z\xrightarrow gY\xrightarrow fX$ in a regular $\infty$-category, a weak section family is a map $w:W\to X$ and an evaluation $e:W\times_XY\to Z$ over $Y$ such that every section $s:T\times_XY\to Z$ is locally represented: there are a cover $a:T'\onto T$, a map $v:T'\to W$ over $X$, and a homotopy over $Y$
\begin{equation}\label{eq:weak}
 e\circ(v\times_XY)\simeq s|_{T'\times_XY}.
\end{equation}
This holds for every $T\to X$.
\end{definition}
We call $W$ the code object and refer to the family together with its evaluation as a family of section codes. A map $v:T\to W$ over $X$ codes the section $e\circ(v\times_XY)$. We refer to this local representation property as weak adequacy.

\begin{definition}[Coherent fullness]\label{def:lc-codes}\label{def:coherent-fullness}
Let $\C$ be regular and $n\ge-1$ finite. The condition $\mathrm{CF}_n(\C)$ asserts that for every $T\in\C$ and every diagram of presented types $Z\xrightarrow gY\xrightarrow f jT$ with $g$ $n$-truncated, there are an original map $w:W\to T$ and an evaluation $e:jW\times_{jT}Y\to Z$ over $Y$ with the following property. For every original $t:S\to T$ and section $s:jS\times_{jT}Y\to Z$, there are a cover $a:S'\onto S$ in $\C$, an original $v:S'\to W$ over $T$, and a homotopy over $Y$
\begin{equation}\label{eq:lc-codes}
 e\circ(jv\times_{jT}Y)\simeq s|_{jS'\times_{jT}Y}.
\end{equation}
Write $\mathrm{CF}_{<\infty}$ for the condition that $\mathrm{CF}_n$ holds for every finite $n\ge-1$. Unless a height is specified, coherent fullness means $\mathrm{CF}_{<\infty}$. When interpretations are considered, fix one code family for each such problem.
\end{definition}
The original objects $W,S,S'$ and their maps are source data; $Y,Z$ may be higher presentations. For bounded sources, Lemma~\ref{lem:lc-atlas} identifies $\mathrm{CF}_{<\infty}$ with source-local coding of every finite-atlas section problem. Because an $n$-truncated map is also $(n+1)$-truncated, $\mathrm{CF}_n$ includes all the lower identity problems needed below.

\begin{theorem}[Relative finite-height reconstruction]\label{thm:relative-fullness}
For any regular $\infty$-category $\C$ and finite $n\ge-1$, the following are equivalent:
\begin{enumerate}
\item $\mathrm{CF}_n(\C)$ holds.
\item For every $Z\xrightarrow gY\xrightarrow fX$ in $\E=\Ex(\C)$ with $g$ $n$-truncated, the dependent product $\Pi_fg$ exists in $\E$.
\end{enumerate}
These products agree with those in $\HH_\C$ and satisfy Beck--Chevalley. For every $A\to X$ the full representing equivalence is
\begin{equation}\label{eq:relative-product}
 \Map_{\E/X}(A,\Pi_fg)
       \simeq\Map_{\E/Y}(A\times_XY,Z).
\end{equation}
Starting from the chosen code families for the section problem and its successive identity problems, together with their input objects, the reconstruction adds at most $n+1$ nested effective groupoid quotients over arbitrary completed bases.
\end{theorem}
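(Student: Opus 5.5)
We prove (2)$\Rightarrow$(1) directly and (1)$\Rightarrow$(2) by induction on $n$, building $\Pi_fg$ inside $\HH_\C$ and checking that it lies in $\E$.

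\emph{Easy direction.} Assume (2). Since $\HH_\C$ is an $\infty$-topos, the dependent product $P=\Pi_fg$ formed in $\HH_\C$ exists. By (2) the product computed in $\E$ exists. Because $\E\subseteq\HH_\C$ is full and closed under finite limits, and because the representing property \eqref{eq:relative-product} can be tested against the original objects $jS$ that generate $\E$ under groupoid realizations, the two products agree. Since $P\in\E$, it has an original atlas $jW\onto P$ over $jT$. Let $e$ be the counit $jW\times_{jT}Y\to P\times_{jT}Y\to Z$. A section $s$ over $jS$ corresponds to a map $jS\to P$ over $jT$. This map lifts along the atlas after passing to some cover $S'\onto S$, because $jS$ is projective for covers up to refinement: covers in $\E$ pulled back to $jS$ are refined by original covers, using the regular topology and the fact that $j$ is fully faithful and regular. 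Such a lift is exactly the datum required by \eqref{eq:lc-codes}. Hence $\mathrm{CF}_n$ holds.

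\emph{Hard direction.} Assume $\mathrm{CF}_n$, and first reduce to an original base $X=jT$. Every $X\in\E$ has an atlas $jT\onto X$ with Čech nerve $jT_\bullet$ ($T_k$ may be presented rather than original, but we may refine it by atlases). Dependent products in the $\infty$-topos $\HH_\C$ satisfy Beck--Chevalley and descent, so $\Pi_fg$ is the groupoid realization of the products over $jT_k$. Therefore it suffices to show that each product over an original base lies in $\E$, compatibly with pullback; $\E$ is closed under the relevant realizations because it is exact.

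Now fix the base $jT$ and induct on $n$.
\begin{enumerate}
\item For $n=-1$, $g$ is a monomorphism, and $\Pi_fg\to jT$ is the subobject of $jT$ classifying those points over which a section exists. Take the code $w:W\to T$ given by $\mathrm{CF}_{-1}$. Every section is locally coded, and every coded point is a section because $g$ is mono. Hence $\Pi_fg$ is the image of $jW\to jT$, a single propositional image.
\item For $n\ge0$, take the code family $(w,e)$ from $\mathrm{CF}_n$. The evaluation gives a map $c:jW\to P=\Pi_fg$ in $\HH_\C$, and weak adequacy says precisely that $c$ is a cover. Consequently $P$ is the realization of the Čech nerve of $c$, whose first stage is $jW\times_PjW$.
\item Identify this kernel as a dependent product: $jW\times_PjW$ is the product over $jW\times_{jT}jW$ of the identity family $\mathrm{Id}\to Y\times_{jT}(jW\times_{jT}jW)$ between the two evaluated sections. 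That identity family is a path object of $g$, hence $(n-1)$-truncated. Since $\mathrm{CF}_n$ includes the lower problems, the induction hypothesis applies, with base the original object $W\times_TW$ (or an atlas of it, followed by descent). So the kernel lies in $\E$.
\item Because $c$ is a cover in the $\infty$-topos, the Čech groupoid is effective, and its realization $P$ lies in $\E$ by exactness.
\end{enumerate}
Counting the stages gives one image at level $-1$ and one groupoid quotient at each level $0,\dots,n$, so at most $n+1$ nested quotients; this is the stated depth bound. The representing equivalence \eqref{eq:relative-product} and Beck--Chevalley are then inherited from $\HH_\C$, since $\E$ is full.

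\emph{Main obstacle.} The delicate step is step 3. We must show that the identity problem on the Čech kernel is a genuine instance of $\mathrm{CF}_{n-1}$ over an original base, with $Y$ and $Z$ replaced by presented pullbacks. We must also make the higher Čech stages $jW^{\times_P k}$ coherent. My first attempt is to apply the induction once to $W\times_TW$ and obtain the higher stages as iterated fibre products of the kernel over $jW$. These fibre products are finite limits in $\E$ and need no further codes, because a Segal groupoid is determined by its first two stages.
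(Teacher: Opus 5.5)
Your argument for (2)$\Rightarrow$(1) matches the paper's: take an original atlas of $\Pi_fg$, pull it back along the transpose of a section, refine by an original cover, and reflect that cover to $\C$. Your induction over an original base is also the paper's inductive step: the image of the code object at height $-1$, the kernel identification $jW\times_PjW\simeq\Pi_b(I\to Y_B)$ over the original base $j(W\times_TW)$, and the effective \v{C}ech quotient whose higher terms are fibre products of the kernel over $jW$. One small correction: the induction hypothesis must be quantified over all original bases rather than the fixed $jT$, since you apply it over $W\times_TW$.

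The gap is in how you pass to a non-original base $X$, and what that does to the final clause. Your descent reduction is sound for existence, and simpler than you make it. The descent terms are $\Pi_fg\times_XT_k\simeq P_T\times_{jT}T_k$, which are finite limits of $P_T$ and the presented $T_k$, so Lemma~\ref{lem:exact-descent} applies directly. Refining the $T_k$ by atlases is neither needed nor helpful. However, the reduction places one more effective quotient on top of $P_T$, whose own construction already uses up to $n+1$ nested quotients. Over a general completed base your construction therefore uses up to $n+2$ nested quotients. Your closing claim that this ``is the stated depth bound'' holds only for original bases, whereas the theorem asserts $n+1$ over arbitrary completed bases.

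The missing observation is Lemma~\ref{lem:all-contexts}. Take the source code $W\to T$ over an atlas $jT\onto X$, with the same evaluation $jW\times_XY\simeq jW\times_{jT}Y_T\to Z$. This is already a code family over $X$, adequate for every completed parameter: cover $A\times_XjT$ by an original $jS$ and compose covers. By Lemma~\ref{lem:weak-cover}, $jW\onto\Pi_fg$ is then a cover over $X$ itself, and no descent quotient is needed. The price is that the kernel $jW\times_{\Pi_fg}jW$ now lives over the non-original base $jW\times_XjW$. The induction must therefore run simultaneously over all completed bases, with the same lemma supplying codes at each stage. This is how the paper obtains one image plus at most $n+1$ nested quotients, relative to the code tower and input objects.
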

The proof first extends weak adequacy to every completed context, then inducts simultaneously over all bases on the relative truncation of $g$. For propositional $g$, the image of the projection from the code object to the base represents the dependent product. At each higher stage, identities between section codes form a section problem of one lower height. For two evaluations $e_0,e_1:T\times_XY\to Z$ over $Y$, their identity family is the pullback of $\Delta_g:Z\to Z\times_YZ$ along $(e_0,e_1)$; its sections are homotopies over $Y$ between the evaluations. Reconstructing this kernel makes its entire \v{C}ech groupoid available for the next quotient. Appendix~\ref{app:relative-fullness} gives the full argument.

We next use these dependent products to construct a univalent classifier for an individual truncated family.
A local class is stable under pullback, and membership descends along covers of the codomain.
For a family $p:E\to V$ in an exact category $\E$, let $\Loc(p)$ consist of the maps becoming pullbacks of $p$ after a cover of their codomain. A family $u:\widetilde U\to U$ is univalent if pullback induces fully faithful maps of spaces
\begin{equation}\label{eq:univalence}
 \Map_\E(T,U)\longrightarrow(\E/T)\core.
\end{equation}
Classification therefore includes fibre equivalences and all their higher identities~\cite{GepnerKock17,Rasekh22}.

\begin{theorem}[Univalent completion of an individual family]\label{thm:univalent-completion}
Assume $\mathrm{CF}_n(\C)$, and let $p:E\to V$ be an $n$-truncated family in $\E=\Ex(\C)$. There is a univalent family $\Uni(p)=u:\widetilde U\to U$ and a cartesian square from $p$ to $u$ whose base map is a cover $V\onto U$. For every $T\in\E$,
\begin{equation}\label{eq:universe-classification}
 \Map_\E(T,U)\simeq(\E/T)_{\Loc(p)}\core.
\end{equation}
The family $u$ is $n$-truncated and $U$ is $(n+1)$-truncated. If $\Loc(p)$ is all monomorphisms, then $u$ is a subobject classifier $1\into\Omega$.
\end{theorem}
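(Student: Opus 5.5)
We follow the standard Rezk/Gepner--Kock construction of a univalent completion: the object of fibrewise equivalences between pullbacks of $p$, followed by an effective groupoid quotient in $\E$.

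First we form the object of fibrewise equivalences. Write $\pi_1,\pi_2:V\times V\to V$. Consider the family $\pi_1^*E\to V\times V$ and the family $\pi_2^*E\to V\times V$, both $n$-truncated. The internal hom $\underline{\Map}_{V\times V}(\pi_1^*E,\pi_2^*E)=\Pi_{\pi_1^*p}(\pi_1^*E\times_{V\times V}\pi_2^*E\to\pi_1^*E)$ is a dependent product of an $n$-truncated map, since it is a pullback of $\pi_2^*p$. Theorem~\ref{thm:relative-fullness} under $\mathrm{CF}_n(\C)$ therefore supplies it in $\E$, with the full representing equivalence~\eqref{eq:relative-product} and Beck--Chevalley.

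Next we cut out equivalences. Being an equivalence means having a left and a right inverse. We encode this as a further finite limit of such internal homs together with dependent products of identity families, namely pullbacks of diagonals of $n$-truncated maps. Those diagonals are $(n-1)$-truncated, so $\mathrm{CF}_n$ suffices by the remark after Definition~\ref{def:coherent-fullness}. Call the result $\Eq(p)\to V\times V$. By~\eqref{eq:relative-product}, maps $T\to\Eq(p)$ over $V\times V$ are exactly equivalences $a^*E\simeq b^*E$ over $T$. Hence $\Eq(p)$ represents the path space of the presheaf $T\mapsto\Map_\E(T,V)\to(\E/T)\core$.

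Then we take the quotient. The composite and identity structure make $\Eq(p)\rightrightarrows V$ an internal Segal groupoid; this is the \v{C}ech-type "kernel" of the map $V\to(\E/-)\core$. We then argue that the underlying space-valued object it presents is $\Loc(p)$ restricted to $V$, using Yoneda in $\HH_\C$. Exactness of $\E$ (higher exactness, including effective quotients of internal group nerves) gives an effective quotient $U$ with a cover $V\onto U$ and $V\times_UV\simeq\Eq(p)$. Descent for the family $p$ along this groupoid yields $u:\widetilde U\to U$ with $p$ cartesian over $u$. Descent holds in $\HH_\C$, a topos, and $\E$ is closed there under groupoid realizations, so the descended family lies in $\E$.

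The main obstacle is identifying $\Map_\E(T,U)$ with $(\E/T)_{\Loc(p)}\core$ and showing the comparison is fully faithful. We first work in the topos $\HH_\C$, where sheafifying the map $V\to(\E/-)\core$ and taking its image gives the subsheaf of $\Loc(p)$-maps: every such map is locally a pullback of $p$, and the image of $h$ is 1-monic there. Since $U$ is the realization of the \v{C}ech nerve of $V\to\operatorname{im}$ and $\E\subseteq\HH_\C$ is full, $U$ represents this image. Univalence~\eqref{eq:univalence} follows because the image inclusion is a monomorphism of space-valued sheaves. The equivalence of fibre spaces uses that $\Eq(p)$ is the kernel.

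Finally, the truncation bounds. Pullback-stability and local descent of $n$-truncatedness give that $u$ is $n$-truncated. Path spaces of $U$ are spaces of equivalences between $n$-truncated objects, which are $n$-truncated, so $U$ is $(n+1)$-truncated. If $\Loc(p)$ is all monos, then $U$ classifies subobjects univalently. It is a $0$-type with $\Map(T,U)\simeq\Sub(T)$, and the universal mono over it pulls back from $\widetilde U\simeq1$, since the top element is the identity $1\into\Omega$.
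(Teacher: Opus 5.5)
Your proposal is correct and follows essentially the same route as the paper. Both build the fibrewise equivalence object from $\mathrm{CF}_n$ dependent products, take the effective quotient of its groupoid and of the tautological action, identify $V\onto U$ with the image of the classifying map into an ambient object classifier (so that classification reduces to locality of factoring through a subobject), and get the truncation bounds from the pullback of $\Delta_U$ to $V\times V$.

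The differences are cosmetic: you cut out equivalences by bi-invertibility rather than by contractible homotopy fibres, and you use the large sheaf $T\mapsto(\E/T)\core$ where the paper chooses a univalent classifier $U^{\mathrm{amb}}$ in a larger external universe. In the propositional case you should still say explicitly that maps $T\to\widetilde U$ are subobjects of $T$ equipped with a section, hence form a contractible space, which gives $\widetilde U\simeq1$.
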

Coherent fullness at height $n$ gives the fibrewise mapping objects. Realizing the groupoid of fibre equivalences and its tautological action gives the univalent classifier~\cite{BergMoerdijk18,Stenzel23}. Appendix~\ref{app:universes} gives the construction.

\section{\texorpdfstring{$\infty$}{∞}-Triposes: Construction and Universal Property}\label{sec:triposes}
The input to the construction consists of regular types, coherent section
codes, and weak universe presentations. The logical closure of a universe
level is tested on finitely many generic families. Quotients and the types
represented by the codes belong to the completed category.

\subsection{Weak Universes and Logical Closure}\label{sec:generic-closure}
Let $\C$ be regular and put $\E=\Ex(\C)$. For a presented family $p:P\to jV$ with
$V\in\C$, write $q\preceq_\C p$ if, for every original test
$jT\to\operatorname{cod}(q)$, there are a source cover $T'\onto T$,
a source map $v:T'\to V$, and an equivalence
\begin{equation}\label{eq:lc-local-class}
 q|_{jT'}\simeq v^*p\quad\text{over }jT'.
\end{equation}
Original atlases identify this condition with membership in
$\Loc_\E(p)$, the full replete class of families locally pulled back
from $p$. This condition is local for covers of arbitrary completed bases.

\begin{proposition}[Higher families and source-generated classes]
\label{prop:source-generics-obstruction}
Let $\C$ be an ordinary regular category containing a nontrivial internal
group $G$. For every collection of source families $p_i:E_i\to V_i$,
the classes $\Loc_{\Ex(\C)}(jp_i)$ together form a proper subclass of
all arrows of $\Ex(\C)$, with $BG\to1$ witnessing the strict inclusion.
\end{proposition}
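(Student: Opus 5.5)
Since $\C$ is an ordinary category, every source family $p_i:E_i\to V_i$ is a map between $0$-truncated objects, and $j$ preserves finite limits and truncation levels; hence $jp_i$ is a $0$-truncated map in $\E=\Ex(\C)$. The plan is to show that every member of $\Loc_\E(jp_i)$ is $0$-truncated, and then to exhibit $BG\to1$ as a map in $\E$ that is not $0$-truncated. Properness follows at once, and the same witness works uniformly for every collection $\{p_i\}$.

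\emph{Step 1 (locality of truncatedness).} Let $q:Q\to B$ lie in $\Loc_\E(jp_i)$. Then there is a cover $c:B'\onto B$ such that $c^*q\simeq v^*jp_i$ for some $v:B'\to jV_i$. Pullbacks of $0$-truncated maps are $0$-truncated, so $c^*q$ is $0$-truncated. The relative diagonals $\Delta^{(k)}_q$ commute with pullback along $c$, and monomorphisms descend along covers in the exact (indeed topos-like, inside $\HH_\C$) setting: a map whose pullback along an effective epimorphism is mono is mono. Hence $\Delta_q$ is mono, i.e.\ $q$ is $0$-truncated. The cleanest route is to compute in $\HH_\C$, where $\E$ is closed under finite limits and covers are effective epimorphisms of sheaves. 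There truncatedness of a map is detected fibrewise on stalks/sections, and effective epimorphisms are surjective on $\pi_0$ locally, so descent of $n$-truncatedness is standard.

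\emph{Step 2 (the witness).} The internal group $G$ gives a Segal groupoid $G^{\bullet}$ in $\C$ (the bar construction, with $G^0=1$). It consists of original objects and lies in $\E$, so by exactness its effective quotient $BG=|G^\bullet|$ exists in $\E$, with a cover $1\onto BG$ whose Čech nerve is $G^\bullet$. Effectiveness gives $1\times_{BG}1\simeq jG$, so the loop object $\Omega BG$ at the basepoint is $jG$. If $BG\to1$ were $0$-truncated, the diagonal $BG\to BG\times BG$ would be mono, so its pullback $1\times_{BG}1\to 1$ would be mono; that is, $jG\to1$ would be mono. Since $j$ is fully faithful and preserves finite limits, this means $G\to1$ is mono in $\C$, so $G\simeq1$, contradicting nontriviality. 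Therefore $BG\to1$ lies in no $\Loc_\E(jp_i)$, and the inclusion into all arrows is strict.

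The main obstacle is Step 1's descent claim: that monomorphisms, and more generally $n$-truncated maps, descend along covers in $\Ex(\C)$. I would prove it in $\HH_\C$, where it is standard $\infty$-topos theory (effective epimorphisms are stable, and a map is mono iff its diagonal is an equivalence, which can be checked after an effective epi base change by descent). Then I would use that $\E\subseteq\HH_\C$ is closed under finite limits and that covers in $\E$ are covers in $\HH_\C$, since the unit is regular.
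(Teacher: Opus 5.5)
Your proof is correct and follows the paper's argument essentially step for step: source maps are $0$-truncated, $0$-truncatedness is stable under pullback and descends along covers, and the loop object of $BG$ is $1\times_{BG}1\simeq jG$. The one step you leave implicit is why a monomorphism $G\to 1$ forces $G\simeq 1$: the unit $1\to G$ is a section of it, and a monomorphism with a section is an equivalence. This is exactly how the paper concludes that the loop object would be terminal.
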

The quotient $BG\to1$ has loop object $jG$, and source maps and their
local pullbacks are $0$-truncated. Presented total spaces therefore extend
weak universe classification to higher families over original parameter objects.

\begin{definition}[Weak universe hierarchy]\label{def:weak-hierarchy}
A weak universe hierarchy consists of a directed index set $I$ with a
distinguished element $\mathrm{prop}$, finite integers $n_i\ge-1$, and
$n_i$-truncated presented families $p_i:P_i\to jV_i$, with $V_i\in\C$.
Writing $\M_i=\Loc_\E(p_i)$, require:
\begin{enumerate}
\item $\M_i\subseteq\M_k$ for $i\le k$;
\item every presented family over an original base belongs to some $\M_i$;
\item $n_{\mathrm{prop}}=-1$, and every monomorphism over an original
base belongs to $\M_{\mathrm{prop}}$.
\end{enumerate}
\end{definition}

For the generic constructions and closure conditions below, assume coherent fullness
$\mathrm{CF}_{<\infty}(\C)$. For each level,
Theorem~\ref{thm:univalent-completion} gives
$u_i:=\Uni(p_i):\widetilde U_i\to U_i$, with $U_i$
$(n_i+1)$-truncated. Form
\begin{equation}\label{eq:generic-container}
 d_i:D_i\longrightarrow U_i,
 \qquad d_i:=\Pi_{u_i}(\widetilde U_i\times U_i\to\widetilde U_i).
\end{equation}
The second map is $(n_i+1)$-truncated, so
Theorem~\ref{thm:relative-fullness} constructs this product from
$\mathrm{CF}_{n_i+1}$. Its evaluation is a map
$\epsilon_i:A_i:=D_i\times_{U_i}\widetilde U_i\to U_i$. Define
\begin{equation}\label{eq:generic-pair}
 B_i:=A_i\times_{\epsilon_i,U_i}\widetilde U_i
 \xrightarrow{g_i}A_i\xrightarrow{f_i}D_i,
 \qquad \sigma_i=f_ig_i,
 \qquad \rho_i=\Pi_{f_i}g_i.
\end{equation}
Both members of this composable pair lie in $\M_i$. Coherent fullness
$\mathrm{CF}_{n_i}$ constructs $\rho_i$ in the completion. The dependent-product
condition below characterizes its membership in $\M_i$.

\begin{theorem}[Five conditions for universe closure]\label{thm:generic-tests}
For a weak universe level, the following conditions are equivalent:
\begin{enumerate}
\item $\M_i$ contains identities and is closed under image inclusions,
relative diagonals, composition, and dependent products of composable
maps in $\M_i$;
\item the following five conditions for universe closure hold:
\begin{equation}\label{eq:generic-tests}
 \begin{gathered}
 \id_1\preceq_\C p_i,\qquad
 (\im(p_i)\into jV_i)\preceq_\C p_i,\qquad
 \Delta p_i\preceq_\C p_i,\\
 \sigma_i\preceq_\C p_i,\qquad \rho_i\preceq_\C p_i.
 \end{gathered}
\end{equation}
\end{enumerate}
Composition and dependent-product closure are separately equivalent to
the conditions on $\sigma_i$ and $\rho_i$, respectively. The comparison
includes the canonical substitution equivalences.
\end{theorem}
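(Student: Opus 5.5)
The plan is to prove each closure property of $\M_i$ separately against the matching condition. The engine is one observation: $\M_i=\Loc_\E(p_i)$ is local and pullback-stable, and by Theorem~\ref{thm:univalent-completion} it is exactly the class of families classified by maps into $U_i$. So every member of $\M_i$ is locally a pullback of $u_i$, and hence locally a pullback of $p_i$ along the cover $jV_i\onto U_i$. For a single operation $\mathcal O$ (identity, image inclusion, diagonal, composition, dependent product) the equivalence then reduces to a genericity statement. Closure of $\M_i$ under $\mathcal O$ holds iff $\mathcal O$ applied to the generic input lies in $\M_i$. Membership in $\M_i$ of a family over an original base is the relation $\preceq_\C p_i$, via original atlases as in the remark after \eqref{eq:lc-local-class}.

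The direction (1)$\Rightarrow$(2) is immediate once each tested family is known to be $\mathcal O$ of members of $\M_i$.
\begin{itemize}
\item $\id_1$ is an identity.
\item $\im(p_i)\into jV_i$ is an image inclusion of $p_i\in\M_i$.
\item $\Delta p_i$ is a relative diagonal of $p_i$.
\item For $\sigma_i$ and $\rho_i$, first check that $f_i$ and $g_i$ are in $\M_i$. The map $f_i$ is the pullback of $u_i$ along $D_i\to U_i$. The map $g_i$ is the pullback of $u_i$ along $\epsilon_i$.
\end{itemize}
Since these are pullbacks of $u_i$, they lie in $\M_i$ by locality. Then $\sigma_i=f_ig_i$ and $\rho_i=\Pi_{f_i}g_i$ lie in $\M_i$ by (1). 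Where the base is completed rather than original, I transfer along an original atlas using locality.

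For (2)$\Rightarrow$(1), I argue operation by operation.
\begin{itemize}
\item \emph{Identities.} Every identity is a pullback of $\id_1$.
\item \emph{Diagonals.} Let $q\in\M_i$. Locally $q\simeq v^*p_i$, so $\Delta q\simeq v^*\Delta p_i$ locally, because diagonals commute with pullback.
\item \emph{Images.} Image formation is pullback-stable in a regular category, so $\im(q)$ is locally a pullback of $\im(p_i)\into jV_i$.
\end{itemize}
Each of these is local, so it lies in $\M_i$.

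For composition and dependent products I use the universal property of $(D_i,A_i,B_i)$. Take a composable pair $Z\xrightarrow gY\xrightarrow fX$ in $\M_i$. After covering $X$, $f$ is classified by some $a:X\to U_i$ with $Y\simeq a^*\widetilde U_i$. Next, $g$ is classified by a map $Y\to U_i$, which is a section over $X$ of the pulled-back family $Y\times U_i\to Y$. By the representing equivalence \eqref{eq:relative-product} for $d_i$, this section is a map $c:X\to D_i$ over $U_i$. Pulling back along $c$ then carries $(g_i,f_i)$ to $(g,f)$ up to equivalence: $Y\simeq c^*A_i$, and $Z\simeq c^*B_i$ because $\epsilon_i$ evaluates the section. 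Hence $fg\simeq c^*\sigma_i$. By Beck--Chevalley for $\Pi$ (Theorem~\ref{thm:relative-fullness}), also $\Pi_fg\simeq c^*\rho_i$. Both therefore lie in $\M_i$ by locality. This also shows that composition closure alone is equivalent to the $\sigma_i$ condition, and dependent-product closure alone to the $\rho_i$ condition.

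The canonical substitution equivalences are exactly the Beck--Chevalley and pullback-pasting comparisons used above. Their coherence is inherited from the coherent functoriality of $\Pi$ in Theorem~\ref{thm:relative-fullness}.

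The main obstacle is the genericity step. I must show that the pair $(f,g)$ is equivalent over $X$ to $c^*(f_i,g_i)$ as a composable pair, not merely that $Z$ and $Y$ are separately equivalent to $c^*B_i$ and $c^*A_i$. The equivalences must be compatible: the equivalence on $Z$ has to lie over the one on $Y$. The first approach I would try is to construct the equivalence on $Z$ directly as the pullback of $u_i$ along the section, and then read off compatibility from the evaluation counit of $\Pi_{u_i}$. I would also need to confirm that covering $X$ does not disturb the $\Pi$ comparison, which should follow from Beck--Chevalley along the cover.
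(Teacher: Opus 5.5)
Your proposal is correct and follows essentially the same route as the paper. Locality of $\M_i=\Loc_\E(p_i)$ handles identities, images and diagonals. A classifying map $c:X\to D_i$, obtained from univalent classification and the product adjunction for $d_i$, gives $fg\simeq c^*\sigma_i$ and $\Pi_fg\simeq c^*\rho_i$ by Beck--Chevalley. Lemma~\ref{lem:local-recoding} then turns membership in $\M_i$ into the relations $\preceq_\C p_i$.

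The paper isolates your genericity step as Lemma~\ref{lem:generic-pair}. The compatibility you flag as the main obstacle is automatic by pullback pasting: $B_i$ is defined as a pullback over $A_i$, so $c^*B_i\simeq Y\times_{U_i}\widetilde U_i\simeq Z$ lies over $c^*A_i\simeq Y$. No cover of $X$ is needed either, since univalent classification already gives $a:X\to U_i$ globally.
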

Every composable pair in $\M_i$ is classified by a map into $D_i$.
The last two conditions therefore imply closure by pullback and
Beck--Chevalley.

\begin{definition}[$\infty$-tripos]\label{def:tripos}
An $\infty$-tripos is a regular type doctrine $\PP$
whose closed-type category $\C_\PP=\PP(1)$ satisfies coherent fullness,
equipped with a weak universe hierarchy satisfying the five conditions for universe
closure~\eqref{eq:generic-tests} at every level.
\end{definition}
Exhaustiveness and the finite truncation bound at each level imply boundedness of both source and completion.

\begin{definition}[Closed bounded hierarchy]\label{def:closed-hierarchy}
On an exact locally cartesian closed $\infty$-category, a closed bounded
hierarchy is an exhaustive directed family of full replete local classes
$\M_i$ of $n_i$-truncated maps, with finite $n_i\ge-1$, each classified
by a selected univalent universal family. Each class contains identities
and is closed under image inclusions, relative diagonals, composition, and
same-level dependent products. The distinguished class
$\M_{\mathrm{prop}}$ consists of all monomorphisms and has bound $-1$.
\end{definition}
Section~\ref{sec:completion} develops universe levels closed under bounded groupoid quotients.

\begin{theorem}[$\infty$-tripos-to-topos construction and universe reconstruction]
\label{thm:tripos-construction}\label{thm:universes}
For an $\infty$-tripos $\PP$, put
\[
 \TT\PP:=\Ex(\PP(1)),\qquad \M_i=\Loc_{\TT\PP}(p_i).
\]
Then $\TT\PP$ is bounded, exact, and locally cartesian closed. The
univalent completions $u_i:\widetilde U_i\to U_i$ form a closed bounded
hierarchy, with equivalences
\begin{equation}\label{eq:classifying}
 \Map_{\TT\PP}(X,U_i)
 \simeq\bigl((\TT\PP)_{/X}\bigr)_{\M_i}\core.
\end{equation}
The family $u_i$ is $n_i$-truncated, $U_i$ is $(n_i+1)$-truncated,
and $u_{\mathrm{prop}}$ is $1\into\Omega$. Class inclusions induce
coherent cumulative maps and cartesian comparisons of universal families.
Consequently $\TT\PP$ is an exact elementary $\infty$-topos.
\end{theorem}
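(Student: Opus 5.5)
The plan is to assemble the statement from Theorems~\ref{thm:relative-fullness}, \ref{thm:univalent-completion} and~\ref{thm:generic-tests}, applied to $\C=\C_\PP=\PP(1)$ and $\E=\TT\PP=\Ex(\C)$. Exactness of $\TT\PP$ is part of the higher exact completion package~\cite{Stefanich23}.

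\emph{Boundedness.} I would argue boundedness first, since local cartesian closure needs it. Let $X\in\TT\PP$ and choose an original atlas $jA\onto X$. By Definition~\ref{def:weak-hierarchy}(2), the presented family $X\times_{jA}\ldots$ is not quite the right object, so instead I apply exhaustiveness to $X\to 1=j1$, which is a presented family over an original base. It lies in some $\M_i$, and every map in $\M_i$ is locally a pullback of the $n_i$-truncated $p_i$. Truncatedness descends along covers, so $X$ is $n_i$-truncated.

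\emph{Local cartesian closure.} Given $Z\xrightarrow gY\xrightarrow fX$ in $\TT\PP$, the map $g$ is a map between bounded objects, hence $n$-truncated for some finite $n$. Then $\mathrm{CF}_{<\infty}$ and Theorem~\ref{thm:relative-fullness} provide $\Pi_fg$, with Beck--Chevalley.

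\emph{Classifiers and closure.} For each level, Theorem~\ref{thm:univalent-completion} applied to $p_i$ (using $\mathrm{CF}_{n_i}$) gives $u_i$ with the stated truncation degrees and with \eqref{eq:universe-classification}. This is exactly \eqref{eq:classifying}, once we note that $\Loc(p_i)=\M_i$ and that these classes are full, replete and local. For the distinguished level, $\M_{\mathrm{prop}}$ contains every monomorphism over an original base and consists of $(-1)$-truncated maps. Locality over original atlases then shows it is all monomorphisms, so $u_{\mathrm{prop}}$ is $1\into\Omega$ by the last clause of Theorem~\ref{thm:univalent-completion}.

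Exhaustiveness over arbitrary bases follows in the same way. A map $q:E\to X$ becomes, after an atlas $jA\onto X$, a family over an original base lying in some $\M_i$, and locality gives $q\in\M_i$. The closure properties of each $\M_i$ (identities, images, diagonals, composition, same-level products) are exactly condition~(1) of Theorem~\ref{thm:generic-tests}. That condition follows from the five conditions, which hold by Definition~\ref{def:tripos}.

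\emph{Cumulativity.} For $i\le k$ we have $\M_i\subseteq\M_k$. Via \eqref{eq:classifying}, the inclusion $(\E_{/X})_{\M_i}\core\subseteq(\E_{/X})_{\M_k}\core$, natural in $X$, yields $U_i\to U_k$ by Yoneda. It is a monomorphism because the inclusion is a full subgroupoid, and the pullback of $u_k$ along it is $u_i$ by construction. Coherence among composites again follows from representability.

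\emph{Elementary topos.} Finally, the universe-based definition~\cite{ApolPetrowitsch26} asks for finite limits, local cartesian closure and enough closed univalent families. Every map lies in some closed class $\M_i$ with universe $u_i$, so this is satisfied.

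The main obstacle I expect is the step from source-level hypotheses to arbitrary completed bases. Both exhaustiveness and the identification of $\M_{\mathrm{prop}}$ with all monomorphisms require that $\Loc_\E(p_i)$ is local for covers of completed bases and that membership can be tested on original atlases. I would rely on the locality remark following \eqref{eq:lc-local-class} and verify descent explicitly, by pulling back along $jA\onto X$ and its Čech nerve.
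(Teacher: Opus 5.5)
Your proposal is correct and follows essentially the paper's proof. Exactness comes from exact completion, and finite truncation bounds and exhaustiveness come from Definition~\ref{def:weak-hierarchy} together with locality of $\Loc$ along original atlases. Products come from Theorem~\ref{thm:relative-fullness}, the classifiers from Theorem~\ref{thm:univalent-completion}, and the closure properties from Theorem~\ref{thm:generic-tests}; the atlas argument handles $\M_{\mathrm{prop}}$, and classifying $u_i$ by $u_k$ gives cumulativity.

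The differences are cosmetic: you get boundedness directly by applying exhaustiveness to $X\to j1$, and your Yoneda phrasing of cumulativity also shows that $U_i\to U_k$ is monic. The obstacle you anticipate is also milder than you fear. Locality of $\Loc_{\E}(p_i)$ needs only base change and composition of covers (Lemma~\ref{lem:local-recoding}), not an explicit \v{C}ech-nerve descent argument.
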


\begin{corollary}[Completed logical operations]\label{cor:lcc}
The completed doctrine
\[
 (\mathsf Q\PP)(X)=(\TT\PP)_{/j\delta_\PP X}
\]
has both adjoints to every reindexing. Its unit is fibrewise fully
faithful, identifies the subobject lattices of source fibre objects,
and preserves every dependent product already existing in the source.
Source predicate equality, finite conjunction, implication, and
existential and universal quantification are preserved and reflected.
\end{corollary}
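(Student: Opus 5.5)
We reduce the corollary to the fibrewise description of the completed doctrine in Theorem~\ref{thm:slice-completion} and the local cartesian closure of $\TT\PP$ from Theorem~\ref{thm:tripos-construction}. Since $(\mathsf Q\PP)(X)=(\TT\PP)_{/j\delta_\PP X}$ is a slice, reindexing along $f:X\to X'$ is pullback along $j\delta_\PP f$, as in Proposition~\ref{prop:moens} transported through Theorem~\ref{thm:regular-completion}. Its left adjoint is composition with $j\delta_\PP f$. Its right adjoint is the dependent product $\Pi_{j\delta_\PP f}$, which exists because $\TT\PP$ is locally cartesian closed. Beck--Chevalley for the right adjoint follows from the Beck--Chevalley statement in Theorem~\ref{thm:relative-fullness}, which applies because $\TT\PP$ is bounded and every map is $n$-truncated for some finite $n$.

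For the unit, Theorem~\ref{thm:regular-completion} already gives fibrewise full faithfulness. We identify the fibre $\PP(X)$ with $(\C_\PP)_{/\delta_\PP X}$ and apply the slice equivalence $\Ex(\C_{/T})\simeq(\Ex\C)_{/jT}$, so the unit becomes the higher exact completion unit of the slice. That unit is regular and fully faithful by~\cite{Stefanich23}.

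For the subobject lattices we argue as follows. A monomorphism into $jA$ in $\Ex(\C)$ is $(-1)$-truncated. Take an original atlas $jB\onto M$ of it; then $M$ is the image of the composite $jB\to jA$. Because $j$ is regular, this image is $j$ of the image of $B\to A$ computed in $\C$. Hence $\Sub_{\Ex\C}(jA)\simeq\Sub_\C(A)$, and this holds fibrewise after slicing. The same identification yields preservation and reflection of predicate equality, conjunction (pullback of monos, preserved since $j$ is left exact) and existential quantification (image of a dependent sum, preserved since $j$ is regular).

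The main step is preservation of dependent products that already exist in the source. Let $\Pi_f g$ exist in $\C$ and put $P=\Pi_fg$. We must show that $jP$ satisfies the universal property \eqref{eq:relative-product} against every completed $A\to jX$, not only against original $A$. The plan has four parts.
\begin{enumerate}
\item Restricted to original $A$, the universal property holds by full faithfulness of $j$ together with $j(A\times_XY)\simeq jA\times_{jX}jY$.
\item Write an arbitrary $A$ as the groupoid realization of a \v{C}ech nerve of original atlases, which is available since $\TT\PP$ is generated by such realizations.
\item Both sides of \eqref{eq:relative-product} send these colimits to limits. On the right this needs pullback along $jY\to jX$ to preserve realizations, which holds by pullback-stability in the exact category. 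On the left, mapping out of a colimit gives a limit.
\item The realization may be iterated, because an atlas of an atlas is again original only up to nested quotients. We therefore induct on the height of the atlas presentation, using boundedness of $\TT\PP$ and the recognition lemma of Appendix~\ref{app:lc-atlases}.
\end{enumerate}
This comparison of $jP$ with $\Pi_{jf}jg$ is the expected main obstacle. It is the one place where the required presheaf-level Yoneda statement must be pushed through all higher stages of the presentation.

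Universal quantification is the image-free dependent product restricted to monomorphisms, so its preservation follows from this step. Implication $\phi\Rightarrow\psi$ is $\Pi_{\phi}$ of $\psi|_\phi$, so it is preserved as well. Reflection of all these connectives then follows from the identification of subobject lattices together with full faithfulness.
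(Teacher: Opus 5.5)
Your treatment of the adjoints to reindexing, fibrewise full faithfulness, the subobject-lattice identification (original atlas plus regular image), and equality, conjunction and $\exists$ is the same as the paper's.

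\textbf{The gap: existence of $\forall$ and $\Rightarrow$.} You obtain preservation of these from preservation of dependent products \emph{already existing} in $\C_\PP$. But a regular source need not have $\forall_f$ or $\Rightarrow$ at all. The corollary asserts these operations on all source predicates, and the paper's proof concludes that the operation ``exists and is preserved''. It gets existence from coherent fullness. For source $m:A\into B$ and $f:B\to T$, the image of a $\mathrm{CF}_{-1}$ code $W\to T$ is the completed product $\Pi_{jf}(jm)$, hence an original subobject of $jT$. Full faithfulness then restricts its adjunction to source tests, so it is the source $\forall_f m$. Implication $\Pi_{A\into T}(A\times_TB\into A)$ is handled the same way.

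Your own subobject step closes the gap even more directly. The bijection $\Sub_\C(T)\cong\Sub_\E(jT)$ is natural in $T$ and compatible with pullback. So the right adjoint $\forall_{jf}$ and the Heyting implication on $\Sub_\E(jT)$, which exist because $\E$ is locally cartesian closed, transport to $\Sub_\C$. This gives existence, preservation and reflection at once, and your ``main step'' is then not needed for the predicate claims at all.

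\textbf{A different route for existing products.} For preservation of an existing $P=\Pi_fg$, your argument is sound but differs from the paper's. You extend the universal property from original tests to all of $\E/jX$ inside $\E$, using that both sides of~\eqref{eq:relative-product} turn groupoid realizations into limits. The paper instead uses Lemma~\ref{lem:old-pi}: $hP$ is already the product in the ambient $\HH_\C$, because representables generate $\HH_\C/hX$ under all colimits, and fullness restricts that adjunction to $\E$. This needs no boundedness or induction and gives the stronger ambient identification.

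If you keep your route, replace the induction on atlas height by the closure argument in the proof of Theorem~\ref{thm:slice-completion}. The tests on which the comparison is an equivalence contain the original slice objects and are closed under groupoid realizations in $\E/jX$, so they exhaust it. A height induction as you sketch it would need its hypothesis to range over all original bases mapping to $X$. The reason is that the \v{C}ech terms of an atlas $jA_0\onto A$ drop in height only relative to $j(A_0\times_XA_0)$, not necessarily relative to $jX$ when the source is higher.
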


\subsection{Logical Morphisms and the Fundamental Adjunction}
\label{sec:fundamental}
For regular $F:\C\to\D$ with exact target, exact completion gives
a regular extension $\bar F:\Ex(\C)\to\D$ that interprets presented
types independently of presentation, including all coherent comparisons.

Say that $F$ preserves selected coherent-fullness codes if their
interpreted evaluations are weak section families in $\D$, tested
against every object of $\D$.

\begin{theorem}[Criterion for preserving dependent products]\label{thm:lc-morphisms}
Let $\C$ be bounded regular with coherent fullness, and let $\D$ be
exact and locally cartesian closed. A regular $F:\C\to\D$ preserves
the selected coherent-fullness codes if and only if $\bar F$ preserves
all dependent products. Restriction gives
\begin{equation}\label{eq:lc-logical-up}
 \Fun_{\mathrm{reg},\Pi}(\Ex(\C),\D)\core
 \simeq\Fun_{\mathrm{reg},\mathrm{codes}}(\C,\D)\core.
\end{equation}
For each code-preserving $F$, the space of extensions with an identification
of their restriction with $F$ is contractible.

More generally, for any regular $\C$ satisfying $\mathrm{CF}_n$, with
finite $n\ge-1$, preservation of its selected $\mathrm{CF}_n$ codes is
equivalent to preservation by $\bar F$ of all products whose second map
is $n$-truncated. The corresponding restriction equivalence holds with
both sides at this finite height.
\end{theorem}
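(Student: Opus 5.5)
The plan is to prove both directions of the equivalence by running the reconstruction of Theorem~\ref{thm:relative-fullness} in two places: inside $\Ex(\C)$, and then transported along $\bar F$ into $\D$. We first fix the finite height $n$ and prove the general statement; the bounded case is then recovered by taking all heights.

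\emph{Products preserved implies codes preserved.} Suppose $\bar F$ preserves dependent products $\Pi_fg$ with $g$ $n$-truncated. Take a selected code family $(w,e)$ for a problem $Z\to Y\to jT$. By Theorem~\ref{thm:relative-fullness}, the canonical map $jW\to\Pi_fg$ is a cover in $\E$, since weak adequacy against original tests together with original atlases gives local surjectivity. Because $\bar F$ is regular, $\bar FjW\to\bar F\Pi_fg\simeq\Pi_{\bar Ff}\bar Fg$ is again a cover. A cover onto the genuine dependent product in $\D$ is exactly a weak section family tested against all objects of $\D$: a section over $T\times Y$ corresponds to a map $T\to\Pi$, which lifts locally along the cover. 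Hence the interpreted evaluation is weakly adequate.

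\emph{Codes preserved implies products preserved.} We induct on $n$, following the proof of Theorem~\ref{thm:relative-fullness}. For $n=-1$, the product is the image of $jW\to jT$ (after base change to arbitrary completed bases). Regular functors preserve images, and preservation of the codes says $\bar F$ of that image is the image of a weak section family in $\D$, which is $\Pi$ there. For the inductive step, $\Pi_fg$ is the effective quotient of the \v{C}ech-type groupoid whose object of objects is $jW$ and whose morphisms come from the identity family, that is, the pullback of $\Delta_g$, which is $(n-1)$-truncated. By the induction hypothesis $\bar F$ sends the dependent product of this identity family to the corresponding product in $\D$. Since $\bar F$ is regular into an exact $\D$, it preserves effective groupoid quotients, and the comparison $\bar F\Pi_fg\to\Pi_{\bar Ff}\bar Fg$ becomes a map of quotients of equivalent groupoids. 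On objects, the comparison is a cover by code preservation; on morphisms, it is an equivalence on the identity family's product. It is therefore an equivalence.

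First, one must pass from original bases to arbitrary completed bases $X$. Here we would use atlases $jA\onto X$, Beck--Chevalley from Theorem~\ref{thm:relative-fullness}, and the fact that equivalences descend along covers in $\D$. Second, the comparison maps must be coherently natural, not just objectwise. The main obstacle is making this groupoid comparison precise: one needs the identity-family product in $\D$ to really be the kernel of $\bar FjW\to\Pi$. I would try to show directly that in any locally cartesian closed $\D$ with a weakly adequate cover $W'\to\Pi$, the kernel $W'\times_\Pi W'$ is computed as $\Pi$ of the pulled-back diagonal family. That is a formal consequence of the universal property~\eqref{eq:relative-product} applied in $\D$.

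Finally, the restriction equivalence~\eqref{eq:lc-logical-up} follows from the universal property of $\Ex$. Restriction $\Fun_{\mathrm{reg}}(\Ex(\C),\D)\core\simeq\Fun_{\mathrm{reg}}(\C,\D)\core$ is an equivalence. Preserving products is a property, not extra structure, and the two directions above identify it with code preservation, so the equivalence restricts to the full subgroupoids. Contractibility of the space of extensions is then the fibre of this equivalence. For bounded $\C$, every product in $\Ex(\C)$ has $g$ $n$-truncated for some finite $n$, so preserving all codes at all heights gives all products.
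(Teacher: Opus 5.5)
Your proposal is correct and follows essentially the paper's proof. The converse transports the code cover $jW\onto\Pi_fg$ along $\bar F$, and the forward direction inducts on the truncation of $g$, with the identity family supplying the lower-height step. General bases are handled by descent along original atlases, and the restriction equivalence comes from the regular universal property. The only difference is cosmetic: you compare the \v{C}ech nerves of the code cover, whose kernel lives over the original base $W\times_TW$, whereas the paper first shows that the comparison $c$ is a cover and then identifies $\Delta_c$ with the lower-height comparison over $P\times_XP$. Your level-one comparison is the base change of $\Delta_c$ along a cover, so the two arguments are equivalent.
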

The product comparison is a cover by code preservation. Its diagonal is
the comparison for a lower-truncated identity family; finite induction
therefore makes it monic as well. The criterion is independent of the
choice of adequate code witnesses.

\begin{theorem}[Dependent-product preservation]\label{thm:preservation}
Let $F:\E\to\D$ be regular between exact categories carrying closed
bounded hierarchies, and suppose it preserves level classes. Then $F$
preserves every weak section family whose pair and code-family structure map belong to
one level if and only if it preserves all dependent products.
\end{theorem}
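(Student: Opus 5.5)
We prove the two implications separately. Dependent products will be used in $\E$ and $\D$ only for composable pairs lying in a common level, where the product exists by closure of the level. Exhaustiveness of the hierarchies lets us reach arbitrary pairs.

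\emph{Products imply weak section families.} Suppose $F$ preserves all dependent products, and let $(w,e)$ be a weak section family for $Z\xrightarrow gY\xrightarrow fX$, with $g$, $f$ and $w$ in one level $\M_i$. In $\E$ the product $\Pi_fg\to X$ exists. The evaluation $e$ corresponds, by the representing equivalence~\eqref{eq:relative-product}, to a comparison $c:W\to\Pi_fg$ over $X$. Weak adequacy, tested on the generic section over $\Pi_fg$ itself, says exactly that $c$ is a cover. Since $F$ is regular and preserves the product, $Fc:FW\to\Pi_{Ff}Fg$ is again a cover. Any section $s:T\times_{FX}FY\to FZ$ in $\D$ corresponds to a map $T\to\Pi_{Ff}Fg$, and we lift it along the cover $Fc$ after a cover of $T$. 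This gives local representability, so $(Fw,Fe)$ is a weak section family.

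\emph{Weak section families imply products.} Conversely, suppose $F$ preserves same-level weak section families. Take a composable pair $Z\xrightarrow gY\xrightarrow fX$. By exhaustiveness and directedness, pulling back to a cover if necessary, we may place $g$ and $f$ in a common level $\M_i$, so that $\Pi_fg\in\M_i$. We consider the canonical comparison $\kappa:F\Pi_fg\to\Pi_{Ff}Fg$, where the target exists in $\D$ because $Ff$ and $Fg$ lie in the level $F\M_i\subseteq\M'_{i'}$. We show $\kappa$ is an equivalence by induction on the truncation height of $g$, following the sketch after Theorem~\ref{thm:lc-morphisms}.
\begin{enumerate}
\item The product itself, with its evaluation, is a weak section family whose structure maps lie in $\M_i$. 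Its image under $F$ is therefore weak, which means precisely that $\kappa$ is a cover.
\item The diagonal of $\kappa$ over $\Pi_{Ff}Fg$ is identified, via Beck--Chevalley and the description of identity families as pullbacks of $\Delta_g$, with the comparison map for the product of the identity family. This identity family is $(n-1)$-truncated and lies in $\M_i$, because levels are closed under relative diagonals and pullback.
\item By induction this diagonal comparison is an equivalence. The base case is $g$ monic, where both sides are subterminal over the base and any cover between them is invertible. Hence $\kappa$ is monic, and a monic cover is an equivalence.
\end{enumerate}
The induction must run simultaneously over all bases, since the identity problem lives over $\Pi_fg\times_X\Pi_fg$. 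We make the statement base-uniform in the way Theorem~\ref{thm:relative-fullness} does.

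\emph{Expected obstacle.} The main difficulty is step 2: coherently identifying the diagonal of $\kappa$ with the comparison for the identity family, including the higher homotopies, while checking that the code-family structure map stays within the same level. For this we plan to use the explicit description of identity families together with level closure under relative diagonals and pullback. A second point needing care is that a pair in distinct levels is moved into a common one only after a cover. Dependent products are local on the base under pullback along covers, and covers are preserved by $F$, so this reduction is harmless.
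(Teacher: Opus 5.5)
Your proposal is correct and follows essentially the same route as the paper. The product with its evaluation is a same-level weak section family, so the comparison is a cover. Its diagonal is the comparison for the $(n-1)$-truncated identity family, which stays in the level by pullback and relative-diagonal closure, and whose section object $\Delta_{P/X}$ lies in the level too. Induction from the monic case then makes the comparison monic, and the converse transports the code cover $W\onto\Pi_fg$ along $F$.

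The only superfluous step is your ``pulling back to a cover if necessary'': a closed bounded hierarchy is exhaustive over all arrows, so directedness alone places any composable pair in a common level.
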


\begin{definition}[Logical morphism]\label{def:logical}
Fix a common hierarchy index set. A regular doctrine map between $\infty$-triposes induces $\widehat F:\Ex(\C_\PP)\to\Ex(\C_\QQ)$ by exact
completion. It is logical when its selected coherent-fullness
evaluations become weak section families in the target completion and
\begin{equation}\label{eq:class-equality}
 \Loc_{\Ex(\C_\QQ)}(\widehat Fp_i)=\M_i^{\QQ}
 \qquad(i\in I).
\end{equation}
\end{definition}
These are source-local conditions as well: adequacy can be tested using
objects and covers of the target source category $\C_\QQ$, and class equality is
equivalent to mutual local presentation of the two generating families
after such covers. Original atlas refinement proves these equivalences.
The preservation theorem identifies logicality with preservation by
$\widehat F$ of products and the chosen univalent families, so logical
maps are closed under identities and composition.

Let $\Trip$ be the $\infty$-category of $\infty$-triposes and logical
maps. Let $\Top$ have exact locally cartesian closed $\infty$-categories
with closed bounded hierarchies as objects, and regular functors
preserving dependent products and the chosen universal families as maps.
The specified hierarchy is part of the structure.

\begin{theorem}[Fundamental adjunction]\label{thm:fundamental}
There is an adjunction with fully faithful right adjoint
\begin{equation}\label{eq:adjunction}
 \TT:\Trip\rightleftarrows\Top:\Cod,
 \qquad\TT\PP=\Ex(\PP(1)),\qquad\Cod(\E)(X)=\E_{/X}.
\end{equation}
Its full mapping-space equivalence is
\begin{equation}\label{eq:mapping-adjunction}
 \Map_{\Top}(\Ex(\PP(1)),\E)
 \simeq\Map_{\Trip}(\PP,\Cod(\E)).
\end{equation}
The unit has base functor $j\delta_\PP$ and fully faithful fibre functors
$\PP(X)\simeq\C_\PP{}_{/\delta_\PP X}
\into\Ex(\C_\PP)_{/j\delta_\PP X}$.
They identify source subobject lattices and preserve existing products.
The counit is the equivalence $\Ex(\E)\simeq\E$.
\end{theorem}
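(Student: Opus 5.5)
The plan is to construct the adjunction through its mapping-space equivalence~\eqref{eq:mapping-adjunction}, obtained as a composite of restriction equivalences already available, and then identify unit and counit. First I will check that both functors are well defined. By Theorem~\ref{thm:tripos-construction}, $\TT\PP=\Ex(\PP(1))$ is exact, locally cartesian closed, and carries the closed bounded hierarchy $u_i=\Uni(p_i)$, so it is an object of $\Top$. Conversely, for $\E\in\Top$ the codomain doctrine $\Cod(\E)(X)=\E_{/X}$ is a regular type doctrine: its fibres are regular, $\Sigma_f$ is postcomposition, and Beck--Chevalley is pullback pasting. Its closed-type category is $\E$ itself, and $\E$ is exact, so $\Ex(\E)\simeq\E$ by the universal property of higher exact completion~\cite{Stefanich23}. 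Coherent fullness of $\E$ is then immediate: local cartesian closure lets the dependent product $\Pi_fg$ serve as a strict code object with its evaluation, which is the implication (2)$\Rightarrow$(1) of Theorem~\ref{thm:relative-fullness}. For the hierarchy I take $p_i=u_i^{\E}$ over $V_i=U_i$. Exhaustiveness and $\M_{\mathrm{prop}}$ come from Definition~\ref{def:closed-hierarchy}, and the five closure conditions~\eqref{eq:generic-tests} follow from Theorem~\ref{thm:generic-tests}, because the classes are closed. Functoriality on maps uses Theorem~\ref{thm:preservation}: a map in $\Top$ preserves products and hence weak section families, and it preserves universal families and hence the class equalities~\eqref{eq:class-equality}. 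So it is logical.

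Second, I will produce the equivalence~\eqref{eq:mapping-adjunction}. By comprehension (Proposition~\ref{prop:moens} and Lemma~\ref{lem:doctrine-maps}), a regular doctrine map $\PP\to\Cod(\E)$ amounts to a triple $(v,F,\theta)$ with $F:\PP(1)\to\E$ regular and $v\simeq F\delta_\PP$. The base component is forced up to a contractible choice, since $\delta_{\Cod\E}=\id$. This reduces the right-hand side to regular functors $F:\PP(1)\to\E$ satisfying the logicality conditions. Theorem~\ref{thm:lc-morphisms} identifies code-preserving regular $F$ with product-preserving regular $\bar F:\Ex(\PP(1))\to\E$, with contractible spaces of extensions. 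It remains to match class equality $\Loc_\E(\bar Fp_i)=\M_i^\E$ with preservation of the chosen universal families. Since $\bar F$ is regular, it preserves covers and pullbacks, so $\bar F$ carries $\Loc(p_i)$ into $\Loc(\bar Fp_i)$. By classification~\eqref{eq:classifying} and univalence, the equality holds exactly when $\bar F u_i$ is classified by $U_i^\E$ via a cartesian square that is an equivalence on bases, that is, when $\bar F$ preserves the universal family. Theorem~\ref{thm:tripos-construction} gives $p_i\to u_i$ with a cover on bases, and this cover lets me transfer class equality along it. Assembling these identifications coherently over $i\in I$, including the cumulative comparison maps, yields~\eqref{eq:mapping-adjunction} naturally in $\PP$ and $\E$.

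Finally, I will read off the unit and counit. The unit is the image of $\id_{\TT\PP}$, namely the triple with base $j\delta_\PP$ and fibres $\PP(X)\simeq\C_\PP{}_{/\delta_\PP X}\into\Ex(\C_\PP)_{/j\delta_\PP X}$. Its fibrewise full faithfulness, the identification of subobject lattices, and the preservation of existing products are Corollary~\ref{cor:lcc}. The counit is $\Ex(\E)\to\E$, the extension of $\id_\E$, which is an equivalence because $\E$ is exact. Since the counit is an equivalence, the right adjoint $\Cod$ is fully faithful.

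I expect the main obstacle to be the universe part of the second step. The difficulty is making preservation of the chosen universal families (as data in $\Top$) match class equality (a property in $\Trip$) coherently on full mapping spaces, not just on objects. Here I would use that both conditions are propositions once product preservation is fixed: univalence makes the space of cartesian squares from $\bar Fu_i$ to $u_i^\E$ inducing the classifying map contractible. The structure then collapses to a property, and the restriction equivalence of Theorem~\ref{thm:lc-morphisms} restricts to the corresponding full subspaces.
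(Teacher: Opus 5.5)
Your overall route matches the paper's. You make $\Cod(\E)$ an $\infty$-tripos with actual products as codes and the chosen universal families as weak presentations (Theorem~\ref{thm:generic-tests}), collapse the base data by Lemma~\ref{lem:doctrine-maps}, turn code preservation into product preservation by Theorem~\ref{thm:lc-morphisms}, and read off the unit and counit.

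\textbf{The gap is in the universe step.} You claim that $\Loc_\E(\bar Fp_i)=\M_i^\E$ holds exactly when $\bar Fu_i$ is classified by $U_i^\E$ through an equivalence on bases. The only univalence you actually use is that of $u_i^\E$, and that makes the classifying square unique, not invertible. From class equality alone you get a classifying map $\chi:\bar FU_i\to U_i^\E$ that is a cover: $u_i^\E$ lies in $\Loc(\bar Fu_i)$, and univalence of $u_i^\E$ identifies $\chi$ composed with the local lift with the covering map. To make $\chi$ monic you need $\bar Fu_i$ itself to be univalent. This is not automatic for regular functors; a stalk functor of a spatial topos sends $1\into\Omega$ to a non-univalent monomorphism. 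So the ``only if'' direction, which is exactly what sends logical maps into $\Top$, is unproved as written.

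\textbf{The missing ingredient is \eqref{eq:univalent-preservation}.} Because $\bar F$ preserves finite limits and dependent products, it carries the fibre-equivalence object $R_1$ of \eqref{eq:equiv-groupoid} (built from $\operatorname{isEquiv}$ by products) and the tautological action to those of $\bar Fp_i$. As a regular functor between exact categories it preserves the effective quotients, so $\bar F\Uni(p_i)\simeq\Uni(\bar Fp_i)$. Hence $\bar Fu_i$ is univalent and classifies $\Loc_\E(\bar Fp_i)$. Lemma~\ref{lem:classifier-comparisons} then gives a contractible space of cartesian equivalences with $u_i^\E$ precisely when the two classes agree. This argument uses product preservation, so the identification of class equality with universe preservation holds only on the components already singled out by Theorem~\ref{thm:lc-morphisms}. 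Your order of steps allows this, but you must invoke it explicitly.

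\textbf{Two smaller points.}
\begin{itemize}
\item Theorem~\ref{thm:lc-morphisms} requires $\PP(1)$ to be bounded, which you should cite from Theorem~\ref{thm:tripos-construction}.
\item The counit $\Ex(\E)\simeq\E$ must be an equivalence in $\Top$, not just of underlying categories. That is, it must carry $\Uni(u_i^\E)$ to $u_i^\E$, which holds because $u_i^\E$ is already univalent.
\end{itemize}
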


The adjunction factors through the base-preserving reflection into exact
$\infty$-triposes and exhibits $\Top$ as the localization at logical maps
inducing equivalences of completed semantics; see
Corollaries~\ref{cor:exact-reflection} and~\ref{cor:morita} in
Appendix~\ref{app:adjunction}.

\section{Criteria for Conservative Logical Completion}\label{sec:completion}
Projective covers transfer ordinary section codes to
higher presentations, and universe presentations satisfying (H1)--(H3) keep section
codes and their quotients in one level.

For a bounded regular source $\C$, the condition
$\mathrm{CF}_{<\infty}(\C)$ holds exactly when
$\Ex(\C)$ is locally cartesian closed. For every regular source, the
unit is fully faithful, identifies source subobject lattices, and
preserves each existing dependent product.

\begin{definition}[A source with a projective cover]\label{def:pt-source}
A \emph{source with a projective cover} is a pair $(\C,\Pj)$, where
$\C$ is an ordinary regular category in which every composable pair has
a weak section family as in Definition~\ref{def:weak}, and
$\Pj\subseteq\C$ is a full replete subcategory, closed under finite
limits, consisting of regular projective objects, such that every
$X\in\C$ admits a regular epimorphism $P\onto X$ with $P\in\Pj$.
An object $P$ is \emph{regular projective} if every map $P\to B$ lifts
along every regular epimorphism $A\onto B$.
\end{definition}

\begin{theorem}[Logical completion from projective covers]\label{thm:pt-completion}
For a source $(\C,\Pj)$ with a projective cover as in Definition~\ref{def:pt-source}:
\begin{enumerate}
\item[(i)] $\C$ satisfies $\mathrm{CF}_{<\infty}$. Consequently
$\E=\Ex(\C)$ is exact, bounded, and locally cartesian closed; its
dependent products agree with the ambient ones and satisfy
Beck--Chevalley.
\item[(ii)] The unit $j:\C\to\E$ is fully faithful, induces
$\Sub_{\C}(X)\cong\Sub_{\E}(jX)$ naturally in $X$, and preserves
every existing dependent product. On predicates in original contexts,
it preserves and reflects equality, finite conjunction, implication,
and existential and universal quantification.
\end{enumerate}
\end{theorem}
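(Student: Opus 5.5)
Part (i) reduces to verifying $\mathrm{CF}_n(\C)$ for every finite $n\ge-1$ and then invoking Theorem~\ref{thm:relative-fullness}. That theorem gives all dependent products of $n$-truncated families in $\E$, agreeing with those in $\HH_\C$ and satisfying Beck--Chevalley. Since $\C$ is ordinary, every object of $\Ex(\C)$ built from original atlases is finitely truncated, so $\E$ is bounded. Local cartesian closure then follows from $\mathrm{CF}_{<\infty}$, as in the remark preceding Definition~\ref{def:pt-source}.

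To prove $\mathrm{CF}_n$, I would transfer codes along $\Pj$. Fix $T\in\C$ and a diagram $Z\xrightarrow gY\xrightarrow f jT$ of presented types with $g$ $n$-truncated. By the recognition lemma for relative atlas presentations (Lemma~\ref{lem:lc-atlas}), I present $Y$ and $Z$ by finite towers of original atlases and relative kernels. Using the covering property and finite-limit closure of $\Pj$, I refine every atlas in these towers to one with objects in $\Pj$.

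The key observation is that for a projective $P$, a map $jP\to X$ into a groupoid realization lifts through any original atlas $jA\onto X$. This holds because $jP$ is projective in $\HH_\C$ for effective epimorphisms whose source is original, and the argument iterates up the tower. Consequently a section $s:jS\times_{jT}Y\to Z$, after precomposing with a projective cover $S'\onto S$ and restricting to projective atlas pieces of $Y$, becomes a strictly original map between original objects together with compatibility data that are themselves original maps into kernel objects. The hypothesis that every composable pair has a weak section family then codes this ordinary data. I would take $W$ to be an iterated weak section object over $T$, built by composing weak section families for the base atlas, for the compatibility with the first kernel, and so on up the finite tower; finite limits in $\C$ assemble them. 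The evaluation $e$ is then obtained by descending along the atlas realizations, using that the coded data are compatible by construction.

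\textbf{The main obstacle} is producing the homotopy~\eqref{eq:lc-codes} coherently at all heights: the higher homotopies of $s$ live in the kernel levels of $Z$, not in $\C$. My first attempt would be to proceed by induction on the height of $g$, mirroring the proof of Theorem~\ref{thm:relative-fullness}. At height $-1$ only propositional data occur, and projective lifting gives the code directly. At height $n$, identities between two coded sections form a section problem for the identity family obtained by pulling back $\Delta_g$, which is $(n-1)$-truncated, so the inductive hypothesis codes it. Projectivity of the covers ensures that all choices can be made on $\Pj$-objects, where $\C$-level weak section families suffice.

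Part (ii) follows general facts about the unit $j$ of higher exact completion. Full faithfulness comes from the construction~\eqref{eq:ambient} and~\cite{Stefanich23}. Monomorphisms into $jX$ are $(-1)$-truncated, so they are images of original maps and hence correspond to subobjects in $\C$; regularity of $j$ gives the naturality of $\Sub_\C(X)\cong\Sub_\E(jX)$. For an existing product $\Pi_fg$ in $\C$, I compare $j\Pi_fg$ with the $\E$-product through representing maps~\eqref{eq:relative-product}. Testing against original atlases $jA\onto X$ reduces the comparison to the universal property in $\C$, and descent along covers finishes it.

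On predicates in original contexts, equality, conjunction and the existential quantifier are preserved and reflected because $j$ is regular and identifies subobject lattices. Implication and the universal quantifier are the $(-1)$-truncated instances of dependent products in $\C$, which exist whenever they exist as predicates and are therefore preserved by the previous step. Reflection of all of these operations follows from the lattice isomorphism.
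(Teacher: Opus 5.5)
\textbf{There is a genuine gap in (i), in the construction of the height-$n$ codes.} Your reduction of (i) to $\mathrm{CF}_{<\infty}$ followed by Theorem~\ref{thm:relative-fullness}, and your argument for (ii), agree with the paper.

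The induction you offer for ``the main obstacle'' is the identity-family induction of Theorem~\ref{thm:relative-fullness}. That induction \emph{presupposes} a weakly adequate code family $W$ for the height-$n$ problem. It only reconstructs the kernel $W\times_SW$ of the code cover $W\onto S$. Coding identities between two \emph{already coded} sections via $\mathrm{CF}_{n-1}$ therefore produces neither the evaluation $e:jW\times_{jT}Y\to Z$ nor the homotopy $e\circ(jv\times_{jT}Y)\simeq s|_{jS'\times_{jT}Y}$. Both are data out of the higher type $Y$.

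Your sentence ``the evaluation $e$ is then obtained by descending along the atlas realizations, using that the coded data are compatible by construction'' is exactly the unproved step. Descent along $U_0\onto Y$ needs compatible data on the entire \v{C}ech nerve, with all coherences. Weak section families in $\C$ only code finitely many strict maps between ordinary objects. Two related points:
\begin{enumerate}
\item The number of levels that matters is governed by the truncation of $g$, not by the heights of the atlas towers of $Y$ and $Z$ from Lemma~\ref{lem:lc-atlas}.
\item $Z$ must be resolved relative to the chosen resolution of $Y$, not independently.
\end{enumerate}

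The missing idea is truncation-bounded descent (Lemma~\ref{pt:finite-descent}). For a partial semisimplicial hypercover $U_{\le m}\to Y$ one has $\tau^{Y}_{\le m-1}|\sk_mU|\simeq Y$. Hence $|\sk_mU|\to Y$ is left orthogonal, after every base change, to $(m-1)$-truncated maps. The paper uses it as follows.
\begin{enumerate}
\item Take $m=n+1$ and first reduce to a projective base $P\in\Pj$.
\item Build a partial projective hypercover $U_{\le n+1}\to Y$, and a relative Reedy-covering resolution $V_k\onto D_k\times_{M_kD}M_kV$ of $D_\bullet=Z\times_YU_\bullet$ by objects of $\Pj$.
\item Code the strict diagram sections of $V_{\le n+1}\to U_{\le n+1}$ by a product of degreewise weak section families, cut down by the face equations. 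Imposing those equations needs universal quantification of equality subobjects, i.e.\ monic products. These exist as images of weak section families (Lemma~\ref{pt:mono}); finite limits alone do not suffice.
\item Orthogonality then gives $e$, with a contractible space of choices.
\item For adequacy, lift a section degree by degree through the matching covers, using projectivity of $T\times_PU_k$. The same orthogonality identifies the evaluation of the resulting code with $s$, including all higher homotopies, in one step.
\end{enumerate}
No inductive production of homotopies is needed. Without this lemma, or an equivalent finiteness argument for descent into $n$-truncated families, your construction of $W$ and $e$ does not close.

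A minor point in (ii): universal quantifiers and implications always exist in $\C$ here, again as images of weak section families. So their preservation is not conditional on existence, as your phrasing suggests.
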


For an ordinary regular $\C$ with a finite-limit-closed covering class
of regular projectives, write $\Exone(\C)$ for its ordinary exact/regular
completion. Then $\Exone(\C)$ and $\Ex(\C)$ are locally cartesian closed
exactly when every composable pair in $\C$ has a
weak section family (Corollary~\ref{cor:classical-lifting},
Appendix~\ref{app:projective-codes}). Universe presentations satisfying (H1)--(H3)
(Definition~\ref{def:lc-universes}) equip $\Ex(\C)$ with a closed
bounded hierarchy.

In a regular $\infty$-category, an object $P$ is projective if every cover $A\onto B$ induces a surjection $\pi_0\Map(P,A)\to\pi_0\Map(P,B)$.

For an ordinary source as in Definition~\ref{def:pt-source}, let an
exact locally cartesian closed target have a full replete,
finite-limit-closed covering class of $0$-truncated projectives.
A regular interpretation carrying the source projectives into that
class extends to preserve dependent products if and only if it
preserves source weak section families. Adequacy in the target is
tested against all target objects.
Corollary~\ref{cor:lc-doctrine} gives the corresponding completion
of regular type doctrines.

\subsection{Universe Presentations and Quotient Stability}

\begin{definition}[Quotient-stable level]\label{def:quotient-stable}
A local class $\M$ of $N$-truncated maps in an exact category is
quotient-stable at bound $N$ if, for every internal groupoid
$R_\bullet$ over $X$, the conditions $R_k\to X\in\M$ for all $k$ and
$|R_\bullet|\to X$ $N$-truncated imply $|R_\bullet|\to X\in\M$.
A hierarchy is quotient-stable if this holds at each specified bound.
\end{definition}

Let $\C$ be a bounded regular $\infty$-source. Fix a directed set $I$ and
finite-atlas families $p_i:E_i\to hV_i$, with $V_i\in\C$ and finite relative
bounds $n_i$. At the distinguished propositional level use the bound
$n_{\mathrm{prop}}=-1$.

In this bounded setting, a finite-atlas arrow means a morphism between finite-atlas objects; by Lemma~\ref{lem:lc-atlas}, these are precisely the arrows of $\E=\Ex(\C)$.
\begin{definition}[Conditions on universe presentations]\label{def:lc-universes}
Let $\M_i$ consist of the finite-atlas arrows $q$ with
$q\preceq_\C p_i$, as in~\eqref{eq:lc-local-class}; thus
$\M_i=\Loc_\E(p_i)$.
We require these presentations to satisfy conditions (H1)--(H3).
\begin{enumerate}
\item[(H1)] The classes increase with $I$. Every finite-atlas family over an original base belongs to $\M_i$ for some $i\in I$. The restriction of $\M_{\mathrm{prop}}$ to arrows over original bases consists exactly of the monomorphisms over those bases.
\item[(H2)] Each $\M_i$ contains identities and is closed under composition, relative diagonals, and image inclusions. If a groupoid of finite-atlas objects over a finite-atlas base $X$ has every term in $\M_i$ over $X$ and its ambient quotient is $n_i$-truncated over $X$, its quotient family belongs to $\M_i$.
\item[(H3)] For every diagram $Z\xrightarrow{g}Y\xrightarrow{f}hT$ with $f,g\in\M_i$, a section-code witness as in Definition~\ref{def:coherent-fullness} can be chosen with the structure map $hW\to hT$ of its code family in $\M_i$.
\end{enumerate}
\end{definition}

The quotient in (H2) is formed in the ambient regular-sheaf
$\infty$-topos $\HH_\C$.
Conditions (H1)--(H3) provide source section codes:
put $Y\to hT$ and $Z\to hT$ in a common level, use the graph of
$Z\to Y$ and relative-diagonal closure to put that map in the same level,
and then apply (H3).

\begin{theorem}[Sufficient conditions for a quotient-stable univalent hierarchy]
\label{thm:controlled-source}
Let $\C$ be bounded regular and let $(p_i)$ satisfy
(H1)--(H3) of Definition~\ref{def:lc-universes}.
Then $\mathrm{CF}_{<\infty}(\C)$ holds and these presentations
satisfy the five conditions for universe closure. Hence $\Cod(\C)$ is an $\infty$-tripos, and $\Ex(\C)$ carries the closed bounded hierarchy
$\M_i=\Loc_{\Ex(\C)}(p_i)$. Each level is quotient-stable
at its specified bound, and the hierarchy has the structure
specified in Definition~\ref{def:classes}.
\end{theorem}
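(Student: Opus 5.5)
The argument has three stages. First we derive $\mathrm{CF}_{<\infty}(\C)$ from (H1)--(H3). Next we check the five conditions~\eqref{eq:generic-tests} level by level. Finally we invoke Theorem~\ref{thm:tripos-construction} and prove quotient stability separately.

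\emph{Coherent fullness.} Take a section problem $Z\xrightarrow gY\xrightarrow f hT$ with $g$ $n$-truncated. By Lemma~\ref{lem:lc-atlas} both $Y$ and $Z$ are finite-atlas objects. By (H1) the families $Y\to hT$ and $Z\to hT$ lie in $\M_i$ and $\M_k$ for some $i$ and $k$, and directedness puts both in a common level $\M_l$. To place $g$ itself in $\M_l$, factor it through its graph $Z\to Z\times_{hT}Y$. This graph is a pullback of the relative diagonal of $Y\to hT$, hence lies in $\M_l$ by (H2). The projection $Z\times_{hT}Y\to Y$ is a pullback of $Z\to hT$, and local classes are pullback stable. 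Closure under composition then gives $g\in\M_l$. Condition (H3) now supplies a section-code witness, which is exactly the data required by $\mathrm{CF}_n$. Since $n$ was arbitrary, $\mathrm{CF}_{<\infty}(\C)$ holds, and Theorem~\ref{thm:relative-fullness} makes $\Ex(\C)$ locally cartesian closed.

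\emph{The five conditions.} Identities, image inclusions, relative diagonals and composition are direct instances of (H2), because $\M_i=\Loc_\E(p_i)$. In particular $\sigma_i=f_ig_i\in\M_i$, since both $f_i$ and $g_i$ are pulled back from $u_i$ and $u_i$ lies in $\M_i$ by Theorem~\ref{thm:univalent-completion}. The substantive case is $\rho_i=\Pi_{f_i}g_i\preceq_\C p_i$; by Theorem~\ref{thm:generic-tests} it suffices to show that $\M_i$ is closed under dependent products of composable pairs in $\M_i$. For such a pair over an original base, Theorem~\ref{thm:relative-fullness} builds $\Pi_fg$ as the propositional image of a code object, followed by at most $n_i+1$ nested effective groupoid quotients. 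The code objects have structure maps in $\M_i$ by (H3).

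Each successive identity problem arises by pulling back $\Delta_g$ along pairs of evaluations. It therefore remains a section problem with both maps in $\M_i$, because diagonals, composites and pullbacks stay in the level. So (H3) applies again at each stage. The \v{C}ech groupoids formed from these codes have every term in $\M_i$, since they are built as finite limits of level maps over the base. Their quotients are $n_i$-truncated, as they are stages toward an $n_i$-truncated product. Hence (H2) places each quotient in $\M_i$, and the image step is covered by image-inclusion closure. For families over a completed base, we pass to an original atlas and use that membership in $\M_i$ descends along covers. Having the five conditions, Theorem~\ref{thm:tripos-construction} applied to $\Cod(\C)$ yields the closed bounded hierarchy and the classification equivalences~\eqref{eq:classifying}.

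\emph{Main obstacle.} The delicate point is the truncation hypothesis in (H2) at the intermediate stages. The quotient at stage $k$ represents a dependent product of a $k$-truncated identity family, so it is $k$-truncated relative to its base, and $k\le n_i$. What needs checking is that the recursion is carried out relative to the correct base; if necessary one would reindex along the previous stage's base using Beck--Chevalley before applying (H2).

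Quotient stability for groupoids over an arbitrary completed base $X$ is handled the same way. Choose an original atlas $hA\onto X$ and pull the groupoid back along it, which is legitimate because realization commutes with pullback. Apply (H2) over $hA$, where the quotient remains $n_i$-truncated because truncatedness is stable under pullback. Descent of the local class along the cover then gives the result over $X$. The remaining structural clauses of Definition~\ref{def:classes} follow from (H1) and the cumulative maps supplied by Theorem~\ref{thm:tripos-construction}.
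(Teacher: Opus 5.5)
Your proposal is correct and follows essentially the same route as the paper, which proves this theorem through Corollary~\ref{cor:lc-universes}. Both arguments get $\mathrm{CF}_{<\infty}$ from the graph-and-composition trick plus (H3), obtain same-level product closure from the identity induction over original bases with (H3) codes and the bounded-quotient clause of (H2), and then use Theorem~\ref{thm:generic-tests} and (H2) for the five conditions and quotient stability. One small bookkeeping correction: axiom~\ref{ax:fullness} of Definition~\ref{def:classes} over an arbitrary completed base does not come from (H1) or cumulativity; it comes from the same-level product closure you proved, by taking $\Pi_fg$ itself with its evaluation as the weak section family.
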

Appendix~\ref{app:lc-universes} proves this criterion and the classification
and interpretation results of Corollary~\ref{cor:lc-universes}.
It also gives the corresponding result for completing regular type doctrines and the restriction
of the adjunction to quotient-stable hierarchies.
Appendix~\ref{app:fullness} gives the same-level reconstruction
theorem for exact categories.

\subsection{Recovery of the Classical Tripos-to-Topos Construction}\label{sec:classical}
\begin{theorem}[Recovery of the classical construction on zero-types]\label{thm:shadow}
Let $\PP$ be an $\infty$-tripos over $\operatorname{Set}$ and put
$\E=\Ex(\C_\PP)$. Suppose $H(I)=\Sub_{\C_\PP}(\delta_\PP I)$ is a
propositional tripos. Suppose every $0$-truncated object of $\E$ has a cover
from a subobject of some $j\delta_\PP I$. Then
\begin{equation}\label{eq:classical}
 \E_{\le0}\simeq\operatorname{Set}[H],
\end{equation}
where $\operatorname{Set}[H]$ denotes the topos obtained from $H$ by the classical tripos-to-topos construction.
\end{theorem}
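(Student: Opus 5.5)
We compare both sides through a common dense full subcategory and extend along exactness. Recall that $\operatorname{Set}[H]$ has objects $(I,\rho)$ with $\rho\in H(I\times I)$ a partial equivalence relation, and morphisms functional relations modulo $H$-equality. We compare this with $\E_{\le0}$, which is an exact ordinary category: $0$-truncated objects are closed under finite limits in the exact $\E$, and quotients of equivalence relations among them remain $0$-truncated.

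\emph{Step 1: construct a functor $\Phi:\operatorname{Set}[H]\to\E_{\le0}$.} By \eqref{eq:shadow}, $H(I\times I)\simeq\Sub_{\C_\PP}(\delta_\PP(I\times I))$, and Proposition~\ref{prop:moens} gives $\delta_\PP(I\times I)\simeq\delta_\PP I\times\delta_\PP I$. By Theorem~\ref{thm:pt-completion}(ii), or Corollary~\ref{cor:lcc}, these are the subobjects of $j\delta_\PP I\times j\delta_\PP I$ in $\E$. Hence a PER $\rho$ becomes a subobject $R\into j\delta_\PP I\times j\delta_\PP I$. The tripos axioms of symmetry and transitivity, read through the conservativity of Corollary~\ref{cor:lcc} (equality, conjunction and existential quantification are preserved and reflected), say that $R$ is a partial equivalence relation in the internal logic of $\E$. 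Its domain of definition $D=\exists_2R\into j\delta_\PP I$ is a subobject, and $R$ restricts to an equivalence relation on $D$. Since subobjects of the $0$-truncated $j\delta_\PP I$ are $0$-truncated and $\E$ is exact, we set $\Phi(I,\rho)=D/R\in\E_{\le0}$.

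A functional relation $F\in H(I\times J)$ likewise becomes a subobject of $D\times D'$ satisfying the internal axioms of totality, single-valuedness and strictness. In the exact category $\E_{\le0}$ such relations correspond bijectively to morphisms $D/R\to D'/R'$. Equivalent functional relations give equal morphisms because equality of predicates is reflected. Composition matches because relational composition is an existential of a conjunction, which is again preserved.

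\emph{Step 2: $\Phi$ is fully faithful.} Faithfulness follows from Step 1. For fullness, take a map $D/R\to D'/R'$ in $\E_{\le0}$. Its graph pulled back to $D\times D'$ is a subobject of $j\delta_\PP(I\times J)$, so it lies in $H(I\times J)$ by the lattice identification, and it satisfies the functional-relation axioms internally. Reflection of the logical connectives then transports these axioms back to $H$.

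\emph{Step 3: $\Phi$ is essentially surjective.} This step uses the covering hypothesis and is the main obstacle. Given $X\in\E_{\le0}$, choose a cover $S\onto X$ with $S\into j\delta_\PP I$. Then $X$ is the quotient of the kernel pair $S\times_XS\into S\times S\into j\delta_\PP(I\times I)$, which is an equivalence relation since $X$ is $0$-truncated and $\E$ is exact. This kernel pair is a subobject of $j\delta_\PP(I\times I)$, hence by the identification a predicate $\rho\in H(I\times I)$. It is a PER whose domain is $S$, so $X\simeq\Phi(I,\rho)$.

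The subtle points both come from working over $\operatorname{Set}$. First, $\Sub_{\E}(j\delta_\PP I)$ must genuinely coincide with $H(I)$, including subobjects that are not images of source maps. This requires the identification $\Sub_\C(A)\cong\Sub_\E(jA)$, which holds for every regular source by the remark preceding Definition~\ref{def:pt-source}. Second, the internal logic of $\E$ must compute the tripos connectives at the product contexts $I\times J$. This holds because $\delta_\PP$ preserves finite limits and the unit preserves and reflects the connectives (Corollary~\ref{cor:lcc}).

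With Steps 1–3, $\Phi$ is an equivalence, which gives \eqref{eq:classical}.
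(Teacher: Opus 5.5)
Your proof is correct and follows essentially the same route as the paper: you quotient the support of each $H$-valued PER in $\E$, identify maps between these zero-types with $H$-valued functional relations through subobject conservativity, and obtain essential surjectivity from the assumed cover $S\onto X$ and its monic kernel pair. The paper proves two facts you only assert, namely that $j\delta_\PP I$ is $0$-truncated because $\delta_\PP$ and $j$ preserve finite limits and every set is $0$-truncated, and that $D/R$ is a zero-type because $\Delta_{D/R}$ pulls back along $D\times D\onto D/R\times D/R$ to the monomorphism $R$ and monomorphisms descend; also, cite the subobject identification from Theorem~\ref{thm:lc-products}, which holds for every regular source, rather than Theorem~\ref{thm:pt-completion}(ii), whose projective-cover hypotheses $\C_\PP$ need not satisfy.
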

Subobject conservativity makes the propositional shadow identical before and
after completion. Partial equivalence relations have effective
quotients that are zero-types in $\E$; the covering condition gives essential surjectivity of the
classical comparison. For an ordinary source in which every object has a cover from a subobject
of some $\delta_\PP I$, original atlases imply this condition.

\begin{example}[Bounded spaces]\label{ex:spaces}
Let $\kappa_0<\kappa_1<\cdots$ be strongly inaccessible cardinals and let
$\C_{\mathrm{set}}$ be a skeleton of sets of cardinality below some $\kappa_i$.
Then $\Ex(\C_{\mathrm{set}})$ is the category of finitely truncated spaces with
a simplicial presentation of size below some $\kappa_i$. At level $(i,n)$,
take maps with $\kappa_i$-small $n$-truncated fibres, and take all monomorphisms
as the propositional level. These classes have universe presentations
satisfying (H1)--(H3). Thus $\Cod(\C_{\mathrm{set}})$ is an $\infty$-tripos and its completed type
category has the corresponding closed bounded hierarchy. The quotient
$B(\mathbb Z/2)$ is a $1$-type that is not $0$-truncated.
Proposition~\ref{prop:space-model} proves all the size and presentation claims.
\end{example}

Proposition~\ref{prop:higher-nonexact} gives a bounded regular higher
source satisfying (H1)--(H3) whose exact completion is a strict enlargement.

\begin{corollary}[Separation of logical and quotient closure]
\label{cor:quotient-separation}
The bounded-space model of Example~\ref{ex:spaces} admits an exhaustive
closed bounded hierarchy containing the level of maps whose fibres
are $1$-types with finitely many components and fundamental groups
of finite $2$-power order. The bar quotient $BC_3$ witnesses a strict enlargement
of this level under $1$-truncated groupoid quotients.
\end{corollary}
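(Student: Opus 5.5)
The plan is to work inside the bounded-space model $\E=\Ex(\C_{\mathrm{set}})$ of Example~\ref{ex:spaces}, which is identified with finitely truncated, size-bounded spaces. Let $\M$ denote the class of maps whose fibres are $1$-types with finitely many components and fundamental groups of finite $2$-power order. There are three steps. First, show that $\M$ satisfies the closure conditions of Definition~\ref{def:closed-hierarchy} and is classified by a univalent family. Second, insert $\M$ into the exhaustive hierarchy of Example~\ref{ex:spaces}. Third, exhibit $BC_3$ as a $1$-truncated groupoid quotient of $\M$-maps that is not itself in $\M$.

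For the closure properties of $\M$ I would argue fibrewise, since every operation in question is computed fibrewise over a space.
\begin{itemize}
\item Identities and image inclusions (monomorphisms) have fibres that are contractible or empty, so they lie in $\M$.
\item For relative diagonals, the fibres of $\Delta_g$ are path spaces of fibres of $g$. These are $0$-types that are discrete and finite: at a fixed pair of points the path space is empty or a torsor for a finite $2$-group.
\item For composition, the fibre of a composite is the total space of a fibration of $\M$-fibres over an $\M$-fibre. Its $\pi_0$ is finite, and its $\pi_1$ is an extension of a subgroup of a finite $2$-group by a quotient of a finite $2$-group, hence a finite $2$-group. The long exact sequence also shows $\pi_{\ge2}$ vanishes.
\item For same-level dependent products, the fibre of $\Pi_fg$ over $x$ is the section space $\Map_{F_x}(F_x,G)$ over the $1$-type $F_x$. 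Reduce to a single component $BH$ with $H$ a finite $2$-group and fibres $BK_y\amalg\cdots$. The section space is then $1$-truncated, since the fibres are $1$-types. It has finitely many components, because homotopy classes of sections are finite data (twisted homomorphisms $H\to K$ up to conjugacy, finite in number). Its fundamental groups are subgroups of a finite product of the $K$'s (centralizers), hence finite $2$-groups.
\end{itemize}
A univalent classifier for $\M$ is $\coprod BH$, taken over finite-component $1$-types and restricted to size-bounded, $\kappa_0$-small fibres. Equivalently, I would take the univalent completion (Theorem~\ref{thm:univalent-completion}) of a presenting family $p$ over an original set indexing representatives, with $n=1$.

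To place $\M$ in an exhaustive hierarchy, I would take the directed index set obtained from that of Example~\ref{ex:spaces} by adjoining one element $m$. This element sits above $\mathrm{prop}$ and below every level $(i,n)$ with $n\ge1$, so directedness and cumulativity are preserved. Locality of $\M$ is immediate, because membership is a fibrewise condition and covers are surjective on $\pi_0$.

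For the strict enlargement, consider the groupoid with objects $1$ and morphisms $C_3$, over the base $1$. Every term $R_k=C_3^{k}\to1$ has finite discrete fibres, hence lies in $\M$. Its realization $BC_3$ is $1$-truncated, but $\pi_1=C_3$ is not a $2$-group, so $BC_3\notin\M$.

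The main obstacle is the dependent-product closure: one must check that the fundamental groups of the section spaces are $2$-groups. I would identify $\pi_1$ at a section $s$ with the group of self-homotopies of $s$. Restricted to each component $BH$ of the domain, this is the subgroup of $\prod_y K_y$ fixed by the twisted $H$-action, hence a $2$-group. A finite product of such groups is again a $2$-group.
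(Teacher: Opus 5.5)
Your proposal is correct and follows the paper's route. The paper proves exactly your fibrewise closure checks (finite path sets for diagonals, the extension argument for composition, and automorphisms of sections embedding in a finite product of fibre automorphism groups for products) in Theorem~\ref{thm:finite-group-separation}, taking $\mathfrak F$ to be the finite $2$-groups. It then obtains the classifier as the univalent completion of a tautological family of finite-groupoid representatives, inserts the level below the large levels of small $1$-types, and uses the bar groupoid of $C_3$ with quotient $BC_3$ as you do.

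The only slip is your explicit classifier $\coprod BH$. It should be $\coprod_{[F]}B\Aut(F)$, taken over equivalence classes of admissible fibres $F$. Your fallback via Theorem~\ref{thm:univalent-completion} is the paper's construction and suffices on its own.
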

Every bar-groupoid term $R_k=C_3^k$ is a finite set in this level,
but its $1$-truncated quotient $BC_3$ has fundamental group $C_3$
of order $3$.

\section{Realizability as Logical Completion}\label{sec:realization}
The projective transfer theorem applies to assemblies and gives
the higher presentation codes required in Section~\ref{sec:completion}.
We now verify an exhaustive bounded universe hierarchy in that completion.

Fix an acceptable numbering $(\varphi_e)$ of partial recursive functions. Kleene's first partial combinatory algebra is $K_1=(\NN,\cdot)$, with partial application $e\cdot r=\varphi_e(r)$.
An assembly consists of a carrier set and a nonempty set of numerical realizers for each element. A morphism of assemblies is a carrier function tracked by one partial recursive function defined on every input realizer and sending it to a realizer of the corresponding output. An assembly is modest if distinct carrier elements have disjoint sets of realizers.

Fix an increasing sequence of strongly inaccessible cardinals
\(\kappa_0<\kappa_1<\cdots\) contained, as a sequence, in a working
external universe.
Throughout this section, \(\Asm=\Asm(K_1)\) consists of Kleene assemblies
whose carriers have cardinality below some \(\kappa_i\).
Write \(\PAsm\) for its full subcategory of partitioned assemblies.
An object of \(\PAsm\) is a carrier
\(A\) with a name function \(\nu_A:A\to\NN\); the singleton
\(\{\nu_A(a)\}\) realizes \(a\). Distinct carrier elements may share the same numerical name. Carrier functions specify the maps, and their numerical trackers realize them.
Write \(\Eff=\Exone(\Asm)\) for its ordinary exact/regular completion.

Put \(\E=\Ex(\Asm)\). Lemma~\ref{real:completion-comparison}
gives \(\E\simeq\Ex(\Eff)\), using regular exact
completion~\cite{Stefanich23,vanOosten08}.
When viewing assemblies as a source with a projective cover, the selected projective
class is the replete closure of \(\PAsm\). We retain actual
partitioned representatives for all constructions below that use numerical names.

\begin{lemma}[Assembly source]\label{real:ordinary-source}
\(\Asm\) is an ordinary regular category, and \(\PAsm\) is a
finite-limit-closed covering subcategory of regular projectives.  Every
composable pair in \(\Asm\) has a dependent product, hence
a weak section family as in Definition~\ref{def:weak}.
\end{lemma}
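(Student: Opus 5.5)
We verify the three claims separately with explicit realizability constructions. Throughout we keep all carriers below some $\kappa_i$; strong inaccessibility makes every construction below (finite limits, images, function-space carriers) stay below a possibly larger $\kappa_k$.

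\emph{Regularity.} Finite limits in $\Asm$ are computed on carriers:
\begin{itemize}
\item the terminal object is $\{*\}$ realized by every number;
\item pullbacks $A\times_C B$ are the set-theoretic pullback, realized by pairs $\langle r,s\rangle$ with $r\Vdash a$ and $s\Vdash b$.
\end{itemize}
Tracking uses a recursive pairing function. Every morphism $f:A\to B$ factors through its image, the assembly with carrier $f[A]$ and $E(b)=\bigcup_{f(a)=b}E_A(a)$. We then check three facts:
\begin{itemize}
\item regular epimorphisms are exactly the maps that are surjective on carriers and whose realizers are tracked back, i.e.\ isomorphic to such image maps;
\item the factorization is orthogonal, tracked by the evident codes;
\item these covers are pullback-stable, since a realizer $\langle r,s\rangle$ of a pulled-back point is tracked by the same data.
\end{itemize}
This is the standard argument (van Oosten), so it only needs recording.

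\emph{Projectives.} A partitioned assembly $(A,\nu_A)$ is regular projective. Given a cover $e:X\onto B$ tracked in the reverse direction by some $t$, and $g:A\to B$ tracked by $c$, we choose by external choice a preimage of each $g(a)$ realized by $t\cdot(c\cdot\nu_A(a))$. Because $a$ has the single realizer $\nu_A(a)$, the lift is tracked by $\lambda r.\,t(c(r))$. The nonsingleton case fails precisely because one code cannot choose uniformly, which is why partitioned assemblies are needed.

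For any assembly $A$, the partitioned assembly $\{(a,r): r\in E_A(a)\}$ with $\nu(a,r)=r$ covers $A$ via projection, tracked by the identity. Finite limits of partitioned assemblies are again partitioned: the terminal object is $(\{*\},0)$, and a pullback is named by $\langle\nu(a),\nu(b)\rangle$. Repleteness is by definition.

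\emph{Dependent products.} Given $Z\xrightarrow gY\xrightarrow fX$, we define $\Pi_fg$ over $X$ with carrier the pairs $(x,s)$, where $s$ is a section of $g$ over $f^{-1}(x)$ that is tracked uniformly: some code $e$ satisfies $e\cdot r\Vdash s(y)$ for all $y\in f^{-1}(x)$ and $r\Vdash y$. The realizers of $(x,s)$ are the pairs $\langle q,e\rangle$ with $q\Vdash x$ and $e$ such a tracker.

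The bijection $\Map_{/X}(A,\Pi_fg)\cong\Map_{/Y}(A\times_XY,Z)$ is currying and uncurrying of trackers via the s-m-n theorem. In one direction, a tracker $c$ for $A\times_XY\to Z$ yields $\lambda a.\langle q(a),S(c,a)\rangle$. In the other, we apply the stored code. Naturality and stability of this bijection are routine.

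\emph{Weak section families.} Any dependent product with its evaluation is a weak section family, with trivial cover $a=\id$ and $v$ the transpose. Since the homotopy in Definition~\ref{def:weak} is an equality in an ordinary category, this is immediate.

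\emph{Main obstacle.} The delicate step is the size bookkeeping for $\Pi_fg$. Its carrier is a set of tracked sections, of cardinality at most $|X|\cdot|Z|^{|Y|}$. This stays below the next inaccessible $\kappa_k$ that bounds $X$, $Y$ and $Z$, because strong inaccessibility gives closure under exponentials. This is exactly where the hypothesis on $(\kappa_i)$ is used.
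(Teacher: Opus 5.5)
Your proposal is correct and follows the paper's proof essentially step for step. The shared ingredients are the explicit regular-epimorphism criterion with a reverse tracker, the lift tracked by the composite program $c\circ d$ (the paper fixes a well-ordering of the carrier of $E$ where you appeal to external choice), the partitioned cover $\{(a,r):r\in\rho_A(a)\}$, the construction of $\Pi_fg$ from uniformly tracked sections with currying via the parameter theorem, the strong-inaccessibility size bound, and taking the product itself as the weak section family with identity cover.
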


Thus Theorem~\ref{thm:pt-completion} derives higher section codes and
dependent products from the assembly source. The construction of
Example~\ref{ex:spaces} also applies to the numerically named two-element
group assembly \(C_2\), giving \(BC_2\) with
\(1\times_{BC_2}1\simeq C_2\). Proposition~\ref{real:bc2} shows that this is a new higher type.

\begin{definition}[Small projective atlas classes]\label{real:def-small}
For a selected cardinal \(\kappa\), we set as follows:
\(\mathcal S_{\kappa,-1}=\{\text{monomorphisms in }\E\}\).
For \(n\ge0\), an \(n\)-truncated map \(f:Y\to X\) belongs to
\(\mathcal S_{\kappa,n}\) if its relative diagonal belongs to
\(\mathcal S_{\kappa,n-1}\), and for every \(P\in\PAsm\) and
\(P\to X\) there is a cover \(A\onto Y\times_XP\), with
\(A\in\PAsm\), such that every carrier fibre of \(A\to P\)
has cardinality less than \(\kappa\).  Such a cover is a
\(\kappa\)-small projective atlas over \(P\).
\end{definition}
These carrier bounds measure the size of projective presentations and
their successive identity atlases. They provide the size control used to
construct function objects and universe classifiers.

\begin{theorem}[Conservative realizability hierarchy]\label{thm:real-hierarchy}
The completion \(\E\) has the following properties.
\begin{enumerate}
\item The classes \(\mathcal S_{\kappa_i,n}\), for \(n\ge0\),
together with the distinguished class of all monomorphisms, form an
exhaustive bounded hierarchy with the structure specified in
Definition~\ref{def:classes}.  They are increasing in both indices,
local for covers, and closed under same-level dependent products.  More
precisely, if \(f,g\in\mathcal S_{\kappa_i,N}\) and \(g\) is
\(n\)-truncated with \(-1\le n\le N\), then
\(\Pi_fg\in\mathcal S_{\kappa_i,n}\), with Beck--Chevalley.

\item For every \((i,n)\), \(n\ge0\), there is a tracked weak
universe \(p_{\kappa_i,n}:E_{\kappa_i,n}\to V_{\kappa_i,n}\)
with
\begin{equation}\label{eq:real-weak-universes}
 \Loc(p_{\kappa_i,n})=\mathcal S_{\kappa_i,n}.
\end{equation}
Its base is an original partitioned assembly of cardinality at most
\(\kappa_i\), and its finite presentation levels have the same total
cardinal bound.  There is also an original-base weak universe for all
monomorphisms.  These presentations satisfy (H1)--(H3) of
Definition~\ref{def:lc-universes}.  In particular, over every original assembly base, each composable pair of
maps in one universe level admits a code family whose structure map lies in the source category at that
level. Its evaluation represents every section with an original parameter
after a source cover of that parameter.

\item Their univalent completions satisfy
\begin{equation}\label{eq:real-universe-classification}
 \Map_{\E}(X,U_{\kappa_i,n})
 \simeq (\E_{/X})_{\mathcal S_{\kappa_i,n}}\core.
\end{equation}
Both \(U_{\kappa_i,n}\to1\) and its universal total-space map
\(\widetilde U_{\kappa_i,n}\to1\) belong to
\(\mathcal S_{\kappa_{i+1},n+1}\).  Class inclusions induce coherent
cumulative maps of classifiers and pullback squares of universal families.
The propositional classifier is the embedded effective-topos truth family
\(1\into\Omega\).

\item The codomain doctrine of \(\Asm\), equipped with these presented
weak universes, is an \(\infty\)-tripos, and
\[
 \TT\Cod(\Asm)=\Ex(\Asm)\simeq\E.
\]
The codomain doctrine of \(\E\) is the corresponding exact
\(\infty\)-tripos. The category of zero-types in \(\E\) is equivalent to
\(\Eff\).
The unit from assemblies is fully faithful, identifies the subobject lattices
of original objects with those of their images, and preserves original dependent products.
In particular, all original numerical maps retain their trackers.
\end{enumerate}
\end{theorem}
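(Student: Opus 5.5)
The plan is to reduce everything to Theorem~\ref{thm:controlled-source} and Theorem~\ref{thm:pt-completion}, so the real work is constructing the tracked weak universes and checking (H1)--(H3). By Lemma~\ref{real:ordinary-source}, $(\Asm,\PAsm)$ is a source with a projective cover. Theorem~\ref{thm:pt-completion} then gives $\mathrm{CF}_{<\infty}(\Asm)$, boundedness, local cartesian closure of $\E$, and a fully faithful unit that is subobject-conservative and preserves existing products. This already yields the conservativity claims in (4), and tracker retention follows from full faithfulness of $j$.

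Part (1) I will prove by induction on $n$, using Definition~\ref{real:def-small} directly.
\begin{itemize}
\item \emph{Locality and monotonicity.} Locality for covers holds because projectivity of $P$ lets a test $P\to X$ lift through any cover of $X$. Monotonicity in $\kappa$ is immediate. Monotonicity in $n$ follows because an $(n-1)$-truncated map has a diagonal in the lower class.
\item \emph{Exhaustiveness.} Every object of $\E$ has finite height. A finite-atlas arrow pulled back to $P$ has a projective atlas of some carrier size, and since the $\kappa_i$ are unbounded below the universe, some $\kappa_i$ bounds it at every one of the finitely many diagonal stages.
\item \emph{Closure under $\Pi$.} This is the substantive case. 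I induct on the truncation $n$ of $g$. The diagonal of $\Pi_fg$ is $\Pi_f$ of the diagonal of $g$, which handles the identity part. For the atlas part, I will use the explicit dependent products in $\Asm$ over a partitioned base: codes are pairs of a tracker and a compatible family of carrier sections. Strong inaccessibility of $\kappa_i$ bounds the carrier fibres by $\kappa_i^{<\kappa_i}=\kappa_i$ on the nose, since the fibres of $f$ and $g$ are $<\kappa_i$. Beck--Chevalley is inherited from Theorem~\ref{thm:relative-fullness}.
\end{itemize}

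For part (2), I will build $p_{\kappa_i,n}$ explicitly. The base is the partitioned assembly of all codes of height-$n$ finite projective atlas diagrams: a carrier set of $<\kappa_i$ with numerical names, the successive identity atlases, and the recursive trackers of the structure maps. The carriers range over a fixed set of representatives of cardinality $<\kappa_i$, so the base has cardinality at most $\kappa_i$. Each code determines an atlas diagram; realizing it by $n+1$ nested groupoid quotients in $\HH_\Asm$ gives $E\to V$.

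\begin{itemize}
\item $\Loc(p)\subseteq\mathcal S_{\kappa_i,n}$ holds by construction.
\item For the converse, given $q\in\mathcal S_{\kappa_i,n}$ and a test $P\to X$, I choose small atlases at each diagonal stage and read off their carrier and tracker data. This gives a map $P'\to V$ after a cover, with $q|_{P'}\simeq v^*p$. Recognizing the pulled-back quotient uses the recognition lemma (Lemma~\ref{lem:lc-atlas}).
\item (H1) follows from exhaustiveness. (H2) is closure of the classes under composition, diagonals, images, and $n_i$-truncated quotients of small-atlas groupoids: a quotient is re-atlased by the degree-zero term.
\item (H3) follows from (1): take the dependent product itself, restricted along a projective cover, as the code family.
\end{itemize}

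Theorem~\ref{thm:controlled-source} then yields (3) and (4).
\begin{itemize}
\item The classification equivalence is eq.~\eqref{eq:classifying}.
\item The bound $U_{\kappa_i,n}\in\mathcal S_{\kappa_{i+1},n+1}$ comes from Theorem~\ref{thm:univalent-completion}: $U$ is the quotient of the base $V$, of size $\le\kappa_i<\kappa_{i+1}$, by fibrewise equivalence objects. These are $\Pi$-types in level $(i,n)$, whose atlases are bounded by $\kappa_i$.
\item Cumulativity comes from the class inclusions.
\item The zero-types statement comes from Theorem~\ref{thm:shadow}, or more directly from Lemma~\ref{real:completion-comparison}: $\E\simeq\Ex(\Eff)$, whose $0$-truncated part is $\Eff$.
\item $\Omega$ is then the effective-topos subobject classifier.
\end{itemize}

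The main obstacle is the converse inclusion in (2), together with the size bookkeeping. One must show that every small-atlas family is locally a pullback of the single coded family. This requires coherent choices of atlases at all $n+1$ diagonal stages, uniformly tracked so that the classifying map is tracked. The first approach I would try is an induction on $n$, using projectivity of $\PAsm$ to lift atlas data strictly. I would carry the higher coherences by realizing both sides as nested quotients of the same simplicial atlas diagram.
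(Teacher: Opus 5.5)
Your reduction to Theorems~\ref{thm:pt-completion} and~\ref{thm:controlled-source} is the paper's. The step you yourself call substantive, however, fails as proposed. For \(f,g\in\mathcal S_{\kappa,N}\) the objects \(Y,Z\) are in general higher types of \(\E\), not assemblies. So the explicit products of Lemma~\ref{real:ordinary-source} do not apply, and ``the fibres of \(f\) and \(g\) are \(<\kappa_i\)'' has no meaning: smallness is defined only through partitioned atlases and their identity atlases. Sections of a higher family are not compatible families of carrier sections; they carry descent data along the whole \v{C}ech nerve.

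The missing idea is Lemma~\ref{real:section-codes}. Over a partitioned \(P\):
\begin{itemize}
\item resolve \(Y\) by a partial projective semisimplicial hypercover \(U_{\le n+1}\) with \(\kappa\)-small fibres; its matching objects are finite limits, hence small by Proposition~\ref{real:closures};
\item resolve \(Z\times_YU_\bullet\) by a relative Reedy-covering resolution \(V_\bullet\);
\item take as codes at \(p\) the strict semisimplicial sections \(U_{\le n+1,p}\to V_{\le n+1,p}\) together with a tracker tuple;
\item use finite semisimplicial descent (Lemma~\ref{pt:finite-descent}) both to extend the evaluation from \(|\sk_{n+1}U|\) to \(Y\) and to prove adequacy by successive projective lifts.
\end{itemize}
Only with such small codes does the induction of Theorem~\ref{real:pi} yield a small atlas of \(\Pi_fg\): the identity family one level down has product \(W\times_SW\), and bounded quotient closure finishes the step. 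Your diagonal remark is correct once written as \(\Delta_{\Pi_fg}\simeq\Pi_{Y_B/B}I\) over \(B=\Pi_fg\times_X\Pi_fg\), but it supplies no atlas.

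The cardinal estimate must also be strict in each fibre. A code fibre lies in \(\NN^{n+2}\times\prod_k\operatorname{Set}(U_{k,p},V_{k,p})\), which is \(<\kappa\) because \(\mu^\nu<\kappa\) for \(\mu,\nu<\kappa\); the identity \(\kappa^{<\kappa}=\kappa\) only gives \(\le\kappa\). Because you derive (H3) from \(\Pi\)-closure, (H3) inherits this gap. There you must also use a small partitioned atlas of \(\Pi_fg\), not \(\Pi_fg\) itself, since (H3) asks for an original code object.

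The weak-universe coding in (2) is the other unresolved point, as you suspect. A tower of atlases and identity atlases does not determine a type from its code alone. Realizing it by nested quotients needs the groupoid composition on each kernel and all its coherences, which the atlases do not record. The paper instead encodes a split projective hypercover, made Kan by projective horn fillers (horn maps are covers onto projectives, hence split) and truncated to its \((n+1)\)-coskeleton. The code then consists of:
\begin{itemize}
\item ordinal carriers \(<\kappa\) in degrees \(\le n+1\), with names;
\item faces, degeneracies, and fillers through dimension \(n+2\);
\item one finite tracker tuple
\end{itemize}
(Lemma~\ref{real:resolution}, Theorem~\ref{real:universes}). The tuple is normalized to recover base names from a vertex input, which is what makes the classifying map \(P\to V\) tracked. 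Lemma~\ref{real:kan-admission} is what makes \(\Loc(p)\subseteq\mathcal S_{\kappa,n}\) true rather than ``by construction''. Your final sentence points in this direction; it has to become the construction.

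Smaller omissions:
\begin{itemize}
\item You never build the original-base propositional presentation required by (2) and (H1). This needs the descent argument of Lemma~\ref{real:propositional}, showing that the truth family of \(\Eff\) classifies all monomorphisms of \(\E\), followed by pulling truth back along a partitioned cover \(V_{\mathrm{prop}}\onto\Omega\).
\item Exhaustiveness must be uniform over arbitrarily large tests \(P\). This requires one fixed cover \(P_0\onto X\) and pullback along lifts \(P\to P_0\) (Lemma~\ref{real:exhaustive}), not a choice of atlas for each \(P\).
\item Theorem~\ref{thm:shadow} does not apply here, since the base is \(\Asm\), not \(\operatorname{Set}\). Your direct route via Lemma~\ref{real:completion-comparison} and \eqref{eq:import-shadow} is correct.
\end{itemize}
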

Here ``tracked'' means that the family's finite simplicial presentation carries uniform numerical trackers for its face, degeneracy, and chosen horn-filler operations.

Logical interpretations of the realizability completion are determined
by suitable interpretations of assemblies. Let \(\D\) be an exact
locally cartesian closed target with a finite-limit-closed covering
class of \(0\)-truncated projectives. For a regular interpretation
\(F:\Asm\to\D\) sending partitioned assemblies into this class,
the extension \(\overline F:\E\to\D\) preserves dependent products
exactly when \(F\) preserves weak section families. Such an extension
also preserves the chosen universes exactly when
\(\Loc(\overline Fp_i)=\M_i^{\D}\) at every level. Restriction gives an equivalence of the
corresponding interpretation spaces, so these extensions are unique
up to coherent equivalence. This is the assembly case of
Theorem~\ref{thm:pt-morphisms} in
Appendix~\ref{app:projective-morphisms}.

\section{Applications of the Realizability Completion}\label{sec:applications}
We give two applications of the realizability completion. The first distinguishes classifiers by their computable observations; the second reconstructs higher function types from tracked cochains and computes their quotient depth.

\subsection{Arithmetical Completeness and Classifier Separation}\label{sec:application}
We continue in the realizability completion $\E$ and write $\Gamma X=\Map_\E(1,X)$. Every object of $\E$ has an assembly cover.
For global points $x,y:1\to U$, write $\operatorname{Id}_U(x,y)=1\times_{U\times U}U$, formed using $(x,y)$ and $\Delta_U$.

Write $N$ for the standard numerical assembly with carrier $\NN$ and realizers $\rho_N(k)=\{k\}$, and $2\into N$ for the subobject on $\{0,1\}$. We regard $\Ftwo$ as the additive group on $2$, with realizers $\rho_{\Ftwo}(b)=\{b\}$.

A uniform computable presentation consists of numerical representatives for the coefficient groups, a fibrewise equivalence relation computably enumerable uniformly in $e$, and uniformly recursive operations on representatives; finite products and iterated bar quotients induce the higher presentations.

For a $0$-truncated abelian group object $A$ in $\E_{/B}$, write $K_B(A,n)$ for its $n$-fold bar delooping, with $K_B(A,0)=A$. We omit the base when it is terminal and use the same notation for ordinary Eilenberg--Mac Lane spaces in formulas for global sections.

\begin{theorem}[Rice-type classifier separation]\label{thm:rice}
For each $n\ge2$ there is a uniformly computably presented family $p_n:\codes_n\to N$ and a univalent completion $u_n:\widetilde U_n\to U_n$ with code cover $c_n:N\onto U_n$ satisfying $c_n^*u_n\simeq p_n$, such that:
\begin{enumerate}
\item\label{rice:fibres} The fibre $X_{n,e}$ is $1$ when $e\in\Tot$ and $K(\Ftwo,n)$ otherwise, where $\Tot=\{e:\varphi_e\text{ is total}\}$. In particular all fibres are simply connected.
\item\label{rice:complete} For a fixed total index $e_*$ and $u_*=c_n(e_*)$, the set
\begin{equation}\label{eq:pi2}
 \{e:\Gamma\operatorname{Id}_{U_n}(c_n(e),u_*)\ne\varnothing\}
\end{equation}
is $\Pi^0_2$-complete under computable many-one reductions.
\item\label{rice:global} $\Gamma U_n\simeq1\amalg K(\Ftwo,n+1)$; the internal object $U_n$ has truncation height exactly $n+1$.
\item\label{rice:maps} All maps $U_n\to N$ are constant. The constant maps induce equivalences $\Map_\E(U_n,N)\simeq\NN$ and $\Map_\E(U_n,2)\simeq2$.
\item\label{rice:local} If $q_n:\mathcal Y_n\to2$ has the fixed fibres $1$ and $K(\Ftwo,n)$, then for every cover $a:V\onto N$ and every map $r:V\to2$, the families $a^*p_n$ and $r^*q_n$ lie in distinct equivalence classes over $V$.
\end{enumerate}
\end{theorem}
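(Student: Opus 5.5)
The plan is to build $p_n$ as an iterated fibrewise bar delooping of a $0$-truncated group object over $N$, and to read off \ref{rice:global}--\ref{rice:local} from the classification equivalence of Theorem~\ref{thm:univalent-completion} together with tracking arguments in assemblies. Concretely, I would first define a $0$-truncated abelian group object $A\to N$ in $\E_{\le0}\simeq\Eff$, namely $\Ftwo\times N$ modulo a fibrewise equivalence relation $R$. In $R$, $(b,e)\sim(b',e)$ holds when $b=b'$, or when a realizer certifies totality of $\varphi_e$; such a realizer is a code $r$ with $\varphi_r$ total and $\varphi_r(m)$ a halting witness for $\varphi_e(m)$. Presented on numerical representatives, this relation is c.e. uniformly in $e$, and the group operations are uniformly recursive. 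I then put $\codes_n=K_N(A,n)$ and $p_n$ its projection; this is the uniform computable presentation.

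Evaluating at a global point $e:1\to N$ gives the fibre group $A_e$. It is trivial exactly when a totality certificate exists, that is, when $e\in\Tot$; otherwise it is the constant group $\Ftwo$. Since bar delooping commutes with pullback, this gives \ref{rice:fibres}. I then apply Theorem~\ref{thm:univalent-completion} to $p_n$, which is $n$-truncated, using $\mathrm{CF}_n$ from Theorem~\ref{thm:real-hierarchy}. This yields $u_n:\widetilde U_n\to U_n$ with a cover $c_n:N\onto U_n$ and $c_n^*u_n\simeq p_n$.

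For \ref{rice:global}, I would use \eqref{eq:universe-classification} at $T=1$. Then $\Gamma U_n$ is the core of families over $1$ lying in $\Loc(p_n)$. Because $1$ is projective, every such family is a global fibre $X_{n,e}$. So the components are $[1]$ and $[K(\Ftwo,n)]$, both realized since $\Tot$ and its complement are nonempty. Their automorphism spaces are $1$ and $\Aut K(\Ftwo,n)\simeq K(\Ftwo,n)\rtimes\Aut(\Ftwo)=K(\Ftwo,n)$, giving $\Gamma U_n\simeq 1\amalg K(\Ftwo,n+1)$. The height bound $n+1$ comes from the theorem, and its sharpness from $\pi_{n+1}\neq0$. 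For \ref{rice:complete}, a global point of $\operatorname{Id}_{U_n}(c_n(e),u_*)$ is exactly a global equivalence $X_{n,e}\simeq 1$. So the index set is $\Tot$ itself, which is classically $\Pi^0_2$-complete.

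For \ref{rice:maps}, $N$ is $0$-truncated, so $\Map_\E(U_n,N)$ is discrete. A map $f:U_n\to N$ is determined by the tracked function $f c_n:N\to N$, which is constant on the $\pi_0$-classes of the kernel $N\times_{U_n}N$. The step I expect to be the main obstacle is showing that this kernel has uniformly realized points over all pairs $(e,e')$ of total indices and over all pairs of nontotal indices. This is where the uniformity of the presentation must be used: on non-certified representatives the deloopings are literally the same presentation of $K(\Ftwo,n)$, so the identity equivalence is uniformly tracked. Given that, $fc_n$ is a total recursive function constant on $\Tot$ with value $a$ and constant on its complement with value $b$. If $a\ne b$, it would decide $\Tot$; hence $a=b$. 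The resulting identifications $\Map(U_n,N)\simeq\NN$ and $\Map(U_n,2)\simeq2$ follow, since $2\into N$ is a subobject.

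For \ref{rice:local}, suppose $a^*p_n\simeq r^*q_n$ over $V$. Since $N$ is a partitioned, hence projective, assembly and $a$ is a cover, $a$ has a section $s:N\to V$. Pulling back along $s$ gives $p_n\simeq (rs)^*q_n$. Comparing global fibres, $rs(e)$ is the index of the trivial fibre exactly when $e\in\Tot$; this uses $K(\Ftwo,n)\not\simeq1$. But $rs:N\to2$ is tracked by a total recursive function, so this would make $\Tot$ decidable, a contradiction.
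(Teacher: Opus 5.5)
\textbf{The gap: your family is not uniformly computably presented.} Your claim that $R$ is c.e.\ uniformly in $e$ is false, so the family you build does not satisfy the paper's definition of a uniform computable presentation in Section~\ref{sec:application}. For $b\ne b'$ you declare $(b,e)\sim(b',e)$ exactly when a totality certificate exists, that is, when $e\in\Tot$. The relation on representatives is therefore
\[
\{(e,b,b'):b=b'\text{ or }e\in\Tot\},
\]
which is $\Pi^0_2$-complete, not $\Sigma^0_1$. The realizer sets you use, codes of total certifying functions, are not c.e.\ either. In effect you have written the $\Pi^0_2$ condition directly into the equality predicate. That makes the fibre computation trivial, and the completeness in (ii) is then built in by hand rather than arising from computable data, which is the point of the Rice-type statement. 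The missing idea is the paper's reset system. It takes stage-indexed copies $\NN\times\Ftwo$ with uniformly recursive transitions $a_{e,s}$, which are zero exactly when the progress function $\ell_e$ increases. It identifies $(s,b)\sim_e(t,d)$ when some later stage $v$ equalizes the transported values. This relation is c.e.\ uniformly in $e$, by searching for $v$. Yet the quotient is $0$ precisely when infinitely many resets occur, i.e.\ when $e\in\Tot$, and is $\Ftwo$ otherwise.

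Your derivations of (ii), (iii) and (v) from the fibre description do follow the paper's route: projectivity of $1$ and $N$, univalence turning global identities into global fibre equivalences, and a section of the cover for (v). Computing $\Gamma U_n$ from the classification equivalence at $T=1$, rather than by pushing $\Gamma$ through the equivalence-groupoid quotient, is a legitimate variant.

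\textbf{The step you single out in (iv) is not needed, and it would fail once the family is fixed.} Since $\Gamma N$ is discrete, $\Gamma F$ is constant on each path component of $\Gamma U_n$. So $Fc_n(e)=Fc_n(e')$ as soon as there is one global equivalence $X_{n,e}\simeq X_{n,e'}$, which is how the paper argues. No uniformly realized points of $N\times_{U_n}N$ over all pairs are required.

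For the reset family, such uniform points over pairs of nontotal indices do not exist. By Lemma~\ref{lem:em-eq}, fixing one nontotal index, they would yield a uniformly tracked isomorphism between the restriction of $\mathcal M$ to the nontotal indices and the constant group $\Ftwo$. Its tracker would compute, from $e$, a stage after the last reset. The set of nontotal indices would then be a conjunction of a $\Sigma^0_1$ and a $\Pi^0_1$ condition, hence $\Delta^0_2$, contradicting its $\Sigma^0_2$-completeness. So when you repair the construction, replace that step by the pointwise argument.
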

The resulting fibres have finite homotopy groups, and the theorem identifies the additional uniform information retained by their realizability presentations.

A computable reset system constructs a uniformly tracked modest abelian
group family $\mathcal M\to N$, with fibres $M_e$ obtained from copies
of $\Ftwo$ by uniformly recursive transitions that are zero or the
identity. Appendix~\ref{app:reset} gives the construction.

For each fixed index, the tracked fibre isomorphisms are $M_e\cong0$ for $e\in\Tot$ and $M_e\cong\Ftwo$ for $e\notin\Tot$.
Iterated bar constructions, formed by effective groupoid quotients in $\E_{/N}$, define $p_n=K_N(\mathcal M,n)\to N$.

Lemma~\ref{lem:em-eq} represents the fibre equivalences of $p_n$ on
all higher test contexts. Their groupoid and its tautological action
yield the classifier $U_n$ and its universal family by effective quotients.

For a map $F:U_n\to N$, $\Gamma F$ is constant on each of the two
components of $\Gamma U_n$. If these values differed, the total
recursive carrier function of $Fc_n:N\to N$ would decide $\Tot$.
Thus $Fc_n$ is constant, and descent along the code cover $c_n$
makes $F$ constant.

\begin{corollary}\label{cor:two-classifiers}
Let $W_n$ be the base of the univalent completion of the fixed two-fibre family $q_n$. Then $\Gamma W_n\simeq\Gamma U_n$, and a nonconstant map $W_n\to2$ distinguishes $W_n$ from $U_n$, giving $W_n\not\simeq U_n$.
\end{corollary}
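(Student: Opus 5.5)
The plan has three parts: compute $\Gamma W_n$ from the classification equivalence, construct a nonconstant map $W_n\to2$ from the two fixed representatives, and conclude $W_n\not\simeq U_n$ from Theorem~\ref{thm:rice}(\ref{rice:maps}).

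\emph{Global points of $W_n$.} Theorem~\ref{thm:univalent-completion} applies to $q_n:\mathcal Y_n\to2$, since it is $n$-truncated. It gives a univalent $w:\widetilde W_n\to W_n$, a cover $2\onto W_n$ pulling $w$ back to $q_n$, and
\[
 \Gamma W_n\simeq\bigl((\E_{/1})_{\Loc(q_n)}\bigr)\core .
\]
A global object lies in $\Loc(q_n)$ when it becomes a pullback of $q_n$ after a cover of $1$. Since $1$ is a projective assembly, the relevant covers split, so such an object is already equivalent to one of the fibres $1$ or $K(\Ftwo,n)$. The core therefore has two components, the automorphism spaces $\Aut(1)\simeq1$ and $\Aut_\E(K(\Ftwo,n))$. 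For the second I reuse the computation behind Theorem~\ref{thm:rice}(\ref{rice:global}), which identifies the nontrivial component of $\Gamma U_n$ as $B\Aut(K(\Ftwo,n))\simeq K(\Ftwo,n+1)$ via Lemma~\ref{lem:em-eq} evaluated at $1$. The same fibre is involved, so $\Gamma W_n\simeq1\amalg K(\Ftwo,n+1)\simeq\Gamma U_n$.

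\emph{Nonconstant map $W_n\to2$.} Let $\chi:W_n\to\Omega$ classify the subobject of those $x$ for which the fibre $x^*w$ is contractible. This is a proposition, namely $\mathrm{isContr}$ of the fibre, built from the dependent products of Theorem~\ref{thm:relative-fullness}. The key step is to show $\chi$ factors through the decidable $2\into\Omega$. Pulled back along the cover $2\onto W_n$, $\chi$ is the characteristic map of $\{0\}\subseteq2$, because the fibres over $0$ and $1$ are $1$ and the noncontractible $K(\Ftwo,n)$. That subobject is complemented, and descent along the cover then gives a complemented subobject of $W_n$. I expect this descent step to be the main obstacle: one must check that complementation descends along effective epimorphisms, using that the disjointness and covering conditions are local, and that the two points of $2$ remain in distinct images. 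Granting this, $\chi$ yields $W_n\to2$, and the two global points coming from $0,1\in2$ take the different values $0$ and $1$, so the map is nonconstant.

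\emph{Conclusion.} Theorem~\ref{thm:rice}(\ref{rice:maps}) says every map $U_n\to2\into N$ is constant, so $\Map_\E(U_n,2)\simeq2$ consists only of constants. An equivalence $W_n\simeq U_n$ would transport the nonconstant map just constructed to a nonconstant map $U_n\to2$, a contradiction. Hence $W_n\not\simeq U_n$, even though $\Gamma W_n\simeq\Gamma U_n$.
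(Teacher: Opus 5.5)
Your proposal is correct. It has the same three-part shape as the paper's proof: compute $\Gamma W_n$, produce a nonconstant $W_n\to2$, and invoke Theorem~\ref{thm:rice}(\ref{rice:maps}). The real difference is in how you obtain the map $W_n\to2$, and to a lesser extent in how you compute $\Gamma W_n$.

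\emph{The paper's route.} It stays at the level of the presentation. The fibre-equivalence groupoid $R_\bullet$ of $q_n$ has $R_0=2$. The isomorphism object between $0$ and $\Ftwo$ in $\Eff$ is empty, so the source and target labels $R_1\rightrightarrows2$ agree. The identity label on $R_0$ is therefore invariant, and realization descends it to $W_n\to2$. Likewise, $\Gamma W_n$ is read off as $|\Gamma R_\bullet|$, using that $\Gamma$ preserves groupoid quotients (Lemma~\ref{lem:projectives}).

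\emph{Your route.} For $\Gamma W_n$ you use the classifying equivalence at $T=1$ together with projectivity of $1$. For the map, you take the characteristic map $\chi$ of the internal proposition ``the classified type is contractible'' and show that it is decidable. This is presentation-independent and makes clear what the separating map measures: it is exactly the predicate whose constancy on $U_n$ is the Rice phenomenon. The paper's argument is shorter once the explicit groupoid is available.

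Two small refinements.
\begin{itemize}
\item The step you flag as the main obstacle is easier than descending complements. The map $2\into\Omega$ is a monomorphism and $\chi c$ factors through it. So the pullback of $2\into\Omega$ along $\chi$ is a subobject of $W_n$ through which the cover $c$ factors, hence it is all of $W_n$. This is the same locality of factorization through a subobject used in the proof of Theorem~\ref{thm:univalent-completion}, and it needs no unions or Heyting structure.
\item Identifying $c^*\chi$ with the characteristic map of $\{0\}\subseteq2$ uses that subobjects of $2$ and of $1$ are detected by global points. This follows from $\Sub_\E(j2)\cong\Sub_{\Asm}(2)$ and projectivity of $1$. It is also the point where external noncontractibility of $K(\Ftwo,n)$ becomes the internal fact that $\operatorname{isContr}(K(\Ftwo,n))\into1$ is empty, and you should state it explicitly.
\end{itemize}
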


\subsection{From Tracked Cochains to Higher Function Types}\label{sec:cochains}
The tracked reset family $\mathcal M\to N$ from Section~\ref{sec:application} also gives higher function types of exact quotient depth at every positive finite height. Put $G_N=N\times C_2$, where $C_2$ is the standard finite decidable group of order two, and write $B_NG_N$ for its bar quotient over $N$.

Normalized cochains for $C_2$, acting trivially on $\mathcal M$, provide tracked codes for sections and their successive identities; coherent fullness reconstructs the resulting function types.

Quotient depth is the least nesting depth of effective Segal groupoid quotients in a construction from the embedded zero-types $\Eff_{/N}$, with finite limits and images preserving the maximum input depth (Definition~\ref{def:cochain-depth}).

\begin{corollary}[Exact Quotient Depth of Function Types]\label{cor:cochain-depth}
For every $n\ge1$, the uniformly computably presented function-type family
\[
 \mathcal Q_n=K_N(\mathcal M,n)^{B_NG_N}\longrightarrow N
\]
formed in $\E_{/N}$ has fibres
\begin{equation}\label{eq:cochain-reset}
 (\mathcal Q_n)_e\simeq
 \begin{cases}
 1,&e\in\Tot,\\
 \displaystyle\prod_{j=0}^{n}K(\Ftwo,j),&e\notin\Tot.
 \end{cases}
\end{equation}
Its contractible-fibre index set is $\Pi^0_2$-complete, and its quotient depth in $\E_{/N}$ from objects of the embedded slice $\Eff_{/N}$ is exactly $n$.
\end{corollary}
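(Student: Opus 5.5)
The plan has three parts: the fibre formula, the $\Pi^0_2$-completeness, and the depth computation, done in that order.

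\emph{Fibres.} By Theorem~\ref{thm:real-hierarchy} and Theorem~\ref{thm:relative-fullness}, dependent products in $\E_{/N}$ exist and satisfy Beck--Chevalley. So the exponential $\mathcal Q_n$ commutes with pullback along each global point $e:1\to N$, and
\[
 (\mathcal Q_n)_e\simeq K(M_e,n)^{BC_2}.
\]
For $e\in\Tot$ the fibre isomorphism $M_e\cong0$ gives a contractible target, hence $(\mathcal Q_n)_e\simeq1$. For $e\notin\Tot$ we have $M_e\cong\Ftwo$ with trivial action. The mapping space is then
\[
 \Map(BC_2,K(\Ftwo,n))\simeq\prod_{j=0}^{n}K\bigl(H^{n-j}(BC_2;\Ftwo),j\bigr).
\]
This uses that the mapping space into an Eilenberg--Mac Lane space of a simple abelian group splits as a product of Eilenberg--Mac Lane spaces on the cohomology groups. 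Since $H^k(BC_2;\Ftwo)=\Ftwo$ for all $k$, the fibre is $\prod_{j=0}^nK(\Ftwo,j)$. Internally I would realise this through the normalized cochain codes: bounded-degree normalized cochains of the bar resolution give the section codes required by $\mathrm{CF}_n$. I expect this computation is the content of Theorem~\ref{thm:cochain-products}, and I would simply cite it there.

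\emph{Index set.} The set of $e$ with contractible fibre is exactly $\Tot$, because the $j=0$ factor $\Ftwo$ is a nonempty non-contractible set when $e\notin\Tot$. It lies in $\Pi^0_2$ because it equals $\Tot$, and it is $\Pi^0_2$-complete because $\Tot$ is, with the identity reduction. One point needs checking: contractibility must be tested on global sections. This is legitimate because $1$ is projective and the fibres over global points are computed by Beck--Chevalley.

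\emph{Depth.} For the upper bound I would exhibit a construction of depth $n$. The cochain presentation expresses $\mathcal Q_n$ as an $n$-fold nested groupoid quotient. The starting data are the zero-type code objects (normalized cochains, which lie in $\Eff_{/N}$) together with finite limits. This is the bound ``at most $n+1$'' of Theorem~\ref{thm:relative-fullness}, improved by one. The improvement comes because the base stage needs only an image in $\Eff_{/N}$: the bottom factor $K(\Ftwo,0)$ is $0$-truncated.

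For the lower bound I would show by induction on quotient depth $d$ that an object of depth $\le d$ built from $\Eff_{/N}$ is $d$-truncated over $N$. Finite limits and images do not raise the truncation level. An effective quotient of a Segal groupoid whose terms are $d$-truncated is $(d+1)$-truncated. The fibre for $e\notin\Tot$ has a nonzero $\pi_n$, so $\mathcal Q_n\to N$ is not $(n-1)$-truncated, and therefore its depth is at least $n$.

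\emph{Main obstacle.} The hard step is the upper bound: the nested-quotient presentation must stay uniformly tracked over $N$, so that each stage is an effective quotient in $\E_{/N}$ of objects already at the previous depth. The difficulty is that the reset family $\mathcal M$ is only c.e.-presented, and its fibres may jump between $0$ and $\Ftwo$. I would handle this by quotienting the uniform cochain complex built on the representatives of $\mathcal M$ directly, and verifying the identification with the exponential via Lemma~\ref{lem:em-eq}. I would not attempt to decide the fibres.
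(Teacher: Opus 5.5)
Your proposal is correct and follows the paper's proof essentially step for step: the fibres come from Theorem~\ref{thm:cochain-products} together with the fixed-index isomorphisms $M_e\cong0$ or $M_e\cong\Ftwo$ and pullback stability, the contractible-fibre index set is exactly $\Tot$, and the lower bound is the depth-bounds-truncation induction (the paper's Lemma~\ref{lem:cc-depth}) applied at a non-total index, where $\pi_n\ne0$.

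You do overcomplicate the upper bound. Quotient depth is a minimum over all expressions with equivalent value. The product formula $\mathcal Q_n\simeq\prod_{j=0}^{n}K_N(\mathcal M,j)$ is proved by Theorem~\ref{thm:cochain-products} uniformly over $N$ for the tracked group $\mathcal M\in\Eff_{/N}$, and it already exhibits depth at most $n$: each $K_N(\mathcal M,j)$ takes $j$ bar quotients, and a finite product takes the maximum. So there is no uniformity obstacle to overcome, and Lemma~\ref{lem:em-eq} plays no role here, since it concerns equivalence objects rather than exponentials.
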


Thus quotient depth grows while the arithmetical complexity of recognizing contractible fibres remains fixed. Appendix~\ref{app:cochains} proves the general cochain reconstruction theorem (Theorem~\ref{thm:cochain-products}) and proves this corollary.

\pdfbookmark[0]{References}{references}
\phantomsection

\appendix
\renewcommand{\theHsection}{appendix.\Alph{section}}
\renewcommand{\theHsubsection}{\theHsection.\arabic{subsection}}
\pdfbookmark[0]{Complete Proofs}{proof-appendix}
\phantomsection
\section*{Overview of the Appendices}
The appendices give the full proofs and develop the constructions and auxiliary results in detail.
\begin{itemize}
\item \textit{Regular foundations and logical reconstruction.}
Appendix~\ref{app:foundations} proves comprehension, exact completion, and its fibrewise
description. It defines relative atlas presentations of finite height and proves their recognition
lemma and relative finite-height reconstruction
(Theorem~\ref{thm:relative-fullness}). It then proves individual univalent
completion, the five conditions for universe closure, dependent-product-preserving extensions, and the fundamental
adjunction, including the factorization through exact doctrines.
\item \textit{Conditions for logical completion and universe reconstruction.}
Appendix~\ref{app:completion} establishes conservativity and the bounded-source
criterion, then proves the theorem on dependent-product reconstruction within a universe level
(Theorem~\ref{thm:fullness}) and the specialized projective interpretation theorem.
It also develops logical completion from projective covers and a sufficient criterion for universe
presentations, together with refinements concerning classification, doctrines,
and interpretations preserving the chosen universes. Further results include classical recovery,
bounded-space and higher-source examples, and a finite-group example separating
logical closure from quotient closure.
\item \textit{Realizability (Section~\ref{sec:realization}).}
Appendix~\ref{app:realizability-hierarchy} starts from the assembly source and
its projective cover, then verifies locality, closure, and exhaustiveness of
the small projective atlas classes. Tracked Kan presentations provide weak
universes and small section codes, from which it derives same-level dependent
products. The proof of Theorem~\ref{thm:real-hierarchy} is completed in
Subsection~\ref{app:realization-classification}, where univalent classification,
size bounds, and cumulativity are assembled into the hierarchy and the logical
interpretation criterion is applied.
\item \textit{Rice-type classifier separation (Subsection~\ref{sec:application}).}
Appendix~\ref{app:rice} constructs the reset family and represents fibrewise
equivalences to obtain its univalent classifier. It proves that the numerical
index set for identity with a fixed classifier code is $\Pi^0_2$-complete and
computes the classifier's global homotopy type. The proof of
Theorem~\ref{thm:rice} concludes in
Subsection~\ref{app:numerical-separation} with the constancy of numerical maps
and the obstruction to local classification by two fixed fibres. The same
subsection proves Corollary~\ref{cor:two-classifiers}, separating two classifiers
with equivalent global homotopy types.
\item \textit{Higher function types and quotient depth
(Subsection~\ref{sec:cochains}).}
Appendix~\ref{app:cochains} derives section codes and their successive identity
codes from tracked coefficient operations. Reconstruction from a finite tower of section codes
yields the resulting function types and the explicit product
formula of Theorem~\ref{thm:cochain-products} in
Subsection~\ref{app:cochain-reconstruction}. The calculation at height two makes
the quotient stages explicit. Subsection~\ref{app:cochain-depth} then applies
this formula to the reset family, computes the contractible-fibre index set,
and proves the exact quotient depth of Corollary~\ref{cor:cochain-depth} by a
matching truncation lower bound.
\end{itemize}

\section{Completion, Reconstruction, and Logical Adjunction}\label{app:foundations}
This appendix gives proofs for Sections~\ref{sec:reconstruction} and~\ref{sec:triposes}. We use mapping spaces and homotopy-coherent diagrams and adjunctions throughout.

\subsection{Exactness, Embeddings, and Descent}\label{app:descent}
We use the regular exact-completion theorem~\cite{Stefanich23}. For small regular $\C$, the regular topology is generated by singleton covers, and $h:\C\into\HH_\C$ is fully faithful and regular. The completion $\E=\Ex(\C)$ is the full closure of $h\C$ under groupoid realizations; it is finite-limit closed and exact. Restriction gives
\begin{equation}\label{eq:import-universal}
 \Fun_{\mathrm{reg}}(\Ex\C,\D)
       \simeq\Fun_{\mathrm{reg}}(\C,\D)
       \qquad(\D\text{ exact}).
\end{equation}
Every object has an original atlas: the full subcategory of objects admitting an original cover contains $h\C$ and is closed under groupoid realizations by composing an atlas of $R_0$ with $R_0\onto|R_\bullet|$. A cover between original objects is reflected by $h$, since $h$ preserves its image factorization and reflects equivalences. If $\C$ is exact, $\Ex\C\simeq\C$ compatibly with the unit. For an ordinary regular source, the completion is bounded and
\begin{equation}\label{eq:import-shadow}
 (\Ex\C)_{\le0}\simeq\Exone\C.
\end{equation}
This identifies the zero-types of the higher completion with the ordinary exact/regular completion~\cite{Stefanich23}.

\begin{lemma}[Quotient and truncation preservation]\label{lem:quotient-truncation}
A regular functor between exact $\infty$-categories preserves effective groupoid quotients. A finite-limit-preserving functor preserves $n$-truncated objects and maps for every finite $n\ge-2$; a fully faithful such functor also reflects them. The full embedding $\E\into\HH_\C$ preserves finite limits, covers, and groupoid quotients and preserves and reflects finite truncation. The same assertions hold for the regular Yoneda embedding of any exact category.
\end{lemma}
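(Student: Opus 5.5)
The lemma has four assertions, and I would prove them in order, reducing each to a standard fact about effective epimorphisms, pullbacks and the regular Yoneda embedding.

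\emph{Regular functors preserve effective groupoid quotients.} Let $F:\D\to\D'$ be regular between exact $\infty$-categories, and let $R_\bullet$ be an internal groupoid in $\D$ with effective quotient $q:R_0\onto|R_\bullet|=:Q$. Exactness says two things: $q$ is a cover, and $R_\bullet$ is the \v{C}ech nerve of $q$, i.e.\ $R_k\simeq R_0\times_Q\cdots\times_Q R_0$. Since $F$ preserves finite limits and covers, $Fq$ is a cover and $FR_\bullet$ is the \v{C}ech nerve of $Fq$. In a regular $\infty$-category a cover is the colimit of its \v{C}ech nerve, because the realizations in the definition are effective. Hence $|FR_\bullet|\simeq FQ$, and $FR_\bullet$ is effective, being the \v{C}ech nerve of its own quotient map.

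\emph{Truncation.} I would induct on $n\ge-2$ using the diagonal characterization: a map $f$ is $(-2)$-truncated iff it is an equivalence, and $n$-truncated iff $\Delta_f$ is $(n-1)$-truncated. A finite-limit-preserving $F$ preserves equivalences and satisfies $F\Delta_f\simeq\Delta_{Ff}$, so the induction gives preservation. For objects, $X$ is $n$-truncated iff $X\to1$ is, and $F1\simeq1$. For reflection with $F$ fully faithful, the base case is that $F$ reflects equivalences, which holds for any fully faithful functor. The inductive step is again $F\Delta_f\simeq\Delta_{Ff}$.

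\emph{The embedding $\E\into\HH_\C$.} Fullness holds by construction, and finite-limit closure of $\E$ is part of the cited completion theorem, so the inclusion preserves finite limits. Groupoid quotients in $\E$ are by definition the realizations computed in $\HH_\C$, since $\E$ is the closure of $h\C$ under those realizations; hence they are preserved. For covers, a map $X\to Y$ in $\E$ is a cover iff it is the colimit of its \v{C}ech nerve. That nerve lies in $\E$ by finite-limit closure, and its realization in $\E$ is the one in $\HH_\C$. So covers in $\E$ are exactly the maps that are covers in $\HH_\C$. Truncation preservation and reflection then follow from the previous step applied to this fully faithful, finite-limit-preserving inclusion.

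\emph{An exact source.} When $\C$ is itself exact, the same argument applies through the equivalence $\Ex\C\simeq\C$ compatible with the unit. The regular Yoneda embedding $\C\into\HH_\C$ lands in $\Ex\C$, and $\C$ is closed under its own quotients, so the preceding paragraph applies verbatim.

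The main obstacle is this last point. One must verify that the effective quotients of $\C$ are carried by $h$ to the sheaf-theoretic realizations. I would handle it through the regular-topology sheaf condition: descent along a cover $R_0\onto Q$ is exactly the statement that $hQ$ is the colimit of the \v{C}ech nerve in $\HH_\C$. This is precisely what the singleton-cover regular topology enforces.
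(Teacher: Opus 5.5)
Your proposal is correct and follows essentially the paper's route. Effective quotients are \v{C}ech nerves of covers, so a regular functor preserves them, and truncation is preserved (and, for fully faithful functors, reflected) by induction on the diagonal criterion starting from equivalences at degree $-2$. The one difference is that the paper gets the exact-category case by applying the first assertion directly to the regular Yoneda embedding into its sheaf $\infty$-topos, whereas you go through $\Ex\C\simeq\C$ or the regular-topology sheaf condition, which amounts to checking the same regularity of $h$.
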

\begin{proof}
An effective groupoid is the \v{C}ech nerve of its quotient cover. A regular functor preserves the cover and its finite-limit nerve, so the image quotient is the realization of the image groupoid. This applies in particular to the exact regular-sheaf embedding. Preservation of finite truncation follows by induction from the diagonal criterion, starting with equivalences at degree $-2$; a fully faithful functor reflects the successive diagonal equivalences. The assertions for $\E\subseteq\HH_\C$ also follow directly from its exact-completion construction.
\qed\end{proof}

\begin{lemma}[Exact descent]\label{lem:exact-descent}
Let $\D$ be exact. For every cover $V\onto X$ in $\D$,
\begin{equation}\label{eq:slice-descent}
 \D_{/X}\simeq\lim_{[k]\in\Delta}\D_{/V^{\times_X(k+1)}}.
\end{equation}
If $\D$ is fully embedded in an $\infty$-topos preserving finite limits and effective groupoid quotients, an ambient family over $X$ that becomes represented after pullback to $V$ is represented. Pullback along covers is conservative. These statements apply to $\E\subseteq\HH_\C$.
\end{lemma}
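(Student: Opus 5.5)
The descent equivalence~\eqref{eq:slice-descent} follows from effectiveness of the \v{C}ech nerve of $a:V\onto X$ together with universality of covers; the representation statement then follows by comparing it with ambient descent in the $\infty$-topos. I would proceed in three steps.

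\emph{Step 1: the comparison functor.} Given $a:V\onto X$, write $V_k=V^{\times_X(k+1)}$. Pullback defines the comparison functor
\[
\D_{/X}\to\lim_{[k]\in\Delta}\D_{/V_k}.
\]
Its candidate inverse sends a descent datum $(E_k\to V_k)_k$, which is cartesian over the simplicial object $V_\bullet$, to the realization $|E_\bullet|$. For this I would show that $E_\bullet$ is an internal groupoid in $\D$. Cartesianness gives $E_k\simeq E_0\times_{V_0}V_k$, so $E_\bullet$ is the pullback of the groupoid $V_\bullet$ along $E_0\to V_0$ in the appropriate sense. The Segal and groupoid conditions are stable under such cartesian pullbacks of simplicial objects, so $E_\bullet$ satisfies them. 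Exactness of $\D$ then gives an effective quotient $|E_\bullet|$ together with an induced map $|E_\bullet|\to|V_\bullet|\simeq X$; here $|V_\bullet|\simeq X$ because $a$ is a cover.

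\emph{Step 2: the two composites are equivalences.} In one direction we need $|E_\bullet|\times_XV_k\simeq E_k$. I expect this to be the main obstacle, because exactness alone only gives effectiveness ($E_1\simeq E_0\times_{|E|}E_0$), not the cartesianness of the square from $E_0$ to $|E_\bullet|$ over $V_0\to X$. My first attempt is to embed $\D$ via the regular Yoneda embedding into $\HH_\D$. By Lemma~\ref{lem:quotient-truncation} this embedding preserves finite limits and groupoid quotients, and in an $\infty$-topos descent (effectivity of cartesian groupoid transformations over a \v{C}ech nerve) holds. The comparison $E_0\to|E_\bullet|\times_XV_0$ is then an equivalence in the ambient topos. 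Since the embedding is fully faithful and preserves finite limits, it reflects this equivalence. In the other direction, the composite sends $Y\to X$ to $|Y\times_XV_\bullet|$. Because covers are pullback-stable, $Y\times_XV\onto Y$ is a cover whose \v{C}ech nerve is $Y\times_XV_\bullet$, so this realization recovers $Y$. Functoriality on mapping spaces then follows from these two equivalences of objects, or equivalently from the ambient equivalence restricted to full subcategories.

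\emph{Step 3: the remaining assertions.} For the representation statement, suppose $E\to X$ is an ambient family whose pullback $E\times_XV$ lies in $\D$. Every term $E\times_XV_k$ is then in $\D$, since $\D$ is closed under finite limits in the topos. By ambient descent, $E\simeq|E\times_XV_\bullet|$, and this groupoid quotient is formed in $\D$ and preserved by the embedding, so $E$ lies in $\D$. Conservativity of $a^*$ is immediate from the limit description, because the projection to the $[0]$ term detects equivalences: an equivalence at level $0$ induces equivalences at every $V_k$ by pullback, and hence on the limit. Finally, the statements apply to $\E\subseteq\HH_\C$ by Lemma~\ref{lem:quotient-truncation}, which provides exactly the required preservation of finite limits and groupoid quotients.
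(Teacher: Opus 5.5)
Your proposal is correct and follows essentially the paper's route. Both arguments embed $\D$ in an ambient $\infty$-topos, use descent there, and use exactness plus preservation of finite limits and groupoid quotients to show that the descended object lies in $\D$; that object is the quotient of the represented \v{C}ech groupoid $E_\bullet\simeq E_0\times_V V_\bullet$. Full faithfulness comes from restricting the ambient slice equivalence, and conservativity is checked on the \v{C}ech nerve. The one step you leave implicit, as the paper does, is that $V\onto X$ stays an effective epimorphism in the topos; this holds because the embedding preserves the \v{C}ech nerve of $V\onto X$ and its quotient $X$.
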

\begin{proof}
Use the regular-sheaf embedding of $\D$, or the indicated full exact embedding. Ambient effective descent holds~\cite{Lurie09}. If $S_V=S\times_XV$ is represented, the \v{C}ech nerve of $S_V\to S$ has terms
\[
 S_V\times_V V^{\times_X(k+1)},
\]
which are represented by finite limits. Fullness makes this an internal groupoid in $\D$; exactness and preservation of its quotient identify its quotient with $S$. Thus ambient descent is essentially surjective on the represented objects. It is fully faithful on their mapping spaces, since the ambient slice equivalence and all embeddings are fully faithful. This proves~\eqref{eq:slice-descent}. An equivalence after pullback to the cover is an equivalence on the entire \v{C}ech nerve and hence, by descent, an equivalence, proving conservativity.
\qed\end{proof}

\begin{lemma}[Cover criterion for weak section families]\label{lem:weak-cover}
For a pair in a regular $\infty$-category $\D$, a code family is weakly adequate if and only if its transpose to the ambient dependent product is a cover in $\Sh(\D,\Spc)$. For $\C\subseteq\E\subseteq\HH_\C$, the same conclusion holds for a code family in $\E$ whose adequacy is tested either on original parameters using original covers, or on all completed parameters using completed covers.
\end{lemma}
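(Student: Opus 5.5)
We prove both directions by working in the ambient $\infty$-topos $\Sh(\D,\Spc)$, where the dependent product $\Pi_fg$ always exists. Write $\hat w:\mathrm y W\to\Pi_fg$ (over $\mathrm yX$) for the transpose of the evaluation $e$ along the representing equivalence
\[
 \Map_{/X}(A,\Pi_fg)\simeq\Map_{/Y}(A\times_XY,Z).
\]
This equivalence holds in the sheaf topos for every $A$. For a representable parameter $T\to X$, a section $s:T\times_XY\to Z$ over $Y$ is the same as a point of $\Pi_fg$ over $T$. The composite of a code $v:T'\to W$ with $\hat w$ is the point corresponding to $e\circ(v\times_XY)$, since the transpose is natural in $A$. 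Weak adequacy therefore says exactly this: every generalized element $T\to\Pi_fg$ with representable domain lifts through $\hat w$ after a cover $T'\onto T$ in $\D$, with the homotopy of~\eqref{eq:weak} supplying the $2$-cell of the lift.

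Next we use the standard characterization of effective epimorphisms in a sheaf $\infty$-topos on a site whose topology is generated by singleton covers. A map $\mathcal F\to\mathcal G$ is an effective epimorphism if and only if, for every representable $T$ and every $T\to\mathcal G$, there is a covering $T'\onto T$ in the site such that $T'\to T\to\mathcal G$ lifts to $\mathcal F$. This holds because $\pi_0$ of the map is then a local epimorphism of presheaves, and effective epimorphisms are detected on $0$-truncations of sheafified presheaves~\cite{Lurie09}. The covering sieves of the regular topology are generated by single covers, so ``some covering family'' reduces to ``some single cover.'' We also need that $h$ preserves fibre products and that the sheaf $\mathrm yT\times_{\mathrm yX}\mathrm yY$ is representable. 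This is true here because $\D$ has finite limits and $h$ preserves them. Combining the two paragraphs gives both implications for $\D$.

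For the second assertion, take $\C\subseteq\E\subseteq\HH_\C$ with the code family in $\E$. The same argument applies in $\HH_\C$ using original parameters and original covers, since $\HH_\C$ is sheaves on $\C$. This requires that $\E\into\HH_\C$ preserve finite limits and covers (Lemma~\ref{lem:quotient-truncation}). It also requires that the dependent product computed in $\HH_\C$ be the relevant ambient one, which holds because $\Pi_fg$ is formed there for $Y,Z\in\E$. For completed parameters, we reduce to original ones. Every completed $T$ has an original atlas $jA\onto T$, and composites of covers are covers. So a lifting property on original parameters gives, for a completed $T$, a cover $jA'\onto jA\onto T$ with a lift. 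Conversely, the lifting property on completed parameters restricts to original parameters. An original parameter refined by a completed cover is refined further by an original atlas, and covers between original objects are reflected by $h$ (Appendix~\ref{app:descent}). Both forms of adequacy are therefore equivalent to $\hat w$ being a cover in $\HH_\C$.

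The main obstacle is the local-lifting criterion for effective epimorphisms in a sheaf $\infty$-topos, stated precisely enough to include the homotopies. The cleanest route is through the $(-1)$-truncated image: $\hat w$ is a cover if and only if its image subsheaf contains every point $T\to\Pi_fg$ locally. We check this on representables using that the image is the sheafification of the presheaf of locally liftable points.
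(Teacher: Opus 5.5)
Your proof is correct and follows essentially the paper's argument. You identify the homotopy in weak adequacy with a local lift, through the transposed evaluation, of a point of the ambient section object, and then invoke the local-surjectivity criterion for effective epimorphisms in the regular sheaf topos (whose covering sieves are generated by single covers); you then pass between original and completed parameters using original atlases and the reflection of covers between original objects by $h$, exactly as in the paper.
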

\begin{proof}
A morphism of sheaves is a cover precisely when it is locally surjective on components over every representable. The regular topology is generated by singleton regular covers and their pullbacks and composites. The homotopy in~\eqref{eq:weak} is precisely a local lift of a point of the section space through the transposed evaluation map. This proves the first assertion and the original-test version of the second. An original atlas refines any completed test object; pulling back a completed cover and refining its domain by an original atlas yields a cover between original objects, which regular Yoneda reflects. Conversely, suppose the transpose is an ambient cover and take a section with completed parameter $A$. Choose an original atlas $hT\onto A$. Over $hT$, the regular-site lifting criterion gives an original cover $hT'\onto hT$ and a lift to the code object, with its evaluation homotopic to the restricted section. The composite $hT'\onto A$ is a cover in $\E$ and witnesses completed-context adequacy. This proves equivalence of these tests.
\qed\end{proof}

\subsection{Comprehension and Doctrine Maps}\label{app:moens}
\begin{proof}[of Proposition~\ref{prop:moens}]
Write $\C=\PP(1)$. The dependent-sum comparisons give $\delta X=\Sigma_{!_X}1_X$ functorially: for $f:X\to Y$, the map $\delta f$ is
\[
 \Sigma_{!_X}1_X\simeq\Sigma_{!_Y}\Sigma_f1_X
 \longrightarrow\Sigma_{!_Y}1_Y,
\]
where the last arrow is induced by the counit at $1_Y$, using $f^*1_Y\simeq1_X$. Composition and identity comparisons are those of the adjunctions.

Fix $u=!_X$. Define
\[
 K_X(A)=(\Sigma_uA\longrightarrow\delta X),\qquad
 L_X(W\xrightarrow{q}\delta X)
 =1_X\times_{u^*\delta X}u^*W.
\]
The map $1_X\to u^*\delta X$ in this pullback is the unit $\eta_{1_X}$. The naturality square of the unit for $A\to1_X$ is cartesian, giving
\[
 A\simeq1_X\times_{u^*\delta X}u^*\Sigma_u A=L_XK_X(A).
\]
For the other composite, preservation of pullbacks by $\Sigma_u$ gives
\begin{equation}\label{eq:moens-counit}
 \Sigma_uL_X(q)\simeq
 \delta X\times_{\Sigma_u u^*\delta X}\Sigma_u u^*W.
\end{equation}
Cartesianness of the counit at $q$ identifies
\[
 \Sigma_u u^*W\simeq
 W\times_{\delta X}\Sigma_u u^*\delta X.
\]
The composite $\delta X\xrightarrow{\Sigma_u\eta_{1_X}}
\Sigma_u u^*\delta X\xrightarrow{\varepsilon_{\delta X}}\delta X$ is the identity by the triangle identity. Substitution in~\eqref{eq:moens-counit} identifies $\Sigma_uL_X(q)$ with $W$ over $\delta X$. These comparisons are natural and are maps of slice objects, so $K_X$ and $L_X$ give an equivalence of $\infty$-categories.

Coherent composition of sums identifies $K_Y\Sigma_f$ with postcomposition of $K_X$ by $\delta f$. Taking right adjoints identifies $f^*$ with pullback by $\delta f$. Now apply Beck--Chevalley to a pullback square in $\B$ and to the terminal object in the relevant fibre. In slices, the resulting comparison says precisely that the corresponding square of $\delta$-images is a pullback. Also $\delta1\simeq1_\C$. Hence $\delta$ preserves finite limits and the equivalences are natural in the base.
\qed\end{proof}

\begin{lemma}[Maps of regular doctrines]\label{lem:doctrine-maps}
Up to a natural equivalence of spaces of coherent data, a regular doctrine map $\PP\to\QQ$ is a triple $(v,F,\theta)$, where $v:\B_\PP\to\B_\QQ$ preserves finite limits, $F:\C_\PP\to\C_\QQ$ is regular, and
\begin{equation}\label{eq:base-comparison}
 \theta:F\delta_\PP\simeq\delta_\QQ v
\end{equation}
is a natural equivalence. Its fibre functors are the induced slice functors under Proposition~\ref{prop:moens}.
\end{lemma}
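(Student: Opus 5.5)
The proof builds the two directions of the comparison explicitly, using Proposition~\ref{prop:moens} to identify every fibre $\PP(X)$ with $(\C_\PP)_{/\delta_\PP X}$ and every fibre $\QQ(Y)$ with $(\C_\QQ)_{/\delta_\QQ Y}$.

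\emph{From maps to triples.} Given a regular doctrine map with base functor $v$ and indexed components $F_X:\PP(X)\to\QQ(vX)$, put $F:=F_1:\C_\PP\to\C_\QQ$, using $v1\simeq1$. This $F$ is regular because $F_1$ preserves finite limits and covers. The comparison $\theta$ comes from preservation of dependent-sum adjunctions and of terminal objects:
\[
F\delta_\PP X=F_1\Sigma_{!_X}1_X\simeq\Sigma_{!_{vX}}F_X1_X\simeq\Sigma_{!_{vX}}1_{vX}=\delta_\QQ vX .
\]
Naturality of $\theta$ in $X$ follows because both sides are assembled from the same unit and counit comparisons used in the definition of $\delta f$.

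\emph{From triples to maps.} Given $(v,F,\theta)$, define $F_X$ as the composite
\[
(\C_\PP)_{/\delta_\PP X}\xrightarrow{F}(\C_\QQ)_{/F\delta_\PP X}\xrightarrow{\theta_*}(\C_\QQ)_{/\delta_\QQ vX}.
\]
Regularity of $F$ makes each $F_X$ preserve slice finite limits and covers. Under comprehension, $\Sigma_f$ corresponds to postcomposition with $\delta f$, so $F_X$ commutes with $\Sigma_f$ up to the naturality square of $\theta$. Since $F$ preserves pullbacks, the pullback description of $f^*$ shows that $F_X$ commutes with reindexing; this comparison is the mate of the $\Sigma$ comparison and is automatically an equivalence. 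The Beck--Chevalley data are then inherited from the fact that $\delta_\PP$ and $\delta_\QQ$ preserve pullbacks.

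\emph{Equivalence of spaces.} It remains to show that these two constructions are mutually inverse as maps of spaces of coherent data. The space of indexed transformations compatible with dependent sums is equivalent to the space of transformations of the functors $X\mapsto\PP(X)\simeq(\C_\PP)_{/\delta_\PP X}$ that commute with $\Sigma_{!_X}$ over the base. The argument runs in three steps.
\begin{enumerate}
\item Each object $A\in\PP(X)$ is the pullback of $\Sigma_{!_X}A\to\delta_\PP X$; this is the $K_X$/$L_X$ equivalence of Proposition~\ref{prop:moens}.
\item A dependent-sum-compatible transformation is therefore determined by its action on $\PP(1)$ together with the comparison on the objects $\delta_\PP X$.
\item This presents the mapping space as a limit: the space of maps over $v$ is the fibre product of $\Fun_{\mathrm{reg}}(\C_\PP,\C_\QQ)$ with the space of natural equivalences $F\delta_\PP\simeq\delta_\QQ v$.
\end{enumerate}

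The main obstacle is the coherence in the last step. The work is in showing that the restriction map is an equivalence on spaces of homotopies, not merely a bijection on components. To handle it, I would package both sides as sections of the cartesian fibration classifying $\PP$ and apply the Grothendieck construction, so that $\PP$ becomes the pullback of the codomain fibration of $\C_\PP$ along $\delta_\PP$. Under this description a map of doctrines is a map of fibrations over $v$ preserving cocartesian arrows for $\Sigma$. Such maps between pullbacks of codomain fibrations are, by the universal property of the pullback, exactly a functor on $\C$ together with a comparison of the classifying maps, which is precisely $(F,\theta)$.
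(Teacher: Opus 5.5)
Your proposal follows the paper's proof: $F$ is the terminal-fibre functor, $\theta$ comes from preservation of fibrewise terminal objects and of $\Sigma_{!_X}$, and the fibre functors are recovered from $F$ and $\theta_X$ through the equivalences $K_X$ of Proposition~\ref{prop:moens}; conversely, the $\theta$-transported slice functors give a regular doctrine map, and the two constructions are inverse. One correction to your coherence paragraph: the universal property of the pullback only describes functors \emph{into} $\B_\QQ\times_{\C_\QQ}\mathrm{Ar}(\C_\QQ)$ (with $\mathrm{Ar}$ the arrow category over its codomain projection), so reducing a functor \emph{out of} $\B_\PP\times_{\C_\PP}\mathrm{Ar}(\C_\PP)$ to a functor on $\C_\PP$ must come from your step~2 instead, namely $\Sigma$-compatibility (taking domains inverts cocartesian arrows, and $\C_\PP$ is the resulting localization because $\B_\PP$ has a terminal object) together with restriction to the section $X\mapsto(X,\id_{\delta_\PP X})$ of fibrewise terminal objects, whose preservation is exactly what makes the comparison $\theta$ invertible.
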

\begin{proof}
A regular doctrine map determines $F$ on the terminal fibre; the terminal comparison for $v$ is included. Preservation of terminal objects in each fibre and of $\Sigma_{!_X}$ then gives~\eqref{eq:base-comparison}. The same comparison on a general object $A\in\PP(X)$ identifies its image with the arrow obtained by applying $F$ to $K_XA$ and transporting its codomain by $\theta_X$. Therefore the fibre transformation is recovered, including its naturality and sum comparisons.

Conversely, the slice functors associated with $(F,\theta)$ preserve finite limits and covers because $F$ does. They commute with reindexing by finite-limit preservation, and with sums because sums in a codomain doctrine are composition. This constructs a regular doctrine map. The two constructions are inverse on spaces of diagrams: the functors $K_X$ are equivalences, and the comparisons are their natural unit and counit equivalences, retaining their full coherent data. This proves the assertion also for all higher transformations.
\qed\end{proof}

\subsection{Exact Completion and Slicing}\label{app:regular-completion}
\begin{proof}[of Theorem~\ref{thm:regular-completion}]
Put $\C=\PP(1)$ and $\E=\Ex\C$. Since $j\delta_\PP$ preserves finite limits, pulling back $\Cod(\E)$ along it gives~\eqref{eq:regular-completion}. Its fibres are regular slices; reindexing is regular pullback, and sums are given by composition, with cartesian units and counits and Beck--Chevalley comparisons. Its closed-type fibre is $\E/j\delta_\PP1\simeq\E$, so the doctrine is exact. Comprehension and $j$ give the regular unit, with identity base and fully faithful fibre functors.

By Lemma~\ref{lem:doctrine-maps}, a regular map $\PP\to\mathbb D$ consists of a regular $F:\C\to\C_{\mathbb D}$, a finite-limit-preserving base functor $v$, and an equivalence $F\delta_\PP\simeq\delta_{\mathbb D}v$. Because $\C_{\mathbb D}$ is exact,~\eqref{eq:import-universal} extends $F$ to a regular $\overline F:\E\to\C_{\mathbb D}$, with a contractible space of choices relative to its restriction. Transporting the base comparison gives a regular map $\mathsf Q\PP\to\mathbb D$. Restriction is its inverse. More precisely, the interpretation spaces are homotopy pullbacks of the spaces of base functors, closed-type functors, and natural-equivalence comparisons. Taking this homotopy pullback preserves the equivalence~\eqref{eq:import-universal}, proving~\eqref{eq:q-universal} on full mapping spaces.
\qed\end{proof}

We next identify slices over original objects with the exact completions of the corresponding source slices.

\begin{proof}[of Theorem~\ref{thm:slice-completion}]
The category-of-elements description of a representable gives an equivalence
\[
 \Sh(\C/T,\Spc)\simeq\HH_\C/hT.
\]
Indeed presheaves over the representable $hT$ are presheaves on $\C/T$, and the induced topology is the regular topology: a slice arrow is a cover precisely when its underlying arrow in $\C$ is a cover. Under this equivalence, the representable associated to $A\to T$ is $hA\to hT$.

Let $\mathcal K$ be the full subcategory of $\HH_\C/hT$ generated by these original slice objects under groupoid realizations. By exact completion on the slice site, it is the image of $\Ex(\C/T)$. Since $\E/hT$ is closed under groupoid realizations and contains all original slice objects, $\mathcal K\subseteq\E/hT$.

For the reverse inclusion, let $\mathcal F\subseteq\E$ be the full subcategory of those $X$ for which every map $X\to hT$ defines an object of $\mathcal K$. It contains the originals, by full faithfulness of $h$. Suppose $R_\bullet$ is an internal groupoid whose terms belong to $\mathcal F$, and put $Q=|R_\bullet|$. For any map $Q\to hT$, its composites with the augmentation make $R_\bullet$ a groupoid in the slice. Every term belongs to $\mathcal K$ by the definition of $\mathcal F$, and its quotient in the slice is exactly $Q\to hT$. Closure of $\mathcal K$ gives $Q\in\mathcal F$. Thus $\mathcal F$ contains the full groupoid-realization closure of the originals, namely all of $\E$. This proves $\E/hT\subseteq\mathcal K$ and hence~\eqref{eq:slice-completion} as an equivalence of full $\infty$-categories.

Pullback in the ambient topos preserves groupoid realizations and takes an original slice object to the corresponding original pullback. Therefore the equivalence commutes with pullback, with its coherent comparisons. Finally, apply it to $T=\delta_\PP X$ and use comprehension~\eqref{eq:moens}; this gives~\eqref{eq:fibrewise-completion}, naturally in $X$.
\qed\end{proof}

\subsection{Relative Atlases and Bounded Sources}\label{app:lc-atlases}
The following relative recursion makes the finitely truncated part explicit, including when the original base has higher homotopy.
\begin{definition}[Relative atlas presentation of finite height]\label{def:lc-atlas}
For $T\in\C$ and finite $n\ge-1$, define $\mathcal A_n(T)\subseteq\HH_\C/hT$ recursively.
\begin{enumerate}
\item $X\to hT$ lies in $\mathcal A_{-1}(T)$ if it is monic and has an original atlas $hA\onto X$ over $hT$.
\item For $n\ge0$, it lies in $\mathcal A_n(T)$ if it is $n$-truncated and has an original atlas $hA\onto X$ whose relative kernel
\begin{equation}\label{eq:lc-atlas-kernel}
 hA\times_XhA\longrightarrow h(A\times_TA)
\end{equation}
lies in $\mathcal A_{n-1}(A\times_TA)$.
\end{enumerate}
A presentation includes these atlases and their recursive kernel presentations. A finite-atlas object is an object in some $\mathcal A_n(1)$. The mapping spaces between finite-atlas objects are those of $\HH_\C$.
\end{definition}
Finiteness measures the height of the recursion. The bound in~\eqref{eq:lc-atlas-kernel} is relative to its original base.

\begin{lemma}[Relative recognition and bounded sources]\label{lem:lc-atlas}
For any regular $\C$, $T\in\C$, and finite $n\ge-1$,
\[
 X\to hT\in\mathcal A_n(T)
 \quad\Longleftrightarrow\quad
 X\in\Ex(\C)\text{ and }X\to hT\text{ is }n\text{-truncated}.
\]
If $\C$ is bounded, then $\Ex(\C)$ is bounded and its objects are exactly the finite-atlas objects. In particular this applies to every ordinary regular source.
\end{lemma}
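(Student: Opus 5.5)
We prove both directions of the equivalence by induction on $n$, simultaneously over all original bases $T\in\C$, and then deduce the bounded statement.

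\emph{Forward direction.} Suppose $X\to hT$ lies in $\mathcal A_n(T)$. Truncation is part of the definition, so we only need $X\in\Ex(\C)$.
\begin{itemize}
\item For $n=-1$, $X$ is the image of an original map $hA\to hT$. Hence it is the realization of the \v{C}ech nerve of $hA\onto X$, whose terms $hA\times_X\cdots\times_XhA$ are original finite limits because $X\to hT$ is monic. So $X\in\Ex(\C)$.
\item For $n\ge0$, the inductive hypothesis at the base $A\times_TA$ gives $hA\times_XhA\in\Ex(\C)$. The higher terms of the \v{C}ech nerve of $hA\onto X$ are iterated fibre products of this kernel over $hA$. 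They lie in $\Ex(\C)$ because it is finite-limit closed.
\item By ambient effective descent, $X$ is the realization of this nerve. Closure of $\Ex(\C)$ under groupoid realizations then gives $X\in\Ex(\C)$.
\end{itemize}

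\emph{Converse direction.} Let $X\in\Ex(\C)$ with $X\to hT$ $n$-truncated.
\begin{itemize}
\item By \S\ref{app:descent}, $X$ has an original atlas $hA\onto X$. This atlas lies over $hT$ by composition.
\item For $n=-1$, this atlas is all that is needed.
\item For $n\ge0$, the relative kernel $hA\times_XhA\to h(A\times_TA)$ lies in $\Ex(\C)$, since $\Ex(\C)$ is finite-limit closed.
\item This kernel is the pullback of the diagonal $X\to X\times_{hT}X$ along $h(A\times_TA)\to X\times_{hT}X$. The diagonal is $(n-1)$-truncated, so the kernel is $(n-1)$-truncated.
\item The inductive hypothesis at the base $A\times_TA$ puts the kernel in $\mathcal A_{n-1}(A\times_TA)$, with a recursive presentation.
\end{itemize}
The main delicacy is bookkeeping: the bound on the kernel is relative to the original base $A\times_TA$, not absolute. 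That is exactly why the induction must quantify over all bases $T$ at once; with an absolute bound the step would fail when $T$ itself has higher homotopy.

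\emph{Bounded case.} Assume $\C$ is bounded, and let $\mathcal B\subseteq\Ex(\C)$ be the full subcategory of finitely truncated objects. We show $\mathcal B$ is all of $\Ex(\C)$.
\begin{itemize}
\item $\mathcal B$ contains the originals, because $h$ preserves truncation by Lemma~\ref{lem:quotient-truncation}.
\item $\mathcal B$ is closed under groupoid realizations of groupoids whose terms lie in $\mathcal B$. If $R_\bullet$ has $R_0$ $a$-truncated and $R_1$ $b$-truncated, then the loop-type fibres of $|R_\bullet|$ are controlled by $R_1\simeq R_0\times_{|R|}R_0$. Descent along the cover $R_0\onto|R_\bullet|$ then shows $|R_\bullet|$ is $(\max(a,b)+1)$-truncated.
\item Since $\Ex(\C)$ is the groupoid-realization closure of the originals, $\mathcal B$ is all of $\Ex(\C)$.
\end{itemize}
Taking $T=1$ in the recognition statement then identifies the objects of $\Ex(\C)$ with the finite-atlas objects.

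For the final clause, an ordinary regular category is bounded because every object is $0$-truncated.

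The main obstacle is the truncation estimate for $|R_\bullet|$. We intend to verify it by computing the diagonal of $|R_\bullet|$ after pullback to $R_0\times R_0$, where it becomes $R_1\to R_0\times R_0$, and using that pullback along covers reflects truncation.
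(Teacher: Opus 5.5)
Your proposal is correct and follows essentially the same route as the paper: induction on $n$ simultaneously over all original bases via the Čech nerve of an original atlas and its relative kernel, and, for the bounded case, closure of bounded objects under groupoid realizations by pulling the diagonal back to $R_1\to R_0\times R_0$ along the cover $R_0\times R_0\onto|R_\bullet|\times|R_\bullet|$. The only cosmetic difference is at $n=-1$, where the paper identifies $X$ directly with $h(\operatorname{im}(A\to T))$, using that regular Yoneda preserves image factorizations, whereas you realize the Čech nerve with original terms; both arguments work.
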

\begin{proof}
First prove the relative assertion for arbitrary regular $\C$ by induction on $n$. For $n=-1$, an atlas followed by the given monomorphism,
\[
 hA\onto X\into hT,
\]
identifies $X$ with the image of the original map $A\to T$, because regular Yoneda preserves image factorizations. Hence $X$ belongs to $\E$. Conversely a monomorphism over $hT$ with domain in $\E$ has an original atlas and is in $\mathcal A_{-1}(T)$.

For $n\ge0$, a presentation has relative kernel $K=hA\times_XhA$ in $\E$ by the induction hypothesis over $h(A\times_TA)$. By the Segal identities, every term of the \v{C}ech nerve of $hA\onto X$ is a finite fibre product of $hA$ and $K$, and therefore lies in $\E$. Fullness retains its structure maps and coherences, and exactness represents its ambient quotient $X$. Conversely, if $X\in\E$ and $X\to hT$ is $n$-truncated, choose an original atlas $hA\onto X$. Its kernel belongs to $\E$ by finite-limit closure. The map~\eqref{eq:lc-atlas-kernel} is a pullback of $\Delta_{X/hT}$ and is $(n-1)$-truncated, so the induction hypothesis gives its required presentation. This completes the relative induction.

Now assume $\C$ bounded. Its original objects are finitely truncated in $\HH_\C$ because finite-limit functors preserve finite truncation. Bounded objects of an $\infty$-topos are closed under groupoid realizations. To see this, let $R_\bullet$ have bounded terms. The map $R_1\to R_0\times R_0$ between bounded objects has some finite relative truncation bound $N$. Effectiveness identifies it with the pullback of $\Delta_{|R_\bullet|}$ along the cover $R_0^2\onto|R_\bullet|^2$. Locality of truncation makes that diagonal $N$-truncated, so the quotient is $(N+1)$-truncated. Thus the groupoid-realization closure of the originals is bounded. Applying the relative recognition to terminal maps now gives $\bigcup_n\mathcal A_n(1)=\E$. In an ordinary category every object is $0$-truncated, so the final assertion follows as a special case.
\qed\end{proof}

\subsection{Relative Finite-Height Reconstruction}\label{app:relative-fullness}
\begin{lemma}[Adequacy in all completed contexts]\label{lem:all-contexts}
Assume $\mathrm{CF}_n(\C)$. Every diagram $Z\xrightarrow gY\to X$ in $\E=\Ex\C$ with $g$ $n$-truncated has a weak section family in $\E$ whose total code object is original.
\end{lemma}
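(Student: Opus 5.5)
The plan is to reduce the general base $X\in\E$ to an original base by choosing an original atlas, apply $\mathrm{CF}_n$ there, and then pass from original test parameters to completed ones using the cover criterion of Lemma~\ref{lem:weak-cover}.

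First choose an original atlas $a:jT\onto X$, which exists for every object of $\E$ by Appendix~\ref{app:descent}. Pull the pair back along $a$ to obtain $Z_T\xrightarrow{g_T}Y_T\to jT$ with $Y_T=Y\times_XjT$ and $Z_T=Z\times_XjT$. Since truncation is pullback-stable, $g_T$ is still $n$-truncated. $\mathrm{CF}_n(\C)$ then supplies an original $w:W\to T$ and an evaluation $e_T:jW\times_{jT}Y_T\to Z_T$ over $Y_T$, adequate for original parameters and covers.

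For the code family over $X$, take $W\to T\to X$ via $jw$ followed by $a$. Since $jW\times_{jT}Y_T\simeq jW\times_XY$, the evaluation over $X$ is simply $e_T$ composed with the projection $Z_T\to Z$. Its total code object $W$ is original, as the statement requires.

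It remains to check weak adequacy over $X$ for an arbitrary completed parameter $A\to X$ and section $s:A\times_XY\to Z$ over $Y$. Refine $A$ in two steps. First choose an original atlas of $A\times_XjT$, which is a cover of $A$ because $a$ is a cover. Writing this atlas as $jS\onto A\times_XjT$, the pullback of $s$ along it is a section over $Y_T$ with original parameter $S\to T$. $\mathrm{CF}_n$ then gives an original cover $S'\onto S$ and a map $v:S'\to W$ over $T$ such that the homotopy~\eqref{eq:lc-codes} holds. The composite $jS'\onto A$ is a cover in $\E$, since covers compose and $j$ preserves covers. Postcomposing the homotopy with $Z_T\to Z$ gives the required homotopy over $Y$, because sections over $Y_T$ with values in $Z_T$ correspond exactly to sections over $Y$ with values in $Z$ along the pulled-back parameter. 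Alternatively, Lemma~\ref{lem:weak-cover} reduces the statement to showing that the transpose into the ambient dependent product is a cover in $\HH_\C$. This can be checked after pulling back along the cover $jT\onto X$, using Beck--Chevalley for the ambient topos and locality of covers.

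The main obstacle is the bookkeeping in identifying sections over the pulled-back pair with sections of the original pair along $jS\to A$, coherently over $Y$. This amounts to the universal property of pullback in slices, so once it is written out it is routine, and I expect no genuinely hard step.
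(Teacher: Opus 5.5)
Your proposal is correct and follows essentially the same route as the paper. You take an atlas $jT\onto X$, reuse the $\mathrm{CF}_n$ source code over $jT$ as a code over $X$, and for a completed parameter $A$ use the composite cover $jS'\onto jS\onto A\times_XjT\onto A$, with full faithfulness making $jS\to jT$ original. The alternative via Lemma~\ref{lem:weak-cover} is not needed.
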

\begin{proof}
Choose an original atlas $jT\onto X$, pull the diagram back to $jT$, and take its source code $W\to T$. Its evaluation is also an evaluation $jW\times_XY\to Z$. For a section with arbitrary parameter $A\to X$, cover $A\times_XjT$ by an original $jS$. Full faithfulness makes the induced $jS\to jT$ original. Coherent fullness represents the restricted section after an original cover $S'\onto S$. The composite
\[
 jS'\onto jS\onto A\times_XjT\onto A
\]
is a cover in $\E$ and provides the local code and homotopy.
\qed\end{proof}

We use these section codes to reconstruct dependent products by induction on relative truncation.

\begin{proof}[of Theorem~\ref{thm:relative-fullness}]
Assume $\mathrm{CF}_n(\C)$. We induct, simultaneously over all completed bases,
on the relative truncation degree $r\le n$ of $g$.
For a pair $Z\xrightarrow gY\xrightarrow fX$, put
$S=\Pi_f^{\HH_\C}g$. This is $r$-truncated over $X$, since the ambient right
adjoint on slices preserves finite limits and hence finite truncation. By
Lemma~\ref{lem:all-contexts}, there is a code family $W\to X$ with locally
adequate evaluation. Lemma~\ref{lem:weak-cover} makes its transpose
$a:W\onto S$ a cover.

For $r=-1$, the factorization $W\onto S\into X$ is the ambient image of the
represented map $W\to X$. The regular embedding identifies this image with
one in $\E$. Thus $S$ is represented.

For $r\ge0$, let $B=W\times_XW$, $Y_B=Y\times_XB$, and $b:Y_B\to B$.
The two code projections give evaluations $e_0,e_1$ over $Y_B$. Let
$i:I\to Y_B$ be the pullback of $\Delta_g$ along their pair. Its relative
truncation degree is $r-1$. For every ambient test $T\to B$, a point of
$\Map_{/B}(T,W\times_SW)$ is a homotopy between the two induced maps to $S$.
Under the section adjunction this is a homotopy between their sections over
$T\times_BY_B$, which is exactly a section of $I$. This identification is an
equivalence of spaces, natural in $T$, so it preserves paths between homotopies
and all subsequent identities. Yoneda gives
\begin{equation}\label{eq:kernel-identity}
 W\times_SW\simeq\Pi_b^{\HH_\C}(I\to Y_B)
       \qquad\text{over }B.
\end{equation}

The induction hypothesis represents $W\times_SW$ in $\E$. Every term of the
\v{C}ech nerve of $a$ is then a finite fibre product of represented objects.
Fullness lifts all its structure maps and coherences. Exactness gives its
quotient in $\E$, and the embedding preserves that quotient, identifying it
with $S$. Fullness restricts the ambient section adjunction
 to~\eqref{eq:relative-product}. The representing object with its evaluation
has a contractible space of choices, so the construction is coherent and
inherits ambient Beck--Chevalley.

The base case uses an image and each inductive step adds
one quotient. Lemma~\ref{lem:all-contexts} gives a code in each context
before this induction. Thus at height $r$ the construction adds at most
$r+1$ nested quotients to the given input presentations and code objects.

Conversely, assume the products in the statement exist. For a pair over an
original $jT$, let $Q=\Pi_fg$ in $\E$. Testing its adjunction against original
$hS\to hT$ identifies $Q$ with the ambient product: these representables
generate the ambient slice under colimits, and the two represented mapping
functors send such colimits to limits. Choose an original atlas $jW\onto Q$
and restrict evaluation. Given a section with original parameter $jS$, pull
back the atlas along its transpose and cover the resulting object by $jS'$.
The composite $jS'\onto jS$ is a cover between original objects, hence is the
image of a cover $S'\onto S$ in $\C$. Full faithfulness makes the map to
$jW$ selecting a code original. This proves $\mathrm{CF}_n$.
\qed\end{proof}

\subsection{Represented Products and Univalent Completion}\label{app:universes}
\begin{lemma}[Preservation of represented dependent products]\label{lem:old-pi}
If an individual dependent product exists in a regular $\infty$-category, its regular Yoneda image is the corresponding ambient dependent product. In particular, the regular Yoneda embedding of a locally cartesian closed category preserves all dependent products in every slice.
\end{lemma}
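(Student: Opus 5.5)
The plan is to compare the two representing properties on the generating representables and then extend by colimits. Let $\C$ be regular, $h:\C\into\HH_\C$ the regular Yoneda embedding, and $Z\xrightarrow gY\xrightarrow fX$ a pair in $\C$ whose dependent product $Q=\Pi_fg\to X$ exists in $\C$, with evaluation $\epsilon:Q\times_XY\to Z$ over $Y$. Since $h$ preserves finite limits, $h\epsilon$ is an evaluation $hQ\times_{hX}hY\to hZ$ over $hY$, and it determines a comparison map $\kappa:hQ\to\Pi^{\HH_\C}_{hf}hg$ over $hX$. The claim is that $\kappa$ is an equivalence.

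First I test $\kappa$ against original objects. For $A\to X$ in $\C$, full faithfulness of $h$ together with finite-limit preservation gives
\[
 \Map_{\HH_\C/hX}(hA,hQ)\simeq\Map_{\C/X}(A,Q)\simeq\Map_{\C/Y}(A\times_XY,Z)\simeq\Map_{\HH_\C/hY}(hA\times_{hX}hY,hZ),
\]
and the composite is precomposition with $\kappa$ followed by the ambient adjunction. Next I extend to arbitrary objects of $\HH_\C/hX$. By the category-of-elements identification $\HH_\C/hX\simeq\Sh(\C/X,\Spc)$ from the proof of Theorem~\ref{thm:slice-completion}, every object is a colimit of representables $hA\to hX$. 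Both functors $T\mapsto\Map_{/hX}(T,hQ)$ and $T\mapsto\Map_{/hY}(T\times_{hX}hY,hZ)$ send colimits to limits, the second because pullback along $hf$ preserves colimits in the $\infty$-topos. So the natural equivalence holds for all $T$, and Yoneda shows that $\kappa$ is an equivalence.

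The main obstacle is naturality in the test object $A$. The chain above has to assemble into a natural equivalence of functors on $(\C/X)^{op}$, compatible with the one induced by $\kappa$, and not just a pointwise equivalence. I would handle this by expressing both sides as precomposition with the single map $h\epsilon$: the source adjunction is precomposition with $\epsilon$, and $h$ carries it coherently to precomposition with $h\epsilon$. I also have to be careful that representables generate under colimits in the sheaf category as well as in presheaves. This holds because sheafification is a left adjoint and regular representables are sheaves, since the regular topology is subcanonical, which the setup for~\eqref{eq:ambient} uses.

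For the second assertion, let $\C$ be locally cartesian closed. Then every pair in every slice $\C/T$ has a dependent product. Applying the first part to each such pair, using the slice equivalence of Theorem~\ref{thm:slice-completion} or simply pairs over $T$, shows that $h$ preserves all of them. This naturally also gives compatibility with Beck--Chevalley, since pullback of representables is computed in $\C$.
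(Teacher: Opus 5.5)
Your proposal is correct and follows essentially the same route as the paper. In both, the two representing functors are identified on original tests $hA\to hX$ using full faithfulness and finite-limit preservation, and the identification is then extended to all ambient tests because representables generate the slice under colimits and both mapping functors turn colimits into limits. Your remarks on naturality (via precomposition with $h\epsilon$) and on subcanonicity of the regular topology make explicit two points the paper leaves implicit.
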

\begin{proof}
Let $Q=\Pi_fg$ for $Z\xrightarrow gY\xrightarrow fX$ in $\C$. The natural equivalences
\[
 \Map_{\C/X}(T,Q)\simeq\Map_{\C/Y}(T\times_XY,Z)
\]
identify the image of $Q$ and the ambient section object on every original represented test $hT\to hX$. These tests generate the ambient slice under colimits, and the represented mapping functors carry these colimits to limits. Thus the canonical comparison induced by evaluation is an equivalence on every ambient test. This identifies the individual product together with its evaluation and substitution comparisons.
\qed\end{proof}

We now construct the univalent completion by realizing the groupoid of fibre equivalences and its action on the given family.

\begin{proof}[of Theorem~\ref{thm:univalent-completion}]
All constructions take place in $\E\subseteq\HH_\C$, and all dependent products
used agree with the ambient ones by Theorem~\ref{thm:relative-fullness}.
For $n$-truncated families $A,B$ over $S$, the fibrewise mapping object is
$\Pi_{A/S}(A\times_SB\to A)$. For a map $\varphi:A\to B$, write
$F_b=\sum_{a:A}(\varphi(a)=b)$. Define
\[
 \operatorname{isContr}(F)
     =\sum_{c:F}\prod_{x:F}(c=x),\qquad
 \operatorname{isEquiv}(\varphi)
     =\prod_{b:B}\operatorname{isContr}(F_b).
\]
The families $F_b$ are $n$-truncated. Their identities are $(n-1)$-truncated;
for $n=-1$ the identity maps are equivalences. Hence the displayed products
all have second maps of truncation degree at most $n$.
The ambient type theory identifies contractibility and equivalence as
propositions. The subobject of the mapping object cut out by
$\operatorname{isEquiv}$ therefore represents the full space of equivalences.
It is $n$-truncated over $S$.

Apply this construction to the two pullbacks of $p$ over $V\times V$. The
objects
\begin{equation}\label{eq:equiv-groupoid}
 R_0=V,\qquad R_1=\Eq_{V\times V}(\pi_1^*E,\pi_2^*E)
\end{equation}
form an internal groupoid by composition of fibre equivalences. All inverses,
identities, and higher comparisons are inherited coherently from the ambient
slices. Its tautological action on $E$ is also internal. Exactness gives
\[
 U=|R_\bullet|,\qquad
 \widetilde U=|E//R_\bullet|,\qquad u:\widetilde U\to U.
\]
Here $E//R_\bullet$ denotes the action groupoid of the tautological action.
Choose an ambient univalent classifier in an external universe large enough
to classify $p$~\cite{GepnerKock17,Lurie09}. If $c:V\to U^{\mathrm{amb}}$ is
its classifying map, univalence identifies $R_\bullet$ with its \v{C}ech
groupoid. Thus $V\onto U\into U^{\mathrm{amb}}$ is the image factorization of
$c$. Restricting the ambient universal family gives $u$; the action quotient
identifies its total object. The square from $p$ to $u$ is cartesian and
$V\to U$ is a cover.

Every pullback of $u$ becomes a pullback of $p$ on this cover. Conversely, an
ambient classifying map of a family locally pulled back from $p$ locally factors
through $U\into U^{\mathrm{amb}}$. Factorization through a subobject is local
and uniquely coherent, so it factors globally. By fullness, the resulting classifying map lies
in $\E$. This gives~\eqref{eq:universe-classification}, including its full
mapping spaces.

Truncation of $u$ follows locally from $p$. The pullback of $\Delta_U$ along
$V\times V\onto U\times U$ is $R_1\to V\times V$, which is $n$-truncated.
Truncation descends, so $U$ is $(n+1)$-truncated.

If the local class is all monomorphisms, classification identifies
$\Map(T,U)$ with the discrete space $\Sub(T)$. A map $T\to\widetilde U$ is a
subobject of $T$ with a section. Such a subobject is an equivalence, and the
space of these data is contractible. Therefore $\widetilde U$ is terminal and
$u$ is $1\into\Omega$.
\qed\end{proof}

\begin{remark}[Classification of an individual family]\label{rem:fragment}
For any family whose fibre-equivalence groupoid and tautological action are
represented in an exact full subcategory of a sheaf $\infty$-topos, the same
image-factorization proof constructs its univalent completion there. This
observation applies to the two classifier constructions in Appendix~\ref{app:rice}.
\end{remark}

\subsection{Local Classification and Universe Closure}\label{app:generic-tests}
\begin{proof}[of Proposition~\ref{prop:source-generics-obstruction}]
Every map of the ordinary source is $0$-truncated. Finite-limit functors
preserve finite truncation, and $0$-truncation is stable under pullback
and descent along covers. Thus every map in every $\Loc(jp_i)$ is
$0$-truncated. The nerve of $jG$ has an effective quotient $BG$ in the
exact completion, with
\[
 1\times_{BG}1\simeq jG.
\]
If $BG$ were $0$-truncated, its loop object would be $(-1)$-truncated.
The identity of $G$ gives this loop object a section, so it would be
terminal. Full faithfulness of $j$ would make $G$ trivial, contrary to
the assumption. Hence $BG\to1$ belongs to none of the asserted local
classes.
\qed\end{proof}

\begin{lemma}[Local classification over source objects]\label{lem:local-recoding}
For any regular source and a presented family $p:P\to jV$ with original base,
$q\preceq_\C p$ if and only if $q\in\Loc_\E(p)$. These classes are
pullback-stable and local for covers of arbitrary completed bases.
\end{lemma}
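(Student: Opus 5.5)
The plan is to compare the two conditions through original atlases. Recall that $q\preceq_\C p$ quantifies only over original tests $jT\to\operatorname{cod}(q)$ and source covers $T'\onto T$. In contrast, $q\in\Loc_\E(p)$ asks for a single cover $V'\onto\operatorname{cod}(q)$ in $\E$ over which $q$ becomes a pullback of $p$. I would prove the two implications separately, and then deduce stability and locality from the characterization by $\Loc_\E(p)$.

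\emph{From $\preceq_\C$ to $\Loc_\E$.} Let $X=\operatorname{cod}(q)$ and choose an original atlas $jT\onto X$, which exists by Appendix~\ref{app:descent}. Applying $q\preceq_\C p$ to this test gives a source cover $T'\onto T$ and $v:T'\to V$ with $q|_{jT'}\simeq (jv)^*p$. Since $j$ is regular, $jT'\onto jT$ is a cover. The composite $jT'\onto X$ is therefore a cover in $\E$ over which $q$ is a pullback of $p$, so $q\in\Loc_\E(p)$.

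\emph{From $\Loc_\E$ to $\preceq_\C$.} Suppose $c:V'\onto X$ is a cover in $\E$ with $c^*q\simeq g^*p$ for some $g:V'\to jV$, and let $jT\to X$ be an original test. First form $jT\times_X V'$; this is a cover of $jT$. Cover it by an original atlas $jT'$. The composite $jT'\onto jT$ is a cover between original objects, so it is reflected to a source cover $T'\onto T$ by the regular Yoneda argument of Appendix~\ref{app:descent}. The composite $jT'\to V'\to jV$ is original by full faithfulness of $j$, say $jv$. Composing the pullback equivalences then gives $q|_{jT'}\simeq (jv)^*p$.

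The step I expect to need the most care is the coherence of these equivalences: the relation is stated with equivalences over $jT'$, and pasting pullback squares in $\E$ supplies exactly this. A second concern is that $V'$ need not be original, which is why refinement by an atlas is required in the converse direction.

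\emph{Stability and locality.} Pullback-stability of $\Loc_\E(p)$ follows because covers are pullback-stable and pullbacks compose. For locality, suppose $b:X'\onto X$ is a cover and $b^*q\in\Loc_\E(p)$, witnessed by a cover $V'\onto X'$. Then $V'\onto X'\onto X$ is a cover, since covers compose in the regular category $\E$, and over it $q$ restricts to a pullback of $p$. This holds for arbitrary completed bases, and the equivalence with $\preceq_\C$ transfers both properties to the source-local formulation.
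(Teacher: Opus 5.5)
Your proposal is correct and follows essentially the same route as the paper. The forward direction uses an original atlas and composition of covers. The converse refines $jT\times_X V'$ by an original atlas, reflects the cover $jT'\onto jT$ between original objects, and uses full faithfulness to make the classifying map original. Pullback stability and locality are obtained, as in the paper, by pulling back the local presentation and composing covers.
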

\begin{proof}
For the forward implication, choose an original atlas
$jT\onto\operatorname{cod}(q)$ and apply the source-tested condition.
The composite $jT'\onto jT\onto\operatorname{cod}(q)$ is a cover on
which $q$ is a pullback of $p$. Conversely, suppose $D\onto B$ and a
map $D\to jV$ classify $q|_D$. For a source test $jT\to B$, choose an
original atlas $jT'\onto D\times_BjT$. The composite $jT'\onto jT$
is a cover. Regular Yoneda reflects covers between original objects,
and full faithfulness makes $jT'\to D\to jV$ an original map.
This gives the required source recoding. Pullback stability follows
by pulling back the local presentation. Locality follows by composing
covers, and the converse implication by base change of a cover.
\qed\end{proof}

\begin{lemma}[The generic composable pair]\label{lem:generic-pair}
For every $X\in\E$, pullback of~\eqref{eq:generic-pair} identifies
$\Map_\E(X,D_i)$ with the full space of composable pairs
$Z\xrightarrow gY\xrightarrow fX$ in $\M_i$, including equivalences
of the intermediate and total objects and all higher comparisons.
\end{lemma}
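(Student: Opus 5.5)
The plan is to unwind the construction of $D_i$ through the dependent-product adjunction and then the classification property of $U_i$ from Theorem~\ref{thm:univalent-completion}. A map $x:X\to D_i$ over $U_i$ amounts to a map $y:X\to U_i$ together with a map over $y$ into $d_i$. Equation~\eqref{eq:universe-classification} identifies $y$ with a family $f:Y\to X$ in $\M_i$ (taking $Y=y^*\widetilde U_i$), including all equivalences of $Y$. Beck--Chevalley for $\Pi_{u_i}$ (Theorem~\ref{thm:relative-fullness}) identifies $y^*d_i$ with $\Pi_f(Y\times U_i\to Y)$. By~\eqref{eq:relative-product}, lifts of $y$ to $D_i$ are therefore equivalent to maps $Y\to U_i$. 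Applying classification again, these are families $g:Z\to Y$ in $\M_i$. Under this correspondence, pulling back $A_i=D_i\times_{U_i}\widetilde U_i$ along $x$ gives $Y$ over $X$. Pulling back $\epsilon_i$ gives the classifying map of $g$, so pulling back $B_i\to A_i$ gives $g$. In this way the pair $(f_i,g_i)$ pulls back to $(f,g)$.

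To make this an equivalence of full spaces, I will write everything as a composite of equivalences of $\infty$-groupoids that is natural in $X$. The space $\Map(X,D_i)$ is equivalent to the total space of the fibration over $\Map(X,U_i)$ whose fibre at $y$ is $\Map_{/X}(X,y^*d_i)$. The latter is $\Map_{/Y}(Y,Y\times U_i)\simeq\Map(Y,U_i)$. Using~\eqref{eq:universe-classification} twice, this total space is equivalent to the space of pairs consisting of an object $f$ of $(\E_{/X})_{\M_i}\core$ and an object of $(\E_{/Y})_{\M_i}\core$ over $f$. That is exactly the core of the $\infty$-category of composable pairs in $\M_i$, projecting to the middle object, and it contains equivalences of intermediate and total objects together with their higher coherences. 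I will check that the fibrations match: the second classification must be natural in the base $Y$, and transport along an equivalence $Y\simeq Y'$ over $X$ must agree with pullback. This follows because~\eqref{eq:universe-classification} is a natural equivalence of functors $\E^{op}\to\Spc$, so it is a cartesian equivalence of right fibrations (straightening).

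Finally, I need to show that the equivalence is induced by pullback of the generic pair itself rather than by some abstract identification. Here I will use the counit of $\Pi_{u_i}$, which is the evaluation $\epsilon_i$, together with the fact that the classifying equivalence is itself given by pullback of $u_i$. The main obstacle is this coherence bookkeeping: assembling the two naturality squares into a single equivalence of straightened fibrations over $\Map(X,U_i)$, compatibly with substitution in $X$. I will handle it by working in $\HH_\C$, where the ambient products and univalent classifier already give this at the level of right fibrations. The result then restricts to $\E$ by full faithfulness and Theorem~\ref{thm:relative-fullness}, with the truncation hypothesis on $\widetilde U_i\times U_i\to\widetilde U_i$ ensuring that $D_i$ is represented.
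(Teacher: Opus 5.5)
Your proposal is correct and follows essentially the same route as the paper: a map $X\to D_i$ is a map $a:X\to U_i$ together with a section of $a^*d_i$. The dependent-product adjunction (via Beck--Chevalley) turns that section into a map $a^*\widetilde U_i\to U_i$, univalent classification applied twice yields $f$ and $g$ in $\M_i$, and the universal evaluations identify the pair with the pullback of $(f_i,g_i)$.

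Your straightening argument spells out the coherence that the paper compresses into the remark that the classifying equivalences are equivalences of spaces. The detour through $\HH_\C$ is harmless but not needed, since Theorem~\ref{thm:relative-fullness} already provides the adjunction in $\E$.
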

\begin{proof}
A map $X\to D_i$ consists of a map $a:X\to U_i$ and a section of
the pullback of $d_i$. The dependent-product adjunction identifies
the latter with a map $a^*\widetilde U_i\to U_i$. By univalent
classification, the first datum is a family $f:Y\to X$ in $\M_i$,
and the second is a family $g:Z\to Y$ in $\M_i$. The classifying
equivalences are equivalences of spaces, so they include every
equivalence and higher homotopy of these data. The universal evaluations
identify the pair with the pullback of $(f_i,g_i)$.
\qed\end{proof}

We use local classification and the generic composable pair to prove the universe-closure criterion.

\begin{proof}[of Theorem~\ref{thm:generic-tests}]
Suppose the five conditions for universe closure hold. By
Lemma~\ref{lem:local-recoding}, these are membership conditions in
$\M_i$. Every identity is a pullback of $\id_1$. If $q\in\M_i$,
choose a cover on which it is a pullback of $p_i$. Images are stable
under base change, so the image inclusion of $q$ is locally a pullback
of the image inclusion of $p_i$. The same is true of relative
diagonals: on the induced cover of the fibre square of $q$, its
diagonal is a pullback of $\Delta p_i$. Locality gives both closures.

For composable $f,g\in\M_i$, Lemma~\ref{lem:generic-pair} gives a
map $c:X\to D_i$ with the given pair as its pullback. Consequently
\[
 fg\simeq c^*\sigma_i,\qquad \Pi_fg\simeq c^*\rho_i.
\]
The product exists by relative finite-height reconstruction, and the
displayed product comparison is the Beck--Chevalley equivalence. Pullback stability now gives composition
and dependent-product closure. These identifications are natural under
further base change and give the canonical substitution comparisons.

Conversely, $p_i$ belongs to its own local class. Image and diagonal
closure therefore give the second and third conditions; identity closure
gives the first. Both members of the generic pair belong to $\M_i$,
so composition and product closure give the remaining conditions.
Lemma~\ref{lem:local-recoding} turns these memberships into the
displayed source-local recodings.
\qed\end{proof}

The reconstructed products, univalent completions, and closure criterion now combine to give the $\infty$-tripos-to-topos construction.

\begin{proof}[of Theorem~\ref{thm:tripos-construction}]
Put $\C=\PP(1)$ and $\E=\Ex(\C)$. Exactness and finite limits
are provided by exact completion. For any arrow $q:A\to B$ in
$\E$, choose an original atlas $jT\onto B$. Exhaustiveness over source
bases places $q|_{jT}$ in some $\M_i$. By locality, $q$ belongs to
that same class. Since $p_i$ is $n_i$-truncated, so is every member
of its local class. Thus all arrows have finite truncation bounds.
Applying this to terminal maps proves that $\E$ is bounded; full
faithfulness and finite-limit preservation give boundedness of $\C$.

For each pair $Z\xrightarrow gY\to X$, choose a finite bound for
$g$. Theorem~\ref{thm:relative-fullness} constructs its dependent
product in $\E$, with its ambient value and Beck--Chevalley. Hence
$\E$ is locally cartesian closed. Theorem~\ref{thm:univalent-completion}
constructs $u_i$ and its classification equivalence for each level,
and gives the truncation bounds on $u_i$ and $U_i$.
Theorem~\ref{thm:generic-tests} gives all the stated logical closure
properties. The same atlas argument identifies the propositional class
with all monomorphisms: one implication follows from the bound $-1$,
and the other from the assumed exhaustiveness of this class on source
bases. Its classifier is therefore $1\into\Omega$.

If $i\le k$, the family $u_i$ belongs to $\M_k$. The classifying
equivalence for $u_k$ gives a map $U_i\to U_k$ and a cartesian
comparison of universal families, in a contractible space of choices.
Uniqueness identifies identity maps and successive comparisons with
the direct comparisons, including all higher coherence data. This is
the asserted cumulativity. Exhaustiveness, composition closure, and
same-level product closure give enough closed univalent families,
which establishes the elementary $\infty$-topos axioms.
\qed\end{proof}

\Needspace{6\baselineskip}
We next pass to the doctrine fibres and identify the logical operations preserved by the unit.

\begin{proof}[of Corollary~\ref{cor:lcc}]
Completed reindexing is pullback between slices of $\E$ and has both
dependent sum and dependent product as adjoints. Comprehension
identifies the unit on each fibre with the fully faithful slice
functor induced by $j$. The conservativity assertions of
Theorem~\ref{thm:lc-products}, which hold for every regular source,
show that $j$ identifies source subobject lattices and preserves
every existing source dependent product. These statements pass to
slices: a subobject of a slice object is a subobject of its domain
with the induced structural map, and iterated-slice equivalences
identify the corresponding dependent-product adjunctions.

Finite limits and images give preservation and reflection of predicate
equality, conjunction, and existential quantification. For a source
monomorphism $m:A\into B$ and source map $f:B\to T$, a
$\mathrm{CF}_{-1}$ code $W\to T$ has image equal to the completed
product $\Pi_{jf}(jm)$. This is an original subobject of $jT$.
Full faithfulness restricts its adjunction to source tests, making
it the source universal quantifier, so that operation exists and is
preserved. Implication between $A,B\into T$ is
$\Pi_{A\into T}(A\times_TB\into A)$ and is treated in the same way.
The subobject order equivalences reflect all these operations.
\qed\end{proof}

\subsection{Preservation of Dependent Products}\label{app:preservation}
\begin{proof}[of Theorem~\ref{thm:preservation}]
Assume $F$ preserves the indicated weak section families. Choose a level
containing composable $f,g$, and put $P=\Pi_fg$,
$P'=\Pi_{Ff}(Fg)$. Evaluation defines the comparison
\[
 c:FP\longrightarrow P'.
\]
The product $P$ and its evaluation are themselves a same-level weak
section family. Preservation of that family makes $c$ a cover.
We prove $c$ an equivalence by induction on a finite truncation bound
$n$ for $g$, simultaneously for all pairs and bases.

For $n=-1$, both products are monic over the base, and $FP$ is monic
over $FX$ by finite-limit preservation. Hence $c$ is a monic cover
and an equivalence. For $n\ge0$, set $B=P\times_XP$ and
$Y_B=Y\times_XB$. The two universal evaluations have an identity
family $I\to Y_B$, obtained by pulling back $\Delta_g$. It is
$(n-1)$-truncated. Pullback and relative-diagonal closure put both
$I\to Y_B$ and $Y_B\to B$ in the same level. Their section object
is the diagonal $P\to B$, since a section is precisely a homotopy
between the two universal evaluations, with all its higher homotopies.

Induction identifies its product comparison with an equivalence
\[
 FP\longrightarrow FP\times_{P'}FP.
\]
This is $\Delta_c$: both the section-object identification and the
product comparison are induced by the given evaluations. Thus $c$
is monic, and its already proved cover property makes it an equivalence.
Directed exhaustiveness puts every composable pair in a common level,
so $F$ preserves all dependent products.

Conversely a weak section family $W$ determines a cover
$W\onto\Pi_fg$ by the cover criterion for weak section families. Regularity
preserves this cover, and product preservation identifies its target
with $\Pi_{Ff}(Fg)$. The image evaluation is therefore weakly adequate,
which proves the converse.
\qed\end{proof}

\subsection{Chosen Codes and Dependent-Product Preservation}\label{app:lc-morphisms}
\begin{proof}[of Theorem~\ref{thm:lc-morphisms}]
Write $\E=\Ex(\C)$ and $G=\bar F$. We first prove the statement
at a fixed height $n$, for any regular source satisfying
$\mathrm{CF}_n$. All dependent products in $\E$ used below exist by
Theorem~\ref{thm:relative-fullness}; the target is locally cartesian
closed by assumption. Applying $G$ to evaluation gives, for every
pair whose second map has the given height, a comparison
\begin{equation}\label{eq:lc-app-mate}
 c:G(\Pi_fg)\longrightarrow\Pi_{Gf}(Gg).
\end{equation}

Suppose the selected $\mathrm{CF}_n$ codes remain weak section
families in $\D$. Over a source base $X=jT$, let $jW$ be a selected
code for the pair. Adequacy makes $jW\onto\Pi_fg$ a cover. Its image
under $G$ is a cover, and the composite
\[
 FW\onto G(\Pi_fg)\xrightarrow{c}\Pi_{Gf}(Gg)
\]
is a cover because its evaluation is the interpreted weak section family.
The image of $c$ contains the image of this composite, so $c$ is a
cover. For an arbitrary base $X$, choose an original atlas $jT\onto X$.
Pulling back~\eqref{eq:lc-app-mate} along $FT\onto GX$ gives the
comparison for the pulled-back pair, by finite-limit preservation
and Beck--Chevalley in both categories. It is a cover by the
source-base argument. Covers descend along covers, so $c$ is a
cover for every such pair.

We prove simultaneously for all bases by induction on $k\le n$
that $c$ is also monic whenever $g$ is $k$-truncated. For $k=-1$,
the two objects are subobjects of $GX$, hence $c$ is monic. Suppose
$k\ge0$, put $P=\Pi_fg$ and $B=P\times_XP$, and form the identity
family $I\to Y_B$ between the two universal evaluations over
$Y_B=Y\times_XB$. This family is $(k-1)$-truncated, and its product
over $B$ is $\Delta_{P/X}:P\to B$. This follows by testing on every
object over $B$: both sides give the full space of homotopies between
the two pulled-back sections. Induction says that $G$ preserves this
lower product. Its target product identifies, by the same section
calculation, with
\[
 GP\times_{\Pi_{Gf}(Gg)}GP\longrightarrow GB.
\]
The induced comparison from $GP$ is $\Delta_c$, since the comparison
maps are mates of evaluation. It is an equivalence. Hence $c$ is
monic and, being a cover, an equivalence. This proves preservation
of all products at height at most $n$.

Conversely, preservation of these products takes every selected
code cover to a cover of the target product. The image evaluation
is therefore a weak section family for every parameter in $\D$.
The argument applies to every choice of adequate code witnesses.

For the full assertion, boundedness of $\C$ implies boundedness of
$\E$ by the exact-completion recognition result. Each map in $\E$
then has a finite truncation bound, so the preceding argument applies
to all pairs under $\mathrm{CF}_{<\infty}$. The finite induction follows
the truncation bound of the source map, which a finite-limit functor
preserves.

The regular universal property gives an equivalence of categories of
coherent regular functors. Code preservation and product preservation
are invariant under equivalence and therefore select unions of
components in their maximal subgroupoids. The established equivalence
of preservation conditions restricts exact completion to the stated
mapping-space equivalences, both at a fixed height and for all products.
The homotopy fibre over a permitted $F$ is contractible, which is the
asserted uniqueness with a specified restriction comparison.
\qed\end{proof}

\subsection{The Logical Adjunction and Exact Reflection}\label{app:adjunction}
We first record the comparison of the chosen universal families used in the adjunction.

\begin{lemma}[Uniqueness of classifier comparisons]
\label{lem:classifier-comparisons}
For two univalent families classifying the same full replete class in
a category with finite limits, the space of cartesian equivalences
between them is contractible.
\end{lemma}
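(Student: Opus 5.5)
The plan is to reduce the statement to the univalence of each family, which already gives full faithfulness of pullback into the core of the slice.

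Let $u:\widetilde U\to U$ and $u':\widetilde U'\to U'$ be univalent families whose local classes agree with a fixed full replete class $\M$. Consider the space of cartesian squares from $u$ to $u'$, that is, maps $c:U\to U'$ together with an equivalence $\widetilde U\simeq c^*\widetilde U'$ over $U$. This space is the homotopy fibre of the pullback map
\[
 \Map(U,U')\longrightarrow(\E/U)\core,\qquad c\longmapsto c^*u',
\]
over the point $u$. Since $u$ lies in $\M$ and univalence of $u'$ identifies the map with an equivalence onto the full replete subspace of $\M$-families, $(\E/U)_{\M}\core$, this fibre is contractible. The same argument produces a contractible space of cartesian comparisons $u'\to u$.

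It remains to show that the essentially unique comparison is an equivalence. The composite $u\to u'\to u$ is a cartesian comparison from $u$ to itself. The identity is another such comparison, and the space of these is contractible by the previous step applied with $u'=u$. Hence the composite is homotopic to the identity, and symmetrically for $u'\to u\to u'$. Therefore $c$ is an equivalence and the square is a cartesian equivalence.

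So the space of cartesian equivalences is a nonempty subspace of a contractible space of comparisons. It is a union of components, since being an equivalence is a property invariant under homotopy. It is therefore contractible.

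The main point needing care is the identification of the space of cartesian squares with the homotopy fibre of the pullback map. This requires a functorial choice of pullbacks, so that cartesian morphisms into $u'$ over $c$ correspond to equivalences $\widetilde U\simeq c^*\widetilde U'$ over $U$. I would argue this via the cartesian fibration $\mathrm{ev}_1:\Fun(\Delta^1,\E)\to\E$ restricted to its cartesian edges: the space of cartesian edges with target $u'$ is equivalent to $\Map(U,U')$ through the pullback functor.
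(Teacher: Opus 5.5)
Your proposal is correct and follows essentially the same route as the paper: the space of cartesian comparisons is the nonempty, contractible homotopy fibre over the first family of the fully faithful classifying map of the second. Reversing roles, the composites are self-comparisons that uniqueness identifies with the identity, so every comparison is an equivalence and the space of cartesian equivalences is contractible. Your extra remark identifying cartesian squares with that homotopy fibre via cartesian edges of $\Fun(\Delta^1,\E)\to\E$ makes explicit a step the paper leaves implicit.
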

\begin{proof}
The first family is classified by the second. Its classifying map
together with the pullback equivalence forms the homotopy fibre over
that family of a fully faithful classifying map of spaces. This fibre
is nonempty and contractible. Reversing the roles gives an inverse:
each composite classifies the original universal family, so the same
uniqueness identifies it with the identity, with all coherent
comparisons. Thus every such classifying map is an equivalence and
the space of the asserted cartesian equivalences is contractible.
\qed\end{proof}

Let $G:\E\to\D$ be regular between exact locally cartesian closed
categories and preserve dependent products. Then finite limits and
dependent products identify the image of the fibre-equivalence
groupoid of $p$ with that of $Gp$, including the tautological action.
Regular functors between exact categories preserve effective groupoid
quotients. Consequently
\begin{equation}\label{eq:univalent-preservation}
 G\Uni(p)\simeq\Uni(Gp).
\end{equation}
The right side classifies $\Loc_\D(Gp)$. It is equivalent to a
specified target universal family exactly when that local class is
the chosen target class. Sufficiency and the contractible space of
compatible equivalences follow from
Lemma~\ref{lem:classifier-comparisons}; necessity follows from their
classifying properties.

For maps between $\infty$-triposes, source tests suffice for adequacy
in the completed target. Given a completed parameter, choose an
original atlas, apply the source-tested adequacy condition, and compose
covers. Conversely, weak adequacy in the completion on a source
parameter can be refined by an original atlas of the covering object;
reflection of source covers gives an original cover. Equality
of two local classes is equivalent to mutual membership of their
generators. Since both generator bases are original in the target,
Lemma~\ref{lem:local-recoding} expresses this membership by original
covers and original maps between the covered bases. These arguments
establish the source-local descriptions of logical maps in the text.

\begin{proof}[of Theorem~\ref{thm:fundamental}]
Theorem~\ref{thm:tripos-construction} gives the object
$\TT\PP=\Ex(\C_\PP)$ of $\Top$. A regular doctrine map has a
regular extension $\widehat F$, with a contractible space of choices
including its restriction comparison. Theorem~\ref{thm:lc-morphisms}
identifies preservation of its selected codes with preservation of
all products: the source closed-type category is bounded by the
construction theorem. Equation~\eqref{eq:univalent-preservation}
then identifies~\eqref{eq:class-equality} with preservation of the
chosen universal families. Logicality is therefore exactly the
condition that $\widehat F$ be a map of $\Top$. It is stable under
identities and composition. Coherent uniqueness in exact completion
gives the functoriality of $\TT$ on these logical maps.

For $\D\in\Top$, the codomain doctrine has regular slices and
regular reindexing. Composition defines dependent sums, with the
required cartesian units, counits, and Beck--Chevalley comparisons.
Its closed-type category is $\D_{/1}\simeq\D$ and is exact, so its
higher exact completion is itself. The actual dependent products give
coherent-fullness codes. Use the chosen universal families as weak
presentations. Their classes are exhaustive, have the specified
propositional level, and satisfy the five conditions for universe closure by
Theorem~\ref{thm:generic-tests}. Thus $\Cod(\D)$ is an $\infty$-tripos.
A map of $\Top$ induces a logical map of codomain doctrines by the
same preservation characterization, defining the functor $\Cod$.

Taking cores in the regular universal property and restricting by
the proved preservation conditions gives
\[
 \Fun_{\mathrm{reg},\Pi,U}(\Ex(\C_\PP),\D)\core
 \simeq
 \Fun_{\mathrm{reg},\mathrm{codes},U}(\C_\PP,\D)\core.
\]
Here $U$ on the right means equality of the interpreted local classes
with the chosen target classes; on the left it means preservation of
the chosen universal families. Both are invariant conditions, so
the equivalence retains all higher comparisons of functors.

By Lemma~\ref{lem:doctrine-maps}, a regular interpretation
$\PP\to\Cod(\D)$ is a triple $(v,F,\theta)$ with $F$ regular,
$v:\B_\PP\to\D$ finite-limit preserving, and
$\theta:F\delta_\PP\simeq v$. For each $F$, the space of $(v,\theta)$
is contractible. Hence the right side of the preceding display is
exactly $\Map_{\Trip}(\PP,\Cod(\D))$, and the left side is
$\Map_{\Top}(\TT\PP,\D)$. This proves the natural mapping-space
equivalence and therefore the adjunction.

The identity on $\Ex(\C_\PP)$ gives the unit, whose closed-type
functor is $j$ and whose base functor is $j\delta_\PP$. Comprehension
identifies its fibres with the indicated slice embeddings.
Corollary~\ref{cor:lcc} establishes their conservativity assertions.
For $\D\in\Top$, exactness gives
$\TT\Cod(\D)=\Ex(\D_{/1})\simeq\D$. This equivalence preserves
products and the chosen universal families by their representing
properties, so it is the counit equivalence in $\Top$. Thus the right
adjoint is fully faithful.
\qed\end{proof}

\begin{corollary}[Factorization through exact doctrines]
\label{cor:exact-reflection}
Let $\Trip^{\mathrm{ex}}$ be the full subcategory whose closed-type
categories are exact. The exact completion induces a
reflection $\mathsf Q:\Trip\to\Trip^{\mathrm{ex}}$, keeping each
doctrine's base unchanged, and
\[
 \TT\simeq\mathrm{ev}_1\mathsf Q,
 \qquad \mathrm{ev}_1:\Trip^{\mathrm{ex}}\rightleftarrows\Top:\Cod.
\]
Here $\mathrm{ev}_1(\PP)=\PP(1)$, with its reconstructed univalent hierarchy and induced functors on terminal fibres.
The second displayed pair is an adjunction. For exact $\PP$, the
base-preserving unit $\PP\to\mathsf Q\PP$ is an equivalence.
\end{corollary}

\begin{proof}
Theorem~\ref{thm:regular-completion} defines $\mathsf Q\PP$ over
the same base, with closed-type category $\E=\Ex(\C_\PP)$. For an
$\infty$-tripos $\PP$, this category is exact and locally cartesian
closed by Theorem~\ref{thm:tripos-construction}. Its actual products
give coherent-fullness witnesses. Regard each realized $p_i$ as a
family over the source object $jV_i\in\E$. The classes are unchanged,
so they still form a weak hierarchy and satisfy the five conditions for universe closure.
Thus $\mathsf Q\PP$ is an exact $\infty$-tripos. A logical map
$\PP\to\QQ$ gives $\widehat F$ preserving products and universal
families, and hence a logical map $\mathsf Q\PP\to\mathsf Q\QQ$
over the original base functor. This construction is coherent by
exact completion.

For exact $\QQ$, the regular doctrine reflection restricts to
\[
 \Map_{\Trip}(\mathsf Q\PP,\QQ)
 \simeq\Map_{\Trip}(\PP,\QQ).
\]
Indeed the completions of the two source closed-type categories are
canonically equivalent to $\E$, and on both sides logicality says
that the same regular functor preserves products and the same chosen
classes. The regular unit is logical because its extension is the
identity on $\E$ and its generating local classes are unchanged.
This proves the reflection. If $\PP$ is exact, $j$ is an equivalence,
as are its slice functors, and the base is unchanged. Its unit is
therefore an equivalence of $\infty$-triposes.

On exact $\infty$-triposes, the proof of the fundamental mapping-space
equivalence replaces $\Ex(\C_\PP)$ by $\C_\PP$. This gives
$\mathrm{ev}_1\dashv\Cod$. Finally,
$(\mathsf Q\PP)(1)\simeq\Ex(\PP(1))$, with the same local classes
and classifiers, proves $\TT\simeq\mathrm{ev}_1\mathsf Q$.
\qed\end{proof}

\begin{corollary}[Invariance under equivalence of completed semantics]
\label{cor:morita}
Let $W$ be the logical maps of $\infty$-triposes sent to equivalences by
$\TT$. Then $\TT$ induces an equivalence $\Trip[W^{-1}]\simeq\Top$.
\end{corollary}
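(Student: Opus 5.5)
The plan is to identify the localization through the adjunction of Theorem~\ref{thm:fundamental}, whose right adjoint $\Cod$ is fully faithful and whose counit $\TT\Cod\simeq\mathrm{id}$ is an equivalence. We use the standard fact about reflective localizations: if a functor $L:\mathcal A\to\mathcal B$ has a fully faithful right adjoint $R$, then $L$ exhibits $\mathcal B$ as the localization $\mathcal A[W^{-1}]$, where $W$ is the class of maps sent to equivalences by $L$. Here $\mathcal A=\Trip$, $\mathcal B=\Top$, $L=\TT$ and $R=\Cod$. The class $W$ in the statement is by definition this class, so it remains to check the hypotheses and run the standard argument.

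First, the unit $\eta_\PP:\PP\to\Cod(\TT\PP)$ lies in $W$. By the triangle identity, $\varepsilon_{\TT\PP}\circ\TT(\eta_\PP)\simeq\id$. Since the counit $\varepsilon$ is an equivalence, as Theorem~\ref{thm:fundamental} states ($\Ex(\E)\simeq\E$), $\TT(\eta_\PP)$ is an equivalence. We must also check that $\eta_\PP$ is a logical map. Theorem~\ref{thm:fundamental} asserts this by exhibiting it as a map of $\Trip$: its extension is the identity on $\Ex(\C_\PP)$, and its generating classes are unchanged.

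Second, compare the functors. Let $\gamma:\Trip\to\Trip[W^{-1}]$ be the localization. Since $\TT$ inverts $W$, it factors as $\bar\TT\gamma$. Define $\bar\Cod=\gamma\Cod:\Top\to\Trip[W^{-1}]$. Then $\bar\TT\bar\Cod=\TT\Cod\simeq\id$ via the counit. Conversely, $\gamma(\eta)$ gives a natural transformation $\gamma\Rightarrow\gamma\Cod\TT=\bar\Cod\bar\TT\gamma$ that is objectwise invertible by the first step. The universal property of $\gamma$, which gives a fully faithful restriction $\Fun(\Trip[W^{-1}],-)\to\Fun(\Trip,-)$, therefore yields an equivalence $\id\simeq\bar\Cod\bar\TT$. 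Together these show that $\bar\TT$ is an equivalence.

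The main obstacle is coherence at the $\infty$-categorical level: naturality of the unit, and lifting it through the localization. The cleanest approach is to cite the general result that a left adjoint with fully faithful right adjoint is a localization at the maps it inverts (Lurie, HTT~5.2.7.12). We then only need the hypotheses already proved in Theorem~\ref{thm:fundamental}, namely the mapping-space adjunction and fully faithful $\Cod$. A possible subtlety is size. $\Trip$ is large, but both the localization and the adjunction are formed in the next Grothendieck universe of the fixed hierarchy, so no size issue arises.
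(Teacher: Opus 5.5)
Your proposal is correct and follows essentially the same route as the paper: the triangle identity and the invertible counit put every unit component in $W$, $\TT$ factors through the localization, and $\gamma\Cod$ serves as the inverse, with the unit and counit identifying the two composites with the identities. Your appeal to HTT~5.2.7.12 and to full faithfulness of restriction along $\gamma$ spells out the coherence step that the paper states only briefly.
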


\begin{proof}
The counit of $\TT\dashv\Cod$ is an equivalence, and applying
$\TT$ to every unit component gives an equivalence. Thus all unit
components belong to $W$. The functor $\TT$ factors through the
localization, and the composite of $\Cod$ with its localization
functor provides an inverse. The unit becomes an equivalence there,
and the counit is already an equivalence in $\Top$. These natural
comparisons identify the two composites with the respective identity
functors and retain their coherent adjunction data.
\qed\end{proof}

\section{Conditions for Logical Completion and Universe Reconstruction}\label{app:completion}
This appendix proves the source completion criterion, the reconstruction of dependent products
within a universe level, projective transfer from ordinary sources through
zero-truncated projective covers, and the examples separating logical closure
from quotient closure.

\subsection{Products and Conservativity}\label{app:lc-products}
For a bounded regular $\infty$-category $\C$, set
\begin{equation}\label{eq:lc-ambient}
 h:\C\into\HH_\C=\Sh(\C,\Spc),\qquad
 \E=\Ex(\C)\subseteq\HH_\C.
\end{equation}
By Lemma~\ref{lem:lc-atlas}, the condition $\mathrm{CF}_{<\infty}(\C)$ asserts that
every finite-atlas section problem over an original base admits section codes in $\C$
satisfying the weak adequacy requirement of Definition~\ref{def:lc-codes}.

\begin{theorem}[Conservative logical completion criterion]\label{thm:lc-products}
For a bounded regular $\infty$-category $\C$,
\begin{equation}\label{eq:lc-criterion}
 \mathrm{CF}_{<\infty}(\C)\quad\Longleftrightarrow\quad
 \Ex(\C)\text{ is locally cartesian closed}.
\end{equation}
These products agree with the ambient ones and satisfy Beck--Chevalley.
For an $n$-truncated second map, the construction uses an image and at most
$n+1$ additional nested groupoid quotients, relative to the given code
tower and input presentations, over every completed base.

For every regular source, the unit $j:\C\to\Ex(\C)$ is fully faithful
and induces natural equivalences
\begin{equation}\label{eq:lc-subobjects}
 \Sub_\C(T)\simeq\Sub_{\Ex(\C)}(jT),
\end{equation}
and preserves every individual dependent product already existing in $\C$.
\end{theorem}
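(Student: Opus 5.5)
The equivalence~\eqref{eq:lc-criterion} follows from Theorem~\ref{thm:relative-fullness} applied at every finite height, together with boundedness. For the forward direction, assume $\mathrm{CF}_{<\infty}(\C)$. Lemma~\ref{lem:lc-atlas} shows that $\Ex(\C)$ is bounded, so for any composable pair $Z\xrightarrow gY\xrightarrow fX$ in $\E$ I will pick a finite $n$ with $g$ $n$-truncated. This is possible because $Z$ and $Y$ are bounded and a map between bounded objects has a finite relative bound, via the diagonal criterion. Theorem~\ref{thm:relative-fullness} at height $n$ then gives $\Pi_fg$ in $\E$. It agrees with the ambient product, satisfies Beck--Chevalley, and is built from an image and at most $n+1$ nested quotients. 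Since every pair is covered, $\E$ is locally cartesian closed. For the converse, local cartesian closure supplies all products, in particular those with $n$-truncated second map for every $n$, so clause (2) of Theorem~\ref{thm:relative-fullness} gives $\mathrm{CF}_n$ for each $n$. The quotient count is exactly the one recorded in that theorem.

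For the conservativity assertions, which hold for arbitrary regular sources, full faithfulness of $j$ is part of exact completion. For~\eqref{eq:lc-subobjects}, I will argue as follows. A subobject $M\into jT$ in $\E$ has an original atlas $jA\onto M$. By the $n=-1$ case of Lemma~\ref{lem:lc-atlas}, $M$ is the image of the original composite $A\to T$. The regular Yoneda embedding preserves image factorizations, so $M\simeq j(\im(A\to T))$. Hence every subobject is original. Injectivity on subobjects and order reflection follow from full faithfulness and from $j$ preserving and reflecting monomorphisms, since it preserves finite limits and is fully faithful. Naturality in $T$ comes from $j$ preserving pullbacks.

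Preservation of an existing dependent product $Q=\Pi_fg$ in $\C$ uses Lemma~\ref{lem:old-pi}: $hQ$ is the ambient product in $\HH_\C$. Both $\E$ and $\HH_\C$ contain $jQ$, and $\E$ is full in $\HH_\C$, so for each $A\to jX$ in $\E$ the adjunction equivalence~\eqref{eq:relative-product} restricts from the ambient one. Thus $jQ$ is the product in $\E$ and is preserved.

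The main obstacle is the finite relative truncation bound of arbitrary maps $g$ in the bounded completion. Everything else is bookkeeping over earlier results. My approach is to note that if $Z$ and $Y$ are $N$-truncated, then $g$ is $N$-truncated, because its relative diagonals are pullbacks of the absolute ones. So the uniform bound comes directly from boundedness of objects.
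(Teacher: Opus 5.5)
Your proposal is correct and follows the paper's proof essentially step for step: boundedness of $\E$ via Lemma~\ref{lem:lc-atlas} with Theorem~\ref{thm:relative-fullness} in both directions, the atlas-and-image argument for subobjects, and Lemma~\ref{lem:old-pi} restricted along the full finite-limit-closed inclusion $\E\subseteq\HH_\C$ for existing products. One small correction: the relative diagonal of $g$ is not literally a pullback of $\Delta_Z$. The finite bound for $g$ follows instead by factoring $g$ through its graph $Z\to Z\times Y$, which is a pullback of $\Delta_Y$, followed by the projection $Z\times Y\to Y$, which is a pullback of $Z\to1$; both maps are $N$-truncated, so $g$ is too.
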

\begin{proof}
Lemma~\ref{lem:lc-atlas} identifies finite-atlas objects with $\E=\Ex(\C)$
and shows that this category is bounded. Every arrow of $\E$ therefore has
some finite relative truncation bound. By Definition~\ref{def:lc-codes},
$\mathrm{CF}_{<\infty}$ is exactly the conjunction of the conditions
$\mathrm{CF}_n$ for finite $n$. Theorem~\ref{thm:relative-fullness} applies
to every pair, so $\mathrm{CF}_{<\infty}$ gives all dependent products, agreement with
the ambient ones, Beck--Chevalley, and the claimed relative construction
bound. Conversely, local cartesian closure gives the products in each
finite height, and the converse direction of that theorem gives every
$\mathrm{CF}_n$, hence $\mathrm{CF}_{<\infty}$.

We prove the conservativity assertions for an arbitrary regular $\C$.
Full faithfulness and regularity of $j$ are properties of regular exact
completion. If $m:X\into jT$ is a subobject, choose an original atlas
$jA\onto X$. The composite $jA\to jT$ is $jf$ for some $f:A\to T$.
Its image in $\E$ is $m$, since the first map is a cover. Regularity of $j$
identifies that image with $j(\operatorname{im}(f)\into T)$. Thus every
subobject of $jT$ is original. Full faithfulness identifies its order and
pullbacks with those in $\C$, proving
\begin{equation}\label{eq:lc-app-subobjects}
 \Sub_\C(T)\simeq\Sub_\E(jT).
\end{equation}
Finally, suppose an individual source pair has product $Q$ in $\C$.
Lemma~\ref{lem:old-pi} identifies $hQ$, with its evaluation, with the
ambient product. Its ambient adjunction restricts to the full subcategory
$\E$, which is closed under finite limits. Thus $jQ$ is also the product in
$\E$.
\qed\end{proof}

\subsection{Dependent-Product Reconstruction within a Universe Level}\label{app:fullness}
The structure specified in Definition~\ref{def:classes} keeps section codes and bounded groupoid
quotients within one chosen level.

\begin{definition}[Axioms for reconstruction within universe levels]\label{def:classes}
Let $\C$ be an exact $\infty$-category. We consider a structure on $\C$ consisting of a directed index set $I$, a distinguished element $\mathrm{prop}\in I$, and, for each $i\in I$, a finite integer $n_i\ge-1$, a full replete class $\M_i$ of arrows of $\C$, and a family $p_i:E_i\to V_i$. These data have $n_{\mathrm{prop}}=-1$ and satisfy $\M_i\subseteq\M_j$ whenever $i\le j$, together with the following axioms. Every assertion involving $i$ is required for all $i\in I$.
\begin{enumerate}
\item\label{ax:basic} $\M_i$ contains identities, is pullback-stable, and consists of $n_i$-truncated maps. The classes are exhaustive: $\bigcup_i\M_i=\operatorname{Mor}(\C)$.
\item\label{ax:closures} $\M_i$ is closed under composition, relative diagonals, and image inclusions: if $g\in\M_i$, then $\Delta_g$ and $\im(g)\into\operatorname{cod}(g)$ belong to $\M_i$.
\item\label{ax:quotients} If $R_\bullet$ is a groupoid in $\C_{/X}$, every $R_k\to X$ is in $\M_i$, and $|R_\bullet|\to X$ is $n_i$-truncated, then $|R_\bullet|\to X$ belongs to $\M_i$.
\item\label{ax:fullness} Every diagram $Z\xrightarrow{g}Y\xrightarrow{f}X$ with $f,g\in\M_i$ has a weak section family whose structure map $W\to X$ belongs to $\M_i$.
\item\label{ax:generic} $\Loc(p_i)=\M_i$, where $\Loc(p_i)$ is the class of maps becoming pullbacks of $p_i$ after a cover of their codomain. Finally, $\M_{\mathrm{prop}}$ is exactly the class of monomorphisms.
\end{enumerate}
\end{definition}
The finite truncation bound is assigned level by level. Here small means membership in the chosen class $\M_i$ (or $\M$ when a single class is fixed). The quotient condition tracks the truncation level through delooping. Genericity gives local codes~\cite{Sterling23}. Each class is local for effective epimorphisms. 

We first record the closure and covering properties used in the reconstruction.

\begin{lemma}[Finite-limit closure and covers induced by weak section families]\label{lem:small-limits}
Let $\M$ contain identities and be closed under pullback, composition, and relative diagonals. For every $X$, the full $\infty$-subcategory of $\C_{/X}$ on maps in $\M$ is closed under finite limits. If $W$ is a weak section family for $f,g$, its ambient classifying map $jW\to\Pi_{jf}(jg)$ is a cover.
\end{lemma}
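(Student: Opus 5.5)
The proof has two independent parts: a closure argument inside the slice $\C_{/X}$, and a transposition argument for the cover statement.

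\emph{Finite-limit closure.} Write $\M_{/X}$ for the full subcategory of $\C_{/X}$ on maps in $\M$.

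The terminal object $\id_X$ lies in $\M_{/X}$ because $\M$ contains identities.

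For binary products, take $A\to X$ and $B\to X$ in $\M$. The projection $A\times_XB\to B$ is a pullback of $A\to X$, so it lies in $\M$. Composing with $B\to X$ puts $A\times_XB\to X$ in $\M$.

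For equalizers, it suffices to treat pullbacks $A\times_CB$ of objects of $\M_{/X}$. Here $A\to C$ and $B\to C$ are maps over $X$, while $A,B,C\to X$ lie in $\M$. I would write
\[
 A\times_CB\simeq (A\times_XB)\times_{C\times_XC}C,
\]
where $C\to C\times_XC$ is the relative diagonal $\Delta_{C/X}$. This diagonal lies in $\M$ by diagonal closure. Its pullback $A\times_CB\to A\times_XB$ then lies in $\M$. Composing with $A\times_XB\to X$, which lies in $\M$ by the product case, finishes the argument.

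Every finite limit is built from the terminal object and pullbacks, and fullness of $\M_{/X}$ means no higher coherence issue arises. So the subcategory is closed under finite limits.

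\emph{Cover statement.} Here $j$ denotes the exact embedding $\C\into\Sh(\C,\Spc)$, so I would invoke Lemma~\ref{lem:weak-cover} for the regular $\infty$-category $\C$. That lemma says a code family is weakly adequate if and only if its transpose to the ambient dependent product $\Pi_{jf}(jg)$ is a cover. The ambient product exists because $\Sh(\C,\Spc)$ is an $\infty$-topos.

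The transpose of the evaluation $jW\times_{jX}jY\to jZ$ is precisely the ambient classifying map. Weak adequacy in Definition~\ref{def:weak} is tested against objects $T\to X$ of $\C$ with covers in $\C$. Covers in $\C$ are generating covers of the regular topology on representables, so this matches the representable-local lifting criterion of Lemma~\ref{lem:weak-cover}. Hence the map is a cover.

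\emph{Main obstacle.} The only delicate point is matching the test conditions. Weak adequacy is stated over all $T\in\C$, whereas the sheaf-theoretic cover criterion asks for local surjectivity on $\pi_0$ over representables. Here $\C$ is exact, so $j$ preserves and reflects covers, and the regular topology is generated by singleton covers. These two facts make the conditions coincide, exactly as in the proof of Lemma~\ref{lem:weak-cover}.
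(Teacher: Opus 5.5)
Your proposal is correct and follows the paper's proof: the terminal object comes from identities, binary products from pullback and composition, and the remaining limits from pulling back the relative diagonal. Where the paper treats equalizers, you treat pullbacks $A\times_C B$; this difference is immaterial, since either choice together with the terminal object generates all finite limits. For the cover assertion you invoke the local-surjectivity criterion of Lemma~\ref{lem:weak-cover} in the slice over $jX$, as the paper does; your appeal to exactness of $\C$ there is unnecessary, because the regular Yoneda embedding reflects covers for any regular source, but it does no harm.
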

\begin{proof}
The terminal slice object is an identity and is small. A binary product of small objects is small by pullback and composition. Given two maps $A\rightrightarrows B$ over $X$ with $A\to X$ and $B\to X$ small, their homotopy equalizer is the pullback of $\Delta_{B/X}$ along the pair $A\to B\times_XB$. It is small over $A$ and hence over $X$. This gives finite limits. The second assertion follows from the locally-surjective criterion at the end of Appendix~\ref{app:descent}, applied in the slice over $jX$.
\qed\end{proof}

\Needspace{12\baselineskip}
We now reconstruct dependent products within a fixed universe level.

\begin{theorem}[Dependent-product reconstruction within a universe level]\label{thm:fullness}
Let $\C$ be an exact $\infty$-category and let a class $\M$ satisfy the closure conditions, including bounded groupoid quotients, and the weak section family condition of Definition~\ref{def:classes}, with bound $N$. If $f,g\in\M$ are composable and $g$ is $n$-truncated, $-1\le n\le N$, then the ambient dependent product $\Pi_{jf}(jg)$ is represented by a map $\Pi_fg\in\M$. Its construction uses at most $n+1$ nested effective groupoid quotients. These products satisfy Beck--Chevalley.
\end{theorem}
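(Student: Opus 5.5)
We follow the induction in the proof of Theorem~\ref{thm:relative-fullness}, now working inside one level $\M$ and tracking membership at every stage. Here $\C$ is exact and embedded in $\HH_\C$ by the regular Yoneda embedding $j$, which preserves finite limits, covers and effective groupoid quotients by Lemma~\ref{lem:quotient-truncation}. We induct on $n$, simultaneously over all bases $X$ and all composable pairs $f,g\in\M$ with $g$ $n$-truncated. Put $S=\Pi_{jf}(jg)$ in the ambient topos; it is $n$-truncated over $jX$. Axiom~\ref{ax:fullness} gives a weak section family $W\to X$ in $\M$, and Lemma~\ref{lem:small-limits} makes its transpose $a:jW\onto S$ a cover.

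\emph{Base case $n=-1$.} Here $S\into jX$ is monic, so it is the image of $jW\to jX$. Images are formed in $\C$ and preserved by $j$. Hence $S$ is represented by $\im(W\to X)\into X$. This image inclusion lies in $\M$ by axiom~\ref{ax:closures}, and no quotient is used.

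\emph{Inductive step $n\ge0$.} Put $B=W\times_XW$, which lies in $\M$ over $X$ by Lemma~\ref{lem:small-limits}. Put $Y_B=Y\times_XB$, and let $I\to Y_B$ be the pullback of $\Delta_g$ along the two evaluations. Then $Y_B\to B$ is a pullback of $f$, so it lies in $\M$. The map $I\to Y_B$ is a pullback of $\Delta_g\in\M$, so it also lies in $\M$, and it is $(n-1)$-truncated. As in~\eqref{eq:kernel-identity}, the kernel $jW\times_SjW$ is the ambient product $\Pi_{jY_B\to jB}(jI)$. The induction hypothesis represents this product by a map $K\to B$ in $\M$, built with at most $n$ nested quotients. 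Since $B\to X$ is in $\M$, composition closure puts $K\to X$ in $\M$.

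Every term of the \v{C}ech nerve of $a$ is an iterated fibre product of $W$ and $K$ over $X$, so each term lies in $\M$ over $X$ by Lemma~\ref{lem:small-limits}. This nerve is therefore a groupoid $R_\bullet$ in $\C_{/X}$. Exactness of $\C$ gives its quotient $Q$, and $j$ carries $Q$ to the ambient quotient $S$. Since $S$ is $n$-truncated and $n\le N$, the map $Q\to X$ is $N$-truncated, so axiom~\ref{ax:quotients} gives $Q\to X\in\M$. Full faithfulness of $j$ restricts the ambient adjunction, so $Q$ represents $\Pi_fg$. This step adds one quotient, for a total of at most $n+1$. Beck--Chevalley is inherited from the ambient products, since all identifications are pullback-stable.

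\emph{Main obstacle.} The quotient axiom is stated for the bound $N$ of the class, not for $n$. We must therefore check that the $n$-truncatedness of $S$ over $jX$ is reflected to $Q\to X$, using Lemma~\ref{lem:quotient-truncation}, and that $n\le N$ suffices. A second point needs care: the identity family $I\to Y_B$ must lie in the same class $\M$ so that the induction hypothesis applies. This relies on closure of $\M$ under pullbacks of relative diagonals.
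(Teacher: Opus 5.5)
Your proposal is correct and follows essentially the same route as the paper's proof. Both use the image base case, the identity reduction $jW\times_S jW\simeq\Pi_{jb}(ji)$ fed into the induction hypothesis, and one effective quotient of the \v{C}ech nerve per step, with membership in $\M$ tracked through pullback, diagonal and composition closure, Lemma~\ref{lem:small-limits}, and the bounded quotient axiom after reflecting $n$-truncation along $j$; the two points you flag as obstacles are resolved in the paper exactly as you propose.
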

\begin{proof}
For $n=-1$, the image of the structure map of the code family represents the dependent product. For $n=0$, one effective groupoid quotient identifies codes with the same evaluation. For $n\ge0$, identities between two represented sections are themselves sections of their pointwise $(n-1)$-truncated identity family. The induction reconstructs this identity space; an effective groupoid quotient then assembles the codes into the desired dependent product.

Work in $\HH=\HH_\C$. Fix a class with bound $N$. We prove, simultaneously for all composable small pairs and all bases, that the ambient product of an $n$-truncated second map is represented by a small map, for $-1\le n\le N$. The construction applies to each specified truncation bound $n$. Let
\[
 Z\xrightarrow{g}Y\xrightarrow{f}X,\qquad S=\Pi_{jf}(jg)
\]
be the pair and its ambient product. The map $S\to jX$ is $n$-truncated, since the right adjoint on slices preserves limits and therefore finite truncation.

\emph{The image base case.}
Choose a small weak section family $h:W\to X$ with evaluation, and write $a:jW\onto S$ for its ambient code cover. For $n=-1$, $S\to jX$ is monic. The factorization
\[
 jW\onto S\into jX
\]
is the ambient image factorization of $jh$. Regularity of $j$ identifies it with the image factorization in $\C$. Thus $S\simeq j\im(h)$ over $jX$, and the image inclusion belongs to $\M$ by hypothesis. Its ambient adjunction restricts to $\C$ by fullness, proving the base case.

\emph{The identity reduction.}
Suppose the assertion holds at level $n-1$, where $n\ge0$. Choose $h,W,a$ as above and put $B=W\times_XW$. Write $Y_B=Y\times_XB$ and $b:Y_B\to B$. The two projections $B\to W$ pull evaluation back to sections $s_0,s_1$ of the pulled-back family $g$ over $Y_B$. Their identity family is the pullback
\begin{equation}\label{eq:identity-pullback}
\begin{tikzcd}
 I \arrow[r] \arrow[d,"i"'] & Z\arrow[d,"\Delta_g"]\\
 Y_B \arrow[r,"{(s_0,s_1)}"'] & Z\times_Y Z.
\end{tikzcd}
\end{equation}
Thus $i\in\M$, it is $(n-1)$-truncated, and $b\in\M$ because it is a pullback of $f$.

Consider a test object $T\to jB$ in the ambient $\infty$-category. The two composites $T\to jW\to S$ are the parameterized sections obtained from $s_0,s_1$. A map over $jB$ from $T$ to $jW\times_SjW$ is a homotopy between these two maps to $S$. Under the section adjunction for $S$, it is a homotopy between the corresponding sections over $T\times_{jB}jY_B$. By~\eqref{eq:identity-pullback}, this is a section of the pulled-back identity family. These identifications are natural equivalences of spaces. In particular, they preserve paths between homotopies and all subsequent coherence. Yoneda gives
\begin{equation}\label{eq:kernel-full}
 jW\times_SjW\simeq\Pi_{jb}(ji)\quad\text{over }jB.
\end{equation}
The induction hypothesis, applied to $(b,i)$, represents the right side by a small $K\to B$.

\emph{The effective quotient.}
The entire \v{C}ech nerve of $a:jW\to S$ now lies in the full image of $j$. Its degree-zero object is $W$ and its degree-one object is the just-constructed $K$. By the Segal identities, all higher terms are finite fibre products of $W$ and $K$. Since the image is full, the structure maps and all coherences lift as a diagram, with a contractible space of compatible choices. We obtain an internal groupoid $R_\bullet$ in $\C_{/X}$.

The maps $W\to X$ and $B\to X$ are small. Therefore $K\to X$ is small by composition, and all higher levels are small by Lemma~\ref{lem:small-limits}. Exactness provides $Q=|R_\bullet|$. The regular embedding preserves this quotient, so
\[
 jQ\simeq|jR_\bullet|\simeq S.
\]
Finite truncation is reflected by $j$, hence $Q\to X$ is $n$-truncated, in particular $N$-truncated. The bounded quotient axiom places $Q\to X$ in $\M$.

For $T\to X$ in $\C$, fullness now restricts the ambient adjunction:
\begin{align*}
 \Map_{\C/X}(T,Q)
 &\simeq\Map_{\HH/jX}(jT,S)\\
 &\simeq\Map_{\HH/jY}(j(T\times_XY),jZ)\\
 &\simeq\Map_{\C/Y}(T\times_XY,Z).
\end{align*}
This proves that $Q$ is the dependent product. Its representing equivalence, including evaluation, has a contractible space of choices. Consequently the products and comparison maps assemble coherently. Beck--Chevalley is inherited from the ambient products.

The base case is a single image. A step at level $n$ invokes a construction at level $n-1$ and then one effective quotient. Hence the maximum nesting depth is $n+1$. The bound concerns the nesting depth of effective quotients for the given weak section families.
\qed\end{proof}

\subsection{Source Coding and Finite Diagrams}\label{app:projective-transfer}
We now derive the higher section codes from weak section families in $\C$ and a projective cover. Throughout this and the next two subsections, except in Corollary~\ref{cor:classical-lifting}, assume the hypotheses on weak section families and a projective cover in Definition~\ref{def:pt-source}, and use
$\HH=\HH_\C$ and $j:\C\to\E\subseteq\HH$ from
\eqref{eq:ambient}. We identify original objects with their images
when this simplifies notation. All diagrams in the ambient
$\infty$-categories retain their coherent homotopies.

The first step is to code finite diagrams in $\C$.

\begin{lemma}[Universal quantification of monomorphisms]\label{pt:mono}
For $m:A\into B$ and $f:B\to X$ in $\C$, $\Pi_fm$ exists in $\C$ and
is the image of the structure map $W\to X$ of any weak section family for $(f,m)$.
It agrees with the ambient product. If a regular $F$ sends all source weak
section families to weak section families in a locally cartesian closed target, it preserves
these monic products.
\end{lemma}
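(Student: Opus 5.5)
We prove the three assertions in order: existence in $\C$, agreement with the ambient product, and preservation. Let $w:W\to X$ with evaluation $e:W\times_XB\to A$ be a weak section family for $(f,m)$. Such a family exists by Definition~\ref{def:pt-source}. Let $W\onto P\into X$ be the image factorization of $w$ in $\C$.

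\emph{Universal property in $\C$.} We show that $P$ represents $\Pi_fm$. Since $m$ is monic, a section of $m$ over $T\times_XB$ is unique if it exists, so the relevant mapping spaces are propositions. We must therefore show, for every $t:T\to X$, that $t$ factors through $P$ exactly when $T\times_XB\to B$ factors through $A$.
\begin{itemize}
\item Given such a section, weak adequacy supplies a cover $T'\onto T$ and a map $T'\to W$ over $X$. Hence $T'\to X$ factors through $P$. Since $P\into X$ is monic and $T'\onto T$ is a regular epi, orthogonality of covers against monomorphisms gives the factorization of $t$ through $P$.
\item Conversely, if $t$ factors through $P$, pull back the cover $W\onto P$ to a cover $T''\onto T$ with a map $T''\to W$. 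Evaluation gives a factorization of $T''\times_XB\to B$ through $A$. The map $T''\times_XB\onto T\times_XB$ is a pullback of a cover, hence a cover. A factorization through a subobject descends along covers, again by orthogonality. This yields the section over $T\times_XB$.
\end{itemize}
Thus $P=\Pi_fm$ in $\C$. The image is independent of the chosen weak section family, since the characterization above does not mention $W$.

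\emph{Agreement with the ambient product.} This follows directly from Lemma~\ref{lem:old-pi}: the regular Yoneda image of an existing product is the ambient product. One can also argue via Lemma~\ref{lem:weak-cover}. The ambient product $\Pi_{hf}(hm)$ is monic over $hX$, and $hW\onto\Pi_{hf}(hm)$ is a cover. Hence $\Pi_{hf}(hm)$ is the ambient image of $hw$, which regularity of $h$ identifies with $hP$.

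\emph{Preservation.} Let $F$ be regular into a locally cartesian closed $\D$ and suppose $F$ sends source weak section families to weak section families. Then $Fw$ with evaluation $Fe$ is a weak section family for $(Ff,Fm)$. By Lemma~\ref{lem:weak-cover}, applied in $\D$ through its regular sheaf embedding, the transpose $FW\to\Pi_{Ff}(Fm)$ is a cover. Its target is monic over $FX$, because $Fm$ is monic and $\Pi$ preserves monomorphisms. So $\Pi_{Ff}(Fm)$ is the image of $Fw$. Since $F$ preserves image factorizations, this image is $FP$. The comparison $FP\to\Pi_{Ff}(Fm)$ is compatible with these images because both maps out of $FW$ are induced by $Fe$. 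It is therefore a monic cover, hence an equivalence.

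The only delicate point is the descent step for sections along covers in the first part. It is the standard fact that regular epis are orthogonal to monos, and we apply it after pulling back along $f$.
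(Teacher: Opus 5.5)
Your proof is correct. The preservation step matches the paper's argument: the transpose of the interpreted evaluation is a cover onto a subobject of $FX$, so the target product is the image of $Fw$, and regularity identifies that image with $F$ of the source image. The comparison is automatically compatible because everything in sight is a subobject of $FX$.

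For existence you take a different route.

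\textbf{The paper's route.} The paper never checks the universal property inside $\C$ directly. It starts from the ambient product $S=\Pi_{jf}(jm)$ in the regular-sheaf topos, which is monic over $jX$. The cover criterion (Lemma~\ref{lem:weak-cover}) shows that $jW\onto S$ is a cover. Hence $S$ is the ambient image of $jW\to jX$, which regularity of $j$ identifies with $j\im(W\to X)$. Full faithfulness then restricts the ambient adjunction to $\C$, so existence in $\C$ and agreement with the ambient product come out of one computation.

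\textbf{Your route.} You verify the adjunction in $\C$ by hand. You observe that both hom-sets are subsingletons, because $m$ and $\im(W)\into X$ are monic. You then use only pullback-stability of regular epimorphisms and their orthogonality to monomorphisms. Ambient agreement follows afterwards from Lemma~\ref{lem:old-pi}.

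\textbf{Comparison.} Your version is more elementary and needs no sheaf theory: it shows the claim in any ordinary regular category. It also makes the independence from the chosen weak section family explicit. The paper's version is the $(-1)$-truncated instance of the same image-of-the-code-object argument it uses in the base case of Theorem~\ref{thm:relative-fullness} and in the preservation step. It therefore fits the uniform machinery that later handles higher section problems, where sections are no longer mere factorizations through a subobject.
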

\begin{proof}
The ambient product $S=\Pi_{jf}(jm)$ is monic over $jX$. Weak adequacy says
that $jW\to S$ is a cover: local surjectivity over representables is the
cover criterion in the regular sheaf topology. Thus $S$ is the ambient image
of $jW\to jX$, which is $j\im(W\to X)$ by regularity. Fullness restricts
its adjunction to $\C$.

For the preservation assertion, choose a source weak section family $W$. Its image
under $F$ is weakly adequate by hypothesis. Therefore the target monic
product is the image of $FW\to FX$. Regularity identifies this image with
$F\im(W\to X)=F(\Pi_fm)$. The identification commutes with the
evaluation maps, so it is the canonical dependent-product comparison.
\qed\end{proof}

\begin{lemma}[Finite diagram codes]\label{pt:diagram-codes}
Let $X\in\C$, and let $V_\bullet\to U_\bullet$ be a map of finite
semisimplicial diagrams in $\C_{/X}$, with degrees $0,\ldots,m$.
There is $W\to X$ in $\C$ with a strict diagram map
\[
 W\times_XU_\bullet\longrightarrow V_\bullet
 \quad\text{over }U_\bullet
\]
which is adequate for all such diagram maps with a projective parameter
$T\to X$, with a code on $T$ itself. A regular functor preserving source weak
section families transports this construction to the analogous target code
construction, provided monic products are compared as in Lemma~\ref{pt:mono}.
\end{lemma}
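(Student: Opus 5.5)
We argue by induction on the top degree $m$, building $W$ degree by degree and recording at each step the face compatibilities as equalizer conditions. For $m=0$ the data is a single map $V_0\to U_0$ over $X$. We take a source weak section family $w_0:W_0\to X$ for the pair $V_0\to U_0\to X$ (Definition~\ref{def:pt-source}), with evaluation $e_0:W_0\times_XU_0\to V_0$ over $U_0$. A section with projective parameter $T\to X$ is, by weak adequacy, represented after some cover $T'\onto T$. Since $\C$ is ordinary, the homotopy in~\eqref{eq:weak} is an equality. Projectivity of $T$ lets us split $T'\onto T$ and restrict the local code along that splitting, so the code exists on $T$ itself.

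For the inductive step we suppose $W_{<m}\to X$ codes the truncated diagram map in degrees $<m$ with a strict evaluation. We take a source weak section family $W'_m\to X$ for $V_m\to U_m\to X$ and form $W_{<m}\times_XW'_m$. Each face $d_i:[m-1]\to[m]$ gives two maps $W_{<m}\times_XW'_m\times_XU_m\to V_{m-1}$ over $U_{m-1}$. The first is the degree-$(m-1)$ evaluation precomposed with $d_i$ on $U$. The second is $d_i$ on $V$ applied to the degree-$m$ evaluation. Agreement of the two is the equalizer of two maps into $V_{m-1}$ over $W\times_XU_m$, and this equalizer is a subobject of $W\times_XU_m$. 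Its universal quantification along the projection $W\times_XU_m\to W$ exists in $\C$ by Lemma~\ref{pt:mono}. Intersecting over the finitely many faces $i=0,\dots,m$ gives a subobject $W\into W_{<m}\times_XW'_m$, and the restricted evaluation on it is a strict semisimplicial map.

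Adequacy is the next check. Take a strict diagram map with projective parameter $T$. The inductive hypothesis gives its degree-$<m$ code $T\to W_{<m}$, and the $m=0$ argument gives a degree-$m$ code $T\to W'_m$. The pair satisfies each face identity because both evaluations equal the given strict map. By the adjunction of Lemma~\ref{pt:mono}, the pair therefore factors through $W$.

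For the transport clause, we note that the construction uses only finite limits, source weak section families, and monic products. A regular functor preserves finite limits. By hypothesis it carries weak section families to weak section families. It preserves monic products by the preservation part of Lemma~\ref{pt:mono}. Applying it stage by stage therefore reproduces the analogous target code construction. The main obstacle I expect is the strictness at degree $m$. The naive product code satisfies the face identities only after restriction, so the key point is to express the face identities as a universally quantified equalizer, which Lemma~\ref{pt:mono} makes available in $\C$. Projectivity of $T$ is also needed, to avoid passing to a cover and so obtain codes on $T$ itself.
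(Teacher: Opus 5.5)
Your proposal is correct and follows the paper's proof: a weak section family in each degree, their fibre product over $X$, the face equations imposed as equalizer subobjects and universally quantified using the monic products of Lemma~\ref{pt:mono}, projectivity of $T$ to split the degreewise covers, and transport because the construction uses only finite limits, weak section families and monic products. The only difference is organizational: you cut out the face equations degree by degree by induction on $m$, whereas the paper forms $\prod_{k=0}^m W_k$ at once and intersects the conditions for all faces simultaneously, which gives the same subobject.
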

\begin{proof}
For each $k$ choose a weak section family $W_k\to X$ for
$V_k\to U_k\to X$, and put $W^{\rm pre}=\prod_{k=0}^mW_k$ over $X$.
It has a prospective section $a_k$ in each degree. For every face
$d:U_k\to U_l$, the equation
\[
 d_Va_k=a_l(\id\times d_U)
\]
is an equality subobject of $W^{\rm pre}\times_XU_k$. Apply the monic
product along $W^{\rm pre}\times_XU_k\to W^{\rm pre}$ and intersect the
finitely many resulting subobjects. The intersection $W\into W^{\rm pre}$
has all the required face equations. The finite indexing category includes
all face maps; alternatively adjacent faces and their identities suffice.

Given a diagram map with projective parameter $T$, weak adequacy in
each degree gives a cover of $T$ and a code. Projectivity splits each
cover. The resulting finitely many maps $T\to W_k$ reproduce the prescribed
degreewise maps, so all face equations hold and the product map factors
through $W$. Everything used to define $W$ is a finite limit, a chosen weak
section family, or a monic product, proving the functorial assertion.
\qed\end{proof}

\subsection{Finite Projective Resolutions and Section Codes}\label{app:projective-codes}

\begin{lemma}[Projectives and descent]\label{pt:projectives}
The objects of $\Pj$ remain projective in $\E$ and cover every object of
$\E$. If $X_0\onto X$ is a cover in $\E$ and an ambient family over $X$
is represented after pullback to $X_0$, it is represented in $\E$.
\end{lemma}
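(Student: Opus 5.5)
The plan has three steps: projectivity in $\E$, the covering claim, and descent. Objects of $\Pj$ lie in an ordinary source, so $jP$ is $0$-truncated in $\E$. I will prove projectivity by reducing to the original-base case. Let $A\onto B$ be a cover in $\E$ and $x:jP\to B$ a map. Choose an original atlas $jC\onto A$; the composite $jC\onto B$ is again a cover. It therefore suffices to lift $x$ through covers whose domain is original, since a lift to $jC$ composes to a lift to $A$.

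For that reduced case, pull back $jC\onto B$ along $x$ to get a cover $E\onto jP$. Take an original atlas $jD\onto E$. The composite $jD\onto jP$ is a cover between original objects, and regular Yoneda reflects it (Appendix~\ref{app:descent}). So it is the image of a regular epimorphism $D\onto P$ in $\C$. Regular projectivity of $P$ in $\C$ gives a section $P\to D$, and full faithfulness of $j$ makes it a section in $\E$. Composing $jP\to jD\to E\to jC$ then lifts $x$ up to the homotopy recorded by the pullback. This shows that $\pi_0\Map(jP,-)$ is surjective on covers. The only subtle point is that the lift is a point of the homotopy fibre, so surjectivity on $\pi_0$ is exactly what is obtained; that is all the definition of projectivity asks for.

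The covering claim is immediate. Every object $X$ of $\E$ has an original atlas $jA\onto X$, and $A$ has a regular epimorphism $P\onto A$ with $P\in\Pj$. This cover is preserved by $j$, so $jP\onto X$ is a cover.

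The descent claim is Lemma~\ref{lem:exact-descent} applied to $\D=\E$, using the full exact embedding $\E\subseteq\HH_\C$. By Lemma~\ref{lem:quotient-truncation}, this embedding preserves finite limits and effective groupoid quotients. Concretely, let $S\to X$ be ambient with $S_0=S\times_XX_0$ represented. The \v{C}ech nerve of $S_0\onto S$ has terms $S_0\times_{X_0}X_0^{\times_X(k+1)}$, which are finite limits of represented objects and hence represented. Fullness lifts this nerve to an internal groupoid in $\E$, and exactness gives its quotient, which the embedding identifies with $S$.

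I expect the main obstacle to be the first part, specifically checking that covers between original objects of an ordinary source reflect to regular epimorphisms. The fallback is to compare images: $j$ preserves image factorizations and reflects equivalences, so the image of $D\to P$ is all of $P$.
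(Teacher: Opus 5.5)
Your proposal is correct and follows the paper's proof. You pull back the cover along $jP\to B$, cover the pullback by an original object, reflect the resulting cover of $jP$ to a regular epimorphism in $\C$ and split it; the covering and descent claims then come from composing atlases with projective covers and from Lemma~\ref{lem:exact-descent}. Your preliminary reduction to covers with original domain is harmless but unnecessary, since the paper covers $A\times_B jP$ directly by an original $jC_0$.
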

\begin{proof}
For a cover $A\onto B$ and $jP\to B$, cover $A\times_BjP$ by $jC_0$.
The composite $jC_0\onto jP$ comes from an original cover: the regular full
embedding reflects the image factorization of an original map. A source
section $P\to C_0$ gives the desired lift. Composing an original-object
atlas with a projective cover proves the covering assertion.

The descent assertion follows from Lemma~\ref{lem:exact-descent},
applied to the full exact embedding $\E\subseteq\HH_\C$.
\qed\end{proof}

Finite semisimplicial resolutions encode the descent data through face
maps and their coherent identities. A partial semisimplicial
hypercover $U_{\le m}\to Y$ has augmented matching maps
that are effective epimorphisms through degree $m$. Matching objects are finite homotopy limits.
Here $|\sk_mU|$ denotes the colimit of the finite diagram in degrees
$0,\ldots,m$, equivalently the realization of its left Kan extension to
all semisimplicial degrees.

\begin{lemma}[Finite semisimplicial descent]\label{pt:finite-descent}
In an $\infty$-topos, let $m\ge0$ and let
$U_{\le m}\to Y$ be a partial semisimplicial hypercover. Its partial
realization $A=|\sk_mU|$ satisfies
\begin{equation}\label{pt:finite-connectivity}
 \tau_{\le m-1}^{Y}(A)\simeq Y.
\end{equation}
Here $\tau_{\le r}^{Y}$ denotes the $r$-truncation functor on the ambient slice over $Y$.
Equivalently $A\to Y$ is left orthogonal to every $(m-1)$-truncated map,
also after every base change. In particular, restriction induces an
equivalence between the full spaces of sections of an $(m-1)$-truncated
family over $Y$ and over $A$.
\end{lemma}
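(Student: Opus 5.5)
The plan is to prove the orthogonality formulation and then deduce \eqref{pt:finite-connectivity} from it. Over $Y$, the truncation $\tau^Y_{\le m-1}$ is the reflection onto $(m-1)$-truncated objects of the slice. So $\tau^Y_{\le m-1}(A)\simeq Y$ holds exactly when, for every $(m-1)$-truncated $E\to Y$, restriction $\Map_{/Y}(Y,E)\to\Map_{/Y}(A,E)$ is an equivalence. Both the partial hypercover condition and the colimit $|\sk_mU|$ are stable under base change in an $\infty$-topos, since colimits are universal and effective epimorphisms and finite limits pull back. The statement after arbitrary base change therefore follows once the equivalence is proved over an arbitrary $Y$. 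The last assertion is the same equivalence read as a comparison of section spaces.

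Fix an $(m-1)$-truncated $E\to Y$. Since $A$ is the colimit of the finite diagram $U_{\le m}$, we have $\Map_{/Y}(A,E)\simeq\lim_{[k]\in\Delta_{s,\le m}}\Map_{/Y}(U_k,E)$. Next extend $U_{\le m}$ to a full semisimplicial hypercover $U$ of $Y$ by iterated coskeleta, setting $U_k=M_kU$ for $k>m$; the matching maps are then equivalences, hence covers. Truncated objects of an $\infty$-topos are hypercomplete, so $E\to Y$ satisfies descent for semisimplicial hypercovers (Lurie, HTT~6.5.3). This gives
\[
 \Map_{/Y}(Y,E)\simeq\lim_{[k]\in\Delta_s}\Map_{/Y}(U_k,E).
\]
It then remains to show that restricting this limit from $\Delta_s$ to $\Delta_{s,\le m}$ is an equivalence.

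Each $\Map_{/Y}(U_k,E)$ is an $(m-1)$-truncated space, because $E$ is $(m-1)$-truncated over $Y$. I would argue one degree at a time. Passing from $\Delta_{s,\le k-1}$ to $\Delta_{s,\le k}$ changes the limit by a pullback along the comparison between the degree-$k$ term and its matching space. For $k>m$, $U_k\to M_kU$ is an equivalence by construction, so these comparisons are equivalences and the restriction is an equivalence at every stage; no truncation argument is needed there. If one instead wants the statement without the coskeletal extension, for an arbitrary hypercover, the degrees $k>m$ are handled by connectivity: $U_k\to M_kU$ is a cover, its \v{C}ech nerve realizes to $M_kU$, and mapping into the $(m-1)$-truncated $E$ together with the standard fact that totalizations of $(m-1)$-truncated cosemisimplicial spaces are computed on $\Delta_{s,\le m}$ gives the same conclusion.

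I expect the main obstacle to be the descent input, namely justifying hyperdescent for $E$ without assuming the ambient $\infty$-topos is hypercomplete. I would first cite hypercompleteness of truncated objects. If that is not available in the required generality, I would fall back on an induction on $m$ using only \v{C}ech descent. For $m=0$, $U_0\onto Y$ is a cover and $(-1)$-truncated $E$ satisfies \v{C}ech descent. For the inductive step, apply the inductive hypothesis to the partial hypercover of $U_0\times_YU_0$ determined by $U_{\le m}$, with respect to the $(m-2)$-truncated diagonal of $E$. This identifies homotopies between sections over $U_0$ and so reduces the degree-$m$ step to the case one degree lower.
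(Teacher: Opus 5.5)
Your overall route is the paper's: extend $U_{\le m}$ by the relative coskeleton to a full semisimplicial hypercover. Then use that $|U|\to Y$ is $\infty$-connective, so that maps into the truncated $E$ see $|U|$ as $Y$; this is your ``hyperdescent for truncated objects''. Truncated objects are hypercomplete in every $\infty$-topos, so your fallback induction on $m$ is unnecessary. Finally compare the limits over $\Delta_{s}$ and $\Delta_{s,\le m}$.

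The gap is in your justification of that last comparison. Passing from $\Delta_{s,\le k-1}$ to $\Delta_{s,\le k}$ is not a pullback along anything built from the matching map $U_k\to M_kU$. It corresponds to attaching the cell $U_k\otimes\partial\Delta^k\to U_k\otimes\Delta^k$ to $|\sk_{k-1}U|$. The limit therefore changes by a pullback along the constant-map comparison
\[
 \Map_{/Y}(U_k,E)\longrightarrow\Map\bigl(S^{k-1},\Map_{/Y}(U_k,E)\bigr).
\]
Coskeletality plays no role here, and your assertion that ``no truncation argument is needed'' is false. If it were true, it would hold for every $E$ and give $|\sk_mU|\simeq|U|\simeq Y$. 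Take $m=0$ and let $U$ be the \v{C}ech nerve of a cover $U_0\onto Y$, which is exactly your $0$-coskeletal extension. Your argument would then give $U_0\simeq Y$ over $Y$ for every cover, for instance $Y\amalg Y\onto Y$.

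The step that actually works is the one you relegate to the alternative paragraph. Each $\Map_{/Y}(U_k,E)$ is an $(m-1)$-type, and for $k\ge m+1$ the sphere $S^{k-1}$ is $(m-1)$-connected, so the displayed map is an equivalence. This holds for arbitrary objects $U_k$ in degrees above $m$. It needs neither the coskeletal property nor the \v{C}ech nerves of $U_k\onto M_kU$ that you mention there. The coskeletal extension is used only to produce a full hypercover for the $\infty$-connectivity step. This is precisely the paper's skeletal-filtration argument. If you replace your degree-by-degree justification with it, the rest of your proof goes through, including the orthogonality reformulation and base-change stability.
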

\begin{proof}
Work in the ambient slice over $Y$. Extend the partial diagram by its
relative right Kan extension $\widehat U=\cosk_m^Y U$. Its augmented
matching maps are the given covers through degree $m$ and equivalences
above degree $m$. It is therefore a full semisimplicial hypercover.
By the semisimplicial hypercover theorem
\cite{Stefanich23}, the augmentation
$|\widehat U|\to Y$ is $\infty$-connective.

The skeletal filtration from $|\sk_m\widehat U|$ to
$|\widehat U|$ successively attaches the cells
\[
 \widehat U_k\otimes\partial\Delta^k
       \longrightarrow \widehat U_k\otimes\Delta^k,
 \qquad k>m,
\]
by pushouts in the slice; $\otimes$ denotes tensoring an object with a
space. If $Z\to Y$ is $(m-1)$-truncated, then
$\Map_{/Y}(\widehat U_k,Z)$ is an $(m-1)$-type. Restriction from
$\Delta^k$ to $\partial\Delta^k\simeq S^{k-1}$ induces an equivalence
on maps to every such type, since $k-1\ge m$. Mapping the skeletal
filtration into $Z$ therefore yields an equivalence
\[
 \Map_{/Y}(|\widehat U|,Z)
       \simeq \Map_{/Y}(|\sk_mU|,Z).
\]
Mapping the $\infty$-connective augmentation into this truncated object
also yields an equivalence. Their composite proves
\eqref{pt:finite-connectivity}, including its assertion about full mapping
spaces. Matching covers and these colimits are stable under pullback in
an $\infty$-topos, so the same argument applies after every base change.
Orthogonality to maps with the fixed finite truncation bound proves the result
in every $\infty$-topos.
\qed\end{proof}

\begin{lemma}[Transfer of section codes]\label{pt:transfer}
Let $Z\xrightarrow gY\xrightarrow fX$ lie in $\E$, with $X\in\Pj$ and
$g$ $n$-truncated, $n\ge-1$. There is an original $W\to X$ and an
evaluation $W\times_XY\to Z$ which is a weak section family in $\E$.
The construction uses projective resolutions through degree $n+1$.
\end{lemma}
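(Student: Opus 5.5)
We build the code object by resolving $Y$ and $Z$ by finite projective diagrams and then applying Lemma~\ref{pt:diagram-codes}; finite semisimplicial descent (Lemma~\ref{pt:finite-descent}) converts codes for diagram maps into codes for genuine sections of the $n$-truncated map $g$.

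\emph{Step 1: resolve $Y$.}
Since $X\in\Pj$ and $\Pj$ covers every object of $\E$ by Lemma~\ref{pt:projectives}, we choose a partial semisimplicial hypercover $U_{\le n+1}\to Y$ with every $U_k$ in $\Pj$. We build it inductively: cover each matching object (a finite limit in $\E$) by a projective. Finite-limit closure of $\Pj$ keeps all terms original over $X$. Put $A=|\sk_{n+1}U|$. The map $Z\to Y$ is $n$-truncated, so Lemma~\ref{pt:finite-descent} at $m=n+1$ shows the following, stably under base change along any $T\to X$: sections of $g$ over $T\times_XY$ are equivalent to sections of $g$ pulled back to $T\times_XA$. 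These are maps of the colimit, so they amount to compatible degreewise sections over $T\times_XU_k$ together with coherence homotopies on faces.

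\emph{Step 2: rigidify.}
Sections over the original objects $T\times_XU_k$ of the pulled-back family $Z\times_YU_k$ still land in a higher object. We therefore cover $Z\times_YU_k$ by a projective $V_k$, compatibly with faces. Again using projectivity, we resolve the coherence data: homotopies are sections of identity families, which are $(n-1)$-truncated, so they are again covered by projectives of the matching shape. Everything then becomes a strict map of finite semisimplicial diagrams $V_\bullet\to U'_\bullet$ in $\C_{/X}$ of degrees $\le n+1$. Here $U'$ enlarges $U$ by the cells recording homotopies. The evaluation $W\times_XY\to Z$ is the map induced on realizations, followed by the descent equivalence of Step 1. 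Lemma~\ref{pt:diagram-codes} then supplies an original $W\to X$ and a strict evaluation $W\times_XU'_\bullet\to V_\bullet$.

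\emph{Step 3: adequacy.}
Adequacy need only be checked on projective parameters $T\to X$, because every parameter is covered by $\Pj$. A section over $T\times_XY$ restricts, by descent, to coherent data over the $U_k$. Projectivity of each $T\times_XU_k\in\Pj$ lifts these data degreewise through the covers $V_k$ and through the homotopy covers. The result is a strict diagram map, which Lemma~\ref{pt:diagram-codes} codes on $T$ itself; its evaluation is homotopic to the given section. Lemma~\ref{lem:weak-cover} then upgrades adequacy on projective parameters to all completed parameters.

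\emph{Main obstacle.}
The hard part is Step 2: turning homotopy-coherent section data into a strict finite diagram with exact truncation bookkeeping, so that degree $n+1$ suffices. The first thing I would try is induction on $n$. Identity families of $g$ are $(n-1)$-truncated, so by the induction hypothesis each face-coherence is itself coded by a transfer at height $n-1$ using degrees $\le n$. We then assemble these codes degreewise with Lemma~\ref{pt:diagram-codes}, using monic products for the strict face equations. The base case $n=-1$ is Lemma~\ref{pt:mono} applied after one projective cover of $Y$.
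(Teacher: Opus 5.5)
Your outline has the right ingredients, and they coincide with the paper's:
\begin{itemize}
\item a partial projective hypercover $U_{\le n+1}\to Y$ over $X$;
\item a projective resolution of $D_k=Z\times_YU_k$;
\item Lemma~\ref{pt:diagram-codes} for the code object;
\item extension of the evaluation from $W\times_X|\sk_{n+1}U|$ to $W\times_XY$ by Lemma~\ref{pt:finite-descent};
\item degreewise projective lifting for adequacy, and a final projective cover of an arbitrary parameter.
\end{itemize}
The gap is Step~2. You flag it yourself and do not close it, and neither of your sketches works as stated.

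Enlarging $U$ to a diagram $U'$ with ``cells recording homotopies'' puts the coherence data in the wrong place. The face-coherence homotopies are paths in the fibre direction (in $D_k$, over $U_k$), not simplices of the base. Nothing then guarantees that $U'$ is still a partial hypercover of $Y$. So your evaluation ``induced on realizations followed by descent'' is no longer defined, and neither is the descent identification in Step~3.

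The fallback induction on $n$ also does not assemble as described. The identity problem at degree $k$ compares sections determined by the codes chosen in lower degrees. It is therefore parametrized by the earlier code object, not an independent height-$(n-1)$ transfer that Lemma~\ref{pt:diagram-codes} can combine with the others by monic products. Making it work would need a tower of code objects, each over the previous one. You would also need a separate proof that such iterated data is a coherent cone on $|\sk_{n+1}U|$, which is the original difficulty again.

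The missing idea is the relative Reedy-covering resolution. Choose, inductively for $k\le n+1$,
\[
 V_k\onto D_k\times_{M_kD}M_kV,\qquad V_k\in\Pj,
\]
with $M_0D=M_0V=X$ and all matching objects formed over $X$.
\begin{itemize}
\item Covering $D_k$ alone, even with some chosen face maps, would not let you extend lifts chosen in lower degrees. Covering the relative matching object does.
\item Since $M_kV$ is a finite limit of original projectives, $V_\bullet\to U_\bullet$ is a strict finite diagram in $\C_{/X}$, because original objects have discrete mapping spaces.
\item The homotopy pullback $D_k\times_{M_kD}M_kV$, formed in $\E$, already carries every face-coherence homotopy.
\item Composing the universal diagram section with $V_\bullet\to D_\bullet\to Z$ gives $W\times_X|\sk_{n+1}U|\to Z$, which extends to $Y$ by finite descent.
\end{itemize}
For adequacy, take a section $s$ with projective parameter $T$ and put $A_k=T\times_XU_k\in\Pj$. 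The lifts already chosen in lower degrees, together with the degree-$k$ component of $s$, define $b_k:A_k\to D_k\times_{M_kD}M_kV$. Projectivity of $A_k$ lifts $b_k$ through $V_k$. This yields a strict map $A_\bullet\to V_\bullet$ over $U_\bullet$ with a coherent comparison to $s$, hence a code $T\to W$.

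No identity problem has to be coded separately. The height enters only through the finite-descent threshold $m=n+1$ for the $n$-truncated map $g$.
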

\begin{proof}
Set $m=n+1$. We first resolve $Y$ by original projectives, then resolve
the pullback of $Z$ over this diagram. Source diagram codes encode sections
between these resolutions; finite semisimplicial descent turns their
evaluations into sections of $g$.

Construct a partial projective semisimplicial hypercover
$U_{\le m}\to Y$ in the slice over $X$, choosing at each degree a projective
cover of the augmented matching object. All matching objects lie in $\E$
by finite limits. The unaugmented $U$ is a diagram of original projective
objects: their mutual mapping spaces are discrete, so its coherent face
identities are strict identities in $\C$.

Let $D_\bullet=Z\times_YU_\bullet$. Construct a relative Reedy-covering
resolution $V_{\le m}\to D_{\le m}$ by choosing
\begin{equation}\label{pt:relative-resolution}
 V_k\onto D_k\times_{M_kD}M_kV,\qquad V_k\in\Pj,
\end{equation}
where $M_0D=M_0V=X$ and subsequent matching objects use the lower face
diagrams over $X$. The composite $V_\bullet\to U_\bullet$ is again a
finite diagram in $\C_{/X}$.

Lemma~\ref{pt:diagram-codes} produces $W$ and a universal diagram section.
Composition with $V_\bullet\to D_\bullet\to Z$ gives an ambient map
\[
 W\times_X|\sk_mU|\longrightarrow Z\quad\text{over }Y.
\]
Products preserve these colimits in the ambient topos. By
\eqref{pt:finite-connectivity} and $n$-truncation of $g$, this map extends
uniquely, in the sense of a contractible space of extensions, to
$e:W\times_XY\to Z$.

Consider a section $s:T\times_XY\to Z$ with $T\in\Pj$, and put
$A_k=T\times_XU_k$. It determines a coherent diagram map
$A_\bullet\to D_\bullet$. Suppose compatible lifts have been chosen below
degree $k$. The lower lifts and the prescribed map $A_k\to D_k$ determine
$b_k:A_k\to D_k\times_{M_kD}M_kV$. The object $A_k$ is projective because
$T,X,U_k\in\Pj$ and $\Pj$ is finite-limit closed. Since the relative
matching map is a cover, choose a lift in
\[
\begin{tikzcd}
A_k \arrow[r,dashed,"\lambda_k"] \arrow[dr,"b_k"'] &
V_k \arrow[d,two heads] \\
& D_k\times_{M_kD}M_kV.
\end{tikzcd}
\]
The comparison given by this lift extends the preceding coherent
diagram map to degree $k$, by the matching-object universal property.
Induction therefore gives $A_\bullet\to V_\bullet$ over $U_\bullet$,
together with a coherent comparison to the map induced by $s$.
All $A_k,V_k$ are original objects, so their face equations are equalities
in $\C$, by full faithfulness and discreteness of the relevant mapping
spaces. Lemma~\ref{pt:diagram-codes} gives a code $T\to W$.
Finite semisimplicial descent identifies its evaluation with $s$,
including the coherent comparison.

For an arbitrary parameter $T\in\E_{/X}$, first take a projective cover
$T'\onto T$ and apply the preceding argument. Thus $W$ is weakly adequate.
\qed\end{proof}

\begin{corollary}[Source-local section codes over every original base]
\label{pt:source-sc}
For every original $X\in\C$ and every pair
$Z\to Y\to jX$ in $\E$, there is an original $W\to X$ with an
evaluation $jW\times_{jX}Y\to Z$ which represents every section with an
original parameter $jS\to jX$ after an original regular-epimorphism
cover $S'\onto S$. In particular, $(\C,\Pj)$ satisfies coherent fullness
$\mathrm{CF}_{<\infty}$ of Definition~\ref{def:lc-codes};
coding for quotient inputs is a conclusion of the source hypotheses.
\end{corollary}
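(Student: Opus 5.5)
The argument reduces the original-base statement to Lemma~\ref{pt:transfer}, which works over projective bases, by covering the base by a projective and then using projectivity to return to source data. Fix $X\in\C$ and a pair $Z\xrightarrow gY\to jX$ in $\E$. Since $\E$ is bounded (Lemma~\ref{lem:lc-atlas}), $g$ is $n$-truncated for some finite $n\ge-1$. Choose a regular epimorphism $P\onto X$ with $P\in\Pj$ and pull the pair back along $jP\to jX$. This gives $Z_P\to Y_P\to jP$ with $g_P$ still $n$-truncated. Lemma~\ref{pt:transfer} applied over $P$ yields an original $W\to P$ and an evaluation $jW\times_{jP}Y_P\to Z_P$ that is weakly adequate in $\E$. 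Composing with the pullback projections gives an evaluation
\[
 jW\times_{jX}Y\simeq jW\times_{jP}Y_P\to Z_P\to Z
\]
over $Y$, where $W\to X$ is the composite $W\to P\to X$.

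Next verify the source-local adequacy. Let $t:S\to X$ be original and $s:jS\times_{jX}Y\to Z$ a section. Form $S\times_XP$ in $\C$; its projection to $S$ is a regular epimorphism because regular epis are pullback-stable in $\C$. Cover $S\times_XP$ by some $S''\in\Pj$ using Definition~\ref{def:pt-source}. Then $S''\onto S$ is a composite of regular epimorphisms, and $S''$ lies over $P$. The restricted section is a section of $g_P$ with parameter $jS''\to jP$. Since $S''$ is projective in $\E$ (Lemma~\ref{pt:projectives}), the cover supplied by weak adequacy splits. This gives a map $jS''\to jW$ over $jP$ together with the required homotopy over $Y$, and full faithfulness of $j$ makes that map original. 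Take $S'=S''$. This proves the displayed coding property; in the projective case no further cover is even needed.

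Coherent fullness $\mathrm{CF}_{<\infty}$ is now immediate, since Definition~\ref{def:lc-codes} asks for exactly this for each finite $n$ and each original $T$. The main point needing care is that the code family is original over $X$ and not merely over $P$; composing $W\to P\to X$ handles this. A second point is that adequacy of the Lemma~\ref{pt:transfer} family holds for all parameters in $\E$, which is what lets us use it with the parameter $jS''$. The remark about quotient inputs follows because $Y$ and $Z$ were arbitrary presented objects, including groupoid quotients.
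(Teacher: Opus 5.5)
Your proposal is correct and follows the paper's proof. You cover $X$ by a projective $P$, apply Lemma~\ref{pt:transfer} over $jP$, compose $W\to P\to X$, and take a single projective cover of $S\times_XP$ for each original parameter $S$. The only difference is that you get the code on that cover by splitting the weak-adequacy cover, using projectivity in $\E$ (Lemma~\ref{pt:projectives}), whereas the paper cites the projective-parameter step inside the proof of Lemma~\ref{pt:transfer}; your variant needs only that lemma's stated conclusion.
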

\begin{proof}
Choose an original projective cover $P\onto X$. Pull back the pair to
$jP$ and apply Lemma~\ref{pt:transfer}, obtaining an original $W\to P$.
The composite $W\to X$ has the required evaluation, since
$jW\times_{jX}Y\simeq jW\times_{jP}Y_P$.

Given a section with parameter $jS$, choose an original projective cover
$S'\onto S\times_XP$. Its composite $S'\onto S$ is an original regular
epimorphism. The projective-parameter part of the transfer proof represents
the pulled-back section by a map $S'\to W$ over $P$, on that same parameter. Composing over $X$ proves the assertion. Projectivity
gives the lifts on every finite resolution level after this single
cover of the parameter $S$.
\qed\end{proof}

We now apply these section codes to prove the logical completion theorem for sources with projective covers.

\begin{proof}[of Theorem~\ref{thm:pt-completion}]
Corollary~\ref{pt:source-sc} proves $\mathrm{CF}_{<\infty}$. Hence
Theorem~\ref{thm:lc-products} gives local cartesian closure, agreement
with ambient products, Beck--Chevalley, full faithfulness, the natural
subobject isomorphisms, and preservation of existing products.
Exactness and objectwise boundedness are the recalled properties of
higher exact completion of an ordinary regular source.

Finite limits and images preserve equality, finite conjunction, and
existential quantification. Lemma~\ref{pt:mono} constructs every
universal quantifier on source predicates and identifies it with its
ambient product. Implication between $A,B\into X$ is
$\Pi_{A\into X}(A\times_X B\into A)$ and is therefore preserved
as well. The subobject isomorphisms reflect these operations.
\qed\end{proof}

\begin{corollary}[Classical and higher local cartesian closure]\label{cor:classical-lifting}
Let $\C$ be an ordinary regular category with a finite-limit-closed
covering class of regular projectives. The following are equivalent:
\begin{enumerate}
\item $\Exone(\C)$ is locally cartesian closed;
\item every composable pair in $\C$ has a weak section family;
\item $\Ex(\C)$ is locally cartesian closed.
\end{enumerate}
\end{corollary}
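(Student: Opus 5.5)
I will prove the cycle (2)$\Rightarrow$(3)$\Rightarrow$(1)$\Rightarrow$(2).

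\emph{(2)$\Rightarrow$(3).} This is Theorem~\ref{thm:pt-completion}(i), since $(\C,\Pj)$ is then a source with a projective cover in the sense of Definition~\ref{def:pt-source}.

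\emph{(3)$\Rightarrow$(1).} I would use the identification $(\Ex\C)_{\le0}\simeq\Exone\C$ from~\eqref{eq:import-shadow}. Zero-types in $\Ex(\C)$ form a full subcategory closed under finite limits. For maps $Z\to Y\to X$ of zero-types, $Z\to Y$ is $0$-truncated. The dependent product of a $0$-truncated map is $0$-truncated over the base, because the right adjoint on slices preserves finite limits and hence truncation. Composing with the $0$-truncated $X\to1$ shows that the product is again a zero-type. So $\Exone\C$ inherits all dependent products and is locally cartesian closed.

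\emph{(1)$\Rightarrow$(2).} This is the classical Carboni--Rosolini/Emmenegger direction, and I would prove it directly.
\begin{enumerate}
\item The ordinary unit $\C\to\Exone\C$ is fully faithful and regular. It is the restriction of $j$, since by~\eqref{eq:import-shadow} it lands in the zero-types.
\item Given $Z\xrightarrow gY\xrightarrow fX$ in $\C$, form $Q=\Pi_fg$ in $\Exone\C$ and choose a cover $P\onto Q$ with $P\in\Pj$. Such a cover exists because every object of $\Exone\C$ is covered by an original object, which is in turn covered by a projective.
\item Take $W=P$, with evaluation $W\times_XY\to Q\times_XY\to Z$. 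This map lies in $\C$ by full faithfulness.
\item For weak adequacy, a section $s:T\times_XY\to Z$ transposes to $T\to Q$. Pull back the cover $P\onto Q$ along it and cover the result by a projective $T'$. The composite $T'\onto T$ is a cover between original objects, so it is a regular epimorphism in $\C$, because the unit reflects covers. This gives the local code required by Definition~\ref{def:weak}.
\end{enumerate}

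\emph{Main obstacle.} The delicate point is the reflection of covers in step~4. The same atlas argument appears in the converse part of Theorem~\ref{thm:relative-fullness}, so I expect it to go through. Alternatively, one can apply that converse (with $n=0$) to the higher completion, once its products are known.
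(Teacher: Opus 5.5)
Your proposal is correct and follows essentially the same route as the paper. The paper proves the same cycle: (2)$\Rightarrow$(3) by Theorem~\ref{thm:pt-completion}; (3)$\Rightarrow$(1) via $(\Ex\C)_{\le0}\simeq\Exone\C$ together with truncation preservation by the slice right adjoint; and (1)$\Rightarrow$(2) by covering $\Pi_fg$ in $\Exone\C$ with an original object, then reflecting the pulled-back cover into a regular epimorphism of $\C$. The only difference is cosmetic: you take a projective cover as the code object, where the paper uses any original atlas. Your aside about invoking the converse of Theorem~\ref{thm:relative-fullness} is not needed, and it would in fact require products in $\Ex(\C)$ that (1) alone does not supply.
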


\begin{proof}
Suppose first that $\E_0=\Exone(\C)$ is locally cartesian closed,
and write $\iota:\C\into\E_0$ for its regular unit. For a source
pair $Z\xrightarrow gY\xrightarrow fX$, form
$S=\Pi_{\iota f}(\iota g)$ in $\E_0$ and choose an original atlas
$\iota W\onto S$. Its evaluation has source domain
$\iota(W\times_XY)$ and source codomain $\iota Z$, so comes from a
source evaluation by full faithfulness. Given a source-parameterized
section, transpose it to a map $\iota T\to S$ and pull the atlas back.
Cover the resulting object by a source object $\iota T'$. The
composite $\iota T'\onto\iota T$ is a regular epimorphism, hence
comes from a source cover $T'\onto T$, since the regular embedding
reflects image factorizations. Its map to $\iota W$ gives the desired
code. Thus every source pair has a weak section family.

The implication from weak section families in the source to local cartesian closure
of $\Ex(\C)$ is Theorem~\ref{thm:pt-completion}, using the stated
finite-limit-closed projective cover.

Finally, suppose $\E=\Ex(\C)$ is locally cartesian closed. Its
zero-type subcategory is $\Exone(\C)$ by
\eqref{eq:import-shadow}. If $Z\to Y\to X$ consists of zero-types,
the map $Z\to Y$ is $0$-truncated. A right adjoint on slices preserves
limits, hence preserves $0$-truncated objects of that slice by the
diagonal criterion. Its dependent product is consequently a
$0$-truncated family over the zero-type $X$, and its total object is
a zero-type. The product adjunction restricts to zero-types, proving
local cartesian closure of $\Exone(\C)$.
\qed\end{proof}

\subsection{Transporting Codes and Interpretations}\label{app:projective-morphisms}

We next transport section codes using projective parameters and finite descent to obtain the characterization of dependent-product-preserving extensions of source functors in Theorem~\ref{thm:pt-morphisms}.

\begin{theorem}[Extension criterion from weak section families]\label{thm:pt-morphisms}
Assume the hypotheses on weak section families and a projective cover in Definition~\ref{def:pt-source}. Let $\D$ be an exact locally
cartesian closed $\infty$-category with a full replete, finite-limit-closed
covering subcategory $\Qj$ of $0$-truncated projectives. If
$F:\C\to\D$ is regular and $F(\Pj)\subseteq\Qj$, its regular
extension $\bar F:\E\to\D$ preserves dependent products if and
only if $F$ sends every weak section family in $\C$ to a
weak section family in $\D$. Restriction gives an equivalence of spaces
\begin{equation}\label{pt:logical-up}
 \Fun_{\mathrm{reg},\Pi,\Pj\to\Qj}(\E,\D)\core
 \simeq
 \Fun_{\mathrm{reg},\mathrm{weak},\Pj\to\Qj}(\C,\D)\core.
\end{equation}
Weak adequacy in the target is tested against all objects of $\D$.
For each eligible $F$, the space of dependent-product-preserving extensions with an
identification of their restriction with $F$ is contractible.
With chosen universe presentations, adding
\eqref{eq:lc-marked-classes} restricts this equivalence to
functors preserving dependent products and universes.
\end{theorem}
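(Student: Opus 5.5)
The plan is to reduce to the chosen-code criterion of Theorem~\ref{thm:lc-morphisms}. Corollary~\ref{pt:source-sc} shows that $(\C,\Pj)$ satisfies $\mathrm{CF}_{<\infty}$, and $\C$ is ordinary, hence bounded. The selected coherent-fullness codes are the ones built in Lemma~\ref{pt:transfer}. So it suffices to prove the following claim: for regular $F$ with $F(\Pj)\subseteq\Qj$, $F$ preserves source weak section families if and only if $F$ preserves these selected codes. Theorem~\ref{thm:lc-morphisms} then turns code preservation into preservation of all dependent products by $\bar F$. It also supplies the restriction equivalence and the contractibility of extensions. Both conditions are invariant under equivalence, so they cut out unions of components in the cores of the regular universal property~\eqref{eq:import-universal}. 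Restricting that equivalence gives~\eqref{pt:logical-up}.

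\emph{Converse direction.} Assume $\bar F$ preserves dependent products. Let $W$ be a source weak section family for $Z\to Y\to X$. By Lemma~\ref{lem:weak-cover}, its transpose $jW\onto\Pi_{jf}(jg)$ is a cover in $\E$; here the product is the one constructed in $\E$ by Theorem~\ref{thm:pt-completion}. Since $\bar F$ is regular it preserves covers, and by hypothesis it preserves the product. Hence $FW\to\Pi_{Ff}(Fg)$ is a cover in $\D$. Applying Lemma~\ref{lem:weak-cover} again, now in $\D\subseteq\Sh(\D,\Spc)$, the evaluation of $FW$ is a weak section family tested against all objects of $\D$.

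\emph{Forward direction.} Assume $F$ preserves source weak section families. I would follow the construction of Lemma~\ref{pt:transfer} through the interpretation $F$, step by step.
\begin{enumerate}
\item Lemma~\ref{pt:mono} shows that $F$ preserves the monic products used to impose face equations.
\item Lemma~\ref{pt:diagram-codes} then shows that $F$ carries each finite diagram code to the analogous target diagram code.
\item The projective resolutions $U_{\le m}\to Y$ and $V_{\le m}\to D_{\le m}$ consist of objects of $\Pj$, with matching maps that are covers. Their images are partial hypercovers of the interpreted objects by objects of $\Qj$, because $\bar F$ preserves finite limits and covers.
\item Finite semisimplicial descent (Lemma~\ref{pt:finite-descent}) holds in $\Sh(\D,\Spc)$. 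Its regular Yoneda embedding preserves finite limits, covers and effective quotients (Lemma~\ref{lem:quotient-truncation}), and $\bar F$ preserves $n$-truncation. Therefore the interpreted evaluation is the unique extension from $FW\times|\sk_m FU|$, and it agrees with $\bar F(e)$.
\end{enumerate}
It remains to show adequacy of the interpreted evaluation in $\D$. I would rerun the lifting argument of Lemma~\ref{pt:transfer} with a parameter $T\in\Qj$ over $FX$, with $FX$ in $\Qj$ after a first projective cover of the base. The objects $T\times_{FX}FU_k$ lie in $\Qj$ by finite-limit closure, so they are projective, and each lift $\lambda_k$ exists. Adequacy for arbitrary parameters in $\D$ then follows by first covering the parameter by an object of $\Qj$.

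I expect the main obstacle to be the strictness step in this target lifting argument. In the source, the face equations hold strictly because original projectives have discrete mapping spaces. In $\D$ I would use the following substitute: every object of $\Qj$ is $0$-truncated, so $\Map_\D(A,V)$ is discrete for $A,V\in\Qj$. Hence the coherent diagram map $A_\bullet\to FV_\bullet$ is determined by its degreewise components and their face equalities. These equalities are exactly the subobject conditions that $F(W)$ encodes. This uses that the face equations were cut out by preserved monic products and that the relevant comparison maps in $\D$ are $0$-truncated. With this in place, a code $T\to FW$ exists on $T$ itself, and descent identifies its evaluation with the given section.

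For the final sentence of the theorem, add the class conditions. By~\eqref{eq:univalent-preservation} we have $\bar F\Uni(p_i)\simeq\Uni(\bar Fp_i)$, and Lemma~\ref{lem:classifier-comparisons} then shows that equality of local classes is equivalent to preservation of the chosen universes, with a contractible space of comparisons. This condition is invariant as well, so the same restriction argument cuts~\eqref{pt:logical-up} down to functors preserving both dependent products and universes.
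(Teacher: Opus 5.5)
Your proposal is correct and is essentially the paper's own proof. You reduce via Corollary~\ref{pt:source-sc} and Theorem~\ref{thm:lc-morphisms} to preservation of the transfer codes of Lemma~\ref{pt:transfer}, carry the diagram-code and projective-lifting construction through $F$ using Lemmas~\ref{pt:mono} and~\ref{pt:diagram-codes} together with finite descent in $\Sh(\D,\Spc)$, and obtain the converse and the universe refinement from cover preservation and classifier uniqueness; your use of $0$-truncatedness of $\Qj$ to make the target face equations strict spells out a step the paper leaves implicit.
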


\begin{proof}
Use the regular Yoneda embedding
\[
 h_\D:\D\into\HH_\D
     :=\operatorname{Sh}_{\mathrm{reg}}(\D,\Spc),
\]
after an external universe enlargement if necessary. It is fully
faithful and preserves finite limits, covers, and existing dependent
products, by Lemma~\ref{lem:old-pi}. Thus $\HH_\D$ gives the finite semisimplicial realizations used
for descent, while fullness identifies the resulting maps in $\D$.

Suppose $F$ preserves every weak section family in $\C$.
Lemma~\ref{pt:mono} shows that it preserves the monic products used to
impose face equations. It consequently transports the finite diagram
code construction of Lemma~\ref{pt:diagram-codes}. Consider first the
weak section family $W\to P$ obtained by Lemma~\ref{pt:transfer} for a pair
$Z\xrightarrow gY\to P$, with $P\in\Pj$ and $g$ $n$-truncated.
The regular extension $\bar F$ preserves the finite matching limits
and their covers. Given a target projective parameter $T\in\Qj$
over $FP$, every $T\times_{FP}F(U_k)$ belongs to $\Qj$ and is
therefore projective. The same successive lifts through the relative
matching covers exist. Target weak adequacy of the images of the
weak section families in $\C$ gives their diagram codes on the same projective parameter.

The transported evaluation agrees with the desired target evaluation.
Indeed, $\bar F(e)$ has the transported finite-level evaluations, and
$\bar F(U_{\le n+1})\to\bar F(Y)$ is still a partial hypercover.
Finite-limit functors preserve finite truncation bounds by the diagonal
criterion, so $\bar Fg$ is $n$-truncated. Applying
Lemma~\ref{pt:finite-descent} in $\HH_\D$ identifies extensions of
these finite-level evaluations by a contractible space of comparisons.
Thus every target section with parameter in $\Qj$ is represented by
$F(W)$. Since $\Qj$ covers every object of $\D$, this proves weak
adequacy for all target parameters.

Now let the original base be arbitrary, and fix the source codes
constructed in Corollary~\ref{pt:source-sc}: choose $P\onto X$ in
$\Pj$ and regard the preceding $W\to P$ as a code over $X$. For a
target section with parameter $T\to FX$, pull back the target cover
$FP\onto FX$ to obtain $T\times_{FX}FP\onto T$. The preceding
argument represents its restriction after a further target cover.
Their composite is a cover of $T$, so $F$ preserves this
chosen source code.

Theorem~\ref{thm:lc-morphisms} now proves preservation of
all dependent products.

Conversely, let $\bar F$ preserve products, and take any
weak section family $W$ in $\C$ for $Z\to Y\to X$. Its ambient classifying
map covers the ambient product: the regular-site local-surjectivity criterion applies to every such pair. By Theorem~\ref{thm:pt-completion} this product
belongs to $\E$. Regularity and product preservation transport the
code cover to the target product, proving weak adequacy of $F(W)$ in
$\D$.

The regular universal property of exact completion restricts to
\eqref{pt:logical-up}. Both preservation conditions and the condition
$F(\Pj)\subseteq\Qj$ are invariant under coherent equivalence;
here membership in the selected projective subcategories is understood up to equivalence.
They therefore specify unions of components of the regular mapping
spaces. The restricted equivalence retains natural equivalences and
all higher homotopies. Its homotopy fibre over an eligible $F$,
including the specified identification of its restriction, is
contractible.

Finally, for chosen universe presentations, Corollary~\ref{cor:lc-universes}
restricts~\eqref{pt:logical-up} to interpretations preserving the chosen
universes precisely under condition~\eqref{eq:lc-marked-classes}. The
compatible identification data have a contractible space of choices.
\qed\end{proof}

\subsection{Universe Presentations, Closed Levels, and Doctrines}\label{app:lc-universes}
We treat individual universe presentations under $\mathrm{CF}_{<\infty}$ and show
that conditions (H1)--(H3) imply this coding condition.

Under (H1)--(H3), the completed category has the structure specified in
Definition~\ref{def:classes}. The proof reconstructs dependent products while keeping
the entire section-code tower within one universe level.

\begin{corollary}[Univalent hierarchy and logical extensions preserving the chosen universes]\label{cor:lc-universes}
Let $\C$ be bounded regular, and let $p_i$ be finite-atlas weak universe
presentations over original bases. If $\mathrm{CF}_{<\infty}(\C)$ holds, each has
a univalent completion satisfying
\begin{equation}\label{eq:lc-universes}
 \Map_\E(X,U_i)\simeq(\E/X)_{\M_i}\core,
 \qquad \M_i=\Loc_\E(p_i).
\end{equation}
The map $u_i$ is $n_i$-truncated and $U_i$ is $(n_i+1)$-truncated.
Conditions (H1)--(H3) imply $\mathrm{CF}_{<\infty}(\C)$, and then the hierarchy
satisfies the five conditions for universe closure, is exhaustive, closed, and quotient-stable,
and has propositional classifier $1\into\Omega$.
Under (H1)--(H3), $\Cod(\C)$, equipped with these presentations, is an $\infty$-tripos.
For a regular $F:\C\to\D$ preserving the selected section codes, with
$\D$ exact and locally cartesian closed and with chosen univalent families classifying
$\M_i^{\D}$, the additional condition
\begin{equation}\label{eq:lc-marked-classes}
 \Loc_\D(\bar Fp_i)=\M_i^{\D}\qquad(i\in I)
\end{equation}
restricts~\eqref{eq:lc-logical-up} to the full spaces of interpretations preserving
the chosen universal families.
\end{corollary}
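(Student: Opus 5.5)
The corollary bundles several assertions, and I would prove them in order, each time applying an earlier theorem.

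\emph{Classification under $\mathrm{CF}_{<\infty}(\C)$.} Each $p_i$ is a finite-atlas family over $jV_i$. By Lemma~\ref{lem:lc-atlas} it lies in $\E$ and is $n_i$-truncated. Theorem~\ref{thm:univalent-completion} at height $n_i$ then yields $u_i=\Uni(p_i)$, together with \eqref{eq:lc-universes}, the truncation bounds on $u_i$ and $U_i$, and the identification of $\M_i$ with $\Loc_\E(p_i)$. Lemma~\ref{lem:local-recoding} identifies the source-tested class $q\preceq_\C p_i$ of Definition~\ref{def:lc-universes} with $\Loc_\E(p_i)$.

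\emph{(H1)--(H3) imply $\mathrm{CF}_{<\infty}$.} Take a pair $Z\xrightarrow gY\xrightarrow f jT$ with $g$ $n$-truncated. By exhaustiveness (H1) and directedness, $Y\to jT$ and $Z\to jT$ lie in a common level $\M_i$. The map $g$ factors as its graph $Z\to Z\times_{jT}Y$ followed by a projection. The graph is a pullback of $\Delta_{Y/jT}$, and the projection is a pullback of $Y\to jT$. Closure under diagonals, pullback (locality) and composition from (H2) therefore puts $g\in\M_i$. Then (H3) gives the required source section-code witness. Hence $\mathrm{CF}_n$ holds for every $n$.

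\emph{Closure, quotient stability and the tripos structure.} With $\mathrm{CF}_{<\infty}$ available, I would verify the five conditions of Theorem~\ref{thm:generic-tests} through its condition (1). Identity, image and diagonal closure come directly from (H2), transferred from original bases to arbitrary completed bases by locality of $\Loc_\E(p_i)$. Composition also comes from (H2) over original bases, and a cover-of-codomain argument extends it, since composites can be checked after pulling back along an atlas. Dependent-product closure is the main obstacle. For this I would reproduce the induction of Theorem~\ref{thm:fullness} inside $\E$, with its identity reduction, using the same-level code families from (H3) at every stage of the code tower. The quotient clause of (H2) keeps each nested quotient in $\M_i$, and Lemma~\ref{lem:small-limits} keeps the \v{C}ech terms in the level. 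The delicate point is that (H2) and (H3) are stated only over original or finite-atlas bases. I would handle this by pulling back to an original atlas of the base, applying the induction there, and descending membership using Lemma~\ref{lem:exact-descent} together with locality.

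Quotient stability over arbitrary bases follows the same pattern, combining the (H2) quotient clause after atlas refinement with descent. The propositional clause of (H1) and atlas refinement identify $\M_{\mathrm{prop}}$ with all monomorphisms, so Theorem~\ref{thm:univalent-completion} gives the classifier $1\into\Omega$. Exhaustiveness follows from (H1) and locality. With the five conditions in hand, Definition~\ref{def:tripos} makes $\Cod(\C)$ an $\infty$-tripos. Here I use Proposition~\ref{prop:moens} to identify its closed-type category with $\C$.

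\emph{Interpretations.} For the final clause, take $F$ preserving the selected codes. Theorem~\ref{thm:lc-morphisms} makes $\bar F$ preserve dependent products, and then \eqref{eq:univalent-preservation} gives $\bar F\Uni(p_i)\simeq\Uni(\bar Fp_i)$, which classifies $\Loc_\D(\bar Fp_i)$. By Lemma~\ref{lem:classifier-comparisons}, this is equivalent to the chosen target family, with a contractible space of comparisons, exactly when \eqref{eq:lc-marked-classes} holds. Both conditions are invariant under equivalence, so they cut out unions of components, and \eqref{eq:lc-logical-up} restricts as claimed.
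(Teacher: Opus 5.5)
Your proposal is correct and follows the paper's proof essentially step for step: local recoding plus Theorem~\ref{thm:univalent-completion} for classification, the graph factorization with (H3) for $\mathrm{CF}_{<\infty}$, the same-level identity induction over original bases (where $W\times_TW$ stays original) with the bounded quotient clause of (H2) and Beck--Chevalley with locality for product closure, and Theorem~\ref{thm:lc-morphisms} with \eqref{eq:univalent-preservation} and Lemma~\ref{lem:classifier-comparisons} for the interpretation clause. Two small corrections. The projection $Z\times_{jT}Y\to Y$ is a pullback of $Z\to jT$, not of $Y\to jT$; this is harmless, since both maps lie in $\M_i$. Also, in the bounded setting finite-atlas arrows are all arrows of $\E$, so (H2) already holds over arbitrary bases, and only the (H3)-based product closure needs your atlas-and-descent step.
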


\begin{proof}
Lemma~\ref{lem:local-recoding} identifies the source-defined classes with
the corresponding local classes:
\begin{equation}\label{eq:lc-app-local-class}
 \M_i=\Loc_{\E}(p_i).
\end{equation}
They are pullback-stable and local for covers of arbitrary completed bases.

\emph{Section codes from (H1)--(H3).}
Suppose (H1)--(H3). Given $Z\xrightarrow gY\to hT$, condition (H1) and
directedness put $Y\to hT$ and $Z\to hT$ in a common $\M_i$.
The graph $Z\to Z\times_{hT}Y$ of $g$ is a pullback of
$\Delta_{Y/hT}$, so it lies in $\M_i$. The projection
$Z\times_{hT}Y\to Y$ is a pullback of $Z\to hT$, and composition
closure places $g$ in that same class. Thus (H3) gives an original
code for this arbitrary original-base pair, with adequacy after original
covers. These source-local codes establish $\mathrm{CF}_{<\infty}$.

\emph{Univalent classification.}
Theorem~\ref{thm:lc-products} gives dependent products in $\E$ agreeing with the ambient ones. Theorem~\ref{thm:univalent-completion}, applied to the finite truncation bound of $p_i$, gives its univalent completion $u_i:\widetilde U_i\to U_i$. By~\eqref{eq:lc-app-local-class}, its classifying equivalence is
\begin{equation}\label{eq:lc-app-classification}
 \Map_{\E}(X,U_i)\simeq
 (\E_{/X})_{\M_i}\core.
\end{equation}
This construction uses the products already established globally. We next use (H1)--(H3) to prove same-level closure.

Theorem~\ref{thm:univalent-completion} also shows that $u_i$ is $n_i$-truncated and $U_i$ is $(n_i+1)$-truncated.

\emph{Closure of the hierarchy.}
Every arrow $q:A\to B$ admits an original atlas $hT\onto B$. By (H1), its pullback belongs to $\M_i$ for some $i\in I$, and~\eqref{eq:lc-app-local-class} puts $q$ in that same level. This proves exhaustivity. Truncation is local and follows from $p_i$. The same argument extends the propositional class from monomorphisms over original bases to all monomorphisms of $\E$.

For a pair $f,g\in\M_i$ over $hT$, repeat the identity induction of Theorem~\ref{thm:relative-fullness} over original bases with a witness whose structure map lies in $\M_i$, as required by (H3). At the propositional base case, the product is the image inclusion of that structure map, so (H2) makes it small. At a higher step, $B=W\times_TW$ is small over $T$. The indexing map $Y_B\to hB$ and identity-family map $I\to Y_B$ are small by pullback and relative-diagonal closure. The induction hypothesis shows that the kernel of the code cover, identified with that identity-family section object, is small over $hB$, hence over $hT$ by composition. Its \v{C}ech terms are small over $hT$: pullbacks, composition, and diagonals give finite limits of small objects in that slice. The quotient product is $n$-truncated, with $n\le n_i$, so the bounded quotient clause of (H2) applies. This proves same-level product closure over original bases. Beck--Chevalley and locality descend that closure to all bases.

At the propositional level,~\eqref{eq:lc-app-classification} identifies maps to $U_{\mathrm{prop}}$ with subobjects. The propositional case of Theorem~\ref{thm:univalent-completion} identifies the universal family with $1\into\Omega$. We have obtained the stated exhaustive closed bounded hierarchy. Theorem~\ref{thm:generic-tests} now verifies all five conditions for universe closure. The bounded quotient clause of (H2) gives quotient stability, so the source codomain doctrine, with these presentations and its coherent-fullness witnesses, is an $\infty$-tripos.

\emph{Preservation of the chosen universes in the extension equivalence.}
By Theorem~\ref{thm:lc-morphisms}, $G=\bar F$ preserves dependent
products. Equation~\eqref{eq:univalent-preservation} gives
\[
 G(u_i)\simeq\operatorname{Uni}(\bar Fp_i).
\]
This family classifies $\Loc_{\D}(\bar Fp_i)$, so
Lemma~\ref{lem:classifier-comparisons} identifies it with the chosen target
universal family precisely under condition~\eqref{eq:lc-marked-classes},
with a contractible space of compatible identifications. The condition is
invariant under equivalence of regular functors. Restricting the
mapping-space equivalence~\eqref{eq:lc-logical-up} to these components
therefore proves the assertion.
\qed\end{proof}

The preceding corollary gives the quotient-stable univalent hierarchy of Theorem~\ref{thm:controlled-source}.

\begin{proof}[of Theorem~\ref{thm:controlled-source}]
Apply the assertion under (H1)--(H3) in
Corollary~\ref{cor:lc-universes}. Its proof derives section coding
from (H1)--(H3), obtains univalent classification, and reconstructs
same-level products before applying Theorem~\ref{thm:generic-tests}.
The bounded quotient clause of (H2) gives quotient stability.
Same-level product closure gives a weak section family over
any completed base: take the product itself with its evaluation.
Together with the stated class closures and weak universes, this
verifies the axioms of Definition~\ref{def:classes}.
\qed\end{proof}

\begin{corollary}[Conservative completion of a type doctrine]\label{cor:lc-doctrine}
Let $\PP$ be a regular type doctrine whose closed-type $\infty$-category
is bounded and satisfies $\mathrm{CF}_{<\infty}$. Its completed doctrine
$\mathsf Q\PP$ has effective higher quotients and dependent products. The unit has fully faithful fibre functors, identifies the
subobject lattices of original fibre objects with those of their images, and
preserves all existing dependent products. A hierarchy satisfying the five conditions for universe
closure gives closed univalent universes; presentations satisfying (H1)--(H3) additionally
provide quotient-stable levels and section codes in the source category belonging to the same universe level. For an exact
locally cartesian closed $\infty$-category $\D$, there is an equivalence
\begin{equation}\label{eq:lc-doctrine-logical}
 \Fun_{\mathrm{reg},\Pi}(\Ex(\PP(1)),\D)\core
       \simeq\operatorname{Int}_{\mathrm{codes}}(\PP,\Cod(\D)).
\end{equation}
The right side is the space of regular doctrine interpretations whose
closed-type functors preserve the selected section codes.
When universal families are chosen in the completion and in $\D$,
restrict the left side to functors preserving them and the right side to
interpretations satisfying~\eqref{eq:lc-marked-classes}.
\end{corollary}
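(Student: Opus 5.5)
The proof assembles earlier results; no new construction is needed. Put $\C=\PP(1)$ and $\E=\Ex(\C)$. Theorem~\ref{thm:slice-completion} together with Theorem~\ref{thm:regular-completion} identifies $(\mathsf Q\PP)(X)$ with $\E_{/j\delta_\PP X}\simeq\Ex(\PP(X))$. The doctrine is therefore a codomain doctrine pulled back along the finite-limit-preserving functor $j\delta_\PP$.

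\emph{Quotients and products.} Effective higher quotients hold fibrewise because slices of the exact category $\E$ are exact. Since $\C$ is bounded and satisfies $\mathrm{CF}_{<\infty}$, Theorem~\ref{thm:lc-products} makes $\E$ locally cartesian closed. Reindexing in $\mathsf Q\PP$ is pullback between slices, so it acquires a right adjoint $\Pi$ satisfying Beck--Chevalley.

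\emph{Conservativity of the unit.} These assertions are those of Corollary~\ref{cor:lcc}; its proof only uses the regular-source conservativity of Theorem~\ref{thm:lc-products}. By comprehension (Proposition~\ref{prop:moens}), each fibre of the unit is the slice functor $\C_{/\delta_\PP X}\to\E_{/j\delta_\PP X}$ induced by $j$. This functor is fully faithful. It identifies subobject lattices, because a subobject in a slice is a subobject of the domain. It preserves existing products, because the iterated-slice equivalence identifies the product adjunctions and Lemma~\ref{lem:old-pi} applies.

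\emph{Universes.} A hierarchy satisfying the five conditions yields closed univalent universes by Theorem~\ref{thm:generic-tests} together with Theorem~\ref{thm:univalent-completion}. This is exactly the construction in Theorem~\ref{thm:tripos-construction}, applied to the $\infty$-tripos obtained from $\PP$. Presentations satisfying (H1)--(H3) give quotient stability and same-level source codes by Theorem~\ref{thm:controlled-source} and Corollary~\ref{cor:lc-universes}.

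\emph{The interpretation equivalence.} I would factor \eqref{eq:lc-doctrine-logical} in two steps. The first is Theorem~\ref{thm:lc-morphisms}, restricted to cores:
\[
\Fun_{\mathrm{reg},\Pi}(\E,\D)\core\simeq\Fun_{\mathrm{reg},\mathrm{codes}}(\C,\D)\core .
\]
The second uses Lemma~\ref{lem:doctrine-maps}. A regular doctrine interpretation $\PP\to\Cod(\D)$ is a triple $(v,F,\theta)$ with $\theta:F\delta_\PP\simeq\delta_{\Cod\D}v=v$. For fixed $F$, the space of pairs $(v,\theta)$ is contractible, since $v$ is forced to be $F\delta_\PP$, and $F\delta_\PP$ preserves finite limits because both $F$ and $\delta_\PP$ do. Hence the forgetful map $\operatorname{Int}_{\mathrm{codes}}(\PP,\Cod\D)\to\Fun_{\mathrm{reg},\mathrm{codes}}(\C,\D)\core$ is an equivalence. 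This is the same step as in the proof of Theorem~\ref{thm:fundamental}.

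\emph{Restricting to universe-preserving functors.} Equation~\eqref{eq:univalent-preservation} and Lemma~\ref{lem:classifier-comparisons} show that preserving the chosen universal families corresponds exactly to condition~\eqref{eq:lc-marked-classes}, as in Corollary~\ref{cor:lc-universes}. Both conditions are invariant under equivalence, so each selects a union of components, and the equivalence restricts to them.

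\emph{Main obstacle.} The delicate point is the second step: checking that the contractibility of $(v,\theta)$ holds coherently as a homotopy pullback of spaces, so that the result is an equivalence of full spaces and not merely of $\pi_0$. I would handle it exactly as in the proof of Theorem~\ref{thm:regular-completion}, writing the interpretation space as a homotopy pullback over the space of closed-type functors.
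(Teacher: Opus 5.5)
Your proposal is correct and follows essentially the same route as the paper. The fibres are identified via comprehension and Theorem~\ref{thm:regular-completion}, and Theorem~\ref{thm:lc-products} supplies the products and the conservativity of the unit. Theorem~\ref{thm:tripos-construction} and Corollary~\ref{cor:lc-universes} give the universes, and the interpretation equivalence comes from Theorem~\ref{thm:lc-morphisms} combined with Lemma~\ref{lem:doctrine-maps}, where the space of pairs $(v,\theta)$ is contractible for each fixed closed-type functor $F$.
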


\begin{proof}
Comprehension identifies the original fibre $\PP(X)$ with $\C_{/\delta_\PP X}$. The free exact completion has fibre $\E_{/h\delta_\PP X}$ by Theorem~\ref{thm:regular-completion}; its closed-type category is exact. Theorem~\ref{thm:lc-products} gives its dependent products, and a hierarchy satisfying the five conditions for universe closure gives closed univalent universes by Theorem~\ref{thm:tripos-construction}. For presentations satisfying (H1)--(H3), Corollary~\ref{cor:lc-universes} also gives quotient stability and section codes in the source category belonging to the same universe level. The unit is the slice embedding induced by $h$. It is fully faithful and preserves every original dependent product. A subobject of an original slice object is a subobject of its original domain with the induced structural map, so~\eqref{eq:lc-app-subobjects} proves fibrewise predicate conservativity.

By Lemma~\ref{lem:doctrine-maps}, an interpretation of $\PP$ in $\Cod(\D)$ consists of a regular closed-type functor $F:\C\to\D$, a base functor $v$, and an equivalence $F\delta_\PP\simeq v$, since the codomain doctrine's $\delta$ is the identity. For fixed $F$, the space of such $v$ with its comparison is contractible. The fibre functors of this interpretation are the induced slice functors. The chosen-code mapping-space equivalence of Theorem~\ref{thm:lc-morphisms}, with the refinement to interpretations preserving the chosen universes when required, thus becomes the asserted equivalence of interpretation spaces. All higher comparisons are retained by the regular universal property and the comprehension equivalence.
\qed\end{proof}

\begin{corollary}[Quotient-stable hierarchies]\label{cor:quotient-stable-adjunction}
The adjunction~\eqref{eq:adjunction} restricts to $\infty$-triposes whose local
classes are quotient-stable and exact elementary $\infty$-toposes with
quotient-stable closed bounded hierarchies. The right adjoint remains fully
faithful. Presentations satisfying (H1)--(H3) give a sufficient source criterion for this
restricted adjunction.
\end{corollary}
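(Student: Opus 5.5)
The plan is to obtain both halves of the restricted adjunction from the proof of Theorem~\ref{thm:fundamental}, after checking that $\TT$ and $\Cod$ preserve quotient stability. Write $\Trip^{\mathrm{qs}}\subseteq\Trip$ for the full subcategory of $\infty$-triposes whose completed classes $\M_i=\Loc_{\TT\PP}(p_i)$ are quotient-stable at their bounds $n_i$. Write $\Top^{\mathrm{qs}}\subseteq\Top$ for the full subcategory of objects whose chosen hierarchy is quotient-stable in the sense of Definition~\ref{def:quotient-stable}. Both are full subcategories with unchanged morphisms: logical maps on one side, and regular functors preserving products and chosen universal families on the other. The mapping-space equivalence~\eqref{eq:mapping-adjunction} therefore restricts verbatim once both functors are known to preserve these subcategories.

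For $\TT$, quotient stability of an $\infty$-tripos is defined on the completed classes. Theorem~\ref{thm:tripos-construction} equips $\TT\PP$ with exactly these classes $\M_i$, classified by the univalent completions $u_i$. Hence $\TT\PP\in\Top^{\mathrm{qs}}$ is tautological.

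For $\Cod$, take $\D\in\Top^{\mathrm{qs}}$. In the proof of Theorem~\ref{thm:fundamental}, the weak presentations of $\Cod(\D)$ are the chosen universal families, and the completion is $\Ex(\D_{/1})\simeq\D$. Under this identification the classes $\Loc(u_i)$ are the given $\M_i$ by univalent classification. They are therefore quotient-stable, so $\Cod(\D)\in\Trip^{\mathrm{qs}}$.

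With both restrictions in hand, the unit and counit of Theorem~\ref{thm:fundamental} are components between objects of the subcategories and stay there. The counit $\Ex(\D)\simeq\D$ remains an equivalence, so the right adjoint is still fully faithful.

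For the sufficiency clause, let presentations satisfy (H1)--(H3) over a bounded regular source $\C$. Theorem~\ref{thm:controlled-source} (via Corollary~\ref{cor:lc-universes}) makes $\Cod(\C)$ an $\infty$-tripos. The bounded quotient clause of (H2) gives quotient stability of each $\M_i=\Loc_{\Ex(\C)}(p_i)$. Hence $\Cod(\C)$ lies in $\Trip^{\mathrm{qs}}$.

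The step needing care is matching the two definitions of quotient stability. (H2) is stated for groupoids of finite-atlas objects over finite-atlas bases with ambient quotient in $\HH_\C$. Definition~\ref{def:quotient-stable} quantifies over all internal groupoids in $\Ex(\C)$. I would reconcile them using Lemma~\ref{lem:lc-atlas}: in the bounded case finite-atlas objects are exactly the objects of $\Ex(\C)$. Lemma~\ref{lem:quotient-truncation} then identifies the ambient quotient with the exact-completion quotient, together with its truncation degree.
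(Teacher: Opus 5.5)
Your proposal is correct and follows the paper's proof. Quotient stability is a property of the classified local classes, and $\TT$ and $\Cod$ leave these classes unchanged: the unit is the identity on the completed category and its classes, and the counit is $\Ex(\D)\simeq\D$. Hence the mapping-space equivalence of Theorem~\ref{thm:fundamental} restricts, with the counit still an equivalence, and the source criterion follows from Corollary~\ref{cor:lc-universes} via the bounded quotient clause of (H2). Your explicit matching of (H2) with Definition~\ref{def:quotient-stable}, through Lemma~\ref{lem:lc-atlas} and quotient preservation by the embedding, is a sound detail that the paper leaves implicit.
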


\begin{proof}
Quotient stability is a property of the classified local classes. The
completion of a source in the restricted subcategory has exactly those
classes, and the codomain doctrine of a target with quotient-stable classes
has the same classes. The unit induces the identity on the completed
category and its classes; the counit is the exact-completion equivalence.
Hence both preserve quotient stability. Restricting the mapping-space
adjunction of Theorem~\ref{thm:fundamental} gives the claimed adjunction,
with the counit still an equivalence. Corollary~\ref{cor:lc-universes} gives
the sufficient source criterion.
\qed\end{proof}

\subsection{Classical Recovery and the Bounded-Space Model}\label{app:shadow}\label{app:model}
\begin{proof}[of Theorem~\ref{thm:shadow}]
Write $\E=\Ex(\C_\PP)$ and $dI=j\delta_\PP I$. Since every set is
$0$-truncated and $d$ preserves finite limits, each $dI$ is $0$-truncated.
Subobject conservativity identifies $H(I)$ with $\Sub_\E(dI)$.
For an $H$-valued partial equivalence relation $\rho$ on $I$, let
$S_\rho\into dI$ be its support. The relation is an internal equivalence
relation on this support; exactness gives its quotient $Q_\rho$ in $\E$.
Pulling back $\Delta_{Q_\rho}$ along the product of the support cover with itself gives the
monomorphism defined by $\rho$. Descent of monomorphisms makes
$\Delta_{Q_\rho}$ monic, so $Q_\rho$ is a zero-type.

A map $Q_\rho\to Q_\sigma$ pulls back along the supports to the subobject
$S_\rho\times_{Q_\sigma}S_\sigma\into S_\rho\times S_\sigma$.
It is an $H$-predicate, total on the first support, saturated for the two
relations, and single-valued modulo $\sigma$. Conversely such a predicate
first descends to a map from the source support, by totality and
single-valuedness, and then through its equivalence relation by saturation.
These constructions are inverse. Identities and composition agree with the
PER identities and existential relational composition of the classical
tripos-to-topos construction~\cite{HJP80}. Mapping spaces into zero-types
are discrete. Thus $\rho\mapsto Q_\rho$ is fully faithful.

For any zero-type $X\in\E$, the assumed cover $S\onto X$ with
$S\into dI$ has a monic kernel pair supported on $S\times S$.
Subobject conservativity makes it an $H$-valued PER, and its quotient is $X$.
This proves essential surjectivity.
\qed\end{proof}

From a propositional tripos one may form its regular presentation by supported
predicates and total single-valued functional relations, whose ordinary exact
completion is the classical topos. Equation~\eqref{eq:import-shadow} then gives
the same classical topos as the category of zero-types after higher exact completion. Coherent fullness and
weak universe presentations satisfying the five conditions for universe closure equip
this completion with higher logical structure, as in Theorem~\ref{thm:tripos-construction}.

\begin{proposition}\label{prop:space-model}
Let $\C_{\mathrm{set}}$ be a skeleton of sets of cardinality below some $\kappa_i$, for an increasing sequence of strongly inaccessible cardinals.
$\C_{\mathrm{set}}$ satisfies $\mathrm{CF}_{<\infty}$ and admits the weak universe presentations satisfying (H1)--(H3) of Definition~\ref{def:lc-universes}. Its higher exact completion is equivalent to the bounded-space category in Example~\ref{ex:spaces}. The resulting hierarchy is closed and quotient-stable, and both the source and completed codomain doctrines are $\infty$-triposes. The higher completion strictly enlarges the ordinary exact source $\C_{\mathrm{set}}$ by effective groupoid quotients.
\end{proposition}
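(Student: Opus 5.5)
The plan is to treat $\C_{\mathrm{set}}$ as a source with a projective cover and then verify (H1)--(H3) directly, so that Theorem~\ref{thm:controlled-source} and Corollary~\ref{cor:lc-universes} supply the rest.

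\emph{Source hypotheses.} $\C_{\mathrm{set}}$ is an ordinary regular category; under the external choice assumption every object is regular projective. So $\Pj=\C_{\mathrm{set}}$ is a finite-limit-closed covering class. Every composable pair $Z\to Y\to X$ has an honest dependent product, the set of fibrewise sections. Its cardinality stays below the next $\kappa_i$ because each $\kappa_i$ is strongly inaccessible and hence closed under exponentiation. This gives a weak section family. Theorem~\ref{thm:pt-completion} then yields $\mathrm{CF}_{<\infty}$, local cartesian closure, conservativity, and boundedness of $\E=\Ex(\C_{\mathrm{set}})$.

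\emph{Identification with bounded spaces.} $\Sh(\C_{\mathrm{set}},\Spc)$ for the regular topology is equivalent to $\Spc$ (in the enlarged universe), since every cover splits. Representables are then the discrete sets. By Lemma~\ref{lem:lc-atlas}, the objects of $\E$ are exactly the finite-atlas objects. So they are the finitely truncated spaces obtained from small sets by iterated groupoid realizations of bounded height. The recursive atlas presentation gives a simplicial presentation of size below some $\kappa_i$. Conversely, a finitely truncated space with a $\kappa_i$-small Kan presentation has a $\kappa_i$-small set of vertices as atlas, and its loop spaces are again such spaces of lower truncation, so induction places it in $\mathcal A_n(1)$.

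\emph{Universe presentations.} At level $(i,n)$, take $V_{i,n}$ to be a set of representatives of $\kappa_i$-small $n$-types, which is a set of size at most $\kappa_i<\kappa_{i+1}$. Let $p_{i,n}$ be the tautological family, presented by choosing small Kan models uniformly. At the propositional level use $\{0,1\}$ with $1\into 2$. Order the levels by $(i,n)$, with prop at the bottom.
\begin{itemize}
\item \emph{(H1).} Classes increase with the index. Exhaustiveness holds because every family over a set has fibres below some $\kappa_i$ and bounded truncation. The propositional class is exactly the monomorphisms.
\item \emph{(H2).} Closure under composition, diagonals, images and identities follows from closure of $\kappa_i$-small spaces under $\Sigma$-types, path spaces and $\pi_0$-images, using regularity of $\kappa_i$. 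The bounded quotient clause holds because a realization of a groupoid with $\kappa_i$-small terms is a $\kappa_i$-small colimit.
\item \emph{(H3).} Use the set of all sections of the fibres as code object. Its fibres are products of $\kappa_i$-small spaces indexed by $\kappa_i$-small sets, which are again $\kappa_i$-small by inaccessibility. Alternatively, use $\pi_0$ of the section space, which lies in the same level.
\end{itemize}

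\emph{Conclusion.} Theorem~\ref{thm:controlled-source} now gives that the hierarchy is closed and quotient-stable, and that $\Cod(\C_{\mathrm{set}})$ is an $\infty$-tripos. Corollary~\ref{cor:exact-reflection} gives the same for the completed codomain doctrine. Strict enlargement follows from Proposition~\ref{prop:source-generics-obstruction} applied to the group $\mathbb Z/2$: $B(\mathbb Z/2)$ is not $0$-truncated, so it is not in the image of $j$.

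The main obstacle is the (H3) and (H2) size bookkeeping. We must check that $\kappa_i$-smallness is measured consistently on presentations, so that section codes and quotients never leave level $i$. I would handle this by defining smallness as the homotopy-invariant condition that all homotopy sets have cardinality below $\kappa_i$, and then invoking inaccessibility.
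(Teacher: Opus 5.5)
Most of your plan is sound. Deriving $\mathrm{CF}_{<\infty}$ from Theorem~\ref{thm:pt-completion} is a valid shortcut: every set is projective by choice, and the actual set of fibrewise sections, small by inaccessibility, is a weak section family. Unlike the paper's direct choice of component representatives of section spaces, this route does not depend on identifying the completion with spaces.

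\textbf{The gap is in that identification.} The claim $\Sh(\C_{\mathrm{set}},\Spc)\simeq\Spc$ is false. Because every cover splits, every covering sieve of the singleton regular topology is maximal. The ambient $\HH_{\C_{\mathrm{set}}}$ is therefore the full presheaf $\infty$-topos $\Fun(\C_{\mathrm{set}}^{op},\Spc)$. Nothing there forces $F(A\amalg B)\simeq F(A)\times F(B)$; for instance $hA\amalg hB\not\simeq h(A\amalg B)$. So your inference that the finite-atlas objects ``are'' finitely truncated spaces is unsupported. The groupoid realizations generating $\Ex(\C_{\mathrm{set}})$ are computed in presheaves, and you have not shown that they stay inside any copy of $\Spc$.

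\textbf{The missing idea} is the comparison functor $\Phi(X)(T)=\Map(T_{\mathrm{disc}},X)=X^T$, the right Kan extension from the one-point set.
\begin{itemize}
\item It is fully faithful, preserves finite limits, and sends a discrete $A$ to $hA$.
\item The essential point is that it preserves covers, because $\pi_0(X^T)\cong(\pi_0X)^T$ and a product of surjections is surjective, both by external choice.
\item Hence $\Phi$ preserves effective groupoid quotients, which are computed objectwise. Combined with closure of the bounded-space category $\mathcal S_b$ under finite limits and bounded groupoid realizations, this gives $\Ex(\C_{\mathrm{set}})\subseteq\Phi(\mathcal S_b)$.
\item The same cover preservation is needed in your converse direction. $\Phi$ of a discrete vertex atlas must be an effective epimorphism in presheaves before the induction on kernels can place the object in $\mathcal A_n(1)$.
\end{itemize}

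Your fibrewise verification of (H1)--(H3) presupposes $\E_{/hT}\simeq(\mathcal S_b)_{/T}$, and so does the identification of $\Loc(p_{i,n})$ with the $n$-truncated maps with $\kappa_i$-small fibres. Both therefore depend on this step.

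\textbf{A separate citation issue.} Corollary~\ref{cor:exact-reflection} yields $\mathsf Q\Cod(\C_{\mathrm{set}})$, whose base is still $\C_{\mathrm{set}}$; it does not give $\Cod(\Ex(\C_{\mathrm{set}}))$. For the latter, argue as in the proof of Theorem~\ref{thm:fundamental}: take the actual products as codes and the univalent families as weak universes.
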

\begin{proof}
Work in an external universe containing $\kappa_\omega=\sup_i\kappa_i$. Every cover in $\C_{\mathrm{set}}$ splits, so every covering sieve is maximal and its regular sheaves are all presheaves of spaces. Define
\[
 \Phi(X)(T)=\Map_{\Spc}(T_{\mathrm{disc}},X).
\]
This is right Kan extension from the full subcategory on the singleton: the relevant comma category is the discrete set of points of $T$. Hence $\Phi$ is fully faithful and sends a discrete set $A$ to $hA$.

Let $\mathcal S_b\subseteq\Spc$ be the full category of finitely truncated spaces with a simplicial representative of cardinality below some $\kappa_i$. It is closed under finite limits after taking one common cardinal level. For an internal groupoid, the object and arrow spaces lie at one common cardinal level and have one common finite truncation bound. All Segal terms are finite fibre products of these two spaces, so all admit representatives below that cardinal. The realization uses a countable collection of such data; uncountable regularity retains the size bound. Its diagonal is locally the source--target map of the groupoid and therefore has a finite truncation bound. Thus the realization belongs to $\mathcal S_b$. Exactness and pullback stability are inherited from spaces.

For each discrete $T$, the functor $(-)^T$ preserves finite limits and covers; cover preservation follows from choice on components in the
working external universe. It consequently preserves effective groupoid quotients. Therefore $\Phi(\mathcal S_b)$ is a full exact subcategory of the presheaf category containing $h\C_{\mathrm{set}}$, and it contains $\Ex(\C_{\mathrm{set}})$. Conversely, a $0$-truncated object of $\mathcal S_b$ is equivalent to an original discrete set. For an $n$-truncated object with $n\ge1$, choose one point in each component. This gives a small discrete atlas whose kernel is $(n-1)$-truncated as an absolute space. Induction constructs its kernel and then its groupoid quotient from original sets. We obtain
\begin{equation}\label{eq:lc-app-space-completion}
 \Ex(\C_{\mathrm{set}})\simeq\mathcal S_b.
\end{equation}

For a finite-atlas pair over an original set $T$, compute its fibrewise section spaces in $\Spc$. Increase the inaccessible level to contain the input presentations. Mapping complexes and the countably many limit conditions defining dependent sections then have cardinality below that inaccessible cardinal. The section spaces are finitely truncated, with a bound given by the target family. For each $t\in T$, choose one representative of each component of its section space and form their disjoint union $W$ over $T$. This is an original set, with an evaluation into the given family. A section parametrized by an original $S\to T$ determines a component at each point of $S$. Choice in the working external universe selects its representative
and a homotopy to that section. This proves $\mathrm{CF}_{<\infty}$ with the identity source cover.

For each $(i,n)$ choose a set $V_{i,n}$ of representatives of the $n$-truncated spaces admitting simplicial presentations below $\kappa_i$. There are at most $\kappa_i$ such structures on standardized underlying sets, so this is an original set at the next cardinal level. The tautological family $P_{i,n}\to hV_{i,n}$ has total presentation size at most $\kappa_i$ and finite truncation, hence is a finite-atlas family by~\eqref{eq:lc-app-space-completion} and Lemma~\ref{lem:lc-atlas}. Its generated local class consists exactly of the $n$-truncated maps with $\kappa_i$-small fibres: an original atlas is a discrete set, so the required fibrewise choices give the source-local presentation, and the converse follows from pullback stability and locality of size and truncation.

The classes increase with both bounds and form a directed, exhaustive family; each class is stable under pullback. They contain identities. Composition preserves them because a $\kappa_i$-small homotopy sum of $\kappa_i$-small fibres is $\kappa_i$-small and $n$-truncated maps compose. Relative diagonals and image inclusions have identity-space and propositional fibres, respectively, which remain small. For a groupoid over a base whose terms lie in this class, compute its quotient fibrewise. The same countable-size estimate gives $\kappa_i$-small fibres; if the quotient is $n$-truncated, it lies in the class. This proves the bounded quotient clause of (H2).

For two maps at level $(i,n)$, each fibrewise section space is $n$-truncated and $\kappa_i$-small: it is computed by mapping complexes in a slice of $\kappa_i$-small simplicial sets, and strong inaccessibility bounds all their simplices and countable limit conditions. Choosing representatives of its components gives an original code family whose structure map has $\kappa_i$-small discrete fibres when $n\ge0$. For $n=-1$, take the subset of the original base on which the section proposition is inhabited and choose its section, whose space of choices is contractible. This structure map is monic. Thus (H3) holds in every level. All propositional levels consist of all monomorphisms, and one is chosen as the distinguished level, proving (H1).

The completion theorem now gives the products and closed univalent classifiers for these classes. Alternatively, the classifier in $\Spc$ of $\kappa_i$-small $n$-types is an $(n+1)$-type with a presentation of size at most $\kappa_i$, so it and its universal total family belong to $\mathcal S_b$ at the next cardinal level. Products of two level-$(i,n)$ maps have the small, $n$-truncated fibres just calculated; increasing the cardinal level to contain their arbitrary base keeps the total space in $\mathcal S_b$. They provide weak section families at that same level. The universal families give weak universe presentations for the completed codomain doctrine. Codomain sums satisfy the pullback and cartesianness conditions of Definition~\ref{def:doctrine}, as checked in Appendix~\ref{app:regular-completion}. Theorem~\ref{thm:controlled-source} therefore verifies the five conditions for universe closure for the source codomain doctrine. Its completed hierarchy is closed and quotient-stable, and both the source and completed codomain doctrines are $\infty$-triposes.

Finally, the nerve of $\mathbb Z/2$ is an internal groupoid in $\C_{\mathrm{set}}$. Its effective quotient in the completion is $B(\mathbb Z/2)$, with loop group $\mathbb Z/2$. This higher type lies in $\Ex(\C_{\mathrm{set}})$ and witnesses the strict enlargement of the ordinary exact source.
\qed\end{proof}
Inaccessible cardinals provide the size hierarchy that realizes the abstract coding and closure hypotheses in this example. For the assembly construction, \mbox{Appendix}~\ref{app:realizability-hierarchy} uses tracked presentations to verify $\mathrm{CF}_{<\infty}$ and (H1)--(H3), and hence to construct the closed bounded hierarchy.

\subsection{A Higher Source Enlarged by Exact Completion}\label{app:higher-nonexact}
The following product construction uses actual universal families in the category
of bounded spaces and original-parameter presentations in the category of bounded
sets. The proof verifies the cover-refinement hypothesis used to compare the exact
completions.

\begin{proposition}[A higher source enlarged by exact completion]\label{prop:higher-nonexact}
Let $\C_{\mathrm{set}}$ and $\mathcal S_b=\Ex(\C_{\mathrm{set}})$
be the bounded-set source and bounded-space category of
Example~\ref{ex:spaces}. The category
\[
 \C=\mathcal S_b\times\C_{\mathrm{set}}
\]
is a bounded regular higher source whose exact completion strictly enlarges it:
\[
 \Ex(\C)\simeq\mathcal S_b\times\mathcal S_b.
\]
It admits universe presentations satisfying (H1)--(H3),
and its codomain doctrine is an $\infty$-tripos.
\end{proposition}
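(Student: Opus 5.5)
The plan is to work factorwise throughout, using that regular structure, covers, sheaves and exact completion all decompose along a product of small regular $\infty$-categories.

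\emph{Step 1: regularity, boundedness and the completion.} First, $\C=\mathcal S_b\times\C_{\mathrm{set}}$ is regular, with finite limits, covers and \v{C}ech realizations computed componentwise. Every object is finitely truncated because each factor is bounded: $\mathcal S_b$ by construction, and $\C_{\mathrm{set}}$ as an ordinary category. Next I would show that the regular topology on $\C$ is the product topology, so that
\[
 \Sh(\C,\Spc)\simeq\Sh(\mathcal S_b,\Spc)\times\Sh(\C_{\mathrm{set}},\Spc).
\]
This rests on singleton covers of $(A,B)$ being exactly pairs of covers. Under this equivalence the representables are pairs of representables. Since groupoid realizations are computed componentwise, the closure of the representables under realizations is the product of the two closures. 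Hence
\[
 \Ex(\C)\simeq\Ex(\mathcal S_b)\times\Ex(\C_{\mathrm{set}})\simeq\mathcal S_b\times\mathcal S_b,
\]
using $\Ex(\mathcal S_b)\simeq\mathcal S_b$ (it is exact) and Proposition~\ref{prop:space-model}. A cleaner route is via the universal property~\eqref{eq:import-universal}: regular functors out of a product into an exact target are detected on the two factors through the zero-object-free inclusions $(A,1)$ and $(1,B)$. I would still check this via sheaves, since products of regular categories are not coproducts. Strict enlargement then follows because $(1,B(\mathbb Z/2))$ is not in the image of $j$: its second component is not $0$-truncated, while every second component coming from $\C$ is.

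\emph{Step 2: coherent fullness.} This also goes factorwise. A section problem over an original $(T_1,T_2)$ splits into one problem over each factor. For the first factor, $\mathcal S_b$ is exact and locally cartesian closed, so the product itself serves as the code with the identity cover, by the converse direction of Theorem~\ref{thm:relative-fullness} applied to the exact source. For the second factor, Proposition~\ref{prop:space-model} supplies the codes. The pair of these codes is then a code in $\C$.

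\emph{Step 3: universe presentations and (H1)--(H3).} For each level $(i,n)$, I would take
\[
 p_{i,n}=(u^{\mathcal S}_{i,n},\,p^{\mathrm{set}}_{i,n}),
\]
where the first component is the actual universal family of $\kappa_i$-small $n$-types in $\mathcal S_b$, whose base $U$ is an object of the source, and the second is the presentation of Proposition~\ref{prop:space-model}. Local classes are then products of local classes: a map in the product is locally a pullback exactly when each component is, after taking the product of the two covers. This gives (H1)--(H3) componentwise. The quotient clause of (H2) holds in the first factor by quotient stability of the bounded-space hierarchy.

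The main obstacle is the cover-refinement comparison. The (H3) codes and the locality tests must use source covers of source parameters, so every completed cover of an original object has to be refinable by an original cover in the product. In the first factor this is trivial, since $\mathcal S_b$ is its own completion. In the second factor it is the point-selection argument of Proposition~\ref{prop:space-model}. I expect the care to go into checking that these two refinements combine uniformly into a single cover in $\C$. Once that holds, Theorem~\ref{thm:controlled-source} makes $\Cod(\C)$ an $\infty$-tripos.
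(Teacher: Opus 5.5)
Your Steps 2 and 3 are the paper's argument. You use the same paired presentations $(u_i,p_i)$, with the actual bounded-space universe in the first factor. You use the same codes $(\Pi_{f_1}g_1,W_2)$, with the identity cover in the first factor; level membership of the first component uses same-level product closure in $\mathcal S_b$. You verify (H1)--(H3) componentwise, and you take $(1,BC_2)$ as the witness of strict enlargement.

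The problem is Step 1 as you intend to carry it out. The equivalence $\Sh(\C,\Spc)\simeq\Sh(\mathcal S_b,\Spc)\times\Sh(\C_{\mathrm{set}},\Spc)$ is false. Sheaves on a product site form the product of the two $\infty$-topoi, which is a tensor product. The cartesian product of the two sheaf categories is instead sheaves on the disjoint union of the sites. Already for two copies of the terminal (regular) category you get $\Spc$ on one side and $\Spc\times\Spc$ on the other. So the closure of representables cannot be computed componentwise inside a product sheaf category, and this check would fail.

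The universal-property route you set aside is the correct one, and your reason for distrusting it is backwards. For regular categories and regular functors, finite products \emph{are} coproducts. A finite-limit-preserving $F$ on $\C_1\times\C_2$ satisfies $F(A,B)\simeq F(A,1)\times F(1,B)$. Conversely, $(F_1,F_2)\mapsto F_1(-)\times F_2(-)$ is regular, since $p_1\times p_2$ is a composite of two pullbacks of covers. Now $\mathcal S_b\times\mathcal S_b$ is exact, and for exact $\D$ restriction along $\id\times j$ gives
\[
\Fun_{\mathrm{reg}}(\mathcal S_b\times\mathcal S_b,\D)\simeq\Fun_{\mathrm{reg}}(\mathcal S_b,\D)\times\Fun_{\mathrm{reg}}(\C_{\mathrm{set}},\D)\simeq\Fun_{\mathrm{reg}}(\C,\D),
\]
using \eqref{eq:import-universal} for $\C_{\mathrm{set}}$. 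Hence $\mathcal S_b\times\mathcal S_b$ has the universal property of $\Ex(\C)$.

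This differs from the paper's route. The paper applies the cover-refinement criterion of \cite{Stefanich23} to the full regular inclusion $\C\into\mathcal S_b\times\mathcal S_b$: it refines a cover $(X_1,X_2)\onto(Y_1,Y_2)$ by $(X_1,A_2)$, with $A_2\onto X_2$ a discrete atlas, to get full faithfulness. It then proves essential surjectivity by pairing a groupoid presentation of $B$ with the constant groupoid on $A$. Your route is shorter and needs no external comparison criterion; the paper's exhibits explicit presentations of every object.

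The refinement you single out as the main obstacle is exactly the paper's Step 1 ingredient. It is not needed for (H1)--(H3), since Lemma~\ref{lem:local-recoding} identifies source-tested and completed local classes for any regular source.
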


\begin{proof}
Put $\D=\mathcal S_b\times\mathcal S_b$. The inclusion
$\C=\mathcal S_b\times\C_{\mathrm{set}}\into\D$ is full and regular.
It satisfies the cover-refinement hypothesis of
\cite{Stefanich23}. Indeed, given a cover
$(X_1,X_2)\onto(Y_1,Y_2)$ in $\D$ with codomain in $\C$, choose a bounded
discrete atlas $A_2\onto X_2$. Then $(X_1,A_2)$ belongs to $\C$, covers
$(X_1,X_2)$, and its composite to $(Y_1,Y_2)$ is a cover. This exact-completion comparison criterion
therefore makes the regular extension $\Ex(\C)\to\D$ fully faithful.
Its image is all of $\D$: fix $A\in\mathcal S_b$ and form a groupoid-quotient
presentation of any $B\in\mathcal S_b$ from bounded discrete sets. Pair
every stage with the constant identity groupoid on $A$. Its realization is
$A$, so induction through the presentation constructs $(A,B)$ from objects
of $\C$. Hence $\Ex(\C)\simeq\D$.

Use the common size-and-height index set of Example~\ref{ex:spaces}.
Let $u_i:\widetilde U_i\to U_i$ be the actual universal family in the
first factor, and let $p_i:P_i\to V_i$ be its original-parameter weak
presentation in the second factor, where $V_i\in\C_{\mathrm{set}}$.
The pair $(u_i,p_i)$ has original parameter object $(U_i,V_i)\in\C$.
Its local class is the product of the two level classes: covers and pullbacks
are componentwise, and local classifying maps can be chosen separately and
combined by the product cover. Thus monotonicity, exhaustiveness, and the
propositional clause (H1), and all the closure operations in (H2), follow
componentwise from Proposition~\ref{prop:space-model}.
For a pair over an original $(T_1,T_2)$ in this level, take the code
\[
 (\Pi_{f_1}g_1,W_2)\longrightarrow(T_1,T_2),
\]
where the first component is the actual dependent product in $\mathcal S_b$
and the second is the original code given by (H3) for bounded sets. This
object is original and its map lies in the same level. A section is
represented exactly in the first factor and locally in the second. The
product of the identity cover in the first factor with the second source
cover proves source-local adequacy. This verifies (H3). In particular the
section-coding condition holds: (H1) places the two maps of any pair in a
common level, and (H3) gives its code. For $Z\to Y$, whose codomain
is presented, one may first put $Y\to hT$ and $Z\to hT$ in a common level;
its graph is a pullback of $\Delta_{Y/hT}$, and composition gives
$Z\to Y$ in that level. Corollary~\ref{cor:lc-universes} now gives the
claimed $\infty$-tripos structure on $\Cod(\C)$.

The first factor has higher mapping spaces; for example,
\[
 \Map_{\mathcal S_b}(1,BC_2)\simeq BC_2.
\]
The internal group $(1,C_2)$ has bar quotient $(1,BC_2)$ in $\D$.
Its second component has loop group $C_2$, whereas second components
of source objects are discrete. Thus $(1,BC_2)$ witnesses the strict
enlargement of this higher source under exact completion.
\qed\end{proof}

\subsection{Finite Groupoids and Logical Closure}\label{app:finite-groups}
\begin{theorem}[Finite-group criterion for quotient closure]\label{thm:finite-group-separation}
Let $\mathfrak F$ be an isomorphism-invariant class of finite groups containing
the trivial group and closed under subgroups and extensions. In the bounded
space category of Example~\ref{ex:spaces}, let $\M_{\mathfrak F}$ consist of
maps with fibres $X$ satisfying
\[
 |\pi_0X|<\infty,\qquad \pi_1(X,x)\in\mathfrak F,\qquad
 \pi_k(X,x)=0\quad(k\ge2).
\]
The class has a univalent classifier, contains identities, and is closed under image inclusions,
relative diagonals, dependent sums, and same-level dependent products.
It is quotient-stable at bound $1$ if and only if $\mathfrak F$ contains every
finite group.
\end{theorem}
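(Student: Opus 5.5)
The plan is to work entirely in the bounded-space model $\mathcal S_b\simeq\Ex(\C_{\mathrm{set}})$ of Proposition~\ref{prop:space-model}. There every fibre condition can be checked pointwise on homotopy groups, and long exact sequences of fibrations do most of the work. First I would note that $\M_{\mathfrak F}$ is contained in the level $(i,1)$ of Example~\ref{ex:spaces} for every $i$: a $1$-type with finitely many components and finite fundamental groups has a countable simplicial presentation. The membership condition is isomorphism-invariant and depends only on the fibre. So $\M_{\mathfrak F}$ is a full replete local class, cut out of the level-$(i,1)$ classes by a condition on fibres.

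\emph{Univalent classifier.} In the ambient classifier $U_{i,1}$ of small $1$-types, the points whose fibre satisfies the three conditions form a union of components, hence a subobject $U_{\mathfrak F}\into U_{i,1}$. Restricting the universal family to $U_{\mathfrak F}$ gives a univalent family classifying exactly $\M_{\mathfrak F}$. Univalence is inherited because the inclusion of a union of components is monic, which is the image-factorization argument of Theorem~\ref{thm:univalent-completion} and Remark~\ref{rem:fragment}. Alternatively, I would take $p=$ the tautological family over a set of representatives and apply Theorem~\ref{thm:univalent-completion} directly.

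\emph{Closure properties.} Identities have contractible fibres. Image inclusions have fibres that are empty or contractible, and the trivial group lies in $\mathfrak F$. For a relative diagonal, the fibre over $(x,y)$ is a path space of a $1$-type. This is a discrete set, either empty or a torsor for a finite $\pi_1$, so it is finite with trivial fundamental groups.

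For dependent sums (composition), the fibre $E$ of a composite fibres over the fibre $X$ of the first map, with fibres $F$ in the class. Using $\pi_2X=0$, the long exact sequence gives
\[
 1\to\pi_1(F,e)\to\pi_1(E,e)\to H\to1,
\]
where $H\le\pi_1(X,x)$ is the stabilizer of the component of $e$. Closure of $\mathfrak F$ under subgroups and extensions then gives $\pi_1(E,e)\in\mathfrak F$. The set $\pi_0E$ is finite, since it is a finite union of orbit sets of finite sets. Finally, $\pi_k(E)=0$ for $k\ge2$.

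For dependent products, I would decompose the base fibre $X$ into its finitely many components $BG$ with $G$ finite. This reduces the problem to a single component $BG$ with $G\in\mathfrak F$, where the section space is the groupoid of $G$-equivariant objects in a finite groupoid $F$. This groupoid has finitely many isomorphism classes, because there are finitely many choices of object and of structure isomorphisms. The automorphism group of a section injects into $\operatorname{Aut}_F(y)$ by evaluation at the basepoint, so it is a subgroup of a group in $\mathfrak F$. The section space is $1$-truncated because $F$ is. I expect this to be the main obstacle: one must make sure the injection into $\operatorname{Aut}_F(y)$ is an honest consequence of evaluation at the basepoint of $BG$ being $0$-connected, which holds because two equivariant maps agreeing at the basepoint are determined there. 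Then $\mathfrak F$ needs only subgroup closure here, and finite products over the components of $X$ need extension closure.

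\emph{Quotient stability.} Suppose $\mathfrak F$ contains all finite groups, and let $R_\bullet$ be a groupoid over $X$ with terms in the class and a $1$-truncated quotient $Q$. Working fibrewise, $\pi_0Q$ is a quotient of the finite set $\pi_0R_0$. Effectiveness gives $R_1\simeq R_0\times_QR_0$, so the fibre of $R_1\to R_0\times R_0$ over $(x,x)$ is $\Omega_xQ\simeq\pi_1(Q,x)$, a discrete set. The long exact sequence for $R_1\to R_0\times R_0$, with finite $\pi_0R_1$ and finite $\pi_1R_0$, shows that this fibre has finitely many points. Hence $\pi_1(Q,x)$ is a finite group and lies in $\mathfrak F$. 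Conversely, if a finite group $H\notin\mathfrak F$, its bar groupoid has terms $H^k$, which are finite discrete sets in the class. Its $1$-truncated quotient $BH$ has $\pi_1=H$, so $BH$ is not in the class. This is exactly the $BC_3$ argument of Corollary~\ref{cor:quotient-separation}.
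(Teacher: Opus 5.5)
Your proposal is correct and follows the paper's proof essentially step by step. The shared steps are these: fibrewise checks in bounded spaces; the extension $1\to\pi_1(F)\to\pi_1(E)\to H\to1$ for dependent sums; injectivity of evaluation on automorphisms of sections, combined with subgroup and extension closure, for products; finiteness of $\Omega_xQ$ as a discrete homotopy fibre of $R_1\to R_0\times R_0$; and the bar groupoid of a finite group outside $\mathfrak F$ for the converse. The variations are minor. You obtain the classifier mainly by restricting the ambient level-$(i,1)$ classifier to a union of components, whereas the paper applies Theorem~\ref{thm:univalent-completion} to a tautological family of finite groupoids, which is your stated alternative. You also evaluate sections one component $BG$ at a time, rather than at all objects of a global finite groupoid model.
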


\begin{proof}
Every fibre in the class is represented by a finite groupoid: choose one object
in each component and use its finite isotropy group. The collection of such
finite groupoid presentations is a set. Its tautological family has
$1$-truncated fibres and has a presentation within one of the size bounds of
Example~\ref{ex:spaces}. After pulling back along an original discrete atlas of the base, a family
in the class is a collection of these groupoids, and external choice selects
its codes and fibre equivalences. Thus the tautological family has local class
$\M_{\mathfrak F}$. Theorem~\ref{thm:univalent-completion} gives its univalent
classifier. The empty groupoid and terminal groupoid are included.

It suffices to check the operations fibrewise in spaces. The class contains
identities; image fibres are empty or contractible. Path spaces of a finite
groupoid are finite sets, proving relative diagonal closure.
For dependent sums, let $Z\to Y$ have fibres in the class and let $Y$ itself
belong to the class. Its total space is a $1$-type. Its components are finite
in number. At a chosen point, the homotopy sequence gives an extension
\[
 1\longrightarrow\pi_1(Z_y,z)\longrightarrow\pi_1(Z,z)
       \longrightarrow H\longrightarrow1,
 \qquad H\le\pi_1(Y,y).
\]
The first injection uses $\pi_2Y=0$; the image $H$ is the stabilizer of the
chosen fibre component. Subgroup and extension closure put the total
fundamental group in $\mathfrak F$. It is finite, giving the required finite
groupoid type.

For dependent products, use a finite groupoid model for $Y$. The family is
represented by a finite isofibration of finite groupoids: its Grothendieck
construction has finitely many objects and arrows, after choosing finite
models for each fibre. Its section groupoid therefore has finitely many
objects and arrows. For a section $s$, evaluation on the finitely many objects
of $Y$ gives an injective homomorphism
\[
 \Aut(s)\into\prod_{y\in\operatorname{Ob}Y}\Aut_{Z_y}(s(y)).
\]
A finite product of groups in $\mathfrak F$ is in $\mathfrak F$, since such a
product is obtained by successive extensions. Subgroup closure therefore puts
$\Aut(s)$ in $\mathfrak F$. The section groupoid belongs to the same class,
proving same-level dependent-product closure. Fibrewise calculation and
Beck--Chevalley give these assertions over arbitrary bounded bases.

If the class is quotient-stable at bound $1$, take the bar groupoid of an
arbitrary finite group $G$. Its terms $G^k$ are finite discrete sets and hence
belong to the class. The quotient $BG$ is $1$-truncated, so quotient stability
implies $G=\pi_1(BG)\in\mathfrak F$. Thus $\mathfrak F$ contains every finite group.

Conversely, suppose $\mathfrak F$ contains all finite groups. Then the class
is exactly that of $\pi$-finite $1$-types. Here $\pi$-finite means finitely truncated with finite $\pi_0$ and finite homotopy groups at every basepoint.
Let $R_\bullet$ be a groupoid of such
objects and suppose $Q=|R_\bullet|$ is $1$-truncated. Its component set is a
quotient of the finite set $\pi_0R_0$, hence is finite. Choose a lift in $R_0$
of a point of $Q$. Its loop space is a homotopy fibre of
$R_1\to R_0\times R_0$. Homotopy fibres of maps between $\pi$-finite $1$-types are
$\pi$-finite, by the homotopy exact sequence. Here the loop space is also
$0$-truncated because $Q$ is a $1$-type, so it is a finite set. Consequently
all isotropy groups of $Q$ are finite and $Q$ is in the class. The same
argument in each fibre proves the relative quotient-stability assertion.
\qed\end{proof}

\Needspace{6\baselineskip}
We apply the finite-group criterion to finite $2$-groups to obtain the example in Corollary~\ref{cor:quotient-separation}.

\begin{proof}[of Corollary~\ref{cor:quotient-separation}]
Finite $2$-groups contain the trivial group and are closed under
subgroups and extensions. Theorem~\ref{thm:finite-group-separation}
therefore gives the closed univalent level. Insert this level in the
bounded-space hierarchy below any fixed sufficiently large level of small
$1$-types. The resulting index set remains directed, and the existing
levels remain exhaustive. Theorem~\ref{thm:generic-tests} therefore gives
an $\infty$-tripos with this additional level.

Each $C_3^k$ is a finite discrete space with trivial fundamental groups,
so every term of the bar groupoid of $C_3$ belongs to the finite-$2$-group
level. Its quotient $BC_3$ is $1$-truncated and has fundamental group
$C_3$, of order $3$. Thus this quotient lies strictly beyond that level
while retaining its truncation bound, and belongs to a larger level of
the exhaustive hierarchy.
\qed\end{proof}

\section{The Conservative Realizability Hierarchy}\label{app:realizability-hierarchy}
This appendix proves the assembly-source lemma (Lemma~\ref{real:ordinary-source}) and uses it to prove the conservative realizability hierarchy (Theorem~\ref{thm:real-hierarchy}). We use the source, size convention, and classes of Section~\ref{sec:realization}. Put
\(\HH=\Sh(\Asm,\Spc)\); the embedding \(\E\into\HH\) is full,
regular, and preserves the effective groupoid quotients used below.
Carrier representatives and projective lifts are selected in the
working external universe. Every assembly map used in the
presentations admits a numerical tracker.

\subsection{The Assembly Source and Its Projective Cover}\label{app:assembly-source}
\begin{proof}[of Lemma~\ref{real:ordinary-source}]
Write $\rho_A(a)\subseteq\NN$ for the nonempty set of realizers of
an element of an assembly $A$.  Fix recursive coding of pairs and
Kleene application $e\cdot r$. For partial computations, $a\downarrow$ means defined, and $a\simeq b$ means both are undefined or have the same value (Kleene equality).
A morphism $u:A\to B$ is a carrier
function for which some index $d$ satisfies
$d\cdot r\mathord\downarrow\in\rho_B(u(a))$ whenever
$r\in\rho_A(a)$.  The image of such a morphism has underlying image
set and realizer set
\[
 \rho_{\operatorname{im}(u)}(b)
       =\bigcup_{u(a)=b}\rho_A(a).
\]
The inclusion in $B$ is tracked by $d$ and the map from $A$ to its
image is tracked by the identity.  These are the usual stable image
factorizations of assemblies.  More explicitly, a morphism $e:E\to B$
is a regular epimorphism precisely when there is an index $c$ such that
for every $b\in B$ and $r\in\rho_B(b)$,
\[
 c\cdot r\mathord\downarrow,\qquad
 \exists a\in E\;\bigl(e(a)=b\ \text{and}\
                          c\cdot r\in\rho_E(a)\bigr).
\]
This criterion is stable under pullback, by pairing an unchanged name
of the other pullback component with the name produced by $c$.
It also gives the coequalizer of the kernel pair: a map constant on
its carrier fibres factors through the underlying surjection, and
its tracker is composed with $c$.  All finite limits and images have
carriers within one of the stated bounds, so the same arguments apply
to this size-stratified category.

For a partitioned assembly $P$, let $\nu_P(p)$ denote the unique
realizer of $p$.  A terminal partitioned assembly is a singleton with name
$0$.  Products use paired names; equalizers restrict the carrier and
retain the original names.  These constructions are also the corresponding
limits in $\Asm$, so $\PAsm$ is closed under finite limits.
Every assembly $A$ has the regular-epimorphism cover
\[
 \widehat A=\{(a,r):r\in\rho_A(a)\}\longrightarrow A,
 \qquad \nu_{\widehat A}(a,r)=r.
\]
Its carrier remains below the same inaccessible bound as $A$.
Given a regular epimorphism $e:E\to B$, a map $u:P\to B$
tracked by $d$, and a cover tracker $c$ as above, fix a
well-ordering of the carrier of $E$, which exists by its stated
cardinal bound. For each $p\in P$, let $a_p$ be the least element
satisfying
\[
 e(a_p)=u(p),\qquad
 c\cdot(d\cdot\nu_P(p))\in\rho_E(a_p).
\]
The cover condition ensures that such an element exists.
The resulting carrier function is tracked by the single composite
program $c\circ d$. It is therefore a lift in $\Asm$, proving
projectivity.  This is the
standard projective presentation of realizability; see, for example,
\cite{FreyPA18} and the assembly constructions
in~\cite{BergMoerdijk08}.

Now fix $Z\xrightarrow gY\xrightarrow fX$ in $\Asm$.
For $x\in X$ put $Y_x=\{y:f(y)=x\}$.  Let the carrier of $W$
consist of pairs $(x,s)$, where $s:Y_x\to Z$ is a carrier section
of $g$ and admits an index $e$ such that
\[
 \forall y\in Y_x\;\forall b\in\rho_Y(y),\qquad
       e\cdot b\mathord\downarrow\in\rho_Z(s(y)).
\]
Give $(x,s)$ all realizers $\langle r,e\rangle$ with
$r\in\rho_X(x)$ and $e$ satisfying the displayed condition.
Thus every tracker of a section contributes a valid name.  Projection to $X$ is tracked by the first projection, and
the evaluation $W\times_XY\to Z$ is tracked by
\[
       (\langle r,e\rangle,b)\longmapsto e\cdot b.
\]
If $h:T\to X$ has tracker $a$, and a section
$s:T\times_XY\to Z$ has tracker $d$, the parameter theorem for
partial recursive functions gives a total recursive operation
$\sigma$ satisfying
\[
       \sigma(d,t)\cdot b\simeq d\cdot\langle t,b\rangle.
\]
For $t\in\rho_T(t_0)$, the index $\sigma(d,t)$ therefore tracks
the fibre section $y\mapsto s(t_0,y)$.  The carrier function
\[
       t_0\longmapsto\bigl(h(t_0),s(t_0,-)\bigr)
\]
is tracked by the single program
$t\mapsto\langle a\cdot t,\sigma(d,t)\rangle$.
Its evaluation is the given section, and the carrier function is
unique.  This proves the dependent-product adjunction, hence also
weak adequacy over the original parameter.

Finally choose one $\kappa_i$ strictly larger than the cardinalities of the carriers of
$X,Y,Z$.  The carrier of $W$ is a subset of the disjoint union, over
$x\in X$, of the sets of functions $Y_x\to Z$.  Strong
inaccessibility makes this union smaller than $\kappa_i$.
Realizer sets are subsets of $\NN$, so the carrier bound is retained.  Thus the construction remains in $\Asm$.
\qed\end{proof}

\begin{lemma}[Comparison of realizability completions]\label{real:completion-comparison}
With the same external size bounds on both sides, the regular embedding $\Asm(K_1)\into\Eff$ induces an equivalence
\[
 \Ex(\Asm(K_1))\simeq\Ex(\Eff).
\]
\end{lemma}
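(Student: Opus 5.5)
We prove the lemma with a general comparison principle: a full regular inclusion $\iota:\C\into\D$ of regular $\infty$-categories, such that every object of $\D$ is covered by an object of $\C$, induces an equivalence $\Ex(\C)\simeq\Ex(\D)$. We apply it to $\iota:\Asm\into\Eff$, where the common size bound makes both categories small in the same external universe.

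First, $\iota$ is fully faithful and regular, since $\Eff=\Exone(\Asm)$ is the ordinary exact/regular completion and its unit has both properties. Moreover, every object of $\Eff$ is a quotient of an assembly: each object is a coequalizer of an equivalence relation on some $\iota A$, so $\iota A\onto X$ is a cover. Pulling a cover in $\Eff$ back along such atlases shows that every cover with codomain in $\Asm$ is refined by a cover from an assembly. This is exactly the cover-refinement hypothesis of~\cite{Stefanich23}, already used in Proposition~\ref{prop:higher-nonexact}.

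Next, the regular extension $\Ex(\Asm)\to\Ex(\Eff)$ of $j_{\Eff}\circ\iota$ exists by~\eqref{eq:import-universal}, because $\Ex(\Eff)$ is exact. The cover-refinement criterion makes this extension fully faithful. Concretely, the restriction $\Sh(\Eff,\Spc)\to\Sh(\Asm,\Spc)$ along $\iota$ is an equivalence by the comparison lemma: $\Asm$ is a dense full subsite, and the refinement property shows that the induced topology is the regular one. This equivalence carries the representables $h_{\Eff}\iota A$ to $h_{\Asm}A$ and preserves groupoid realizations, hence restricts to a full embedding between the groupoid-realization closures.

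For essential surjectivity, it suffices to show that the image contains every $j_{\Eff}X$, since $\Ex(\Eff)$ is generated by these objects under groupoid realizations and the image is closed under them. Choose $\iota A\onto X$. Its kernel $\iota A\times_X\iota A$ is a subobject of $\iota(A\times A)$ in the ordinary category $\Eff$, and any such subobject is the image of a map of assemblies (by covering it with an assembly). Since the Čech nerve of $\iota A\onto X$ is $0$-truncated, it is built from finite limits of these objects, all of which lie in the image. Its realization is $X$ because covers are effective in $\Eff$ and the embedding $j_{\Eff}$ preserves quotients (Lemma~\ref{lem:quotient-truncation}).

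The main obstacle is the density/comparison-lemma step. We must verify that the sheaf condition for the regular topology on $\Eff$ is determined by its restriction to $\Asm$, including the size conventions. We would reduce this to the refinement property together with the fact that covering sieves in $\Eff$ are generated by single covers, so that both sheaf conditions are computed by the same Čech nerves of assemblies.
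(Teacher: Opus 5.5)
Your argument is correct for this lemma, but it takes a different route from the paper. The paper never touches sites. It observes that a finite-limit-preserving functor from an ordinary category into an exact $\infty$-category $\D$ lands in $\D_{\le0}$, which is an ordinary exact category. Combining \eqref{eq:import-universal} with the universal property of $\Exone$ then gives, naturally in exact $\D$,
\[
 \Fun_{\mathrm{reg}}(\Ex(\Asm),\D)\simeq\Fun_{\mathrm{reg}}(\Asm,\D_{\le0})
 \simeq\Fun_{\mathrm{reg}}(\Eff,\D_{\le0})\simeq\Fun_{\mathrm{reg}}(\Ex(\Eff),\D),
\]
and Yoneda concludes. You instead prove two things about the induced regular functor. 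Full faithfulness comes from the cover-refinement criterion, or concretely from $\Sh(\Eff,\Spc)\simeq\Sh(\Asm,\Spc)$. Essential surjectivity comes from realizing each $j_{\Eff}X$ as the quotient of the \v{C}ech nerve of an assembly cover.

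The formal route is shorter and needs neither a comparison lemma nor the external refinement criterion. Yours is more explicit. It also identifies the two ambient sheaf toposes, which the paper uses tacitly when it switches between $\Sh(\Asm,\Spc)$ and $\Sh(\Eff,\Spc)$, as in Lemma~\ref{lem:em-eq}.

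Three points to tighten.

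First, the obstacle you flag is real. The comparison lemma for non-hypercomplete $\infty$-sheaves needs more than density. It does go through here, because each term $\iota A\times_X\cdots\times_X\iota A$ of the \v{C}ech nerve of an assembly cover is a subobject of $\iota(A^{k+1})$. By regularity of $\iota$ it is therefore itself an assembly, which is the same fact you use for essential surjectivity; say so explicitly.

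Second, if you instead invoke the refinement criterion in the form used in Proposition~\ref{prop:higher-nonexact}, you must check refinement for covers in the exact target $\Ex(\Eff)$, not in $\Eff$. This follows from original atlases, but it is a separate step.

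Third, your opening ``general comparison principle'' is false as stated for higher $\D$. Small sets include into small spaces with no truncation bound as a full regular inclusion, and every space is covered by a set. Yet $\Ex(\C_{\mathrm{set}})$ is the bounded spaces of Example~\ref{ex:spaces}, while small spaces are already exact and so are their own completion. Your essential-surjectivity step genuinely uses that $\Eff$ is an ordinary category, so state the principle only in that setting.
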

\begin{proof}
To see the displayed equivalence of higher completions directly, a finite-limit-preserving functor from an ordinary category to an exact $\infty$-category lands in its $0$-truncated part. That part is an ordinary exact category. The ordinary exact/regular completion universal property and~\eqref{eq:import-universal} then identify the two higher completions. The composite embedding of assemblies is fully faithful.
\qed\end{proof}

\begin{proposition}[A higher realizability type]\label{real:bc2}
Let \(C_2\) be the numerically named two-element group assembly. Its group nerve has an effective quotient \(BC_2\) in \(\E\), and its canonical point satisfies \(1\times_{BC_2}1\simeq C_2\). Its fundamental group is therefore \(\pi_1(BC_2,*)\cong C_2\).
\end{proposition}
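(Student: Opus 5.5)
The plan follows the proof of Proposition~\ref{prop:source-generics-obstruction}, now specialized to $G=C_2$ in $\Asm$, and checks the identification of the loop object directly. First, $C_2$ is an internal group in $\Asm$: its carrier is $\{0,1\}$ with realizers $\{0\},\{1\}$, and its multiplication and inverse are tracked by the recursive addition mod $2$ and the identity. Applying $j$ gives an internal group $jC_2$ in $\E$, since $j$ preserves finite limits. Its bar nerve $N_\bullet$, with $N_k=jC_2^{\,k}$, is therefore an internal groupoid in $\E$. Because $\E$ is exact, this groupoid has an effective quotient $BC_2=|N_\bullet|$, and effectiveness identifies the \v{C}ech nerve of the augmentation $1=N_0\onto BC_2$ with $N_\bullet$.

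In particular, effectiveness gives $1\times_{BC_2}1\simeq N_1=jC_2$. I would also check that the induced group structure on this loop object, namely path composition, agrees with the multiplication of $jC_2$. This follows from the Segal identification $N_2\simeq N_1\times_{N_0}N_1$ together with the face $d_1$, which is how the effective quotient recovers the groupoid composition. Since $\E\into\HH$ preserves this quotient (Lemma~\ref{lem:quotient-truncation}), the same statement holds in the ambient sheaf topos, where it is the standard bar-delooping fact.

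For the fundamental group, I would compute $\pi_1(BC_2,*)$ as $\pi_0$ of the loop object, taken as a sheaf of groups, or equivalently on global sections via $\Gamma$. Since $jC_2$ is $0$-truncated, $\pi_0(jC_2)\cong C_2$ with the group structure identified in the previous step. For the global-section reading, $\Gamma(jC_2)=\Map_\Asm(1,C_2)$ has two elements: both constant maps are tracked, because the realizers are numerals.

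The main subtlety is the claim that $BC_2$ is a genuinely new higher type, which is asserted in the surrounding text rather than in the statement. It is handled by the argument of Proposition~\ref{prop:source-generics-obstruction}. If $BC_2$ were $0$-truncated, its loop object would be a proposition with a point and hence terminal, which would force $C_2$ to be trivial by the full faithfulness of $j$. Otherwise the proof is routine, and the only point that needs care is that the trackers are uniform.
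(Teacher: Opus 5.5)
Your proposal is correct and follows the paper's own proof. Exactness of $\E$ gives the effective quotient of the bar nerve of $jC_2$, effectiveness identifies $1\times_{BC_2}1$ with $N_1=jC_2$, and since this loop object is a discrete group with two global elements, $\pi_1(BC_2,*)\cong C_2$. Your extra checks, namely compatibility of path composition with the group law via the Segal maps and non-$0$-truncatedness via the argument of Proposition~\ref{prop:source-generics-obstruction}, are sound elaborations of the same argument.
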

\begin{proof}
For the numerically named two-element group assembly \(C_2\),
effectiveness gives \(1\times_{BC_2}1\simeq C_2\).
Thus the based loop object is the discrete group \(C_2\), so
\(\pi_1(BC_2,*)\cong C_2\), with its two distinct global elements.
\qed\end{proof}

Lemma~\ref{pt:projectives} carries this projective cover into \(\E\),
and every object of \(\E\) has an original partitioned cover.
Theorem~\ref{thm:pt-completion} and Corollary~\ref{pt:source-sc}
therefore give the general section-code transfer for the assembly source.
The argument below equips this completion with uniform size bounds, weak
universe families, and same-level closure.

\subsection{Locality and Closure of Small Projective Atlases}
\begin{lemma}[Locality, monotonicity, and lowering the bound]
\label{real:locality}
Each \(\mathcal S_{\kappa,n}\) is pullback-stable and local for covers
of the codomain.  The classes increase with \(\kappa\) and with \(n\).
If \(f\in\mathcal S_{\kappa,n}\) is \(m\)-truncated, where
\(-1\le m\le n\), then \(f\in\mathcal S_{\kappa,m}\).
\end{lemma}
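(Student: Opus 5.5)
We prove all three assertions by induction on \(n\), starting from \(n=-1\), following the recursive shape of Definition~\ref{real:def-small}. At \(n=-1\) the class is all monomorphisms of \(\E\). Monomorphisms are pullback-stable, and they descend along covers by Lemma~\ref{lem:exact-descent}, since pullback along a cover is conservative and detects the diagonal being an equivalence. Monotonicity in \(\kappa\) is vacuous at this level.

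For \(n\ge0\), there are two clauses to check: the diagonal clause and the atlas clause.

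\emph{Pullback stability.} Let \(f:Y\to X\) be in \(\mathcal S_{\kappa,n}\) and let \(X'\to X\) be arbitrary. The pulled-back map is \(n\)-truncated. Its relative diagonal is a pullback of \(\Delta_f\), so the induction hypothesis covers it. For a partitioned test \(P\to X'\), the composite \(P\to X'\to X\) is a test for \(f\), and \(Y'\times_{X'}P\simeq Y\times_XP\). The same \(\kappa\)-small projective atlas therefore works.

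\emph{Locality.} Let \(c:X_0\onto X\) be a cover with \(c^*f\in\mathcal S_{\kappa,n}\). Truncation and the diagonal clause descend by the induction hypothesis, because \(\Delta_{c^*f}\) is the pullback of \(\Delta_f\) along an induced cover. For a partitioned test \(P\to X\), form \(P\times_XX_0\) and cover it by a partitioned \(P'\). By Lemma~\ref{pt:projectives}, \(P\) is projective in \(\E\), so \(P\to X\) lifts through \(X_0\onto X\). This lets us take \(P'=P\) via a section, obtaining a test \(P\to X_0\) for \(c^*f\). The atlas it provides is then an atlas over \(P\) for \(f\), since \(Y\times_XP\simeq Y_0\times_{X_0}P\).

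I expect this lifting step to be the main subtlety. The statement asks for an atlas over the original \(P\) itself, not over a cover of it, and projectivity is exactly what removes that refinement.

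\emph{Monotonicity.} Monotonicity in \(\kappa\) is immediate from the cardinality condition together with induction on the diagonal clause. For monotonicity in \(n\), let \(f\in\mathcal S_{\kappa,n}\). Then \(f\) is \((n+1)\)-truncated, \(\Delta_f\in\mathcal S_{\kappa,n-1}\subseteq\mathcal S_{\kappa,n}\) by induction, and the atlases are unchanged.

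\emph{Lowering the bound.} Suppose \(f\in\mathcal S_{\kappa,n}\) is \(m\)-truncated with \(m<n\).

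\begin{itemize}
\item If \(m=-1\), the map \(f\) is monic, so it lies in \(\mathcal S_{\kappa,-1}\) by definition.
\item If \(m\ge0\), then \(\Delta_f\) is \((m-1)\)-truncated and lies in \(\mathcal S_{\kappa,n-1}\). Induction on \(n\) lowers it to \(\mathcal S_{\kappa,m-1}\). The same atlases witness the atlas clause, because that clause does not depend on the height. Hence \(f\in\mathcal S_{\kappa,m}\).
\end{itemize}
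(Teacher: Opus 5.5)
There is one genuine gap: the base case of monotonicity in \(n\). Your arguments for pullback stability, locality (including lifting the projective test \(P\to X\) through \(X_0\onto X\)), monotonicity in \(\kappa\), and lowering the bound are correct and follow the paper's proof.

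Your inductive step ``\(\Delta_f\in\mathcal S_{\kappa,n-1}\subseteq\mathcal S_{\kappa,n}\), and the atlases are unchanged'' works only for \(n\ge0\). The class \(\mathcal S_{\kappa,-1}\) is all monomorphisms, with no atlas clause, so a monomorphism has no atlases to carry over. Membership of a monomorphism \(Y\into X\) in \(\mathcal S_{\kappa,0}\) is a real size condition. For every partitioned test \(P\to X\) you need a partitioned cover of \(Y\times_XP\) whose carrier fibres over \(P\) have cardinality below \(\kappa\). Every later inclusion depends on this base case through the diagonal. For example, \(\mathcal S_{\kappa,0}\subseteq\mathcal S_{\kappa,1}\) requires the monic map \(\Delta_f\) to lie in \(\mathcal S_{\kappa,0}\). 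So as written, monotonicity in \(n\) is not established at any level.

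This is the one place where the realizability size data enter, and the paper supplies an explicit argument for it.
\begin{itemize}
\item The pullback \(Y\times_XP\into P\) is a subobject of an original assembly.
\item By subobject conservativity (Theorem~\ref{thm:pt-completion}(ii)) it is itself an original assembly, and its carrier embeds in the carrier of \(P\).
\item Its standard partitioned cover \(\{(a,r):r\in\rho(a)\}\) therefore has at most countably many elements over each point of \(P\).
\item Since \(\kappa\) is uncountable, this is a \(\kappa\)-small atlas.
\item The diagonal of a monomorphism is an equivalence, hence monic.
\end{itemize}
This places every monomorphism in \(\mathcal S_{\kappa,0}\). With this base case added, your induction on the diagonal gives all the remaining inclusions.
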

\begin{proof}
For pullback stability, a projective test of the new codomain is also a
test of the original codomain; apply induction to the diagonal.
For locality, let \(X'\onto X\) be a cover and suppose the pullback of
\(f\) belongs to the stated class.  Every projective \(P\to X\)
lifts through \(X'\): the pullback cover over \(P\) splits.  Thus its
pullback family has the required atlas.  The assertion for the diagonal
follows by induction, and truncation is local for covers.

Increasing \(\kappa\) is immediate.  To start the induction in \(n\),
pull a monomorphism back to a partitioned assembly \(P\).  A subobject
of an assembly in \(\Eff\) is an assembly, and its carrier embeds in
the carrier of \(P\).  Its partitioned cover has at most countably many
names over any element of \(P\).  Since \(\kappa>\aleph_0\), this is
a small atlas.  Its diagonal is monic.  The remaining inclusions follow
by induction on the diagonal.  Finally, the same atlases meet the requirement at the lower truncation bound; apply induction to the diagonal.
For \(m=-1\) the assertion is the definition.
\qed\end{proof}

\begin{proposition}[Closure properties]
\label{real:closures}
For every \(\kappa\) and \(n\ge-1\), the class
\(\mathcal S_{\kappa,n}\) contains identities and is closed under
composition, relative diagonals, and image inclusions.  Its objects in
each slice are closed under finite limits.  If \(R_\bullet\) is an
internal groupoid over \(X\), all \(R_k\to X\) belong to
\(\mathcal S_{\kappa,n}\), and \(|R_\bullet|\to X\) is
\(n\)-truncated, then this quotient map also belongs to
\(\mathcal S_{\kappa,n}\).
\end{proposition}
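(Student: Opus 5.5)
The plan is to argue by induction on \(n\ge-1\), treating each closure assertion in turn and using the recursive shape of Definition~\ref{real:def-small}: membership in \(\mathcal S_{\kappa,n}\) asks for a diagonal in \(\mathcal S_{\kappa,n-1}\) together with \(\kappa\)-small projective atlases over every partitioned test \(P\to X\). At level \(-1\) everything is the statement about monomorphisms. Monomorphisms contain identities, compose, and have equivalences as diagonals. They are their own image inclusions. A groupoid quotient which is \((-1)\)-truncated is monic by hypothesis.

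For \(n\ge0\), identities have diagonal an identity and atlas \(P\) itself with singleton fibres. For relative diagonals, \(\Delta_f\) already lies in \(\mathcal S_{\kappa,n-1}\) by definition, and Lemma~\ref{real:locality} raises the bound back to \(n\). Image inclusions are monic, so they fall under the base case and Lemma~\ref{real:locality}. For composition \(Z\xrightarrow gY\xrightarrow fX\), the diagonal of \(fg\) factors as \(\Delta_g\) followed by a pullback of \(\Delta_f\), so induction and pullback stability settle it. For the atlas, cover \(Y\times_XP\) by a small \(A\in\PAsm\), and then cover \(Z\times_YA\) by a small \(A'\onto Z\times_YA\). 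The carrier fibre of \(A'\to P\) is then a union, over fewer than \(\kappa\) points of \(A\), of sets of size \(<\kappa\). This is \(<\kappa\) because \(\kappa\) is regular (indeed strongly inaccessible). Finite limits in a slice then follow formally from Lemma~\ref{lem:small-limits}, once identities, pullbacks (Lemma~\ref{real:locality}), composition and diagonals are available.

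The main obstacle is the quotient clause. Let \(Q=|R_\bullet|\to X\) be \(n\)-truncated with \(R_k\to X\) small. For the atlas, \(R_0\times_XP\to Q\times_XP\) is a cover, so a small atlas of \(R_0\) over \(P\) serves as an atlas for \(Q\) over \(P\). For the diagonal, effectiveness identifies the pullback of \(\Delta_{Q/X}\) along \(R_0\times_XR_0\onto Q\times_XQ\) with \(R_1\to R_0\times_XR_0\). This map is \((n-1)\)-truncated. It lies in \(\mathcal S_{\kappa,n}\) by the diagonal and composition closures just proved, applied to maps between small objects over \(X\), so Lemma~\ref{real:locality} lowers it to \(\mathcal S_{\kappa,n-1}\). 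Locality for covers of the codomain then puts \(\Delta_{Q/X}\) itself in \(\mathcal S_{\kappa,n-1}\), which finishes the induction.

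The delicate point is the claim that a map \(R_1\to R_0\times_XR_0\) between small objects is small. This requires small atlases of the map relative to arbitrary partitioned tests over \(R_0\times_XR_0\), not only tests over \(X\). I would handle it by a cancellation property. If \(u:A\to B\) with \(A\to X\) and \(B\to X\) small, factor \(u\) as the graph \(A\to A\times_XB\), which is a pullback of \(\Delta_{B/X}\), followed by the projection, which is a pullback of \(A\to X\). Composition closure then applies, and this is available inductively at the same \(n\).
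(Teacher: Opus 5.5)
Your proposal is correct and follows the paper's proof essentially step for step. Both handle the base level by monomorphisms and prove composition with the two-stage atlas, using regularity of \(\kappa\) and the factorization of \(\Delta_{fg}\); for the quotient clause both take the atlas inherited from \(R_0\), identify the pulled-back diagonal with \(R_1\to R_0\times_XR_0\), make it small by the graph factorization, lower its bound by Lemma~\ref{real:locality}, and descend along the cover. The only difference is presentational: you place the quotient clause inside the induction on \(n\), while the paper proves it at each fixed level from the closures already established there, which is all your argument actually uses.
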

\begin{proof}
Identities and image inclusions are monomorphisms.  Diagonal closure is
part of the definition and Lemma~\ref{real:locality}.  We prove
composition closure by induction on \(n\); monomorphisms give the base
case.  For \(Z\xrightarrow{g}Y\xrightarrow{f}X\), choose, after a
projective test \(P\to X\), an atlas \(A\onto Y_P\), and then an
atlas \(B\onto Z\times_YA\).  Their composite covers \(Z_P\).
The carrier fibres of \(B\to P\) are unions of fewer than \(\kappa\)
sets of cardinality less than \(\kappa\), hence are small by regularity
of \(\kappa\).  The diagonal of \(fg\) factors as
\[
 Z\xrightarrow{\Delta_g}Z\times_YZ
       \longrightarrow Z\times_XZ,
\]
where the second arrow is a pullback of \(\Delta_f\).  Both arrows
belong to \(\mathcal S_{\kappa,n-1}\), so the induction applies.
Pullbacks, products, and equalizers in a slice are now obtained from
composition, pullback, and diagonal closure.  This gives all finite
limits, including their homotopy data.

We first note that any map \(A\to B\) between small objects over
\(X\) is small: factor it through its graph in \(A\times_XB\).
The graph is a pullback of \(\Delta_{B/X}\), and the projection is a
pullback of \(A\to X\).
Put \(Q=|R_\bullet|\).  The case \(n=-1\) follows from its assumed
truncation.  For \(n\ge0\), an atlas of \(R_0\) after a projective
base change provides an atlas of \(Q\).  On the cover
\(R_0\times_XR_0\onto Q\times_XQ\), the diagonal of \(Q/X\)
pulls back to
\(R_1\to R_0\times_XR_0\).  The graph observation makes this map
\(\mathcal S_{\kappa,n}\)-small.  It is \((n-1)\)-truncated because
\(Q/X\) is \(n\)-truncated.  Lower its bound using
Lemma~\ref{real:locality}, and descend along the cover.  This proves
the diagonal requirement for \(Q\), and hence the assertion.
\qed\end{proof}

\begin{lemma}[Exhaustiveness]
\label{real:exhaustive}
Every arrow of \(\E\) belongs to \(\mathcal S_{\kappa_i,n}\) for
some \(i\) and some finite \(n\ge-1\).
\end{lemma}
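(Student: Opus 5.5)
The plan is to reduce to arrows over partitioned bases and then induct on truncation height through diagonals, using the locality and closure results already established (Lemma~\ref{real:locality} and Proposition~\ref{real:closures}).

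First, the base. Let $q:A\to B$ be an arrow of $\E$. Every object of $\E$ has an original partitioned cover, so I choose $jP\onto B$ with $P\in\PAsm$. Since $\mathcal S_{\kappa,n}$ is local for covers of the codomain, it suffices to place $q|_{jP}$ in some class. The source is ordinary, so $\E$ is bounded (Lemma~\ref{lem:lc-atlas}) and $q$ is $n$-truncated for some finite $n$. I will prove by induction on $n\ge-1$ that every $n$-truncated arrow lies in $\mathcal S_{\kappa_i,n}$ for some $i$. Lemma~\ref{real:locality} shows the classes increase in $\kappa$, and only finitely many diagonal levels occur, so the largest of the finitely many indices obtained will serve for the whole arrow.

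Case $n=-1$ holds by definition. For $n\ge0$, the relative diagonal $\Delta_q$ is $(n-1)$-truncated. By induction it lies in $\mathcal S_{\kappa_{i'},n-1}$ for some $i'$; the induction is applied to arbitrary arrows of $\E$, including $\Delta_q$ over its own base $A\times_BA$. What remains is the atlas requirement. For every $P'\in\PAsm$ and every $P'\to B$, we need a partitioned cover of $A\times_BP'$ whose carrier fibres over $P'$ have cardinality below one fixed $\kappa_i$, uniformly in the test $P'$.

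I expect this uniformity to be the main obstacle. The tests $P'$ range over a proper class of partitioned assemblies, with no a priori bound on them, so fibre sizes must be bounded independently of $P'$. I would handle it as follows. Fix one partitioned atlas $jC\onto A\times_B jP$ over the chosen cover $P$, with $C$ an assembly whose carrier lies below some $\kappa_i$; fibres of $C\to P$ are then smaller than $\kappa_i$. A test $P'\to B$ lifts through $jP\onto B$ after a projective cover, and by projectivity of $P'$ in $\E$ (Lemma~\ref{pt:projectives}) it lifts on $P'$ itself. Pulling $C$ back along the lift $P'\to P$ gives a partitioned cover of $A\times_BP'$. Its carrier fibre over $p'$ equals the fibre of $C$ over the image of $p'$, so it has size below $\kappa_i$; pullbacks of partitioned assemblies stay in $\PAsm$. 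The single index $i$ is therefore determined by $C$ alone.

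Finally, I take $i$ to be the maximum of this index and $i'$. Monotonicity then gives $q|_{jP}\in\mathcal S_{\kappa_i,n}$, and locality gives $q\in\mathcal S_{\kappa_i,n}$.
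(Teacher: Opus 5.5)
Your proof is correct and follows essentially the same route as the paper's: induct on truncation degree through the relative diagonal. For the atlas condition, you fix one projective cover $P\onto B$ and one partitioned atlas of $A\times_BP$, then pull that atlas back along projective lifts of arbitrary tests; this gives a fibre bound independent of the test. The preliminary reduction to $q|_{jP}$ and the final appeal to locality are redundant, since your atlas argument already verifies the defining condition for $q$ over $B$ directly.
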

\begin{proof}
Every arrow is finitely truncated because \(\E\) is bounded.
Induct on its truncation degree.  For \(f:Y\to X\), choose a
projective cover \(P_0\onto X\) and a projective cover
\(A\onto Y\times_XP_0\).  Their carriers lie below some cardinal
\(\kappa_i\), after increasing \(i\) if necessary.  Every projective
test \(P\to X\) lifts through \(P_0\), so pulling back \(A\to P_0\)
gives an atlas with the same fibre bound, independently of the size of
\(P\).  Apply the induction hypothesis to \(\Delta_f\), and increase
\(i\) once more to include its bound.  The finite truncation degree bounds the number of cardinal increases.
\qed\end{proof}

\subsection{Finite-height Tracked Kan Presentations and Weak Universes}
To construct weak universe families, we now develop tracked codes for bounded Kan presentations. We first establish the realization and projective-resolution results needed for this encoding.

All simplicial realizations in the following argument are first taken
in \(\mathcal H=\Sh(\Asm,\Spc)\), or in its indicated
slice.  The next lemma shows that effective groupoid quotients realize these bounded presentations in \(\E\).  A Kan simplicial assembly with chosen horn fillers is a simplicial assembly
in which each horn map has a specified section that is a morphism of assemblies.  A split simplicial object means
that its degeneracies arise from the usual decomposition into
nondegenerate summands.
All simplicial objects and their finite cotensors in this subsection
are relative to the assembly base under consideration.  Thus a Kan presentation
over \(B\) carries its horn fillers as tracked maps in
\(\Asm_{/B}\).
For a finite simplicial set \(L\), the cotensor of a simplicial assembly \(K_\bullet\) over \(B\) is given by \((K^L)_m=K(\Delta^m\times L)\), where \(K(T)\) is the object of compatible \(T\)-shaped simplices, formed using finite limits in \(\Asm_{/B}\).

\begin{lemma}[Realization of bounded Kan presentations in \(\E\)]
\label{real:kan-admission}
Let \(B\) be an assembly and \(K_\bullet\to B\) a Kan
simplicial assembly with chosen horn fillers.  If its ambient realization \(Q\to B\) is
\(r\)-truncated, for finite \(r\ge-1\), then \(Q\to B\) belongs
to \(\E_{/B}\).
If \(B\) and all \(K_m\) are partitioned and every
\(K_m\to B\) has carrier fibres of size less than \(\kappa\), then
\(Q\to B\) belongs to \(\mathcal S_{\kappa,r}\).
\end{lemma}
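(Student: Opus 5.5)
The plan is to reduce the Kan presentation to a finite partial hypercover and then argue by induction on the truncation bound \(r\), reusing the atlas recursion of Lemma~\ref{lem:lc-atlas} in the relative form of Definition~\ref{def:lc-atlas}.

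\emph{Connectivity of the zeroth term.} Write \(Q=|K_\bullet|\) in \(\mathcal H_{/B}\). For a Kan simplicial object with chosen fillers, the augmentation \(K_0\to Q\) is a cover in \(\mathcal H\). The reason is that the realization of a simplicial object surjects from its degree-zero term on \(\pi_0\) locally. Hence \(K_0\onto Q\) is an original atlas of \(Q\) over \(B\).

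\emph{Identifying the kernel.} Next identify the kernel \(K_0\times_QK_0\). The Kan condition, with the tracked fillers, should give an equivalence between this kernel and the realization of the path-space simplicial assembly \(P_\bullet=K^{\Delta^1}\times_{K^{\partial\Delta^1}}(K_0\times_BK_0)\). This is the standard Kan-complex fact that path spaces of the realization are computed by the simplicial path object, stated in \(\mathcal H\) levelwise. The object \(P_\bullet\) is again a Kan simplicial assembly over the assembly \(K_0\times_BK_0\). Its horn fillers are obtained from those of \(K\) via the cotensor, so they remain tracked maps. Its realization is \((r-1)\)-truncated, since \(Q\to B\) is \(r\)-truncated.

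\emph{Induction and the base case.} Induction then places the kernel in \(\E\), relative to \(K_0\times_BK_0\). Lemma~\ref{lem:lc-atlas} (the relative recognition, where the kernel lies in \(\mathcal A_{r-1}\)) places \(Q\to B\) in \(\mathcal A_r(B)\), hence in \(\E_{/B}\). The base case \(r=-1\) is handled separately: \(Q\to B\) is then monic and has the atlas \(K_0\), so \(Q\) is the image of \(K_0\to B\).

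\emph{Smallness.} The smallness clause runs through the same induction against Definition~\ref{real:def-small}. For a projective test \(P\to B\), the pullback \(K_0\times_BP\) is partitioned, because \(\PAsm\) is closed under finite limits. It is therefore an atlas of \(Q\times_BP\) whose carrier fibres over \(P\) are those of \(K_0\to B\), hence of size less than \(\kappa\). For the diagonal, the kernel is the realization of \(P_\bullet\), whose terms are finite limits of the \(K_m\) over \(K_0\times_BK_0\). These terms are partitioned, with carrier fibres of size less than \(\kappa\) by regularity of \(\kappa\). Induction puts the kernel in \(\mathcal S_{\kappa,r-1}\) over \(K_0\times_BK_0\). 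Descent along the cover \(K_0\times_BK_0\onto Q\times_BQ\), using Lemma~\ref{real:locality}, gives \(\Delta_{Q/B}\in\mathcal S_{\kappa,r-1}\). Together with the atlas, this gives \(Q\to B\in\mathcal S_{\kappa,r}\).

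\emph{Main obstacle.} I expect the hard step to be the identification of \(K_0\times_QK_0\) with the realization of \(P_\bullet\) in the sheaf \(\infty\)-topos \(\mathcal H\). This needs the fact that realization of levelwise-Kan simplicial objects commutes with the relevant pullbacks, so that the Quillen-type fibration property holds internally. My first attempt would be to check it on stalks, or objectwise after sheafification, using the chosen fillers to exhibit local surjectivity of the matching maps. Those maps would then be covers in the regular topology, and a hypercover-style descent argument, as in Lemma~\ref{pt:finite-descent}, would give the equivalence.
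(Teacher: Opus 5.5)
Your outline matches the paper's proof. It uses the same induction on $r$ over arbitrary assembly bases, the image base case, and the atlas $K_0\onto Q$. Your $P_\bullet$ is the paper's path object $L_\bullet$. The \v{C}ech-groupoid and exactness step is the same; you route it through the recognition Lemma~\ref{lem:lc-atlas}, whose proof is exactly that step. The smallness argument is also the same: the pulled-back atlases $K_0\times_BP$, plus locality of $\mathcal S_{\kappa,r-1}$ along $K_0\times_BK_0\onto Q\times_BQ$. The one thing to correct is how you propose to settle the step you flag, namely $K_0\times_QK_0\simeq|P_\bullet|$ in $\mathcal H$.

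The routes you float would not close that step.
\begin{itemize}
\item Stalks, or local surjectivity of horn and matching maps followed by a hypercover argument, only detect equivalences up to hypercompletion. At best they show that the comparison $|P_\bullet|\to K_0\times_QK_0$ is $\infty$-connective. You do not know in advance that $|P_\bullet|$ is truncated, so this does not yield an equivalence.
\item Lemma~\ref{pt:finite-descent} concerns maps into truncated families; it does not compute kernels.
\item ``Objectwise after sheafification'' has the order wrong, since sheafification changes the values.
\end{itemize}

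The paper's mechanism uses that the chosen fillers are genuine sections in $\Asm$, not merely local ones. Under Yoneda they become actual horn-filler functions, so the presheaf of simplicial sets $c\mapsto\operatorname{Hom}_{\Asm}(c,K_\bullet)$ is strictly Kan objectwise, before any sheafification. The classical path-object theorem for Kan complexes then applies objectwise in presheaves. It identifies the presheaf realization of $P_\bullet$ with the homotopy pullback of the two vertex maps $K_0\to|K_\bullet|$. Sheafification is left exact and preserves colimits, and the $K_m$ are already sheaves. Hence this transports to $|P_\bullet|\simeq K_0\times_QK_0$ in $\mathcal H$.

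Relatedly, feeding $P_\bullet$ back into the induction requires chosen fillers for it that are assembly maps. These come from filling the finitely many simplices of the prism $\Delta^m\times\Delta^1$ one at a time using the chosen fillers of $K$. That is what your ``via the cotensor'' has to mean.
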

\begin{proof}
We induct on the relative truncation degree, allowing an arbitrary
assembly as base at every step.  For \(r=-1\), \(Q\to B\) is the
image of \(K_0\to B\), so is represented in \(\Asm\).
For the induction step, \(K_0\onto Q\) is a cover.  The simplicial
path object, with endpoints restricted to the constant vertex objects,
is
\[
 L_\bullet=K_\bullet^{\Delta^1}
   \times_{K_\bullet^{\partial\Delta^1}}
                  (K_0\times_BK_0).
\]
Its levels are finite limits of levels of \(K\): in degree \(m\)
only the finitely many nondegenerate simplices of
\(\Delta^m\times\Delta^1\) occur.  It is Kan over
\(K_0\times_BK_0\).  Chosen horn fillers for it can be obtained by
successively attaching the finitely many simplices in the relevant
prism and using the chosen fillers of \(K\).  Yoneda sends these
sections to actual horn-filler functions, so the associated simplicial
presheaf is objectwise Kan.  The Kan path-object calculation
identifies the presheaf realization of \(L\) with the homotopy
pullback of the two vertex maps into the presheaf realization of \(K\).
Left exact sheafification therefore identifies its realization in
\(\mathcal H\) with \(K_0\times_QK_0\).  Its relative truncation
degree is \(r-1\).
By induction this identity object belongs to \(\E\).  The entire
\v{C}ech groupoid of \(K_0\onto Q\) consequently belongs to \(\E\),
and exactness gives its quotient there.  The regular fully faithful
embedding identifies that quotient with \(Q\).

For the smallness assertion, finite limits of the indicated carriers
remain smaller than \(\kappa\) in each fibre.  The induction hypothesis gives
\(|L_\bullet|\to K_0\times_BK_0\) in \(\mathcal S_{\kappa,r-1}\).
Lemma~\ref{real:locality} descends this assertion to \(\Delta_{Q/B}\).
The cover \(K_0\onto Q\), and its pullbacks to projective tests of
\(B\), provide the required small atlases.  This proves the recursive
smallness condition.
\qed\end{proof}

\begin{lemma}[Finite projective coproducts in the completion]
\label{real:projective-coproducts}
Every original partitioned assembly is projective also in
\(\mathcal H\).  If \(P=\coprod_{a\in A}P_a\) is a finite
coproduct of partitioned assemblies in \(\Asm\), then
\[
 \Map_{\E}(P,Q)\simeq
       \prod_{a\in A}\Map_{\E}(P_a,Q)
 \qquad(Q\in\E).
\]
Thus this coproduct, including the empty one, is a
coproduct in \(\E\).
\end{lemma}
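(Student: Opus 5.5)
We prove the two assertions separately: first projectivity in \(\mathcal H\), then the coproduct decomposition in \(\E\), which reduces to a statement about sheaves on \(\Asm\).

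\emph{Projectivity in \(\mathcal H\).} Let \(P\) be an original partitioned assembly, \(e:A\onto B\) a cover in \(\mathcal H\), and \(x\in\pi_0\Map(hP,B)\), which by Yoneda is a point of \(B(P)\). Since \(e\) is a cover for the regular topology, it is locally surjective on \(\pi_0\). So there is a regular epimorphism \(c:P'\onto P\) in \(\Asm\) and a point of \(A(P')\) lifting \(x|_{P'}\). Because \(P\) is regular projective in \(\Asm\) (Lemma~\ref{real:ordinary-source}), \(c\) has a section \(s\), and pulling the lift back along \(s\) lifts \(x\) itself. This argument is the same as in Lemma~\ref{pt:projectives}, but carried out in \(\mathcal H\) rather than only in \(\E\).

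\emph{The coproduct decomposition.} The natural comparison \(\Map_\E(P,Q)\to\prod_a\Map_\E(P_a,Q)\) is restriction along the coprojections \(P_a\to P\). Since \(\E\into\mathcal H\) is fully faithful and Yoneda identifies \(\Map_{\mathcal H}(hP,Q)\simeq Q(P)\), it suffices to show that \(Q(P)\to\prod_a Q(P_a)\) is an equivalence for every sheaf \(Q\). This in turn amounts to showing that \(\coprod_a hP_a\to hP\) is an equivalence in \(\mathcal H\).

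The plan is to use the descent (sheaf) condition for the single cover \(\pi:\coprod_a P_a\to P\), taken in \(\Asm\) rather than in \(\mathcal H\).

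\begin{itemize}
\item \(\pi\) is a regular epimorphism in \(\Asm\): its carrier is bijective, and its realizers are tagged names that can be recovered.
\item Coproducts in \(\Asm\) are disjoint and pullback-stable. Hence the Čech nerve of \(\pi\) has terms \(\coprod_{a_0,\dots,a_k}P_{a_0}\times_PP_{a_1}\times_P\cdots\), and for \(a\ne b\) the summands \(P_a\times_PP_b\) are empty, being the initial assembly.
\item So the Čech nerve is the constant diagram on \(\coprod_aP_a\), a disjoint union of diagonal terms.
\end{itemize}

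Here \(\coprod_aP_a\) means the coproduct in \(\Asm\). The sheaf condition then states that \(Q(P)\) is the limit of \(Q\) applied to this nerve, which is \(Q(\coprod_aP_a)\). The remaining task is to split this value as \(\prod_a Q(P_a)\).

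For that I would use the empty cover of the empty assembly \(\varnothing\), which gives \(Q(\varnothing)\simeq1\). Then I would use the covering family of \(\coprod_a P_a\) by its summands: because \(P_a\times P_b=\varnothing\) for \(a\ne b\), the nerve decomposes, giving \(Q(\coprod_a P_a)\simeq\prod_aQ(P_a)\). The empty-coproduct case of the statement is exactly \(Q(\varnothing)\simeq1\), i.e.\ \(\varnothing\) is initial in \(\E\).

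\emph{Main obstacle.} The regular topology is generated by singleton covers, so it is not obvious that sheaves convert finite coproducts into products. The hard part is that singleton covers only yield the descent statement \(Q(P)\simeq Q(\coprod_a P_a)\); the additional splitting into \(\prod_aQ(P_a)\) needs extensivity. My first attempt would be to show that in a regular category with disjoint universal coproducts, the regular-topology sheaves coincide with those for the extensive-regular topology. I would check this from the identity \(P_a\times_{P}P_b=\varnothing\) together with \(\varnothing\) being strict initial, by computing the sheafification of \(\coprod_a hP_a\to hP\) as a colimit that becomes an equivalence.

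If that fails, the fallback is to use that \(h\) preserves regular epis and finite limits. Then \(\coprod_a hP_a\to hP\) is a cover whose kernel pair is \(\coprod_a hP_a\) (using disjointness of the summands inside \(\mathcal H\)), so the map is a monic cover and hence an equivalence. The delicate point in this route is proving disjointness of the images \(hP_a\) inside \(\mathcal H\).
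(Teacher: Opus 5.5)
Your projectivity argument is correct and matches the paper's. The coproduct part, however, reduces to a claim that is false. You aim to show that \emph{every} sheaf \(Q\in\mathcal H\) satisfies \(Q(P)\simeq\prod_aQ(P_a)\) and \(Q(\varnothing)\simeq1\), i.e.\ that \(hP\) is the coproduct of the \(hP_a\) in \(\mathcal H\). The regular topology is generated by singleton covers, so every covering sieve contains a regular epimorphism. Consequently the family of summand inclusions (with two nonempty summands) is not covering, and neither is the empty sieve on \(\varnothing\). The two descent steps you invoke are therefore unavailable.

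The claim itself is also false, not merely hard to prove. Every cover of a partitioned assembly splits, so covering sieves on partitioned assemblies are maximal. The comparison lemma then identifies set-valued regular sheaves on \(\Asm\) with presheaves on \(\PAsm\). The constant presheaf at a two-element set gives a sheaf \(Q\) for which \(Q(1\amalg1)\) and \(Q(\varnothing)\) are both two-element sets. So your first attempt, that regular sheaves coincide with extensive-regular ones, fails already for \(\Asm\).

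The fallback fails too. \(h\) preserves finite limits and regular epimorphisms, but not coproducts, so it only tells you that \(h\pi=h(\id)\) is a cover. The map \(\coprod_ahP_a\to hP\) out of the coproduct formed in \(\mathcal H\) is not even a cover: covering sieves on the projective \(P\) are maximal, so local surjectivity at \(\id_P\) would force \(\id_P\) to factor through a single summand. (Your \v{C}ech-nerve computation for \(\pi\) is also vacuous, since \(\pi\) is an isomorphism.)

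The restriction to \(Q\in\E\) is therefore essential. The missing idea is to use how \(\E\) is generated, together with the projectivity you already proved. \(P\) and every \(P_a\) are partitioned, since the finite coproduct in \(\Asm\) has tagged names. Hence \(\Map_{\mathcal H}(hP,-)\) and \(\prod_a\Map_{\mathcal H}(hP_a,-)\) both preserve finite limits and covers. They therefore send the \v{C}ech nerve of a quotient cover to the \v{C}ech nerve of an effective epimorphism of spaces, and so preserve effective groupoid quotients.

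The restriction comparison between these two functors is an equivalence on original representables. This follows from the coproduct universal property in \(\Asm\), tracked by case analysis on the finite tag, and full faithfulness of \(h\); in the empty case both sides are a point. The full subcategory on which the comparison is an equivalence is then closed under finite limits and effective groupoid quotients, so it contains \(\E\). Full faithfulness of \(\E\into\mathcal H\) gives the stated equivalence. This is exactly the paper's proof.
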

\begin{proof}
Write \(h\) for the regular Yoneda embedding into \(\mathcal H\).
For a cover \(A\onto B\) in \(\mathcal H\) and a map
\(hP\to B\), pull back to obtain a cover over \(hP\).
The local-surjectivity criterion for an effective epimorphism of
sheaves gives a covering sieve of \(P\) on which the identity
section of \(hP\) lifts.  In the regular topology that sieve
contains an original regular epimorphism \(C\onto P\).
Its lift is a map \(hC\to A\) over \(B\).  Source projectivity
splits \(C\onto P\), giving the required map \(hP\to A\)
and its coherent comparison to \(hP\to B\).
Consequently \(\Map_{\mathcal H}(hP,-)\) preserves covers,
as well as all limits.

The finite coproduct of partitioned assemblies in \(\Asm\) is
partitioned, using disjoint carrier unions and tagged names; its
universal property in assemblies is tracked by case analysis on
the finite tag.  Hence both functors
\[
 \Map_{\mathcal H}(hP,-),\qquad
 \prod_{a\in A}\Map_{\mathcal H}(hP_a,-)
\]
preserve finite limits and covers, with the empty product interpreted
as the terminal-space functor.  They therefore preserve effective
groupoid quotients: the image of a quotient cover is a cover, and
finite-limit preservation identifies its entire \v{C}ech nerve.
The comparison between these functors is an equivalence on original
representables by the coproduct universal property in \(\Asm\) and
full faithfulness.  The full subcategory on which it is an
equivalence is closed under finite limits and effective groupoid
quotients.  It contains \(\E\), which is the closure of the original representables
under finite limits and effective groupoid quotients
in the higher exact-completion construction.  Full faithfulness
now gives the asserted mapping-space equivalence in \(\E\).
\qed\end{proof}

\begin{lemma}[Projective horn fillers]
Let $K_\bullet\to Y$ be an augmented simplicial object in a regular
$\infty$-category, with every augmented matching map
$K_r\to M_r^Y K$ a cover. Then every horn map
$K_m\to K(\Lambda_i^m)$, for $m\ge1$, is a cover. If the levels lie in
a full finite-limit-closed class of projectives, these horn maps admit
sections.
\end{lemma}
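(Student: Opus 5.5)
The plan is to factor each horn inclusion $\Lambda^m_i\subseteq\Delta^m$ through a finite chain of boundary-type attachments. Each horn map $K_m\to K(\Lambda^m_i)$ then becomes a composite of pullbacks of augmented matching maps $K_r\to M^Y_rK$ with $r\le m$. Since covers in a regular $\infty$-category are stable under pullback and composition, this composite is a cover. Throughout, $K(L)$ for a finite simplicial set $L$ denotes the finite limit of the levels of $K$ over the nondegenerate simplices of $L$, formed relative to $Y$.

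\textbf{Reduction by attachment.} I use the standard filtration witnessing that horn inclusions are anodyne relative to boundary inclusions. Fix $m\ge1$ and $0\le i\le m$. The horn $\Lambda^m_i$ is obtained from $\partial\Delta^m$ by deleting the open face $d_i$, and $\partial\Delta^{m-1}\subseteq\Delta^{m-1}$ is the attaching data for that face. Hence there is a pullback square
\[
\begin{tikzcd}
K(\partial\Delta^m)\arrow[r]\arrow[d] & K_{m-1}\arrow[d]\\
K(\Lambda^m_i)\arrow[r] & K(\partial\Delta^{m-1}),
\end{tikzcd}
\]
where the right map is the matching map in degree $m-1$. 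This holds because $\partial\Delta^m=\Lambda^m_i\cup_{\partial\Delta^{m-1}}\Delta^{m-1}$ and $K(-)$ converts such pushouts of finite simplicial sets into pullbacks.

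The augmented matching object $M^Y_rK$ is $K(\partial\Delta^r)$ formed over $Y$. For $r=0$ it is $Y$ itself, and it is here that the augmentation enters: the square is taken over $Y$, and $K(\partial\Delta^0)=Y$. The horn map therefore factors as
\[
K_m\longrightarrow K(\partial\Delta^m)=M^Y_mK\longrightarrow K(\Lambda^m_i).
\]
The first arrow is a cover by hypothesis. The second is a pullback of $K_{m-1}\to M^Y_{m-1}K$, which is again a cover by hypothesis. A composite of covers is a cover, so the horn map is a cover. For $m=1$ this reads $K_1\to K_0\times_YK_0\to K_0$, and the second arrow is a pullback of $K_0\to Y$.

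\textbf{Main obstacle.} The step I expect to require care is identifying the pushout decomposition with an actual homotopy pullback of finite limits in the $\infty$-categorical setting. I would handle it by recalling that $L\mapsto K(L)$ is the right Kan extension of $K$ along the Yoneda embedding of finite simplicial sets, restricted to nondegenerate simplices. As such it takes finite colimits of simplicial sets, in particular the pushout of monomorphisms above, to finite limits. The argument uses only finite limits and pullback stability of covers, so it is valid in any regular $\infty$-category.

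\textbf{Sections.} For the final assertion, assume every $K_r$ lies in a full finite-limit-closed class of projectives. Then $K(\Lambda^m_i)$ is a finite limit of the levels $K_r$ and $Y$. The sections we need come from lifting the identity of $K(\Lambda^m_i)$ along the cover $K_m\to K(\Lambda^m_i)$. Such a lift exists when $K(\Lambda^m_i)$ is projective, since projectivity gives surjectivity on $\pi_0$ of mapping spaces. I therefore need $K(\Lambda^m_i)$ to lie in the class, which holds when $Y$ does as well; I read the hypothesis this way, since the applications have $Y$ an original projective or take the limit over the relative base. A point of $\pi_0\Map(K(\Lambda^m_i),K_m)$ over the identity is exactly a section up to homotopy. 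In the ordinary case of partitioned assemblies this is a strict section, because mapping spaces are discrete.
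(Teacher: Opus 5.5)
Your proof that the horn maps are covers is the paper's argument: $K_m\to M^Y_mK=K_Y(\partial\Delta^m)\to K_Y(\Lambda^m_i)$, where the second map is a pullback of the degree-$(m-1)$ matching cover, or of $K_0\to Y$ when $m=1$. The gap is in the section assertion. You conclude that $Y$ itself must lie in the projective class, and you justify this reading by the applications. The applications do not supply it. In Lemma~\ref{real:resolution} the lemma is applied in $\E_{/P}$ to a hypercover of an arbitrary $Y\to P$ in $\mathcal S_{\kappa,n}$. Such a $Y$ is generally not projective; it need not even be $0$-truncated. Only the levels $K_r$ with $r\ge0$ are partitioned. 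So your version does not produce the sections that are actually used.

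The missing idea is that $Y$ drops out of the horn limit. For $m\ge1$ the horn $\Lambda^m_i$ is contractible, so its indexing category $J$ of nondegenerate faces is weakly contractible. The forgetful functor from the slice over $Y$ therefore creates $J$-indexed limits; equivalently, restricting along $J\to J^{\triangleright}$ does not change limits. Hence the underlying object of $K_Y(\Lambda^m_i)$ is the absolute horn limit $K(\Lambda^m_i)$ of the levels alone. That object lies in the class by finite-limit closure, with no hypothesis on $Y$, and projectivity then lifts the identity through the cover. The same identification matches your relative codomain with the horn map in the statement. The point is special to horns: $\partial\Delta^r$ is not weakly contractible, so for instance $M^Y_1K=K_0\times_YK_0$ genuinely depends on $Y$. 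That is why the augmentation is needed for the cover claim but not for the sections.
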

\begin{proof}
Work in the slice over $Y$. For a simplicial subcomplex $L$ of a simplex,
write $K_Y(L)$ for the finite limit over its nondegenerate faces in this
slice, and put $K_Y(\varnothing)=Y$. Set $H=K_Y(\Lambda_i^m)$.
Attaching the missing $(m-1)$-face gives a pullback square
\[
\begin{tikzcd}
K_Y(\partial\Delta^m) \arrow[r] \arrow[d] &
K_{m-1} \arrow[d,two heads] \\
H \arrow[r] & M_{m-1}^Y K.
\end{tikzcd}
\]
Thus $K_Y(\partial\Delta^m)\to H$ is a cover. Attaching the
$m$-simplex gives the augmented matching cover
$K_m\to M_m^Y K=K_Y(\partial\Delta^m)$. Their composite is a cover.
For $m=1$, the first attachment adds the missing vertex along the empty
boundary and uses the cover $K_0\to Y$.

The finite face category of the horn is weakly contractible. The slice
forgetful functor creates limits of this shape, so the underlying object
of $H$ is the finite horn limit $K(\Lambda_i^m)$ in the ambient category. This identifies
the composite above with the horn map. Finite-limit closure puts its
codomain in the selected projective class; projectivity therefore
gives a section.
\qed\end{proof}

\begin{lemma}[Projective resolutions with fibrewise cardinal bounds]
\label{real:resolution}
Let \(P\in\PAsm\) and \(Y\to P\) belong to
\(\mathcal S_{\kappa,n}\), \(n\ge0\).  There is a Kan
simplicial object of partitioned assemblies over \(P\) with chosen horn fillers,
whose levels have \(\kappa\)-small carrier fibres and whose \((n+1)\)-coskeleton has
realization equivalent to \(Y\) over \(P\).
Levels through \(n+1\) and chosen horn fillers through dimension
\(n+2\) specify this final presentation.
\end{lemma}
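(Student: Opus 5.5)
We build the resolution as a partial projective semisimplicial hypercover, exactly as in the proof of Lemma~\ref{pt:transfer}, but controlling carrier fibres with the small atlases supplied by Definition~\ref{real:def-small}. We then pass to the $(n+1)$-coskeleton and use Lemma~\ref{pt:finite-descent} to identify its realization with $Y$.

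\emph{Step 1: small augmented matching covers through degree $n+1$.} Proceed degree by degree over $P$. In degree $0$, the definition of $Y\to P\in\mathcal S_{\kappa,n}$, tested on the identity $P\to P$, gives a cover $K_0\onto Y$ with $K_0\in\PAsm$ and $\kappa$-small carrier fibres over $P$. In degree $r\ge1$, form the augmented matching object $M_r^YK$. This is a finite limit of $Y$ and the lower levels over $P$, and by Proposition~\ref{real:closures} it is again $\mathcal S_{\kappa,n}$-small over $P$: the lower levels are small, and maps between small objects are small by the graph argument. Now apply the atlas clause of Definition~\ref{real:def-small} to $M_r^YK\to P$ with the identity test. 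This gives $K_r\in\PAsm$ covering $M_r^YK$ with $\kappa$-small fibres. Regularity of $\kappa$ bounds the finite unions that occur. Degeneracies are added by the split construction: put $K_r$ equal to the coproduct of the new summand with degenerate copies of the lower levels. These are finite coproducts of partitioned assemblies, which Lemma~\ref{real:projective-coproducts} shows are coproducts in $\E$, so the degeneracy maps are original and strict. All face and degeneracy identities are strict, because mapping spaces between original projectives are discrete.

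\emph{Step 2: horn fillers.} The Projective horn fillers lemma applies through degree $n+1$. Its hypotheses hold because the augmented matching maps are covers, and finite limits of the $K_r$ over $P$ stay in $\PAsm$ since partitioned assemblies are closed under finite limits. Hence every horn map $K_m\to K(\Lambda^m_i)$ with $m\le n+1$ has a section. Sections in $\Asm$ are tracked maps, so they are chosen horn fillers in $\Asm_{/P}$.

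\emph{Step 3: coskeleton and realization.} Take the $(n+1)$-coskeleton over $P$. Its levels above $n+1$ are finite limits of the levels through $n+1$, so they are again partitioned with $\kappa$-small fibres. Its horn maps in dimension $n+2$ are covers by the same pullback square argument. They therefore have sections, and these are the chosen fillers through dimension $n+2$. Above dimension $n+2$ the coskeleton's horn maps are equivalences, so no further choices are needed. This gives the Kan property with chosen fillers. For the realization, the augmented matching maps are covers through degree $n+1$, so Lemma~\ref{pt:finite-descent} with $m=n+1$ gives $\tau^Y_{\le n}$ of the partial realization equal to $Y$. The coskeleton realization is the realization of a hypercover that is coskeletal above $n+1$, and its map to $Y$ is an equivalence on $n$-truncations. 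Since $Y\to P$ is $n$-truncated, the map to $Y$ factors through this truncation.

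\emph{Main obstacle.} The hardest point is that the realization of the coskeleton is only $n$-connected over $Y$ a priori, so we must show it is actually $n$-truncated and hence equal to $Y$, not merely to a space with $Y$ as its truncation. I would argue this with the Kan path-object recursion of Lemma~\ref{real:kan-admission}. An $(n+1)$-coskeletal Kan complex has homotopy vanishing above degree $n$. The relative path objects are $n$-coskeletal in the fibre over $K_0\times_PK_0$, so induction lowers the truncation. Together with the descent identification, this makes the realization $n$-truncated and therefore equivalent to $Y$.
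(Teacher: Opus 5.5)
Your construction step is sound, and it is more economical than the paper's. You stop the projective hypercover at degree \(n+1\), pass to \(\cosk_{n+1}\) over \(P\), get the dimension-\((n+2)\) fillers from the horn-cover pullback square plus projectivity, and need no choices above that. The paper instead builds the full split hypercover with fillers in every dimension by a countable recursion (hence its uniform bound \(\kappa_j\)), and reads off the coskeleton's dimension-\((n+2)\) fillers from the original ones. One point deserves a sentence. Your coskeleton is formed over \(P\), while a hypercover of \(Y\) needs augmented matching objects over \(Y\). For \(r\ge n+2\) these agree: the face shape of \(\partial\Delta^r\) has classifying space \(S^{r-1}\), and \(Y\to P\) is \(n\)-truncated, so \(Y\simeq Y^{S^{r-1}}\) over \(P\).

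The genuine gap is in the step you flag as the main obstacle, the \(n\)-truncation of \(|\cosk_{n+1}K|\) over \(P\). The path objects \(L_\bullet=K^{\Delta^1}\times_{K^{\partial\Delta^1}}(K_0\times_PK_0)\) are not \(n\)-coskeletal. Take \(n=0\). A \(1\)-simplex of \(L\) with prescribed faces \(a,b:x\to y\) is a square whose two triangles are forced by \(1\)-coskeletality. Its diagonal edge \(d:x\to y\), however, is a free choice in \(K_1\). Since \(K_1\to K_0\times_YK_0\) is only a cover, \(L_1\to M_1L\) is generally not an isomorphism. In general \(\Delta^{n+1}\times\Delta^1\) has nondegenerate \((n+1)\)-simplices outside \(\partial\Delta^{n+1}\times\Delta^1\cup\Delta^{n+1}\times\partial\Delta^1\), and \((n+1)\)-coskeletality of \(K\) does not determine them. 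So your induction hypothesis never applies to \(L\). Lemma~\ref{real:kan-admission} cannot supply the bound either, since it assumes truncation of the realization rather than proving it.

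What is missing is the direct objectwise argument, which is the one the paper uses.
\begin{itemize}
\item For each assembly \(T\), \(\Map(T,K_\bullet)\to\Map(T,P)\) is a simplicial set over a discrete set.
\item Its fibres are \((n+1)\)-coskeletal and Kan, because the chosen fillers are assembly maps and so give horn-filling functions on \(T\)-points.
\item By the fact you quote, these fibres have no homotopy above degree \(n\).
\item Realization of simplicial presheaves is computed objectwise, so the presheaf realization is \(n\)-truncated over \(hP\). Left exact sheafification preserves this.
\end{itemize}
Then \(|\cosk_{n+1}K|\to Y\) is an \((n+1)\)-connective map between objects that are \(n\)-truncated over \(P\), hence an equivalence.
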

\begin{proof}
Construct a split projective hypercover of \(Y\) in the slice over
\(P\).  At each stage its relative matching object is a finite
homotopy limit of \(Y\) and the already constructed levels.  It is
therefore small by Proposition~\ref{real:closures}, and the atlas
condition gives a projective cover with small carrier fibres over
\(P\).  Add the finite coproduct in \(\Asm\) of the prescribed
degenerate summands.  Lemma~\ref{real:projective-coproducts} makes
this also a coproduct in \(\E\), so their maps to the higher
matching object, together with their coherent augmentations, combine
to one map.  This preserves both projectivity and the fibre bound,
and is the usual inductive construction of a split simplicial
hypercover.
These levels belong to the size-stratified source.
Choose a selected \(\kappa_j\) strictly above both \(\kappa\) and
the cardinality of \(P\).  Each total carrier has size at most
\(\lvert P\rvert\cdot\kappa<\kappa_j\), uniformly at every
stage.  The external countable recursion therefore stays within the single
selected bound \(\kappa_j\).

Apply the projective horn-filler lemma in $\E_{/P}$. Its hypotheses hold
because the augmented matching maps are covers and partitioned assemblies
over $P$ form a finite-limit-closed class of projectives. Every horn map
therefore has a section. Full faithfulness makes this section an original
assembly morphism, hence a tracked map; discreteness of the original
mapping spaces makes its section equation an equality of assembly morphisms. Choose
these sections. External choice in the working universe provides the choices of presentation
levels and fillers required at each stage of the countable recursion. The structure maps of each bounded presentation
admit a finite tuple of numerical trackers.

For a Kan complex, the unit to its \((n+1)\)-coskeleton induces an
isomorphism on homotopy groups through degree \(n\), and that
coskeleton is \(n\)-truncated
\cite{KerodonTruncated}.  Its horn fillers in dimensions at
most \(n+1\) are inherited from the original complex.  In dimension
\(n+2\), a filler is specified by the missing \((n+1)\)-face;
an original filler gives this face.  In every larger dimension
the filler is unique, because
\(\operatorname{sk}_{n+1}\Lambda_i^m
=\operatorname{sk}_{n+1}\Delta^m\) for \(m\ge n+3\).
These operations are composites of finite-limit projections and the
chosen tracked fillers, so they are assembly morphisms.  This proves
that the finite data asserted in the lemma suffice.
The Kan and truncation assertions apply objectwise to our Kan
presheaf with chosen horn fillers and then after left exact sheafification.  A hypercover realization maps
\(\infty\)-connectively to its target.  Consequently its
\(n\)-truncation is \(Y\), and the coskeleton realization is
equivalent to \(Y\).  The finite truncation makes this identification available in the
ambient topos. The final bounded realization belongs to
\(\E\) by Lemma~\ref{real:kan-admission}.
\qed\end{proof}

\begin{theorem}[Tracked weak universes]
\label{real:universes}
For every \(\kappa=\kappa_i\) and finite \(n\ge0\), there is a
family \(p_{\kappa,n}:E_{\kappa,n}\to V_{\kappa,n}\) in \(\E\)
such that
\[
              \Loc(p_{\kappa,n})=\mathcal S_{\kappa,n}.
\]
Its base is a partitioned assembly of cardinality at most \(\kappa\),
and its presentation levels have the same total cardinal bound.
Thus the construction is available at \(\kappa_{i+1}\).
\end{theorem}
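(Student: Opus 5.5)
The plan is to build \(p_{\kappa,n}\) as the tautological family over a partitioned assembly of \emph{codes for tracked finite-height Kan presentations}. Take the carrier \(V_{\kappa,n}\) to be the set of all data
\[
 v=\bigl(K_0,\ldots,K_{n+1};\ \text{faces, degeneracies};\ \text{horn fillers through dimension }n+2\bigr)
\]
subject to two requirements. First, each \(K_m\) is a partitioned assembly whose carrier is a subset of \(\kappa\) with a name function to \(\NN\). Second, the structure maps are carrier functions tracked by a finite tuple of indices. Give \(v\) the numerical name coding this tuple of trackers together with enough information to recover the names, for instance a code for the finite list of tracker indices. Since distinct \(v\) may share names, this is a partitioned assembly. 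Strong inaccessibility makes its carrier at most \(\kappa\), since there are at most \(\kappa^{<\kappa}=\kappa\) such structures on standardized subsets.

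Over \(V_{\kappa,n}\), form the tautological Kan simplicial partitioned assembly \(\mathcal K_\bullet\to V_{\kappa,n}\). In degree \(m\le n+1\) its carrier is \(\{(v,k):k\in K^v_m\}\), with name \(\langle\nu(v),\nu_{K^v_m}(k)\rangle\). Its structure maps and chosen fillers are tracked uniformly by the universal function applied to the trackers stored in \(\nu(v)\). Extend this to all degrees by the \((n+1)\)-coskeleton, as in the proof of Lemma~\ref{real:resolution}. Each fibre is \(\kappa\)-small. Let \(E_{\kappa,n}\to V_{\kappa,n}\) be its realization in \(\HH\). The coskeleton makes it \(n\)-truncated, so Lemma~\ref{real:kan-admission} puts \(p_{\kappa,n}\) in \(\E\) and indeed in \(\mathcal S_{\kappa,n}\). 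The total presentation levels are sums of fewer than or equal to \(\kappa\) many \(\kappa\)-small sets, which gives the stated cardinal bound and hence availability at \(\kappa_{i+1}\).

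For \(\Loc(p_{\kappa,n})\subseteq\mathcal S_{\kappa,n}\), use the fact that \(p_{\kappa,n}\in\mathcal S_{\kappa,n}\) together with the pullback stability and locality of Lemma~\ref{real:locality}. For the converse, let \(f:Y\to X\) be in \(\mathcal S_{\kappa,n}\). By locality it suffices to work after a projective cover \(P\onto X\). Lemma~\ref{real:resolution} gives a tracked Kan presentation of \(Y_P\) over \(P\) with \(\kappa\)-small fibres and data through degree \(n+1\) and fillers through \(n+2\). This presentation assigns to each \(p\in P\) a fibre presentation, which we transport onto standardized subsets of \(\kappa\) using external choice. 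That yields a carrier function \(c:P\to V_{\kappa,n}\).

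The main obstacle is showing that \(c\) is tracked and that the pullback \(c^*\mathcal K_\bullet\) is isomorphic, as a simplicial assembly, to the given presentation. Tracking requires the name of \(c(p)\) to be computable from \(\nu_P(p)\). I would arrange this by choosing the transport bijections to preserve names, so that the fibre trackers are the global trackers of the presentation over \(P\), specialized by the parameter theorem (the \(\sigma(d,t)\) trick from Lemma~\ref{real:ordinary-source}). The name of \(c(p)\) is then a total recursive function of \(\nu_P(p)\), and the comparison isomorphism \(c^*\mathcal K_m\cong K_m\) is tracked by the same code rearrangement. Realization commutes with pullback, so \(Y_P\simeq c^*E_{\kappa,n}\). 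Hence \(f\in\Loc(p_{\kappa,n})\).

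The case \(n=0\) requires no separate treatment, and the monomorphism level is handled separately in the theorem. If the name bookkeeping fails, the fallback is to enlarge names to include the fibre trackers explicitly; this keeps the assembly partitioned and countable-named, so the cardinality bound is unaffected.
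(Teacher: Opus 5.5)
Your proposal is correct and follows the paper's proof essentially step for step: codes carrying a tracker tuple as their singleton name, the universal simplicial family with paired names $\langle\nu(v),\nu(k)\rangle$ extended by the $(n+1)$-coskeleton, Lemma~\ref{real:kan-admission} and locality for $\Loc(p_{\kappa,n})\subseteq\mathcal S_{\kappa,n}$, Lemma~\ref{real:resolution} with name-preserving relabelling by ordinals below $\kappa$ for the converse, and strong inaccessibility for the size bounds. The only variation is in the tracker bookkeeping. You specialize the global trackers at $\nu_P(p)$ by the parameter theorem, so the classifying map has $p$-dependent but computable names. The paper instead normalizes the operations, recovering base names from a vertex input via the projection tracker, so that one common tuple $q$ serves every $p\in P$. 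Either way the classifying map is tracked.
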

\begin{proof}
\emph{Code construction.} A code consists of the following finite data: ordinal carriers
\(\alpha_0,\ldots,\alpha_{n+1}<\kappa\), singleton name functions
\(\alpha_k\to\NN\), face and degeneracy carrier functions satisfying
the truncated simplicial identities, and horn-filler sections through
dimension \(n+2\) for their \((n+1)\)-coskeleton.  The sections in dimension
\(n+2\) can be specified by the missing-face operations with values
in \(\alpha_{n+1}\).  The code also includes a fixed finite
tuple \(q\) of numerical trackers for all these operations.  The carrier consists
of the data equipped with valid trackers, giving a set-theoretic
definition of an assembly whose operations are uniformly tracked.

\emph{Universal operations.} Give the code its singleton name \(q\).  Give its universal
\(k\)-simplex \((v,a)\) the singleton name
\(\langle q,\nu_{v,k}(a)\rangle\).  Universal face, degeneracy,
and filler maps are tracked uniformly by extracting the appropriate
component of \(q\) and applying it to the input name.  Distinct codes
may share the name \(q\); this allows the assembly to retain all
carrier presentations compatible with the same tracked operations.  We obtain a simplicial family of partitioned
assemblies after taking the coskeleton.  Its levels over a fixed code
are formed by finite limits of its finitely many carriers, so all
fibre cardinalities remain below \(\kappa\).  Its realization is
\(n\)-truncated and belongs to \(\mathcal S_{\kappa,n}\), by
Lemma~\ref{real:kan-admission}.  Call it \(p_{\kappa,n}\).

\emph{Local presentations.} Conversely, pull \(f\in\mathcal S_{\kappa,n}\) back to a projective
cover \(P\) of its base and use Lemma~\ref{real:resolution}.
In each carrier fibre relabel the finitely many sets by ordinals
below \(\kappa\), retaining their original numerical names.  All
structure operations of the family have globally chosen trackers.

\emph{Tracker normalization.} Normalize the finite-limit encodings as follows.  If an operation
expects a base name, recover it using the tracker of the projection
of one of its vertex inputs to \(P\); every horn under consideration
has a vertex.  Include these finitely many projection trackers in
the data used to compile the operation trackers.  The resulting
fibre operations read the names of their simplex inputs and
have a common tracker tuple for all \(p\in P\).
Consequently, the classifying map \(P\to V_{\kappa,n}\) is tracked.

\emph{Recovery of the family.} Because relabelling the carriers retains the original numerical names,
the resulting isomorphisms of assemblies are tracked.  More explicitly,
the forward isomorphism sends an original simplex name to its base
name, obtained by the tracker of its projection to \(P\), together
with the fixed tuple and the unchanged simplex name.  The inverse
extracts the latter name.  These algorithms track inverse carrier
bijections.  Pulling back the
universal simplicial family recovers the given presentation, with
these isomorphisms compatible with all its structure maps.  It
therefore recovers \(f_P\) on realization.  Locality proves the
claimed equality of classes.

\emph{Size bounds.} Strong inaccessibility bounds the collection of all finite choices
of carriers, functions, names, and tracker tuples by \(\kappa\).
The universal total carriers also have size at most \(\kappa\).
They therefore define assemblies at the next selected
cardinal, while their classified fibre presentations retain the
strict bound \(<\kappa\).
\qed\end{proof}

\subsection{Small Section Codes and Dependent Products}

\begin{lemma}[Small section-code families over projective bases]
\label{real:section-codes}
Suppose \(P\in\PAsm\), \(Y\to P\) belongs to
\(\mathcal S_{\kappa,N}\), and \(Z\to Y\) belongs to
\(\mathcal S_{\kappa,n}\), where \(-1\le n\le N\) and
\(N\ge0\).  There is a partitioned assembly \(W\to P\), with
carrier fibres of cardinality less than \(\kappa\), and an evaluation
\(W\times_PY\to Z\) over \(Y\), which represents every section
locally on its parameter object.  Equivalently, in \(\mathcal H\)
the induced map \(W\to\Pi_{Y/P}Z\) is a cover.
\end{lemma}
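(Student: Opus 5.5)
The plan is to follow the transfer argument of Lemma~\ref{pt:transfer}, keeping track of carrier sizes at every stage. That argument already produces an original weakly adequate code over a projective base; the remaining task is to show that the code object can be chosen as a partitioned assembly with $\kappa$-small carrier fibres over \(P\). Set \(m=n+1\). The construction proceeds in four steps.

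\emph{Resolving \(Y\) and \(Z\).} First, apply Lemma~\ref{real:resolution} to \(Y\to P\) to obtain a split projective hypercover \(U_{\le m}\to Y\) over \(P\). Each \(U_k\) is a partitioned assembly whose carrier fibres over \(P\) are \(\kappa\)-small. Next, form \(D_\bullet=Z\times_YU_\bullet\) and choose the relative Reedy-covering resolution \(V_k\onto D_k\times_{M_kD}M_kV\) as in \eqref{pt:relative-resolution}. Each matching object is a finite limit of small objects over \(P\), so it is small by Proposition~\ref{real:closures}; in particular \(Z\to Y\) and \(Y\to P\) compose to a small map. The atlas clause of Definition~\ref{real:def-small} then provides each \(V_k\) with \(\kappa\)-small carrier fibres over \(P\). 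The case \(n=-1\) needs only the image base case and no resolution.

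\emph{Building the code.} Here the abstract weak section families of Lemma~\ref{pt:diagram-codes} are replaced by the explicit dependent products of assemblies from Lemma~\ref{real:ordinary-source}. Because \(\Asm\) has actual dependent products, the object of degreewise sections, with face equations imposed as equality subobjects, is an original assembly \(W_0\to P\). Its carrier over \(p\in P\) consists of tuples of tracked carrier functions \((U_k)_p\to(V_k)_p\) that are compatible with the faces. Take \(W\) to be its partitioned cover \(\widehat{W_0}\). The carrier fibre over \(p\) is then bounded by the realizers (countably many) times the number of finite tuples of functions between \(\kappa\)-small sets. Since \(\kappa\) is strongly inaccessible and \(\kappa>\aleph_0\), this is below \(\kappa\), which gives the required size bound.

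\emph{Evaluation and adequacy.} The evaluation is obtained exactly as in Lemma~\ref{pt:transfer}. One composes with \(V_\bullet\to Z\) to get a map \(W\times_P|\sk_mU|\to Z\), and Lemma~\ref{pt:finite-descent} extends this uniquely to \(W\times_PY\to Z\), because \(g\) is \(n\)-truncated. Weak adequacy comes from inductive projective lifting against partitioned parameters, and a general parameter is handled by first covering it with its partitioned cover. Finally, the cover criterion of Lemma~\ref{lem:weak-cover} turns this adequacy into the equivalent assertion that \(W\to\Pi_{Y/P}Z\) is a cover in \(\mathcal H\).

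The main obstacle is the size estimate. Carrier fibres of \(W_0\) over \(p\) must be computed fibrewise, and this requires the fibres of the matching objects over \(p\) to be finite limits of the fibres of the levels, so that the smallness bound is genuinely per point of \(P\) rather than global. I would check this first, using the explicit carrier formula for pullbacks in \(\Asm\).
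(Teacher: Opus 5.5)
Your proposal is correct and is essentially the paper's proof. It has the same ingredients: $m=n+1$; a partial projective hypercover and a relative Reedy resolution whose levels get $\kappa$-small carrier fibres from Proposition~\ref{real:closures} and the atlas clause applied at $\id_P$; a code object of face-compatible tracked sections bounded by strong inaccessibility; evaluation via Lemma~\ref{pt:finite-descent}; and adequacy by projective lifting. Your $\widehat{W_0}$ is, up to re-encoding of names, the paper's $W$. The only difference is packaging: the paper writes the carrier directly as sections paired with tracker tuples and builds the code map $T_0\to W$ from the parameter theorem, whereas you use the universal property of dependent products in $\Asm$ plus projectivity of $T_0$.

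Two small remarks. First, $n=-1$ needs no separate ``image'' argument, since your uniform construction with $m=0$ already covers it. Second, the size check you flag as the main obstacle is already settled by the atlas clause, which gives $|U_{k,p}|,|V_{k,p}|<\kappa$ directly. So no fibrewise computation of matching objects is needed, and those objects need not be assemblies anyway.
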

\begin{proof}
Set \(m=n+1\).  Construct a partial projective semisimplicial
hypercover \(U_{\le m}\to Y\) over \(P\), and a relative
Reedy-covering resolution
\[
 V_k\onto D_k\times_{M_kD}M_kV,
 \qquad D_k=Z\times_YU_k,
 \qquad 0\le k\le m,
\]
as in Lemma~\ref{pt:transfer}.  Here \(M_0D=M_0V=P\).
All matching objects are finite homotopy limits over \(P\).
By Proposition~\ref{real:closures} they belong to
\(\mathcal S_{\kappa,N}\).  The defining atlas condition, applied
to the identity map \(\id_P:P\to P\), therefore permits every
\(U_k\) and \(V_k\) to be partitioned with carrier fibres smaller
than \(\kappa\).  Indeed, a map between partitioned assemblies
with such fibres belongs to \(\mathcal S_{\kappa,0}\): its
pullbacks to projective tests are their own small atlases, and its
diagonal is monic.  Thus the induction retains the cardinal bound \(\kappa\) throughout the indicated finite range.

For \(p\in P\), let the carrier \(W_p\) consist of strict
semisimplicial sections
\[
 a_{\le m}:U_{\le m,p}\longrightarrow V_{\le m,p}
 \quad\text{over }U_{\le m,p},
\]
together with a tuple \((e_0,\ldots,e_m)\) of numerical trackers
for the component carrier functions.  Give this code the singleton
name \(\langle\nu_P(p),e_0,\ldots,e_m\rangle\).
The face equations define the carrier set externally, and the tracker
tuple gives its evaluations uniformly.
Evaluation in degree \(k\) applies \(e_k\) to the simplex name,
so defines a morphism of assemblies.  The face equations make these
evaluations a strict semisimplicial map.

For each \(p\), the set \(W_p\) is a subset of
\[
 \NN^{\,m+1}\times
       \prod_{0\le k\le m}
              \operatorname{Set}(U_{k,p},V_{k,p}).
\]
Every base and exponent in this finite product has cardinality less
than \(\kappa\).  Strong inaccessibility gives
\(\lvert W_p\rvert<\kappa\).  The total carrier of \(W\) is
also a permitted carrier: if the carrier of \(P\) lies below
\(\kappa_j\), choose a selected cardinal strictly above both
\(\kappa\) and \(\kappa_j\), and use regularity of that cardinal.
Projection of names tracks \(W\to P\).

Composing the degreewise evaluation with \(V_k\to D_k\to Z\)
gives a coherent map
\(W\times_P|\operatorname{sk}_mU|\to Z\) over \(Y\).
By Lemma~\ref{pt:finite-descent}, the partial hypercover map is
left orthogonal, after every base change, to \(n\)-truncated maps.
There is consequently a contractible space of extensions to
\[
                   W\times_PY\longrightarrow Z.
\]
Finite descent thus constructs the required evaluation directly from
the finite diagram codes.

To prove adequacy, cover an arbitrary parameter \(T\to P\) in
\(\E\) by a partitioned assembly \(T_0\).  A given section over \(T_0\)
determines a map \(T_0\times_PU_{\le m}\to D_{\le m}\).
Lift it through the relative matching covers one degree at a time.
The lifting domains \(T_0\times_PU_k\) are partitioned and hence
projective.  The chosen lifts include their coherent comparison to
the given map; their face equations in \(V\) are strict because
mapping spaces between embedded assemblies are discrete.
Choose a tracker \(c_k\) for each of these finitely many lifts.
Using the fixed computable pairing conventions, the parameter theorem
for partial recursive functions gives an index
\(e_k(t)=s(c_k,\nu_{T_0}(t))\) for its restriction to the fibre
\(U_{k,p}\).  Here \(s\) is one fixed computable specialization
function acting on the numerical trackers of the carrier functions.  Pairing the output of the tracker of \(T_0\to P\) with the results
of these finitely many specializations gives a single tracker of \(T_0\to W\).
The two realized evaluations agree, with all homotopies, because
Lemma~\ref{pt:finite-descent} makes restriction to the partial
hypercover an equivalence of section spaces.  Thus every section is
represented after the parameter cover \(T_0\onto T\), as required.
In particular this holds for all original representable tests, which
is the local-surjectivity criterion for \(W\to\Pi_{Y/P}Z\) to
be a cover in \(\mathcal H\).
\qed\end{proof}

\begin{theorem}[Same-level dependent products]
\label{real:pi}
If \(f:Y\to X\) and \(g:Z\to Y\) belong to
\(\mathcal S_{\kappa,N}\), then the ambient dependent product
\(\Pi_f g\) is represented in \(\E\) and belongs to
\(\mathcal S_{\kappa,N}\).  More precisely, if \(g\) is
\(n\)-truncated, the resulting map belongs to
\(\mathcal S_{\kappa,n}\).  These dependent products satisfy
Beck--Chevalley.
\end{theorem}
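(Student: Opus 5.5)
The plan is to follow the identity induction used in Theorem~\ref{thm:fullness}, with \(\M=\mathcal S_{\kappa,N}\). The same-level weak section families are supplied by Lemma~\ref{real:section-codes}. The bounded quotient clause and the finite-limit closure come from Proposition~\ref{real:closures}. The induction runs on the truncation degree \(n\) of \(g\), simultaneously for all composable pairs in \(\mathcal S_{\kappa,N}\) and all bases in \(\E\).

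The first step is to reduce to a partitioned base. Since \(\mathcal S_{\kappa,n}\) is local for covers of the codomain (Lemma~\ref{real:locality}), and ambient dependent products satisfy Beck--Chevalley, it suffices to represent \(\Pi_fg\) after pulling back along a partitioned cover \(P\onto X\). The represented pieces then descend by Lemma~\ref{pt:projectives}. Ambient descent glues the objects, and locality glues class membership.

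Over \(P\), Lemma~\ref{real:section-codes} gives a partitioned \(W\to P\) with \(\kappa\)-small carrier fibres and a cover \(a:W\onto S=\Pi_{Y/P}Z\) in \(\mathcal H\). Note that \(W\to P\) lies in \(\mathcal S_{\kappa,0}\), as observed in that proof.

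For \(n=-1\), \(S\) is the image of \(W\to P\). It is therefore represented, and it lies in \(\mathcal S_{\kappa,-1}\) because it is monic.

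For \(n\ge0\), put \(B=W\times_PW\). Form the identity family \(I\to Y_B\) as in \eqref{eq:identity-pullback}. It lies in \(\mathcal S_{\kappa,n-1}\) by diagonal closure, pullback stability, and lowering the bound. The map \(Y_B\to B\) is a pullback of \(f\), hence lies in \(\mathcal S_{\kappa,N}\). Here \(B\) is itself a partitioned assembly, so the inductive hypothesis applies to this pair over an original base. It identifies \(W\times_SW\simeq\Pi_{Y_B/B}I\) with a represented object in \(\mathcal S_{\kappa,n-1}\) over \(B\).

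The Čech nerve of \(a\) then has all its terms represented. By Proposition~\ref{real:closures}, these terms are \(\mathcal S_{\kappa,N}\)-small over \(P\), using composition and finite limits. Exactness gives the quotient, which is \(S\). Since \(S\to P\) is \(n\)-truncated, the bounded quotient clause puts it in \(\mathcal S_{\kappa,N}\). Lemma~\ref{real:locality} then lowers it to \(\mathcal S_{\kappa,n}\). Beck--Chevalley is inherited from the ambient products, since all identifications are made in \(\mathcal H\).

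I expect the main obstacle to be bookkeeping in the inductive step. First, the base \(B\) must again be partitioned so that Lemma~\ref{real:section-codes} applies. This holds because \(\PAsm\) is closed under finite limits. Second, the size bound \(\kappa\), rather than a larger cardinal, must survive the fibre products. This is exactly the closure of \(\mathcal S_{\kappa,N}\) under finite limits in slices. The total carrier sizes may grow, but only fibre bounds matter.

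Finally, I would check that the kernel identification \eqref{eq:kernel-full} really holds in \(\mathcal H\) over an assembly base. This reuses the Yoneda argument from Theorem~\ref{thm:relative-fullness} verbatim.
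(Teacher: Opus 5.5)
Your proposal is correct and is essentially the paper's own proof. It has the same ingredients: reduction to a partitioned cover $P\onto X$ with descent and locality, the small code cover from Lemma~\ref{real:section-codes}, the image base case, the identity-family induction over the partitioned base $B=W\times_PW$, and bounded quotient closure from Proposition~\ref{real:closures}. The only differences are cosmetic. You apply the quotient clause at bound $N$ and then lower to $n$, whereas the paper applies it directly at $n$. You should also add the paper's remark that when $N=-1$ one regards the pair as lying in $\mathcal S_{\kappa,0}$, because Lemma~\ref{real:section-codes} requires $N\ge0$.
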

\begin{proof}
If \(N=-1\), replace it by \(0\) for the argument below.  By
Lemma~\ref{real:locality}, regard \(g\) as a member of
\(\mathcal S_{\kappa,n}\).  First suppose \(X=P\) is partitioned.  Write
\(S=\Pi_f g\) in the ambient sheaf category, and use
Lemma~\ref{real:section-codes} to obtain a small projective cover
\(W\onto S\).  Induct on \(n\).  For \(n=-1\), \(S\to P\)
is the represented image of \(W\to P\), and is monic.

For \(n\ge0\), put \(B=W\times_PW\).  The two evaluations
determine an identity family \(I\to Y\times_PB\), a pullback of
\(\Delta_g\).  It belongs to \(\mathcal S_{\kappa,n-1}\).
The induction hypothesis with the fixed source bound \(N\) represents
\[
 W\times_SW\simeq
       \Pi_{Y\times_PB\,/\,B}I
\]
by a map in \(\mathcal S_{\kappa,n-1}\) over the partitioned
assembly \(B\).  Since \(B\to P\) has small carrier fibres, every
term of the \v{C}ech groupoid of \(W\onto S\) is small over \(P\).
Its effective quotient lies in \(\E\), and the ambient embedding
identifies it with \(S\).  It is \(n\)-truncated, so bounded quotient
closure in Proposition~\ref{real:closures} makes it
\(\mathcal S_{\kappa,n}\)-small.

For general \(X\), choose a projective cover \(P\onto X\).
The preceding argument represents \(S_P\to P\).  Its descent
groupoid is already represented: its degree-one object is
\[
 S_P\times_SS_P\simeq S_P\times_XP,
\]
and all higher terms are analogous finite limits.  Exactness gives
its quotient in \(\E\), which is \(S\) in the ambient sheaf
category.  Locality then proves that \(S\to X\) belongs to
\(\mathcal S_{\kappa,n}\).
Full faithfulness restricts its ambient adjunction to \(\E\), and
the ambient Beck--Chevalley equivalences restrict with it.
\qed\end{proof}

\subsection{The Propositional Universe and Original-Base Codes}
We now give the propositional weak universe and express same-level section coding over original assembly bases. Together with the preceding constructions, these verify conditions (H1)--(H3).

\begin{lemma}[A propositional weak universe over a partitioned assembly]
\label{real:propositional}
The subobject classifier \(1\into\Omega\) of the size-stratified
effective topos is a subobject classifier in \(\E\).
There is an original partitioned assembly \(V_{\mathrm{prop}}\)
and an original monomorphism
\(p_{\mathrm{prop}}:P_{\mathrm{prop}}\into V_{\mathrm{prop}}\)
whose local class is all monomorphisms of \(\E\).
\end{lemma}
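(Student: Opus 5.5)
The plan has two parts. First I will show that the classifier $1\into\Omega$ of $\Eff$ remains a subobject classifier in $\E$. Then I will present it, up to local equivalence, by a mono over an original partitioned assembly.

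\emph{Step 1: $1\into\Omega$ classifies in $\E$.} Let $m:A\into X$ be a monomorphism in $\E$. Since $m$ is $(-1)$-truncated, it is determined by its image, and subobjects are local for covers. Choose an original partitioned atlas $P\onto X$. Theorem~\ref{thm:pt-completion} identifies $\Sub_\E(jP)\cong\Sub_{\Asm}(P)$. Together with $(\Ex\,\Asm)_{\le0}\simeq\Eff$ from \eqref{eq:import-shadow} and Lemma~\ref{real:completion-comparison}, this shows that subobjects of $jP$ and of $jP\times_XjP$ are classified by maps into $\Omega$. Because $\Omega$ is $0$-truncated, $\Map_\E(X,\Omega)$ is discrete. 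Descent along the \v{C}ech nerve of $P\onto X$ (Lemma~\ref{lem:exact-descent}) then only needs agreement on $P\times_XP$; the higher coherences are automatic. Hence $\Map_\E(X,\Omega)\simeq\Sub_\E(X)$ for every $X\in\E$. In particular the univalent completion at the propositional level will be this $1\into\Omega$, by the last clause of Theorem~\ref{thm:univalent-completion}.

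\emph{Step 2: an original weak presentation.} $\Omega$ is not itself an original partitioned assembly, so I take its canonical partitioned cover in the realizability sense. Let $V_{\mathrm{prop}}$ have carrier the set of subsets $\alpha\subseteq\NN$, with name function given by a fixed coding. The difficulty is that a subset need not have a single numerical name. I therefore use the standard device: let the carrier consist of all $\alpha\subseteq\NN$, each equipped with the constant name $0$. This is a partitioned assembly of cardinality $2^{\aleph_0}<\kappa_0$. Define $P_{\mathrm{prop}}=\{(\alpha,r):r\in\alpha\}$ with name $r$, and let $p_{\mathrm{prop}}$ be the image of its projection. Concretely, $P_{\mathrm{prop}}\into V_{\mathrm{prop}}$ is the subassembly on $\{\alpha:\alpha\neq\varnothing\}$ with realizers the elements of $\alpha$. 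This is an original monomorphism, and the map $V_{\mathrm{prop}}\to\Omega$ classifying it is the usual cover of $\Omega$ in $\Eff$.

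\emph{Step 3: the local class is all monomorphisms.} Every pullback of a mono is a mono, which gives one inclusion. For the converse, let $m:A\into X$ be a monomorphism. By Lemma~\ref{lem:local-recoding} it suffices to treat original partitioned tests $P\to X$. By Step 1, $m|_P$ is an original subassembly $S\subseteq P$ with realizer sets $\rho_S(x)\subseteq\NN$. Define $P\to V_{\mathrm{prop}}$ by $x\mapsto\{r:r\text{ realizes }x\in S\text{ from }\nu_P(x)\}$, taken in the standard form $\{r:\langle\nu_P(x),r\rangle\text{-witness}\}$. Its tracker is constant, since the target names are $0$. The pullback of $p_{\mathrm{prop}}$ along this map has the same carrier as $S$ and realizers interconvertible by fixed programs. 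It is therefore isomorphic to $S$ as a subobject of $P$.

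\emph{Main obstacle.} I expect the hardest point to be Step 3: arranging the realizers of the pulled-back subobject to match $S$ up to tracked isomorphism, uniformly in $x$. If the constant-name base makes this fail, I would instead give $V_{\mathrm{prop}}$ carrier $\{(\alpha,k)\}$ with name $k$, encoding the base name. Then $P\to V_{\mathrm{prop}}$, $x\mapsto(\{r:\langle\nu_P(x),r\rangle\in\rho\},\nu_P(x))$, is tracked by the identity, and the required isomorphism is tracked by pairing with the base name.
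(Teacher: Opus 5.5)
Your proposal is correct. Step 1 is essentially the paper's argument, but Steps 2 and 3 take a different route.

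\emph{How the routes differ.} The paper picks an arbitrary original partitioned cover $V_{\mathrm{prop}}\onto\Omega$ and pulls back truth; the domain is original by subobject conservativity. It then gets the local-class statement formally from Step 1: a mono $A\into X$ has a characteristic map $X\to\Omega$, and on the cover $X\times_\Omega V_{\mathrm{prop}}\onto X$ it becomes a pullback of $p_{\mathrm{prop}}$. You instead fix a concrete base: subsets of $\NN$ with constant name $0$, essentially $\nabla\mathcal P(\NN)$. You then check by realizer bookkeeping that every subobject of an assembly is literally a pullback of the generic mono, and extend to all of $\E$ by locality.

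\emph{What each route buys.} Yours proves more. No cover is needed over assembly bases, so this is already a weak generic subobject in $\Asm$. It also makes the second sentence of the lemma independent of Step 1. The paper's route is shorter, involves no realizability computation, and works for any choice of partitioned cover of $\Omega$. Note that your claim that $V_{\mathrm{prop}}\to\Omega$ is a cover is not used anywhere in your argument.

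\emph{Your Step 3 obstacle is not real.} Put $\phi(x)=\rho_S(x)$ for $x\in S$ and $\phi(x)=\varnothing$ otherwise. The pullback of $p_{\mathrm{prop}}$ along $\phi$ has carrier $S$ and realizers $\langle\nu_P(x),r\rangle$ with $r\in\rho_S(x)$. These convert to $r$ by projection. Conversely, $r\mapsto\langle d\cdot r,r\rangle$ converts back, where $d$ is a tracker of the inclusion $S\into P$. This works over any assembly base, so your fallback with encoded base names is unnecessary.

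\emph{An imprecision in Step 1.} The object $jP\times_XjP$ need not be $0$-truncated; it fails to be when $X$ has truncation degree $\ge2$. So subobject conservativity together with $\E_{\le0}\simeq\Eff$ does not by itself give classification there. You only need the two composites $jP\times_XjP\rightrightarrows jP\to\Omega$ to agree. Since $\Omega$ is $0$-truncated, this can be checked after an original partitioned cover of the overlap, where the classifier of $\Eff$ applies. This is exactly the paper's move.

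\emph{A small omission in Step 3.} The inclusion of $\Loc(p_{\mathrm{prop}})$ in the monomorphisms needs that monomorphisms descend along covers, not only that they are stable under pullback.
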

\begin{proof}
For a subobject \(A\into X\) in \(\E\), choose an original
partitioned cover \(P\onto X\).  Its pullback is a subobject of
an original assembly, hence is original by
Theorem~\ref{thm:pt-completion}(ii).  It has a characteristic map
\(P\to\Omega\) in \(\Eff\).  On the overlap \(P\times_XP\),
the two pulled-back characteristic maps agree: cover that overlap by
original partitioned assemblies and use the subobject-classifier
property of \(\Eff\) there.  Since \(\Omega\) is \(0\)-truncated, these
agreements give the uniquely coherent descent data on the entire \v{C}ech
nerve.  Effectiveness descends the map to \(X\to\Omega\), and
pullback of truth recovers \(A\into X\), as can be checked after
\(P\onto X\).  The same local argument proves uniqueness and
identifies the full mapping space with the discrete space of subobjects.
Thus the truth family of \(\Eff\) is a subobject classifier in \(\E\).

Choose an original partitioned cover
\(V_{\mathrm{prop}}\onto\Omega\) and pull back truth.
The total object of the resulting monomorphism is original by subobject
conservativity.  Conversely, for any characteristic map
\(X\to\Omega\), the cover
\(X\times_\Omega V_{\mathrm{prop}}\onto X\) presents the
corresponding subobject as a pullback of \(p_{\mathrm{prop}}\).
Its local class is therefore exactly all monomorphisms.
\qed\end{proof}

\begin{corollary}[Source section codes within each universe level]
\label{real:source-hierarchy}
Let \(X\) be an original assembly and let
\(Z\xrightarrow gY\xrightarrow fX\) be a pair in
\(\mathcal S_{\kappa,N}\).  There is an original assembly map
\(W\to X\) in \(\mathcal S_{\kappa,N}\), with an evaluation
\(W\times_XY\to Z\), which represents every section with an
original parameter after an original regular-epimorphism cover of
that parameter.
Consequently the presentations \(p_{\kappa_i,n}\), together with
\(p_{\mathrm{prop}}\), verify all three conditions
(H1)--(H3) of Definition~\ref{def:lc-universes}, as required by
Theorem~\ref{thm:controlled-source}: exhaustive increasing local classes,
bounded quotient closure, and section codes in the source category belonging to the same universe level.
\end{corollary}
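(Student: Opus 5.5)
The plan has two parts. First, produce the source-level code over an arbitrary original base $X$ by reducing to a partitioned base. Then verify (H1)--(H3) by citing the constructions already made in this appendix.

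\emph{The code over an original base.} Choose an original partitioned cover $P\onto X$ using the projective cover of Lemma~\ref{real:ordinary-source}. Pull the pair back to $P$; by pullback stability in Lemma~\ref{real:locality}, it stays in $\mathcal S_{\kappa,N}$. Lemma~\ref{real:section-codes} then gives a partitioned $W\to P$ with $\kappa$-small carrier fibres, together with an evaluation whose transpose $W\to\Pi_{Y_P/P}Z_P$ is a cover in $\mathcal H$.

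Regard $W$ as lying over $X$ through the composite $W\to P\to X$. Its evaluation is unchanged, since $W\times_XY\simeq W\times_PY_P$.

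To see that $W\to X$ lies in $\mathcal S_{\kappa,N}$, I plan the following. The map $W\to P$ is in $\mathcal S_{\kappa,0}$, as observed in the proof of Lemma~\ref{real:section-codes}, and hence in $\mathcal S_{\kappa,N}$ by monotonicity. The map $P\to X$ is the step I expect to be the main obstacle, because an arbitrary projective cover need not have $\kappa$-small fibres. To fix this, I choose the cover $P\onto X$ more carefully.

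\begin{itemize}
\item Since $f\in\mathcal S_{\kappa,N}$, I first try replacing $X$ by its test against the canonical cover $\widehat X\onto X$.
\item If that fails, I use instead the fact that $P\to X$ has fibres at most $\aleph_0$ when $P=\widehat X$, because realizer sets are countable.
\end{itemize}

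Countable fibres are $\kappa$-small because $\kappa>\aleph_0$. The diagonal of $\widehat X\to X$ is a map between partitioned objects over $\widehat X\times_X\widehat X$, so it is again small. Hence $\widehat X\to X\in\mathcal S_{\kappa,0}$, and composition closure from Proposition~\ref{real:closures} gives $W\to X\in\mathcal S_{\kappa,N}$.

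\emph{Adequacy after a parameter cover.} For a section with original parameter $S\to X$, take the partitioned cover $S'\onto S\times_X\widehat X$. The composite $S'\onto S$ is an original regular epimorphism, since covers between original objects are reflected.

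The adequacy part of the proof of Lemma~\ref{real:section-codes} then produces a tracked code $S'\to W$ on $S'$ itself. This step uses projectivity of $S'\times_PU_k$ and the parameter theorem, exactly as in that proof. Composing over $X$ gives the required representation, just as in Corollary~\ref{pt:source-sc}.

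\emph{Verifying (H1)--(H3).} For (H1), Theorem~\ref{real:universes} gives $\Loc(p_{\kappa_i,n})=\mathcal S_{\kappa_i,n}$. These classes increase by Lemma~\ref{real:locality} and are exhaustive by Lemma~\ref{real:exhaustive}. For the propositional level, Lemma~\ref{real:propositional} gives $\Loc(p_{\mathrm{prop}})$ equal to all monomorphisms, which agrees with $\mathcal S_{\kappa,-1}$.

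For (H2), the required identities, composition, diagonals, images, and bounded quotients are exactly Proposition~\ref{real:closures}. The quotients are formed in $\E\subseteq\mathcal H$, which preserves them.

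For (H3), I take a pair $f,g\in\mathcal S_{\kappa_i,N}$ over an original base $hT$. The first part of the proof supplies a code whose structure map lies in the same level, which is exactly what (H3) requires. With (H1)--(H3) in place, Theorem~\ref{thm:controlled-source} applies.
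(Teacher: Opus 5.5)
\paragraph{Case $N\ge0$.} Your argument here is essentially the paper's. It uses the standard cover $\widehat X\onto X$ with countable carrier fibres, applies Lemma~\ref{real:section-codes} over $\widehat X$, and then takes the parameter cover $S'\onto S\times_X\widehat X$ followed by the projective-parameter lifting and the parameter theorem. You place $W\to X$ in $\mathcal S_{\kappa,N}$ differently: you show $\widehat X\to X\in\mathcal S_{\kappa,0}$ and use composition closure. The paper instead counts the carrier fibre of $W\to X$ at $x$ directly, as a $\rho_X(x)$-indexed union of $\kappa$-small sets, and notes that pullbacks to partitioned tests are their own atlases. Both routes work. The hedging about first trying to replace $X$ by its test is unnecessary.

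\paragraph{The gap: $N=-1$.} Lemma~\ref{real:section-codes} assumes $N\ge0$. Suppose you raise a pair of monomorphisms to level $0$ and run the construction. The resulting code $W\to X$, made of strict semisimplicial sections together with tracker tuples, lies in $\mathcal S_{\kappa,0}$. It is not monic in general, so it does not lie in $\mathcal S_{\kappa,-1}$. Consequently your ``first part'' proves neither the statement for $N=-1$ nor (H3) at the distinguished propositional level, whose class is all monomorphisms. Your (H3) paragraph silently assumes it does.

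\paragraph{How to close it.} You need a separate argument for this case. For monic $f,g$, the product $\Pi_fg\to X$ is monic. By subobject conservativity (Theorem~\ref{thm:pt-completion}(ii)) it is an original subobject $W\into X$. Its actual evaluation represents every section on the parameter itself, with no cover needed, and $W\to X$ lies in the propositional level. Equivalently, replace your level-$0$ code by its image in $X$, as in Lemma~\ref{pt:mono}.

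\paragraph{Minor point on (H1).} Conditions (H1)--(H3) are phrased through the source-tested relation $\preceq_\C$ on finite-atlas arrows. To pass to $\Loc$, you should cite Lemma~\ref{lem:local-recoding} and note that the generic families have finite-atlas presentations over original bases, as the paper does.
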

\begin{proof}
First let \(N\ge0\).  Use the standard partitioned cover
\[
 P=\{(x,r):r\in\rho_X(x)\}\onto X.
\]
Its carrier fibres are at most countable.  Pull the pair back to
\(P\) and apply Lemma~\ref{real:section-codes}, taking \(n=N\).
It gives a partitioned assembly \(W\to P\), with each carrier
fibre smaller than \(\kappa\), and the required evaluation over
\(P\).  The carrier fibre of \(W\to X\) at \(x\) is a union,
indexed by \(\rho_X(x)\subseteq\NN\), of sets smaller than
\(\kappa\).  It is therefore smaller than \(\kappa\).
For any partitioned test \(Q\to X\), the pullback
\(W\times_XQ\) in \(\Asm\) is partitioned: its carrier is the subset of
\(W\times Q\) on which the two maps to \(X\) agree, with paired
names.  This pullback is its own small atlas.  Since an original map
is \(0\)-truncated and has monic relative diagonal,
\(W\to X\) belongs to \(\mathcal S_{\kappa,0}\), hence to
\(\mathcal S_{\kappa,N}\).

For a section with original parameter \(T\to X\), choose an
original partitioned cover \(T'\onto T\times_XP\).
The composite \(T'\onto T\) is an original regular epimorphism.
The projective-parameter argument in
Lemma~\ref{real:section-codes} gives a code \(T'\to W\)
over \(P\), on the same parameter \(T'\).  Its evaluation
is the given section restricted to \(T'\), including its coherent
comparison.  This proves the assertion for \(N\ge0\).

If \(N=-1\), the dependent product \(\Pi_f g\to X\) is monic.
Theorem~\ref{thm:pt-completion}(ii) identifies it with an original subobject
\(W\into X\).  Its actual evaluation gives a section code over each parameter itself, and \(W\to X\) belongs to the propositional
level.

The weak universes of Theorem~\ref{real:universes} and
Lemma~\ref{real:propositional} have original bases. By
Lemma~\ref{lem:local-recoding}, their classes defined using original test
contexts and original covers agree with their local classes. Thus these are
\(\Loc(p_{\kappa_i,n})=\mathcal S_{\kappa_i,n}\) and the class of all
monomorphisms, respectively. The generic families have
finite-atlas presentations: their total objects lie in \(\E\),
and finite truncation and original atlases give the required presentations
by induction on truncation degree.
Lemma~\ref{real:exhaustive} and monotonicity give (H1),
Proposition~\ref{real:closures} gives (H2), and the same-level codes
just constructed give (H3).
\qed\end{proof}

\subsection{Univalent Classification, Size, and Logical Interpretations}\label{app:realization-classification}
The preceding corollary verifies the hypotheses of
Corollary~\ref{cor:lc-universes}.  In particular, the weak universe
families have univalent completions classifying their full
local classes.  The next proposition extends the size bounds from the original code
objects to their completed classifiers.
\begin{proposition}[Size of completed universes and cumulativity]
\label{real:cumulative}
Write
\(\widetilde U_{\kappa_i,n}\to U_{\kappa_i,n}\)
for the univalent completion of \(p_{\kappa_i,n}\), where
\(n\ge0\).  Its class is \(\mathcal S_{\kappa_i,n}\), and both
\(U_{\kappa_i,n}\) and \(\widetilde U_{\kappa_i,n}\), considered
over the terminal object, belong to
\(\mathcal S_{\kappa_{i+1},n+1}\).
For \(i\le j\) and \(n\le m\), the inclusions of classes induce
classifying maps
\[
              U_{\kappa_i,n}\longrightarrow U_{\kappa_j,m}
\]
and pullback squares of their universal families.  These maps and
squares admit coherent identity and composition equivalences, with
a contractible space of choices.
\end{proposition}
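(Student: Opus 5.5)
The proof has three parts: identify the class, bound the size of the classifier and its total space, and build the cumulative maps. \emph{Class.} The classifying equivalence~\eqref{eq:lc-universes} of Corollary~\ref{cor:lc-universes}, applied to $p_{\kappa_i,n}$, gives $\M=\Loc(p_{\kappa_i,n})$. Theorem~\ref{real:universes} identifies this class with $\mathcal S_{\kappa_i,n}$. Theorem~\ref{thm:univalent-completion} then makes $\widetilde U_{\kappa_i,n}\to U_{\kappa_i,n}$ $n$-truncated and $U_{\kappa_i,n}$ $(n+1)$-truncated. This also settles the truncation half of the size claims.

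\emph{Size of the classifier.} I would recall the construction from the proof of Theorem~\ref{thm:univalent-completion}: $U=|R_\bullet|$ with $R_0=V:=V_{\kappa_i,n}$ and $R_1=\Eq_{V\times V}(\pi_1^*E,\pi_2^*E)$. By Theorem~\ref{real:universes}, $V$ is a partitioned assembly of cardinality at most $\kappa_i$, so $V\to1$ has carrier fibre of size $<\kappa_{i+1}$. Since $V$ is projective, it is its own small atlas, and its diagonal is monic. Hence $V\to1\in\mathcal S_{\kappa_{i+1},0}$.

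The key step is to show $R_1\to V\times V$ lies in $\mathcal S_{\kappa_{i+1},n}$. Both pullbacks of $p:E\to V$ lie in $\mathcal S_{\kappa_i,n}\subseteq\mathcal S_{\kappa_{i+1},n}$. The fibrewise mapping object $\Pi_{A}(A\times B\to A)$ lies in $\mathcal S_{\kappa_{i+1},n}$ by Theorem~\ref{real:pi}. The cutting-down by $\operatorname{isEquiv}$ uses only further same-level products of $n$-truncated maps and a subobject. Closure under composition (Proposition~\ref{real:closures}) then puts $R_1\to1$ in $\mathcal S_{\kappa_{i+1},n}$, and the higher Segal terms follow by finite-limit closure.

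For the quotient, I would argue at bound $n+1$ rather than invoke bounded quotient closure directly. The pullback of $\Delta_U$ along the cover $V\times V\onto U\times U$ is $R_1\to V\times V$, which is small and $n$-truncated. Locality (Lemma~\ref{real:locality}) descends this to $\Delta_U\in\mathcal S_{\kappa_{i+1},n}$. The cover $V\onto U$, pulled back to projective tests, supplies the atlas with small fibres. So $U\to1\in\mathcal S_{\kappa_{i+1},n+1}$; equivalently, Proposition~\ref{real:closures} applied to $R_\bullet$ at bound $n+1$ gives the same conclusion.

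\emph{Total space.} $\widetilde U\to U$ is in $\mathcal S_{\kappa_i,n}\subseteq\mathcal S_{\kappa_{i+1},n+1}$ by Lemma~\ref{real:locality}. Composition with $U\to1$ then gives the claim for $\widetilde U$.

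\emph{Cumulativity.} For $i\le j$ and $n\le m$, monotonicity gives $\mathcal S_{\kappa_i,n}\subseteq\mathcal S_{\kappa_j,m}$, so $u_{\kappa_i,n}$ lies in the class classified by $U_{\kappa_j,m}$. Classification yields the map $U_{\kappa_i,n}\to U_{\kappa_j,m}$ with a pullback square. Because the classifying map of spaces is fully faithful, the fibre over $u_{\kappa_i,n}$ is contractible. Identity and composite comparisons are therefore identified by the uniqueness argument of Lemma~\ref{lem:classifier-comparisons} and the cumulativity paragraph in the proof of Theorem~\ref{thm:tripos-construction}.

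The main obstacle is the size step for $R_1$. One must check that forming the equivalence object with the $\operatorname{isContr}$ products stays within the $\mathcal S_{\kappa_{i+1},\bullet}$ levels even though the base $V\times V$ has cardinality $\kappa_i$ rather than $<\kappa_i$. I would handle this by working with fibre bounds over $V\times V$, which remain $<\kappa_i$, and applying Theorem~\ref{real:pi} over that base. Only the base itself forces the passage to $\kappa_{i+1}$.
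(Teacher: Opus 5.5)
Your proposal is correct and follows the paper's proof essentially step for step. You place $V\to1$ in $\mathcal S_{\kappa_{i+1},0}$, put the equivalence object over $V\times V$ in the class via Theorem~\ref{real:pi} plus subobject and composition closure, lift the groupoid terms to level $\kappa_{i+1}$ and place the quotient in $\mathcal S_{\kappa_{i+1},n+1}$, handle $\widetilde U$ by locality and composition, and get cumulativity from univalent uniqueness. Your direct descent of $\Delta_U$ along $V\times V\onto U\times U$, with $V\times P\onto U\times P$ as the atlas, is just the unfolded form of the bounded quotient closure in Proposition~\ref{real:closures}, which the paper cites instead.
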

\begin{proof}
Put \(\kappa=\kappa_i\), \(\lambda=\kappa_{i+1}\), and
\(V=V_{\kappa,n}\).  By Theorem~\ref{real:universes} the carrier
of \(V\) has size at most \(\kappa<\lambda\).  Thus
\(V\to1\) belongs to \(\mathcal S_{\lambda,0}\).
Over \(V\times V\), both pulled-back weak-universe families belong
to \(\mathcal S_{\kappa,n}\).  Theorem~\ref{real:pi} places their
fibrewise function object in the same class.  The equivalence object
is its subobject cut out by the proposition of being an equivalence,
so Proposition~\ref{real:closures} places that object in
\(\mathcal S_{\kappa,n}\) as well.

After composition with \(V\times V\to1\), all terms of the
equivalence groupoid therefore belong to
\(\mathcal S_{\lambda,n}\).  Its quotient
\(U_{\kappa,n}\) is \((n+1)\)-truncated by
Corollary~\ref{cor:lc-universes}.  Monotonicity followed by bounded
quotient closure puts it in \(\mathcal S_{\lambda,n+1}\).
The completed universal family pulls back to \(p_{\kappa,n}\)
along the cover \(V\onto U_{\kappa,n}\); locality puts this
family itself in \(\mathcal S_{\kappa,n}\).  Composition with the
base map whose size bound was established above gives
\(\widetilde U_{\kappa,n}\to1\) in
\(\mathcal S_{\lambda,n+1}\).

For cumulativity, the universal family at \((i,n)\) lies in the
class classified at \((j,m)\), by
Lemma~\ref{real:locality}.  Its classifying map and the equivalence
with the corresponding pullback form a contractible space, since
the target family is univalent and classifies the entire class.
The same uniqueness identifies the identity map and all composites
with the chosen classifying maps, including every higher coherence.
\qed\end{proof}

We can now assemble the preceding constructions into the cumulative realizability hierarchy of Theorem~\ref{thm:real-hierarchy}.

\begin{proof}[of Theorem~\ref{thm:real-hierarchy}]
Lemma~\ref{real:locality}, Proposition~\ref{real:closures}, and
Lemma~\ref{real:exhaustive} give locality, the required closure
properties, and exhaustiveness.  Theorem~\ref{real:pi} gives same-level
dependent products with the sharper target truncation bound and
Beck--Chevalley.  Their actual evaluation maps give the weak section
families of Definition~\ref{def:classes} over every base.
Theorem~\ref{real:universes} and Lemma~\ref{real:propositional}
give the generic families, with original code bases, and
Corollary~\ref{real:source-hierarchy} verifies all three
conditions (H1)--(H3), including section codes in the source category belonging to the same universe level.
Corollary~\ref{cor:lc-universes} and
Proposition~\ref{real:cumulative} give the classifying equivalences,
cumulative maps, and next-level bounds.

The source codomain doctrine \(\Cod(\Asm)\) is regular: its sums
are composition in slices, with the cartesian Beck--Chevalley structure
of Proposition~\ref{prop:moens}. Theorem~\ref{thm:pt-completion} gives
\(\mathrm{CF}_{<\infty}(\Asm)\). With (H1)--(H3) already established,
Corollary~\ref{cor:lc-universes} verifies the
$\infty$-tripos axioms of Definition~\ref{def:tripos} for this source,
and its construction is
\(\TT\Cod(\Asm)=\Ex(\Asm)\simeq\E\), with precisely the local
classes and classifiers already constructed. Since \(\E\) is exact,
its codomain doctrine is also an $\infty$-tripos; its actual products give
the section codes and its univalent families give the weak universes.
The exact-completion comparison gives
\(\E_{\le0}\simeq\Eff\), and full faithfulness identifies all
maps between original objects.  Theorem~\ref{thm:pt-completion}(ii)
shows that the unit identifies the subobject lattices of original objects
with those of their images and preserves existing dependent products.
For dependent-product-preserving interpretations, Theorem~\ref{thm:pt-morphisms} applies
with the replete closure of \(\PAsm\) as its source projective
class and with the specified target projective covering class.
For interpretations satisfying
\(\Loc(\overline Fp_i)=\M_i^{\D}\) for each explicit weak universe presentation,
including the propositional one, restriction gives the stated equivalence
of logical interpretation spaces.
\qed\end{proof}

\section{Classifier Separation and Computability}\label{app:rice}
This appendix gives the complete proofs of Theorem~\ref{thm:rice} and Corollary~\ref{cor:two-classifiers}, including the construction, representation, and uniformity arguments. Throughout,
\[
 \E=\Ex(\Asm(K_1))\simeq\Ex(\Eff),
 \qquad j:\Eff\into\E.
\]
The equivalence of completions is Lemma~\ref{real:completion-comparison}. The identification of the $0$-truncated part and the assembly covers follow from~\eqref{eq:import-shadow}, the classical exact/regular completion of assemblies, and the atlas property of higher exact completion~\cite{vanOosten08,Stefanich23}.

\subsection{Reset Families and Univalent Classification}\label{app:reset}\label{app:em}
An assembly consists of a carrier set and a nonempty set of numerical realizers for each element. A map is a carrier function admitting a partial recursive tracker that sends every realizer of an input to a realizer of its output. Equality of maps is determined by their carrier functions; each map may have several trackers. A modest assembly has disjoint sets of realizers for distinct elements~\cite{vanOosten08}.

For part~(\ref{rice:fibres}) of Theorem~\ref{thm:rice}, we construct the tracked reset family and its iterated deloopings, then identify their fibres at total and non-total indices.

\begin{proof}[of Theorem~\ref{thm:rice}(\ref{rice:fibres})]
For the fixed acceptable numbering $(\varphi_e)$, choose a computable bounded-halting relation and define
\begin{equation}\label{eq:progress}
 \ell_e(s)=\max\{m\le s:\text{every }k<m\text{ halts under }\varphi_e
                    \text{ within }s\text{ steps}\}.
\end{equation}
The number $\ell_e(s)$ can be computed by running finitely many bounded simulations. It is nondecreasing: if the first $m$ inputs have halted within $s$ steps, the same is true within $s+1$ steps and $m\le s+1$.

Set $a_{e,s}:\Ftwo\to\Ftwo$ to zero when $\ell_e(s+1)>\ell_e(s)$ and to the identity otherwise. Write $a_e[s,t)$ for the composite of the transitions $a_{e,r}$ with $s\le r<t$. If $e\in\Tot$, each finite initial segment eventually halts, so $\ell_e(s)$ is unbounded. A nondecreasing integer sequence is unbounded exactly when it has infinitely many strict increases. Thus the transitions $a_{e,s}$ contain infinitely many zero maps. If $e\notin\Tot$, let $k_0$ be the least input which does not halt. For sufficiently large $s$ all inputs below $k_0$ have halted, and $\ell_e(s)=k_0$ thereafter. Hence there are only finitely many zero transitions. All individual transitions and all finite composites $a_e[s,t)$ are recursive uniformly in their arguments.

Define
\begin{equation}\label{eq:reset-relation}
 (s,b)\sim_e(t,d)\ \Longleftrightarrow\
 \exists v\ge s,t:\ a_e[s,v)(b)=a_e[t,v)(d).
\end{equation}
This relation is reflexive and symmetric. For transitivity, transport two witnessing equalities to a common later stage and compose the transition maps. It is computably enumerable by searching over the witness $v$. Let $M_e=(\NN\times\Ftwo)/{\sim_e}$. Define addition by
\begin{equation}\label{eq:addition}
 [s,b]+[t,d]
 =\bigl[m,a_e[s,m)(b)+a_e[t,m)(d)\bigr],
 \qquad m=\max(s,t).
\end{equation}
Transporting representatives to a common stage shows that this operation is well defined. At that stage it is addition in $\Ftwo$, so the abelian-group axioms hold in the quotient. The zero is $[0,0]$ and each element is its own inverse.

Give a class $x$ the realizer set
\[
 \rho_{M_e}(x)=\{\langle s,b\rangle:x=[s,b],\ b\in\{0,1\}\}.
\]
This is nonempty, and distinct classes have disjoint realizer sets. A tracker for addition decodes the two input pairs, computes the maximum stage and the finite transitions, and returns the code in~\eqref{eq:addition}. Zero is realized by a constant program and inversion by the identity. These trackers are uniform in $e$.

For the total family over $N$, use carrier $\{(e,x):x\in M_e\}$ and names $\langle e,s,b\rangle$ with $x=[s,b]$. Projection to $N$ is recursively tracked. Pairing and projection operations turn the just-described algorithms into trackers for the relative group operations. This proves that $\mathcal M\to N$ is an abelian group object in the assembly slice, and hence in $\Eff_{/N}$.

If $e\in\Tot$, every representative $[s,b]$ is transported through a later zero map and equals $[0,0]$. The group is the zero group, and its unique maps to and from the terminal zero-group assembly are tracked. If $e\notin\Tot$, choose $S$ after the last zero transition. The maps
\[
 \Ftwo\longrightarrow M_e,\quad b\longmapsto[S,b],
 \qquad
 M_e\longrightarrow\Ftwo,\quad
 [s,b]\longmapsto
 \begin{cases}a_e[s,S)(b),&s<S,\\b,&s\ge S\end{cases}
\]
are well-defined homomorphisms inverse to one another. The second formula is invariant under~\eqref{eq:reset-relation}, because all transport after $S$ is the identity. For fixed $e,S$, both formulas are recursive on names. This proves the fibre isomorphisms stated in Section~\ref{sec:application} as tracked isomorphisms in realizability for each fixed $e$ and its chosen stage $S$.

With the tracked coefficient family in place, we construct its deloopings and the equivalence objects required for univalent classification.
For an abelian group object $A$ in a slice of $\Eff$, its commutative group structure in that slice determines a coherent commutative group object in the corresponding slice of $\E$. Its bar nerve is an internal Segal groupoid. Exactness gives its quotient $BA$, with $\Omega BA\simeq A$ by effectiveness. Finite products commute with this realization in the ambient sheaf $\infty$-topos and the relevant products and quotients are preserved by the embedding of $\E$. Thus $BA$ again has a coherent commutative group structure, and the process iterates.

Define $K(A,0)=A$ and $K(A,n+1)=BK(A,n)$. In every slice these objects are pointed, $(n-1)$-connected, and $n$-truncated for $n\ge1$, with $\pi_nK(A,n)=A$ and all other positive homotopy groups zero. The loop calculation proves this inductively; truncation can be checked on the diagonal after the effective cover by the basepoint. All constructions commute with pullback because the relevant quotients are pullback-stable. Applying them to $\mathcal M\to N$ constructs $p_n:\codes_n\to N$ and proves Theorem~\ref{thm:rice}(\ref{rice:fibres}).
\qed\end{proof}

\begin{lemma}[Parameterized Eilenberg-Mac Lane equivalences]\label{lem:em-eq}
For $n\ge2$, a context $S\in\Eff$, and abelian groups $A,B\in\Ab(\Eff_{/S})$, the object $K_S(B,n)\times_S\underline{\Iso}_{\Ab/S}(A,B)$ represents fibrewise equivalences between $K_S(A,n)$ and $K_S(B,n)$ on all test objects of $\E_{/S}$:
\begin{equation}\label{eq:em-equivalence}
 \underline{\Eq}_S(K_S(A,n),K_S(B,n))
 \simeq K_S(B,n)\times_S
 \underline{\Iso}_{\Ab/S}(A,B),\qquad n\ge2.
\end{equation}
\end{lemma}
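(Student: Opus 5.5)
The plan is to prove \eqref{eq:em-equivalence} in the ambient slice $\HH_{/S}$ with $\HH=\Sh(\Asm,\Spc)$, as an equivalence of sheaves on $\E_{/S}$, and then note that the right side is represented in $\E$. The right side lies in $\E$: $\underline{\Iso}_{\Ab/S}(A,B)$ is an object of the locally cartesian closed topos $\Eff_{/S}$, namely the subobject of $\Pi_{A/S}(A\times_S B)$ cut out by homomorphism and invertibility equations. The object $K_S(B,n)$ was built in the proof of part~(\ref{rice:fibres}) by iterated effective bar quotients, so the finite product is represented. The substance of the lemma is therefore the identification of the left side with this object, tested on all $T\to S$ in $\E$. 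Since everything is pullback-stable (the bar quotients commute with pullback, as noted above), it suffices to produce a natural map and check that it is an equivalence after base change to an arbitrary test $T$, i.e.\ over the generic point.

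The comparison map is constructed as follows. A point of the right side over $T$ is a section $b$ of $K_T(B,n)$ together with a group isomorphism $\phi:A_T\cong B_T$. Send it to the fibrewise equivalence $K(\phi,n)$ followed by translation by $b$, using the coherent commutative group structure on $K_T(B,n)$ recorded in the proof of part~(\ref{rice:fibres}). In the other direction, an equivalence $F:K_T(A,n)\to K_T(B,n)$ determines $b=F(0)$, and the translated map $F-b$ is pointed. The remaining claim is the parameterized statement that pointed self-maps $K(A,n)\to K(B,n)$ in the slice $\infty$-topos are exactly $K(-,n)$ of homomorphisms, with pointed equivalences corresponding to isomorphisms, and with a discrete space of such. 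I would prove this by the standard $\infty$-topos argument. Pointed $(n-1)$-connected $n$-truncated objects are equivalent, via $\Omega^n$, to abelian group objects in $0$-truncated objects for $n\ge2$ (this is Lurie's identification of Eilenberg--Mac Lane objects, valid in any $\infty$-topos and hence in $\HH_{/T}$). This gives
\[
\Map_{*}(K_T(A,n),K_T(B,n))\simeq\Map_{\Ab}(A_T,B_T),
\]
which is discrete.

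Unpointed maps are then assembled from the fibre sequence $\Map_*\to\Map\to\Gamma K_T(B,n)$ given by evaluation at the basepoint. Translation splits this sequence, producing $\Map(K_T(A,n),K_T(B,n))\simeq \Gamma K_T(B,n)\times \Map_{\Ab}(A_T,B_T)$. Restricting to equivalences picks out the isomorphism component, since being an equivalence is detected on $\pi_n$.

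The main obstacle is making this hold internally and naturally in $T$, rather than only for global points. In particular, I need the translation action and the Eilenberg--Mac Lane classification to be stable under base change in $\HH_{/S}$. I also need the unit-group and homomorphism objects to be the $0$-truncated internal objects coming from $\Eff$, not merely their global sections. I would handle this by performing the whole argument in the slice topos $\HH_{/T}$ for each test $T$ and invoking pullback stability, and then applying Yoneda over $\E_{/S}$. Finally, the hypothesis $n\ge2$ enters because commutativity of $A,B$ is needed so that $K(A,n)$ determines $A$ as an abelian group and translations act through an abelian group structure.
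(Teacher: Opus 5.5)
Your proposal is correct and follows essentially the same route as the paper. Lurie's classification of pointed Eilenberg--Mac Lane objects in each slice of the ambient $\infty$-topos makes pointed equivalences discrete and identifies them with group isomorphisms. Translation by $F(0)$ then splits off the $K_S(B,n)$ factor, and fullness restricts the ambient representing property to all tests in $\E_{/S}$. The one point to make explicit is why the $\Eff_{/S}$-built object $\underline{\Iso}_{\Ab/S}(A,B)$ also represents isomorphisms on higher tests $T$: this does not follow from pullback stability. It follows from the regular Yoneda embedding preserving dependent products (Lemma~\ref{lem:old-pi}), which is what the paper cites at that step.
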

\begin{proof}
Use the ambient $\HH=\Sh(\Eff,\Spc)$. In each slice of an $\infty$-topos, the homotopy-group functor identifies pointed degree-$n$ Eilenberg-Mac Lane objects with abelian group objects in the full subcategory of $0$-truncated objects for $n\ge2$~\cite{Lurie09}. This is an equivalence of $\infty$-categories with an ordinary category: in particular, its pointed mapping spaces are discrete. Applied after an arbitrary base change $T\to jS$, it identifies pointed equivalences with isomorphisms of the pulled-back abelian groups.

Every unpointed equivalence $f$ has a value $f(0)$ in the target group type. Translation by $-f(0)$ produces a pointed equivalence. Conversely, a target point $b$ and a pointed equivalence produce an unpointed equivalence by translation by $b$. The coherent commutative group laws make these operations mutually inverse on the corresponding spaces of data, naturally in $T$. Thus ambient fibrewise equivalences are the product of $K_S(B,n)$ and the fibrewise group-isomorphism object.

The latter object is constructed in $\Eff_{/S}$ using fibrewise function objects, equations for homomorphisms, and pairs of inverse homomorphisms. Here the fibres are sets, so inverse data are uniquely determined. Lemma~\ref{lem:old-pi} identifies this object of $\Eff_{/S}$ with its ambient fibrewise isomorphism object, including after arbitrary higher base change. The first factor is the finite iterated bar quotient already constructed in $\E_{/S}$. Hence their product lies in $\E$. Fullness of its embedding restricts the ambient representing property to all tests in $\E_{/S}$, proving the lemma.
\qed\end{proof}

We now use these equivalence objects to construct the univalent completion of $p_n$.

\begin{proof}[of the univalent-completion assertion in Theorem~\ref{thm:rice}]
Apply the lemma with $S=N\times N$ and the two pullbacks of $\mathcal M$. It represents the degree-one equivalence object of $p_n$. The groupoid of fibre equivalences is coherent by composition of equivalences in the ambient slices; its levels are finite fibre products of the represented objects, and hence lie in $\E$. Its tautological action on $\codes_n$ is represented for the same reason. Form the two effective quotients as in~\eqref{eq:equiv-groupoid}. Remark~\ref{rem:fragment} applies and yields
\[
 \begin{tikzcd}
 \codes_n\arrow[r]\arrow[d,"p_n"']&\widetilde U_n\arrow[d,"u_n"]\\
 N\arrow[r,"c_n"']&U_n,
 \end{tikzcd}
\]
with $c_n$ a cover, the square cartesian, and $u_n$ univalent. The class it classifies consists exactly of local pullbacks of $p_n$. Since the fibres of $u_n$ are $n$-truncated, the description of the diagonal of $U_n$ by fibrewise equivalences makes $U_n$ $(n+1)$-truncated.
\qed\end{proof}

\subsection{Identity Complexity and Global Homotopy}\label{app:projective}
We next establish the arithmetic and homotopical properties of the classifier. The following projectivity result gives the global-section calculation and the local separation argument.
\begin{lemma}\label{lem:projectives}
The objects $1$ and $N$ are projective with respect to covers in $\E$. The functor $\Gamma:\E\to\Spc$ is regular and preserves internal groupoid quotients.
\end{lemma}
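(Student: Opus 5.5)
The plan is to reduce projectivity of $1$ and $N$ in $\E$ to projectivity of partitioned assemblies, which is already established. The terminal assembly is partitioned: it is a singleton with name $0$. The numerical assembly $N$ is partitioned as well, since each $k$ has the singleton realizer set $\{k\}$. Lemma~\ref{real:projective-coproducts} shows that every original partitioned assembly is projective in the ambient $\mathcal H$, so $\Map_{\mathcal H}(hP,-)$ preserves covers. Lemma~\ref{pt:projectives} shows that objects of $\Pj$ remain projective in $\E$. Because $\E\into\HH$ is a full regular embedding preserving covers, a cover $A\onto B$ in $\E$ stays a cover ambiently. Hence $\pi_0\Map_\E(1,A)\to\pi_0\Map_\E(1,B)$ is surjective, and likewise for $N$. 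This settles the first sentence.

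For regularity of $\Gamma=\Map_\E(1,-)$, preservation of finite limits is automatic, since a corepresentable functor preserves all limits that exist. To show that $\Gamma$ preserves covers, note that a map of spaces is an effective epimorphism in $\Spc$ exactly when it is surjective on $\pi_0$. This is precisely the projectivity of $1$ just proved. So $\Gamma$ is a regular functor from an exact $\infty$-category to the exact $\infty$-category $\Spc$. Lemma~\ref{lem:quotient-truncation} then states that such functors preserve effective groupoid quotients. Concretely, for an internal groupoid $R_\bullet$ with quotient $Q$, the cover $R_0\onto Q$ has \v{C}ech nerve $R_\bullet$. Applying $\Gamma$ gives a $\pi_0$-surjection $\Gamma R_0\to\Gamma Q$ whose \v{C}ech nerve is $\Gamma R_\bullet$. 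Since effective epimorphisms in $\Spc$ are the colimits of their \v{C}ech nerves, we get $|\Gamma R_\bullet|\simeq\Gamma Q$.

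The only point needing care is that projectivity in $\E$ means surjectivity on $\pi_0$ of mapping spaces, not merely lifting in the $0$-truncated part. Here I will rely on Lemma~\ref{real:projective-coproducts}, whose proof lifts against arbitrary ambient covers using the regular-topology sieve argument together with source projectivity. An alternative route uses the size bound on $\Gamma$: $\Gamma$ lands in spaces of size bounded by a working universe, and this causes no issue since all mapping spaces are taken in the fixed external universe hierarchy. I expect no real obstacle; the essential input is the already proved ambient projectivity of partitioned assemblies.
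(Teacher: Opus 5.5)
Your proposal is correct and follows essentially the same route as the paper. In both, $1$ and $N$ are partitioned assemblies and hence projective in $\E$ by Lemma~\ref{real:ordinary-source} and Lemma~\ref{pt:projectives}; projectivity of $1$ makes $\Gamma$ send covers to $\pi_0$-surjections, corepresentability gives finite limits, and the \v{C}ech-nerve argument gives preservation of groupoid quotients. Your additional appeal to ambient projectivity via Lemma~\ref{real:projective-coproducts}, and the closing remark on size bounds, are harmless but unnecessary.
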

\begin{proof}
The objects $1$ and $N$ are partitioned assemblies. They are regular projectives in $\Asm$ by Lemma~\ref{real:ordinary-source} and remain projective in $\E$ by Lemma~\ref{pt:projectives}.

Now let $Y\onto X$ be a cover in $\E$. Every global point of $X$ pulls it back to a cover over $1$, which splits by the first assertion. Therefore $\Gamma Y\to\Gamma X$ is surjective on connected components, equivalently an effective epimorphism of spaces. Global sections preserve finite limits, so $\Gamma$ is regular. An internal groupoid is the \v{C}ech nerve of its quotient cover. Applying $\Gamma$ preserves both that nerve and the cover; therefore it preserves its realization. This proves preservation of the effective groupoid quotients used in the classifier calculation.
\qed\end{proof}

We turn next to parts~(\ref{rice:complete}) and~(\ref{rice:global}) of Theorem~\ref{thm:rice}, relating numerical identity to totality and computing the classifier's global homotopy type and internal truncation height.

\begin{proof}[of Theorem~\ref{thm:rice}(\ref{rice:complete}) and~(\ref{rice:global})]
Univalence identifies the numerical identity problem with totality, as follows.
Choose an index $e_*$ of a total function. Its fibre is contractible by the reset calculation. Set $u_*=c_n(e_*)$. Univalence gives
\[
 \operatorname{Id}_{U_n}(c_n(e),u_*)
 \simeq\underline{\Eq}(X_{n,e},1).
\]
A global point on the right exists exactly when $X_{n,e}$ is contractible. If $e$ is total, the fibre is $1$ and such a point exists. Otherwise its $n$th homotopy-group object is the nonzero group $\Ftwo$ in $\Eff$, so it is not equivalent to $1$. Thus the set in~\eqref{eq:pi2} is precisely $\Tot$.

The set $\Tot$ is in $\Pi^0_2$ because membership has the form $\forall k\,\exists s\,T(e,k,s)$ for a recursive bounded-halting predicate. Let any $\Pi^0_2$ set be written $A=\{x:\forall k\,\exists s\,R(x,k,s)\}$ with $R$ recursive. Uniformly compile a program $e(x)$ that on input $k$ searches for a witness $s$ and halts when it finds one. The acceptable numbering allows this compilation effectively. Then
\[
 x\in A\ \Longleftrightarrow\ e(x)\in\Tot
 \ \Longleftrightarrow\
 \Gamma\operatorname{Id}_{U_n}(c_n(e(x)),u_*)\ne\varnothing.
\]
This proves many-one completeness and Theorem~\ref{thm:rice}(\ref{rice:complete}).

In particular $\Tot$ is not recursive. One direct verification reduces the halting problem to it: given a program and an input, form a program that first simulates that computation and, if it halts, returns zero on every argument. It is total exactly when the original computation halts. The halting problem is undecidable by the diagonal argument. This nonrecursiveness yields the semantic separation proved below.

We now compute the global homotopy type and the internal truncation height of the classifier.
Let $R_\bullet$ be the equivalence groupoid just used. By Lemma~\ref{lem:projectives},
\[
 \Gamma U_n\simeq|\Gamma R_\bullet|.
\]
Its object space is $\Gamma N=\NN$. Its arrows between $e,d$ are the global equivalences between $X_{n,e}$ and $X_{n,d}$. Lemma~\ref{lem:em-eq} computes these spaces explicitly. The calculation uses the explicit equivalence-object representation and the preservation of groupoid quotients.

For every individual $e$, the reset construction gives a tracked equivalence of its group with either $0$ or $\Ftwo$ and hence of its fibre with the corresponding standard Eilenberg-Mac Lane type. Both cases occur. Thus the full global equivalence groupoid has two components, represented by $1$ and $K(\Ftwo,n)$. Their automorphism spaces are contractible and $K(\Ftwo,n)$ respectively, since~\eqref{eq:em-equivalence} and $\Aut_{\Ab}(\Ftwo)=1$ identify the latter with translations. Global sections preserve the iterated groupoid quotients, so the global Eilenberg-Mac Lane objects here are the Eilenberg-Mac Lane spaces in $\Spc$ with the stated coefficients. It follows that
\begin{equation}\label{eq:global-calculation}
 \Gamma U_n\simeq B\Aut(1)\amalg B\Aut(K(\Ftwo,n))
 \simeq1\amalg K(\Ftwo,n+1).
\end{equation}
This is a $\pi$-finite space~\cite{Anel25}: it has two components and only one nonzero positive homotopy group, namely $\Ftwo$ in degree $n+1$ on the second component.

At a code $u_{\mathrm{nt}}$ of a fixed non-total index, the internal version of the same calculation is
\[
 \Omega_{u_{\mathrm{nt}}}U_n
 \simeq\underline{\Aut}(K(\Ftwo,n))
 \simeq K(\Ftwo,n).
\]
Consequently $\pi_{n+1}(U_n,u_{\mathrm{nt}})\cong\Ftwo$ as a group object of $\Eff$. Together with the upper bound, this gives $U_n$ exact truncation height $n+1$. This proves Theorem~\ref{thm:rice}(\ref{rice:global}).
\qed\end{proof}

\subsection{Numerical Observations and Classifier Separation}\label{app:numerical-separation}
\begin{proof}[of Theorem~\ref{thm:rice}(\ref{rice:maps}) and~(\ref{rice:local})]
Let $F:U_n\to N$. Since $\Gamma N$ is discrete, $\Gamma F$ takes a constant numerical value $a$ on the contractible component of $\Gamma U_n$ and a value $b$ on its other component. The composite $Fc_n:N\to N$ lies between objects of the embedded $\Eff$. Full faithfulness identifies it with an assembly map, whose carrier function is total recursive because $N$ has the singleton realization relation $\rho_N(k)=\{k\}$~\cite{Hyland82,vanOosten08}. Its values are
\[
 (Fc_n)(e)=
 \begin{cases}a,&e\in\Tot,\\b,&e\notin\Tot.\end{cases}
\]
If $a\ne b$, comparison with the fixed number $a$ decides $\Tot$, contradicting the nonrecursiveness of $\Tot$ established above. Thus $a=b$. An assembly map $N\to N$ is determined by its carrier values, so $Fc_n$ equals the constant-$a$ map as an arrow. Since $c_n$ is an effective epimorphism, equality of maps into the $0$-type $N$ is detected after its pullback, by exact descent. Hence $F$ is the constant-$a$ map.

Every numeral gives such a map and different numerals give different maps because $U_n$ has a global code. The mapping space into $N$ is discrete. Therefore the constant-map inclusion $\NN\to\Map_\E(U_n,N)$ is an equivalence. The same proof, or composition with the inclusion $2\into N$ in $\Eff$, gives $\Map_\E(U_n,2)\simeq2$. This proves Theorem~\ref{thm:rice}(\ref{rice:maps}). The argument uses the two global components of $U_n$ and shows that every internal numerical observation takes the same value on both.

The constancy result distinguishes the uniform classifier from the classifier generated by the two fixed representatives. The totality problem detects this distinction after arbitrary covers, as follows.
Let $q_n:\mathcal Y_n\to2$ be any family whose fibres over $0$ and $1$ are equivalent to $1$ and $K(\Ftwo,n)$, respectively. Suppose there were a cover $a:V\onto N$, a map $r:V\to2$, and an equivalence $a^*p_n\simeq r^*q_n$. By Lemma~\ref{lem:projectives}, choose a section $s:N\to V$. Pulling back gives
\[
 p_n\simeq(rs)^*q_n.
\]
Thus $rs:N\to2$ takes value zero exactly at total indices, since the two fixed fibre types are not equivalent. It is a recursive Boolean-valued function, a contradiction. This proves Theorem~\ref{thm:rice}(\ref{rice:local}).
\qed\end{proof}

This completes the proof of Theorem~\ref{thm:rice}.

We next compare the uniform classifier with the classifier of the fixed two-fibre family.

\begin{proof}[of Corollary~\ref{cor:two-classifiers}]
Form the univalent completion of $q_n$ with base $W_n$ using Lemma~\ref{lem:em-eq}. Its global equivalence groupoid is the groupoid on the same two representatives, so $\Gamma W_n\simeq1\amalg K(\Ftwo,n+1)$. The label map of its equivalence groupoid factors through the constant groupoid on $2$: an equivalence preserves the coefficient-group isomorphism class, and the isomorphism object between $0$ and $\Ftwo$ in $\Eff$ is empty. Equivalently, the source and target labels agree internally. Realization therefore gives $W_n\to2$, taking different values at its two global codes. An equivalence $W_n\simeq U_n$ would transport it to a nonconstant $U_n\to2$, contradicting Theorem~\ref{thm:rice}(\ref{rice:maps}). Hence the two classifiers are inequivalent.
\qed\end{proof}

The disjoint union in the global formula is computed in the external $\infty$-category of spaces. Internally, the uniform tracked family and its equivalence groupoid retain the realizability information detected by this separation.

\section{Dependent Products from Tracked Cochains}\label{app:cochains}
This appendix proves the general reconstruction theorem for dependent products from tracked cochains (Theorem~\ref{thm:cochain-products}) and determines the exact quotient depth of products built from the reset family (Corollary~\ref{cor:cochain-depth}). We work in the higher exact completion \(\E=\Ex(\Asm(K_1))\). Its fully faithful embedding in
\(\HH=\Sh(\Asm(K_1),\Spc)\) preserves finite limits, images, and
effective groupoid quotients, and every object admits an assembly cover.
Choices of carrier elements are made in the working external universe.
Tracked cochains give an explicit calculation of the dependent products given by Theorem~\ref{thm:real-hierarchy}. Starting from the coefficient operations, we construct section and identity codes and reconstruct the product from this finite tower of section codes. The quotient-depth calculation measures this reconstruction using finite limits, images, and effective groupoid quotients, as in Definition~\ref{def:cochain-depth}.

Let $B$ be an assembly and let $A\to B$ be an abelian group object in $\Asm(K_1)_{/B}$ with $2\id_A=0$. Its group operations have uniform partial recursive trackers. Put $G_B=B\times C_2$, where $C_2=\{1,t\}$ has $t^2=1$ and the standard finite decidable assembly presentation. The coefficient action is trivial. Iterated bar quotients define $K_B(A,0)=A$ and $K_B(A,r+1)=B_BK_B(A,r)$; write $B_BG_B$ for the bar quotient of $G_B$. Exactness gives these constructions. Products of objects over $B$ below denote finite fibre products by default.

\begin{theorem}[Dependent products from tracked cochains]\label{thm:cochain-products}
For every $n\ge1$, consider the structure map $f:B_BG_B\to B$ and the second projection $g_n:K_B(A,n)\times_B B_BG_B\to B_BG_B$.
The map $A\to B$ is a weak section family for $(f,g_n)$, with evaluation induced by the normalized cocycle
\begin{equation}\label{eq:cochain-code}
 c_a(h_1,\ldots,h_n)=
 \begin{cases}a,&h_1=\cdots=h_n=t,\\0,&\text{otherwise}.\end{cases}
\end{equation}
Finite fibre powers of $A$ provide independently tracked codes for every successive identity problem. Coherent fullness reconstructs a dependent product
\begin{equation}\label{eq:cochain-product}
 Q_n(A):=\Pi_fg_n\simeq\prod_{j=0}^{n}K_B(A,j)
 \quad\text{in }\E_{/B}.
\end{equation}
For every $T\to B$ in $\E$, evaluation induces a natural equivalence of mapping spaces
\begin{equation}\label{eq:cochain-adjunction}
 \Map_{\E/B}(T,Q_n(A))\simeq
 \Map_{\E/B}(T\times_B B_BG_B,K_B(A,n)).
\end{equation}
The construction commutes with arbitrary base change in $\E$ and uses at most $n$ nested effective groupoid-quotient levels.
\end{theorem}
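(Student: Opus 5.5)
The plan is to compute the ambient section space first and then match it with the code tower. Since every construction is pullback-stable, I work fibrewise over $B$ in $\HH$. Maps $T\times_B B_BG_B\to K_B(A,n)$ over $B$ are parametrized maps $BC_2\to K(A,n)$. Because $C_2$ acts trivially, the free loop/mapping-space calculation gives
\[
 \Map(BC_2,K(A,n))\simeq\prod_{j=0}^{n}K\bigl(H^{n-j}(C_2;A),j\bigr).
\]
Since $2\id_A=0$, we have $H^k(C_2;A)\cong A$ for every $k\ge0$, with the class represented by the normalized cocycle $c_a$ of \eqref{eq:cochain-code}. This yields \eqref{eq:cochain-product} ambiently: the splitting comes from the based/unbased fibre sequence and the section given by constant maps. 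I verify it relatively over $B$ using the bar model of $B_BG_B$, whose $k$-simplices are $G_B^{\times_B k}$, together with a cochain description of maps into the iterated bar quotient $K_B(A,n)$.

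Next I show that $A\to B$ with evaluation induced by $c_a$ is a weak section family. The map $a\mapsto c_a$ is tracked uniformly, since it is case analysis on the decidable tags $h_i\in\{1,t\}$ composed with the trackers of $A$. It defines a map $A\times_B B_BG_B\to K_B(A,n)$ by realizing the simplicial cocycle map levelwise. By Lemma~\ref{lem:weak-cover}, adequacy reduces to the transpose $A\to\Pi_fg_n$ being a cover. On $\pi_0$ this transpose hits every cohomology class, because each class in $H^n(C_2;A)\cong A$ is represented by some $c_a$. Only $\pi_0$ matters for being a cover, since the other factors $K_B(A,j)$ with $j\ge1$ are connected over $B$. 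Because $A$ is a partitioned-style object covered projectively, local lifting follows.

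For the successive identity problems I use the identity-family recursion from the proof of Theorem~\ref{thm:relative-fullness}. The identity family between two evaluations $c_a,c_{a'}$ is a torsor for maps $B_BG_B\to K_B(A,n-1)$, inhabited exactly when $a=a'$. Hence the codes at stage $k$ are fibre powers $A^{\times_B(k+1)}$: the previous codes together with an $(n-k)$-cochain code $c_b$, where $b$ ranges over $A$, for the next lower homotopy. Each of these is tracked independently by the same cocycle formula in lower degree. The \v{C}ech groupoids then become bar constructions of $A$ acting trivially. Each quotient stage produces one $K_B(A,j)$ factor, and the base identity problem at height $-1$ needs only an image, which is an equivalence here. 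This gives at most $n$ nested quotients rather than $n+1$, because the base stage contributes a trivial image. Equation \eqref{eq:cochain-adjunction} and base change then follow from Theorem~\ref{thm:relative-fullness} together with Beck--Chevalley in $\HH$.

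The main obstacle I expect is making the fibrewise cochain calculation honest internally. The step to check carefully is that homotopies between cocycle evaluations are coded exactly by coboundary data in one lower degree, coherently at every level and uniformly over $B$, and not merely on global points. I would handle this by working with normalized simplicial cochains in the ambient topos. The strict Dold--Kan model of $K_B(A,n)$ then identifies identity families with cochain-level mapping objects, which are finite limits of powers of $A$ and therefore lie in $\E$.
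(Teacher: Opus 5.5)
Your proposal is correct in outline. It uses the same ingredients as the paper: the cocycle codes $c_a$, identity codes given by fibre powers of $A$ carrying lower normalized cochains, the zero-differential normalized complex of $C_2$ with $2A=0$, and reconstruction by the identity recursion. The difference is in how these are organized and made rigorous.

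\emph{Your route.} You compute $\Pi_fg_n$ in advance from the classical decomposition of $\Map(BC_2,K(A,n))$. You then get adequacy from a $\pi_0$ criterion: a map over $B$ is a cover once its relative $0$-truncation is, and the factors $K_B(A,j)$ with $j\ge1$ are connected over $B$. The internal validity of the cochain calculation is deferred to a strict Dold--Kan model inside the ambient topos.

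\emph{The paper's route.} The paper never identifies the product in advance. It tests on partitioned parameters $P\to B$. Projectivity makes $\Map_{\E/B}(P,-)$ preserve finite limits, covers and groupoid quotients, so
\[
\Map_{\E/B}(P\times_BB_BG_B,K_B(A,n))\simeq\Map_{\Spc}(BC_2,K(H_P,n)),\qquad H_P=\operatorname{Hom}_{\Asm/B}(P,A),
\]
by Lemmas~\ref{lem:cc-projective} and~\ref{lem:cc-section-test}. The external Dold--Kan tower, with coordinates $a_n,\dots,a_0\in H_P$, then directly gives adequacy of $c_a$ and of every identity code, because partitioned assemblies cover every object. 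Lemma~\ref{lem:cc-finite} then supplies existence and~\eqref{eq:cochain-adjunction}. The product formula is read off afterwards from the zero-differential complex $\bigoplus_jA[j]$, using descent from projective covers.

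\emph{What each buys.} Testing on projective parameters is exactly the answer to the obstacle you flag. It turns the internal, uniform-in-$B$ calculation into an external one with coefficient group $H_P$. This avoids any question of whether strict cochain objects in $\HH$ compute derived mapping objects. Your ordering, in exchange, makes the product formula and the cover property immediate once the internal splitting is in hand. Two points you gloss over need care. First, the based/unbased splitting only separates off the $K_B(A,n)$ factor; the remaining factors need the chain-level zero differential, which you do eventually invoke. Second, the realization of the strict $\operatorname{DK}(A[n])$ must be identified with the iterated bar object $K_B(A,n)$; the paper does this via normalization and the Alexander--Whitney and shuffle maps.

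\emph{One correction to your quotient count.} The general bound is an image plus at most $n+1$ nested quotients. The height-$(-1)$ image is not counted among those quotients, so a trivial image saves nothing. The saving is at height $0$. The bottom ($a_0$) section problem is already represented by its code object, because maps from the connected $BC_2$ into a discrete object are constant. Its \v{C}ech groupoid is therefore discrete, and the corresponding quotient is trivial. That is what reduces $n+1$ to $n$.
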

For $A=C_2$ over the terminal assembly, the code object has two discrete elements, while each component of $Q_2(A)$ has nonzero $\pi_1$ and $\pi_2$. The quotient steps reconstruct identities between the represented functions and their higher identities. The classical cochain calculation~\cite{StacksDK,KerodonDK} gives uniform tracked codes directly from the coefficient operations.

\begin{definition}[Quotient depth]\label{def:cochain-depth}
Construction expressions in $\E_{/B}$ are generated from objects of the embedded slice $\Eff_{/B}$ by finite limits, images, and effective Segal groupoid quotients. Their depths are defined recursively. An initial expression has depth zero. A finite-limit or image expression has depth equal to the maximum depth of its inputs, with the empty maximum taken to be zero. The quotient formation rule takes an effective Segal groupoid of expressions whose level depths $d_k$ have finite $\sup_{k\ge0}d_k$, and assigns its quotient expression depth $1+\sup_{k\ge0}d_k$. For an object $X$ admitting such an expression, its quotient depth is the least depth of a construction expression whose value is equivalent to $X$.
\end{definition}

\subsection{Reconstruction from a Finite Tower of Section Codes}

Recall that a weak section family for $Z\xrightarrow gY\xrightarrow fX$
is an object $W\to X$ and an evaluation $W\times_XY\to Z$ over $Y$
such that every parameterized section becomes one of these evaluations
after a cover of its parameter object. The local representation includes a witnessing homotopy over $Y$
and therefore retains the identity data used in the induction.

\begin{lemma}[Reconstruction from a finite tower of section codes]\label{lem:cc-finite}
Let $\mathcal C$ be a small regular category and
$\mathcal H=\operatorname{Sh}_{\mathrm{reg}}(\mathcal C,\Spc)$.
Let $\mathcal C\subseteq\mathcal D\subseteq\mathcal H$ be full regular
embeddings, with $\mathcal D$ exact and the embedding $\mathcal D\hookrightarrow\mathcal H$ preserving its groupoid quotients.
For $n\ge-1$, consider $Z\xrightarrow gY\to X$ in
$\mathcal D$, with $g$ $n$-truncated. Suppose one has independently constructed weak section families
for this problem and, recursively, for the identity problem between every
pair of universal codes, down to the $(-1)$-truncated stage. Then
$\Pi^{\mathcal H}_{Y\to X}g$ belongs to $\mathcal D$ and is its dependent
product there. An image and at most $n+1$ nested groupoid quotients suffice.
\end{lemma}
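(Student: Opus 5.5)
We follow the identity induction from the proof of Theorem~\ref{thm:relative-fullness}, with $\E\subseteq\HH_\C$ replaced by $\mathcal D\subseteq\mathcal H$. The supplied tower of weak section families takes the place of Lemma~\ref{lem:all-contexts}. We induct on $n$, simultaneously over all bases in $\mathcal D$ and all problems occurring in the given tower. Write $S=\Pi^{\mathcal H}_{Y\to X}g$. This object is $n$-truncated over $X$, because the ambient right adjoint on slices preserves finite limits and hence finite truncation. Let $W\to X$ be the given code family. Lemma~\ref{lem:weak-cover}, applied in $\mathcal H$, makes the transpose $a:W\onto S$ a cover. Weak adequacy with respect to covers in $\mathcal D$ implies local surjectivity on representables, since objects of $\C$ lie in $\mathcal D$ and covers in $\mathcal D$ are covers in $\mathcal H$ by regularity of the embedding.

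\emph{Base case $n=-1$.} The map $S\to X$ is monic, so $W\onto S\into X$ is the ambient image factorization of $W\to X$. The embedding $\mathcal D\into\mathcal H$ is regular and therefore preserves images. Hence $S$ is the image formed in $\mathcal D$, which uses a single image.

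\emph{Inductive step $n\ge0$.} Set $B=W\times_XW$ and $Y_B=Y\times_XB$, and let $I\to Y_B$ be the pullback of $\Delta_g$ along the two evaluations. This map is $(n-1)$-truncated. By hypothesis, the tower supplies weak section families for this identity problem and for all its successors. Testing against ambient $T\to B$ and using Yoneda, exactly as in \eqref{eq:kernel-identity}, gives
\[
 W\times_SW\simeq\Pi^{\mathcal H}_{Y_B\to B}(I\to Y_B).
\]
The induction hypothesis places this object in $\mathcal D$, built from an image and at most $n$ nested quotients. By the Segal identities, every term of the \v{C}ech nerve of $a$ is a finite fibre product of $W$ and $W\times_SW$, so every term lies in $\mathcal D$. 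Full faithfulness of the embedding lifts the structure maps and their coherences, giving an internal groupoid in $\mathcal D$. Exactness of $\mathcal D$ produces its quotient, and since the embedding preserves groupoid quotients, that quotient is identified with $|a|\simeq S$. The nesting depth therefore grows by one at each step, giving at most $n+1$ quotients. Finally, full faithfulness restricts the ambient adjunction $\Map_{/X}(T,S)\simeq\Map_{/Y}(T\times_XY,Z)$ to tests $T\in\mathcal D$, so $S$ is the dependent product in $\mathcal D$.

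The main obstacle is the identification of the kernel $W\times_SW$ with the section object of the identity family, together with the bookkeeping that the tower's code for this identity problem is a code for exactly that problem. The identification must hold as an equivalence of spaces natural in $T$, so that higher paths are retained; only then is the whole \v{C}ech nerve, and not just its $1$-skeleton, correctly represented. We would prove it through the section adjunction, as in the proof of Theorem~\ref{thm:relative-fullness}. Recursively, the identity problems for the level-$n$ codes are the problems for which the tower supplies families, which is how the hypothesis on the tower is used.
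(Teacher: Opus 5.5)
Your proposal is correct and follows essentially the same route as the paper. The shared steps are the ambient section object $S$ with its code cover $W\onto S$, the image base case, the identification $W\times_SW\simeq\Pi^{\mathcal H}_{Y_B\to B}(I\to Y_B)$ with the section object of the identity family, induction using the supplied tower, and the effective quotient of the \v{C}ech nerve in $\mathcal D$. Your extra remark that adequacy tested with $\mathcal D$-covers still gives local surjectivity over representables corresponds to the paper's note that adequacy checked on a class covering the representable tests suffices.
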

\begin{proof}
Let $S=\Pi^{\mathcal H}_{Y\to X}g$. The code evaluation induces a cover
$a:W\onto S$: weak adequacy is local surjectivity over the
representable tests of the regular site. The same conclusion holds if
weak adequacy has been checked on a class covering those tests.

When $n=-1$, $S\to X$ is monic and $S=\im(W\to X)$, which is in
$\mathcal D$. For $n\ge0$ set $D=W\times_XW$ and $Y_D=Y\times_XD$.
Let $I\to Y_D$ be the identity family between the two universal
evaluations. It is a pullback of $\Delta_g$ and is $(n-1)$-truncated.
In the ambient slice,
\begin{equation}\label{eq:cc-kernel}
 W\times_SW\simeq\Pi^{\mathcal H}_{Y_D\to D}(I\to Y_D).
\end{equation}
Both sides classify homotopies between parameterized sections, together
with all higher homotopies; this is the ambient dependent-product
adjunction. The recursive code hypothesis and induction represent the
right side. Consequently the full \v{C}ech nerve of $W\onto S$ lies in
$\mathcal D$, since its terms are finite limits of $W$ and this kernel.
Its effective quotient lies in $\mathcal D$ and agrees with $S$ in
$\mathcal H$. Full faithfulness restricts the ambient adjunction.
Each identity lowers the truncation degree by one and each inductive step
uses one quotient, proving the bound.
\qed\end{proof}

The indicated tower of section problems reconstructs the product by the
section-identity construction of Theorem~\ref{thm:relative-fullness}.

\subsection{Bar Constructions over Projective Parameters}

Every object of $\E$ has a partitioned assembly cover by
Lemmas~\ref{real:ordinary-source} and~\ref{pt:projectives}.
We use these projective parameters to evaluate the bar constructions
from Appendix~\ref{app:em}.

\begin{lemma}[Projective evaluation]\label{lem:cc-projective}
A partitioned assembly is projective against covers in $\E$. If
$P\to B$ is partitioned, then
\[
 \Gamma_P:\E_{/P}\longrightarrow\Spc,
 \qquad X\longmapsto\Map_{\E/P}(P,X)
\]
preserves finite limits, covers, and effective groupoid quotients.
In particular, if $A_P=A\times_BP$ and
$H_P=\operatorname{Hom}_{\Asm/B}(P,A)$, then
\begin{equation}\label{eq:cc-projective-em}
 \Map_{\E/B}(P,K_B(A,r))\simeq K(H_P,r),\qquad r\ge0.
\end{equation}
\end{lemma}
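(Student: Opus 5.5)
Projectivity of a partitioned assembly $P$ in $\E$ is already available: it is regular projective in $\Asm$ by Lemma~\ref{real:ordinary-source}, and Lemma~\ref{pt:projectives} transfers this to covers in $\E$. I will run the rest of the argument in the slice, in the pattern of Lemma~\ref{lem:projectives}. The functor $\Gamma_P=\Map_{\E/P}(P,-)$ is corepresented, so it preserves all limits, in particular finite ones.

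\textbf{Covers.} For a cover $Y\onto X$ in $\E_{/P}$ and a point $P\to X$ over $P$, pulling back gives a cover $Y\times_XP\onto P$. Projectivity of $P$ splits it, so $\pi_0\Gamma_PY\to\pi_0\Gamma_PX$ is surjective. This is exactly the statement that $\Gamma_PY\to\Gamma_PX$ is an effective epimorphism of spaces.

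\textbf{Quotients.} An effective groupoid is the \v{C}ech nerve of its quotient cover. $\Gamma_P$ sends that cover to an effective epimorphism of spaces, and since it preserves finite limits it sends the \v{C}ech nerve to the \v{C}ech nerve of the image map. In $\Spc$ an effective epimorphism is the realization of its \v{C}ech nerve, so $\Gamma_P$ preserves the quotient.

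\textbf{The formula \eqref{eq:cc-projective-em}.} First, $\Map_{\E/B}(P,K_B(A,r))\simeq\Gamma_P(K_P(A_P,r))$, because the bar construction commutes with pullback along $P\to B$ (Appendix~\ref{app:em}). I then induct on $r$.
\begin{itemize}
\item[(i)] For $r=0$, full faithfulness of $\Asm\into\E$ identifies $\Gamma_P(A_P)$ with the discrete set $\operatorname{Hom}_{\Asm/B}(P,A)=H_P$. Its group structure is pointwise, induced by the tracked group operations of $A$.
\item[(ii)] For the inductive step, $K_P(A_P,r+1)$ is the effective quotient of the bar nerve of $K_P(A_P,r)$. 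This nerve has levels $K_P(A_P,r)^{\times_P k}$, and its face maps come from the coherent commutative group structure. Applying $\Gamma_P$ preserves the finite products and the quotient, giving $|\Gamma_P K_P(A_P,r)^{\times k}|$. By induction this is the bar construction on the commutative group space $K(H_P,r)$, namely $BK(H_P,r)\simeq K(H_P,r+1)$.
\end{itemize}

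The step I expect to be the main obstacle is the compatibility in (ii). One has to check that $\Gamma_P$ carries the internal coherent group structure on $K_P(A_P,r)$ to the standard group structure on $K(H_P,r)$, so that the realized nerve really is the standard bar construction. I plan to handle this by noting that $\Gamma_P$ preserves finite products. It therefore sends coherent commutative group objects to coherent commutative group objects in $\Spc$, and it identifies the structure at $r=0$ with the pointwise group $H_P$. The loop calculation $\Omega BX\simeq X$, which $\Gamma_P$ preserves since it preserves pullbacks and the basepoint, then pins down the homotopy groups. Together with connectivity (nonempty $\pi_0$ after a cover, by the cover step) this gives exactly $K(H_P,r)$.
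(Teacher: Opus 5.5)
Your proposal is correct and follows the paper's proof step by step. The shared steps are: projectivity via Lemmas~\ref{real:ordinary-source} and~\ref{pt:projectives}, limits by corepresentability, covers by pulling back along a point and splitting, quotients via \v{C}ech nerves, and induction over the bar constructions with the base case $r=0$ by full faithfulness. Your explicit reduction $\Map_{\E/B}(P,-)\simeq\Gamma_P(P\times_B-)$ and your transport of the commutative group structure through the product-preserving $\Gamma_P$ simply spell out what the paper compresses into ``iterating this argument on the bar constructions.''
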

\begin{proof}
Projectivity of $P$ in $\E$ follows from
Lemmas~\ref{real:ordinary-source} and~\ref{pt:projectives}.

Finite limits are preserved by a represented mapping-space functor.
Every point of the target of a cover pulls it back to a cover over $P$,
which splits. Hence $\Gamma_P$ sends covers to effective epimorphisms of
spaces. An effective groupoid is the \v{C}ech nerve of its quotient cover.
Preserving that nerve and cover preserves its realization. Iterating this
argument on the bar constructions gives~\eqref{eq:cc-projective-em}.
At $r=0$, full faithfulness identifies the group of maps with $H_P$.
\qed\end{proof}

We can now compute section spaces over these parameters by mapping out
of the bar presentation of $B_BG_B$.

\begin{lemma}[Sections on a projective parameter]\label{lem:cc-section-test}
For a partitioned $P\to B$ there is a natural equivalence
\begin{equation}\label{eq:cc-test}
 \Map_{\E/B}(P\times_B B_BG_B,K_B(A,n))
 \simeq\Map_{\Spc}(BC_2,K(H_P,n)).
\end{equation}
\end{lemma}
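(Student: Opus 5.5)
The plan is to compute the left side of \eqref{eq:cc-test} by mapping out of the bar presentation of $B_BG_B$, then to evaluate each level with Lemma~\ref{lem:cc-projective}.

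First, pull the bar groupoid of $G_B$ back along $P\to B$. Its levels are $P\times_B G_B^{\times_B k}\simeq P\times C_2^k$. These are finite coproducts of copies of $P$, indexed by $C_2^k$, and they are partitioned assemblies. Pullback in $\E$ preserves effective groupoid quotients, because quotients are pullback-stable in the ambient topos and the embedding preserves them. Hence
\[
 P\times_B B_BG_B\simeq|P\times C_2^{\bullet}|
\]
in $\E_{/B}$. Mapping a realization into $K_B(A,n)$ turns it into a totalization:
\[
 \Map_{\E/B}(P\times_BB_BG_B,K_B(A,n))\simeq\lim_{[k]\in\Delta}\Map_{\E/B}(P\times C_2^k,K_B(A,n)).
\]

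Second, evaluate each term. Lemma~\ref{real:projective-coproducts} identifies maps out of the finite coproduct $P\times C_2^k=\coprod_{C_2^k}P$ with the product $\prod_{C_2^k}\Map_{\E/B}(P,K_B(A,n))$. By \eqref{eq:cc-projective-em}, each factor is $K(H_P,n)$. So the $k$-th term is $\Map_{\Spc}(C_2^k,K(H_P,n))$.

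Third, check that the cosimplicial structure maps agree with those induced by the bar nerve of $C_2$ in spaces. The faces and degeneracies of $P\times C_2^\bullet$ are $P$ times the bar-nerve maps of $C_2$, and these are carrier functions tracked uniformly. Naturality of the two identifications (the coproduct decomposition and the projective evaluation) in maps between finite coproducts of $P$ over $B$ then gives an equivalence of cosimplicial spaces. The totalization is $\Map_{\Spc}(|NC_2|,K(H_P,n))=\Map_{\Spc}(BC_2,K(H_P,n))$. Naturality in $P$ follows from the same naturality.

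The main obstacle is this third step. Lemma~\ref{lem:cc-projective} is stated objectwise, so I need its identification to be coherently natural with respect to the maps $P\times C_2^k\to P\times C_2^l$. I would handle it by regarding $\Gamma_P$ as a functor $\E_{/P}\to\Spc$. The cosimplicial object is $\Gamma_P$ applied to the cotensor of $K_B(A,n)\times_BP$ with $C_2^\bullet$. Since $\Gamma_P$ preserves finite limits, in particular finite products over $P$, this identifies it functorially with $\Gamma_P(K)^{C_2^\bullet}$ in $\Spc$.
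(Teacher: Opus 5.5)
Your proposal is correct and follows essentially the same route as the paper: you write the left side as the totalization over the pulled-back bar nerve $P\times C_2^\bullet$, evaluate each level by projective evaluation, and recognize the cosimplicial space of maps from the nerve of $C_2$ into $K(H_P,n)$. The only difference is where the splitting over $C_2^k$ happens. The paper applies Lemma~\ref{lem:cc-projective} to each partitioned $P\times C_2^k$ and then splits the group $\operatorname{Hom}_{\Asm/B}(P\times C_2^k,A)=H_P^{C_2^k}$ by finite case distinction. You instead split the mapping space in $\E$ via Lemma~\ref{real:projective-coproducts} and apply \eqref{eq:cc-projective-em} only once, to $P$. In your version the structure maps act only through the index sets $C_2^k$, so the coherence issue you flag as the ``main obstacle'' largely disappears.
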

\begin{proof}
Mapping out of the bar quotient gives the limit of the cosimplicial spaces
\[
 [k]\longmapsto\Map_{\E/B}(P\times C_2^k,K_B(A,n)).
\]
Each $P\times C_2^k$ is partitioned. By
Lemma~\ref{lem:cc-projective}, its displayed mapping space is
$K(\operatorname{Hom}_{\Asm/B}(P\times C_2^k,A),n)$. Finite case distinction on the
decidable $C_2$ coordinates gives, naturally,
\[
 \operatorname{Hom}_{\Asm/B}(P\times C_2^k,A)=H_P^{C_2^k}.
\]
These are precisely the levels and cosimplicial maps obtained by mapping
the nerve of $C_2$ into $K(H_P,n)$. Taking limits proves the claim.
\qed\end{proof}

\subsection{Cocycle and Identity Codes from Coefficient Operations}

\paragraph*{Normalized cochain calculation.}
We recall the elementary cochain calculation, including the identity data
needed below. Let $H$ be any abelian group with $2H=0$, with trivial
$C_2$-action. For $q>0$, $C^q(C_2;H)$ consists of normalized cochains, namely functions $C_2^q\to H$ vanishing whenever an argument is $1$.
Evaluation at $(t,\ldots,t)$ identifies $C^q(C_2;H)=H$; put $C^0(C_2;H)=H$ and $Z^q(C_2;H)=\ker(\delta:C^q(C_2;H)\to C^{q+1}(C_2;H))$.
For $q>0$ the coboundary is
\begin{align*}
 (\delta u)(g_1,\ldots,g_{q+1})={}&u(g_2,\ldots,g_{q+1})\\
 &+\sum_{i=1}^{q}(-1)^i
 u(g_1,\ldots,g_i g_{i+1},\ldots,g_{q+1})\\
 &+(-1)^{q+1}u(g_1,\ldots,g_q).
\end{align*}
At $(t,\ldots,t)$ all middle terms vanish. The remaining coefficient is
$1+(-1)^{q+1}$, which is either zero or two. Moreover
$(\delta h)(t)=h-h=0$ in degree zero. Therefore the full normalized complex is
\begin{equation}\label{eq:cc-zero-complex}
 H\xrightarrow{0}H\xrightarrow{0}H\xrightarrow{0}\cdots.
\end{equation}

\begin{lemma}[Cochain codes as a coherent tower]\label{lem:cc-cochains}
For $n\ge1$, the mapping space $\Map(BC_2,K(H,n))$ has the following natural
coherent model. Its objects are $a_n\in H$. An arrow consists of a pair
of objects with equal $a_n$ coordinates and a choice of $a_{n-1}\in H$.
When $n\ge2$, a $2$-arrow consists of a pair of parallel arrows with equal
$a_{n-1}$ coordinates and a choice of $a_{n-2}\in H$. This continues
down to $a_0\in H$; the next level is equality. Compositions are addition. In particular every
component of the mapping space has a representative~\eqref{eq:cochain-code},
and this same assertion holds for every successive identity space.
\end{lemma}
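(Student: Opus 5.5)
We model $\Map(BC_2,K(H,n))$ by the normalized cochain complex of $C_2$ with coefficients in $H$, and then read off the tower from the complex \eqref{eq:cc-zero-complex}. The first step is to recall the standard identification: for a discrete group $G$ acting trivially on an abelian group $H$, the mapping space $\Map(BG,K(H,n))$ is the Dold--Kan space associated to the truncated normalized cochain complex
\[
 C^0(G;H)\to C^1(G;H)\to\cdots\to C^{n-1}(G;H)\to Z^n(G;H),
\]
reindexed so that $Z^n$ sits in homological degree $0$ and $C^{n-k}$ in degree $k$~\cite{StacksDK,KerodonDK}. One way to obtain it is to map the nerve of $G$ levelwise into the simplicial abelian group $K(H,n)$, as in the proof of Lemma~\ref{lem:cc-section-test}. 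The totalization of this cosimplicial simplicial abelian group is computed by the normalized cochains, and the normalization condition (vanishing when an argument is $1$) corresponds to discarding degenerate simplices of $BC_2$. Naturality in $H$ holds because every construction is additive.

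Next we insert the calculation of the previous paragraph. Evaluation at $(t,\ldots,t)$ identifies each $C^q(C_2;H)$ with $H$, and every coboundary is zero, so $Z^n=H$. Hence the chain complex is $H\xrightarrow{0}H\xrightarrow{0}\cdots\xrightarrow{0}H$, with $H$ in degrees $0,\dots,n$. For a chain complex $V_\bullet$ of abelian groups, the Dold--Kan simplicial group is an $\infty$-groupoid with the following description. Objects are elements of $V_0$. Arrows $x\to y$ are elements $v\in V_1$ with $\partial v=y-x$. A $2$-cell between parallel arrows $v,w$ is $u\in V_2$ with $\partial u=w-v$, and so on up to degree $n$, where the next level is strict equality. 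Composition is addition.

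With all differentials zero, the boundary conditions reduce exactly to the stated ones. An arrow between $a_n,a_n'$ exists only when $a_n=a_n'$ and is then a free choice of $a_{n-1}$. A $2$-arrow between parallel arrows requires equal $a_{n-1}$ and is a choice of $a_{n-2}$, continuing down to $a_0$, after which only identities remain. Each object $a_n\in H$ is represented by the normalized cocycle $c_{a_n}$ of \eqref{eq:cochain-code}, because $C^n=H$ by evaluation at $(t,\ldots,t)$. The identity space between two cells at stage $k$ is, by the same model, again a tower of this form with top coordinate $a_{k-1}$, represented by the analogous cochain. This gives the assertion for every successive identity space.

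The main obstacle is justifying that the Dold--Kan model gives a coherent, natural presentation of the mapping space, rather than only its homotopy groups. We handle this by working strictly: $BC_2$ is a simplicial set, and $K(H,n)$ is the Kan simplicial abelian group $\Gamma(H[n])$. Simplicial maps $N C_2\times\Delta^\bullet\to\Gamma(H[n])$ then correspond, via the Dold--Kan adjunction and Eilenberg--Zilber, to chain maps out of the normalized chains $N(\mathbb Z[NC_2])\otimes N\mathbb Z[\Delta^\bullet]$. This identifies the Kan complex of maps with the Dold--Kan space of the truncated cochain complex, naturally in $H$. Since $C^0(C_2;H)\to C^1(C_2;H)$ is zero, no quotient by coboundaries intervenes at the object level.
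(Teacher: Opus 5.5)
Your proposal is correct and follows essentially the same route as the paper. Both arguments model $K(H,n)$ by $\operatorname{DK}(H[n])$, identify the mapping space with the Dold--Kan space of the connective truncation of the normalized cochain complex (via the free simplicial abelian adjunction, normalization, and Eilenberg--Zilber), and read off the strict abelian tower from the zero-differential complex~\eqref{eq:cc-zero-complex}. Two small slips do not affect the argument. The differential relevant at the object level is $\delta:C^{n-1}(C_2;H)\to C^{n}(C_2;H)$, not $C^0\to C^1$; it vanishes for every $n$ by the same calculation. The Eilenberg--Zilber step gives a natural homotopy equivalence rather than a literal bijection of simplicial maps, which is exactly the comparison the paper invokes.
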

\begin{proof}
Use the simplicial abelian model $\operatorname{DK}(H[n])$ for $K(H,n)$, where $H[n]$
is the homological complex concentrated in degree $n$. A simplicial map
from a simplicial set $L$ to this model is a chain map from the normalized
free chains $N\mathbb Z[L]$ to $H[n]$, hence an $n$-cocycle. Homotopies
are modeled by degree-one chain homotopies; iterated homotopies use the
subsequent degrees. More precisely the derived mapping space is modeled
by Dold-Kan applied to the connective truncation of the chain Hom
complex. This follows by the free simplicial abelian adjunction,
normalization, and the chain homotopy equivalence between chains on a
product and the tensor product of chains. The free chains of $L$ compute the derived mapping space directly.
These are the standard normalization and homotopy constructions of
Dold-Kan \cite{StacksDK,KerodonDK}.

For $L=BC_2$, the degree-zero term of this connective complex is
$Z^n(C_2;H)$. Its degree-$i$ term is $C^{n-i}(C_2;H)$ for
$1\le i\le n$.
Thus $a_n$ occupies homological degree zero, while $a_{n-i}$ occupies
degree $i$ and records an $i$-fold identity, for $1\le i\le n$.
All its differentials vanish by~\eqref{eq:cc-zero-complex}, and
$Z^n(C_2;H)=H$. Thus the complex is $H$ in each degree $0,\ldots,n$
with zero differential. Its strict abelian higher groupoid has exactly
the listed coordinates and compositions. This Dold-Kan model is equivalent to $\Map(BC_2,K(H,n))$. The equivalence is natural in $H$, since each chain construction above is functorial in $H$.
\qed\end{proof}

\begin{proposition}[Uniformly tracked section and identity codes]\label{prop:cc-codes}
The normalized cocycle
\begin{equation}\label{eq:cc-cocycle}
 c_a(h_1,\ldots,h_n)=
 \begin{cases}a,&h_1=\cdots=h_n=t,\\0,&\text{otherwise}\end{cases}
\end{equation}
defines an evaluation
\[
 e_n:A\times_B B_BG_B\longrightarrow K_B(A,n)
\]
over $B$. With evaluation $(e_n,\operatorname{pr}_2)$ over $B_BG_B$, $A\to B$ is a weak section family for Theorem~\ref{thm:cochain-products}.
The coordinate descriptions in Lemma~\ref{lem:cc-cochains} give independent
weak section families for every recursively induced identity problem, all the
way to the monic stage.
\end{proposition}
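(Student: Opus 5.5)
To prove Proposition~\ref{prop:cc-codes}, I will first construct the evaluation $e_n$ on projective (partitioned) parameters and then glue. The construction is uniform in the coefficient group. For a partitioned $P\to B$ and a map $P\to A$ over $B$, that is, an element $a\in H_P$, the cochain $c_a$ is a normalized $n$-cocycle. This holds because the complex~\eqref{eq:cc-zero-complex} has zero differentials. Its value on the simplices $(h_1,\ldots,h_n)$ is computed by one fixed program: decide the $C_2$ coordinates and return either the given name of $a$ or the tracked zero. Hence the degreewise maps $A\times C_2^k\to A_{(k)}$ into the bar levels of $K_B(A,n)$ are assembly morphisms tracked uniformly over $B$. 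They satisfy the simplicial identities strictly, so they define a map of simplicial objects. Realizing by effective groupoid quotients (Lemma~\ref{lem:quotient-truncation}) then gives
\[
 e_n:A\times_B B_BG_B\longrightarrow K_B(A,n)
\]
over $B$. Pairing with $\operatorname{pr}_2$ gives an evaluation over $B_BG_B$.

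Next I verify weak adequacy, using Lemma~\ref{lem:weak-cover}. It suffices to check adequacy on partitioned parameters $P\to B$, because these cover every object of $\E$ and Lemma~\ref{lem:cc-projective} makes them projective. By Lemma~\ref{lem:cc-section-test}, sections over $P$ form $\Map_{\Spc}(BC_2,K(H_P,n))$. By Lemma~\ref{lem:cc-cochains}, every component of this space contains some $c_a$ with $a\in H_P$. Such an $a$ is exactly a map $P\to A$ over $B$, and its evaluation is homotopic to the given section. Thus the transpose $A\to\Pi_fg_n$ is surjective on $\pi_0$ over every partitioned test. This is local surjectivity, so the transpose is a cover. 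The code is obtained on $P$ itself, with no further cover needed.

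For the identity problems, I compare two codes $a,a'$ over $D=A\times_BA$. By Lemma~\ref{lem:cc-cochains}, the identity space is empty unless $a=a'$. The subobject $\{a=a'\}\into D$ is original, so the first stage of the identity problem is the diagonal image. When $a=a'$, the identity space is again a cochain space, now with coordinate $a_{n-1}\in H$ and the same zero differentials. So the next code object is $A\times_B A\times_B A$, mapping to $D$ by restriction to the equalizer, and its evaluation is the cochain homotopy given by the normalized $(n-1)$-cochain $c_{a_{n-1}}$. Iterating descends through $a_{n-2},\ldots,a_0$; the last stage is equality, which is monic. At each stage adequacy follows by the same projective-parameter argument, applied to the iterated section test of Lemma~\ref{lem:cc-section-test}. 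The mapping-space equivalence is natural in $H_P$ and hence compatible with base change along $P$, which gives the independence and naturality of the codes.

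The main obstacle is the coherence step. The abstract equivalence of Lemma~\ref{lem:cc-cochains} identifies components and identity spaces only up to a natural equivalence. I must make sure that the concrete homotopies produced by the tracked cochains $c_{a_{n-i}}$ realize exactly the coordinates of the Dold--Kan model, rather than merely some equivalent object. I will handle this by working in the strict simplicial abelian model on every partitioned parameter, where the evaluations are honest chain maps and chain homotopies. I then transport along the natural equivalence~\eqref{eq:cc-test}, which is natural in partitioned $P$, so no choices depend on the parameter.
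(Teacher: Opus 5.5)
Your overall route is the paper's. You build a universal, uniformly tracked cocycle map on the code object $A$. You check adequacy on partitioned covers through Lemmas~\ref{lem:cc-section-test} and~\ref{lem:cc-cochains}. You then descend through $a_{n-1},\ldots,a_0$, with code objects that are finite fibre powers of $A$ cut out by diagonals, ending at the monic equality stage. Your identity-problem bookkeeping is fine: the equalizer inside $A\times_BA\times_BA$ is the paper's $W_1\cong A\times_BA\to A\times_BA$, $(a,b)\mapsto(a,a)$.

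The gap is the compatibility step you single out, and the fix you propose does not close it.

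\emph{What is needed.} The sentence ``its evaluation is homotopic to the given section'' requires that, under \eqref{eq:cc-test}, the restriction $e_n(a,-)|_P$ is exactly the cocycle $c_a$ with coefficients $H_P$, in the coordinates of Lemma~\ref{lem:cc-cochains}.

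\emph{Why your fix falls short.} The equivalence \eqref{eq:cc-test} is built from \eqref{eq:cc-projective-em}, that is, from $\Gamma_P$ commuting with the iterated bar quotients that define $K_B(A,n)$. Your $e_n$, by contrast, is defined through a strict simplicial model. You never show that $\Gamma_P$ carries the realization of that strict model to the realization of its levelwise evaluation, compatibly with the bar identification. Naturality of \eqref{eq:cc-test} in $P$ only gives uniformity across parameters, not this comparison.

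\emph{A related problem with the target.} For $n\ge2$ the bar levels of $K_B(A,n)=B_BK_B(A,n-1)$ are powers of $K_B(A,n-1)$, not assemblies. So the cocycle formula does not give strict assembly maps into them. Its natural strict target is $\operatorname{DK}(A[n])$. The realization of $\operatorname{DK}(A[n])$ is not a single groupoid quotient of assemblies, since any such quotient is $1$-truncated. Hence Lemma~\ref{lem:quotient-truncation} does not realize it.

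\emph{How the paper fills this.} It transports the cocycle and lower-cochain maps to the standard $n$-fold bar model via normalization and the Alexander--Whitney and shuffle maps.
\begin{itemize}
\item These comparison maps are finite sums in each degree, hence levelwise tracked assembly maps that commute with $\Gamma_P$.
\item The $n$-fold bar realization is $n$ successive groupoid quotients, which $\Gamma_P$ preserves by Lemma~\ref{lem:cc-projective}. This is precisely how \eqref{eq:cc-projective-em} was computed.
\item Under \eqref{eq:cc-projective-em}, the transported evaluations are therefore literally the cochain maps of Lemma~\ref{lem:cc-cochains}.
\end{itemize}
With that step in place, your adequacy argument goes through as written.
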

\begin{proof}
Formula~\eqref{eq:cc-cocycle} is a cocycle by the direct calculation above.
Apply its simplicial cocycle map into $\operatorname{DK}(A[n])$ fibrewise. In each
simplicial degree the formulas use finitely many additions, zero,
and tests of the finitely many $C_2$ coordinates. They are therefore
assembly morphisms, uniformly tracked from the group trackers of $A$.
The realization of $\operatorname{DK}(A[n])$ in the ambient topos is the iterated bar
object $K_B(A,n)$: normalization identifies both with the same
Eilenberg-Mac Lane object. This also constructs $e_n$ by fullness of
$\E\into\HH$. The same finite formulas for lower cochains construct
the homotopies and all their higher identity evaluations.

For compatibility with projective evaluation, choose the standard
$n$-fold bar model for this comparison. Normalization and the
Alexander-Whitney and shuffle chain maps compare its diagonal with
$A[n]$, by finite sums in each simplicial degree. They induce mutually
inverse equivalences on realizations, and the $n$-fold bar realization
is computed by $n$ successive groupoid quotients. Consequently the
cocycle and lower-cochain evaluations transported to this bar model
commute with projective evaluation: finite products and the indicated
groupoid quotients are preserved by Lemma~\ref{lem:cc-projective}.
Under~\eqref{eq:cc-projective-em} they are precisely the cocycle
and cochain maps of Lemma~\ref{lem:cc-cochains}.

For adequacy, take any $T\to B$ in $\E$ and a section over
$T\times_B B_BG_B$. Choose a partitioned assembly cover $P\onto T$.
Lemmas~\ref{lem:cc-section-test} and~\ref{lem:cc-cochains} say that the
restricted section is homotopic to $e_n(a,-)$ for some
$a\in H_P=\operatorname{Hom}_{\Asm/B}(P,A)$. This is exactly a map $P\to A$ selecting a code
and the required homotopy. For an identity problem, cover its parameter
object by a partitioned assembly in the same way. The full natural
cochain model identifies its identity space; the next cochain coordinate
represents each of its components. This gives the map selecting the next code and its
specified homotopy. Repeating proves the assertion at all $n+1$ levels.

Diagonals implement the equality conditions on earlier coordinates. All
code objects are finite fibre powers of $A$ with repeated-coordinate
maps, so the construction applies to general tracked coefficient
assemblies, including the reset family of Appendix~\ref{app:reset}.
\qed\end{proof}

\subsection{Reconstruction and the Height-Two Calculation}\label{app:cochain-reconstruction}

\begin{proof}[of Theorem~\ref{thm:cochain-products}]
Proposition~\ref{prop:cc-codes} establishes the hypotheses on the finite tower of section codes in
Lemma~\ref{lem:cc-finite}. That lemma represents the ambient section object
by successive images and effective groupoid quotients in $\E$. This
proves existence and, for every $T\to B$ in $\E$, the full adjunction
\begin{equation}\label{eq:cc-adjunction}
 \Map_{\E/B}(T,Q_n(A))\simeq
 \Map_{\E/B}(T\times_B B_BG_B,K_B(A,n)).
\end{equation}
The tracked coefficient operations thus generate both the codes and their reconstructed dependent product.

The quotient description is explicit. The code coordinates
are $a_n,a_{n-1},\ldots,a_0$, with zero boundary maps and
additive compositions. At the bottom, the $a_0$ section problem is already
represented by its code object: fibrewise, maps from the connected type
$BC_2$ to a discrete group are constant, as is also immediate from
Lemma~\ref{lem:cc-section-test}. Its equality stage and set quotient leave the represented object
unchanged up to equivalence. The next quotient makes the $a_0$ automorphisms into $K_B(A,1)$.
Continuing upward deloops each lower coordinate once at each subsequent
level. Finally $a_n$ gives $A$, $a_{n-1}$ gives $K_B(A,1)$, and $a_0$ gives
$K_B(A,n)$. Since the boundary maps vanish and the coordinate actions are trivial, the
result is their finite product~\eqref{eq:cochain-product}.

Equivalently, the zero-differential complex in
Lemma~\ref{lem:cc-cochains} is the direct sum of $A[j]$, $0\le j\le n$.
The finite product of their iterated bar realizations is the same
coherent code quotient. The naturality in projective tests ensures
compatibility with the code evaluations and groupoid composition;
descent from the projective covers extends the identification to every
higher test object. This identifies the objects together with their full mapping spaces
and higher identities.

This construction uses at most $n$ quotient levels. Trivial coefficient groups yield contractible factors.
All finite limits, images, and effective groupoid quotients used are
pullback-stable and preserved by the ambient embedding. Ambient
Beck--Chevalley then identifies the pullback of the representing object
and its evaluation with the corresponding section problem over $B'$.
This proves the base-change assertion for every $B'$ in $\E$.
\qed\end{proof}

\begin{example}[$n=2$ case of Theorem~\ref{thm:cochain-products}]
Here is the complete code tower for $n=2$. Write $A_a,A_b,A_c$ for copies of $A$ with coordinates $a,b,c$, respectively. Let $S$ denote the ambient section object. The section codes are
$W_0=A_a$, with cocycle $c_a(t,t)=a$. Over
$X_1=A_a\times_BA_{a'}$, take
\[
 W_1=A_a\times_BA_b\longrightarrow X_1,
 \qquad(a,b)\longmapsto(a,a).
\]
The evaluation is the normalized $1$-cochain $b(t)=b$, a self-homotopy
of $c_a$, since $(\delta b)(t,t)=2b=0$. The next code base is
$X_2=W_1\times_{X_1}W_1\simeq A_a\times_BA_b\times_BA_{b'}$.
Take
\[
 W_2=A_a\times_BA_b\times_BA_c\longrightarrow X_2,
 \qquad(a,b,c)\longmapsto(a,b,b).
\]
The $0$-cochain $c$ specifies a modification because $\delta c=0$.
The final identity condition is $c=c'$, the relative diagonal of this
$0$-truncated code family. The image base case handles precisely this
proposition.

The $c$-codes already represent the second identity section object.
Their quotient one level up gives
\begin{equation}\label{eq:cc-kernel-two}
 W_0\times_SW_0\simeq
 A_a\times_B A_b\times_B K_B(A,1)
 \longrightarrow A\times_BA,
\end{equation}
with structure map $(a,b,z)\mapsto(a,a)$. Its composition adds the
$b$ and $K_B(A,1)$ coordinates. The final step takes the quotient of the groupoid with
object family $A_a$ and automorphism group object $A_b\times_B K_B(A_c,1)$.
It is
\[
 A_a\times_B B_B\!\left(A_b\times_B K_B(A_c,1)\right)
 \simeq A\times_B K_B(A,1)\times_B K_B(A,2).
\]
For $B=1$ and $A=C_2$, the starting code object has two
discrete elements, while each component of the resulting product has
nonzero $\pi_1$ and $\pi_2$. The two quotient levels reconstruct these higher identities from the
discrete codes.
\end{example}

\subsection{The Reset Family and Exact Quotient Depth}\label{app:cochain-depth}
Appendix~\ref{app:reset} constructs the uniformly tracked group assembly
$\mathcal M\to N$, proves $2\mathcal M=0$, and establishes the fibre
isomorphisms stated in Section~\ref{sec:application}. The present section uses those
tracked group operations directly. The following truncation
estimate gives the lower bound for the quotient depth of
Definition~\ref{def:cochain-depth}.

\begin{lemma}[Depth bounds truncation]\label{lem:cc-depth}
An object of $\E_{/B}$ constructed with quotient depth at most $d$
is $d$-truncated over $B$.
\end{lemma}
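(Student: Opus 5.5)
The proof will be a structural induction on construction expressions as in Definition~\ref{def:cochain-depth}, establishing a slightly stronger statement: every expression of depth at most $d$ has a value $d$-truncated over $B$. Since truncation is invariant under equivalence, this transfers to any object equivalent to such a value, and hence to the object of least depth in the definition of quotient depth. Truncation is measured in $\E_{/B}$. By Lemma~\ref{lem:quotient-truncation}, it can be computed in the ambient $\HH$, because the embedding preserves finite limits and effective quotients and both preserves and reflects finite truncation.

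For the base case, an initial expression is an object of the embedded slice $\Eff_{/B}$, so it is a $0$-type lying over $B$. I will have to be careful about the base. If $B$ is itself a zero-type, which is the situation in Corollary~\ref{cor:cochain-depth} where $B=N$, then a map between $0$-types is $0$-truncated, so the object is $0$-truncated over $B$. This gives the bound at depth $0$ as long as the statement is read with $d\ge0$. Depth zero therefore yields relative $0$-truncation, which is the bound required at $d=0$.

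For finite limits and images, $n$-truncated objects of a slice are closed under finite limits, since they are the right orthogonal class. An image is a subobject of its input, and a monomorphism into an $n$-truncated object over $B$ is again $n$-truncated. Both operations therefore preserve the bound $\max$ of the input depths, matching the depth rule; the empty limit is $B$ itself, which is $(-2)$-truncated.

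The main step is the quotient rule. Suppose $R_\bullet$ is an effective Segal groupoid whose levels all have depth at most $d'$, so by induction every $R_k$ is $d'$-truncated over $B$, and the quotient has depth $d'+1$. Put $Q=|R_\bullet|$. By effectiveness, $R_1\simeq R_0\times_Q R_0$, so the pullback of $\Delta_{Q/B}$ along the cover $R_0\times_BR_0\onto Q\times_BQ$ is the map $R_1\to R_0\times_BR_0$. This is a map between $d'$-truncated objects over $B$ and is therefore $d'$-truncated. Truncation of maps is local for covers of the codomain, which is descent in $\HH$ as in Lemma~\ref{lem:exact-descent}, so $\Delta_{Q/B}$ is $d'$-truncated. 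By the diagonal criterion, $Q$ is then $(d'+1)$-truncated over $B$. This mirrors the boundedness argument in Lemma~\ref{lem:lc-atlas}. The only subtle point I expect is that the depth of a quotient uses the supremum over all levels $k$. That is harmless, because only $R_0$ and $R_1$ enter the argument and both are bounded by that supremum.
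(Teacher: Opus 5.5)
Your proof is correct and follows the same route as the paper's. Both argue by structural induction: embedded $\Eff_{/B}$-objects are $0$-truncated, finite limits and images preserve the bound, and for a quotient the pullback of $\Delta_{Q/B}$ along the cover $R_0\times_BR_0\onto Q\times_BQ$ is $R_1\to R_0\times_BR_0$, after which descent of truncation and the diagonal criterion finish the step (your caveat about $B$ is automatic, since $\Eff_{/B}$ presupposes $B\in\Eff$).
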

\begin{proof}
Objects of the embedded slice $\Eff_{/B}$ are $0$-truncated. Finite limits preserve a common
nonnegative truncation bound. An image is a subobject of its codomain,
so has that same bound. Suppose the levels of an internal groupoid
$R_\bullet$ are $d$-truncated over $B$, and set $Q=|R_\bullet|$.
After the cover $R_0\times_BR_0\onto Q\times_BQ$, its relative
diagonal is $R_1\to R_0\times_BR_0$. This map is $d$-truncated:
maps between $d$-truncated objects have $d$-truncated homotopy fibres.
Truncation descends along covers, so $\Delta_{Q/B}$ is $d$-truncated
and $Q/B$ is $(d+1)$-truncated. Structural induction proves the result.
\qed\end{proof}

We now apply the cochain construction to the reset family and use this estimate to determine the exact quotient depth of the resulting function types.

\begin{proof}[of Corollary~\ref{cor:cochain-depth}]
Apply Theorem~\ref{thm:cochain-products} with $B=N$ and
$A=\mathcal M$. The cocycle and identity-code formulas use finite
combinations of the group trackers from Appendix~\ref{app:reset} and
decidable tests of $C_2$ coordinates. Equalities between earlier coordinates
are imposed by diagonal maps. The construction is therefore uniform in $e$.
The fibre isomorphisms from Section~\ref{sec:application}, the product
formula~\eqref{eq:cochain-product}, and pullback stability give
\eqref{eq:cochain-reset} as equivalences of realizability types.

The fibre at a total index is contractible; each fibre at a non-total index has two
components because its degree-zero factor is $\Ftwo$. Thus the
contractible-fibre index set is exactly $\Tot$, which is
$\Pi^0_2$-complete under computable many-one reductions by the argument
in Appendix~\ref{app:projective}.

The product formula constructs $\mathcal Q_n$ with depth at most $n$:
$K_N(\mathcal M,j)$ uses $j$ iterated bars, and a finite product takes
the maximum of their depths. Choose an index $e$ of a nowhere-defined
function. At the zero basepoint of the fibre $(\mathcal Q_n)_e$,
\[
 \pi_n((\mathcal Q_n)_e)=\Ftwo\ne0.
\]
Thus $\mathcal Q_n\to N$ has truncation height $n$.
Lemma~\ref{lem:cc-depth} gives quotient depth at least $n$, and the
product construction attains this bound. Starting with the $0$-truncated
code object $\mathcal M\to N$, the reconstruction produces a section
object with $\pi_n\cong\Ftwo$ in the chosen fibre.
\qed\end{proof}

\end{document}